\title{On Kernelization for Edge Dominating Set under Structural Parameters}
\titlerunning{Edge Dominating Set} %optional, in case that the title is too long; the running title should fit into the top page column
\author{Eva-Maria C. Hols}{Department of Computer Science, Humboldt-Universit{\"a}t zu Berlin, Germany}{hols@informatik.hu-berlin.de}{https://orcid.org/0000-0002-2832-0722}{Supported by DFG Emmy Noether-grant (KR 4286)}
\author{Stefan Kratsch}{Department of Computer Science, Humboldt-Universit{\"a}t zu Berlin, Germany}{kratsch@informatik.hu-berlin.de}{https://orcid.org/0000-0002-0193-7239
}{}
\authorrunning{E.C. Hols and S. Kratsch} %mandatory. First: Use abbreviated first/middle names. Second (only in severe cases): Use first author plus 'et. al.'
\keywords{Edge dominating set, kernelization, structural parameters}% mandatory: Please provide 1-5 keywords
\theoremstyle{definition}
\newtheorem{redrule}{Reduction Rule}
\theoremstyle{remark}
\newcommand{\prob}[1]{\textsc{\lowercase{#1}}}
\newcommand{\Oh}{\mathcal{O}}
\newcommand{\VC}{\prob{Vertex Cover}\xspace}
\newcommand{\EDS}{\prob{Edge Dominating Set}\xspace}
\newcommand{\MCC}{\prob{Multicolored Clique}\xspace}
\newcommand{\NP}{\ensuremath{\mathsf{NP}}\xspace}
\newcommand{\containment}{\ensuremath{\mathsf{NP\subseteq coNP/poly}}\xspace}
\newcommand{\ncontainment}{\ensuremath{\mathsf{NP\nsubseteq coNP/poly}}\xspace}
\newcommand{\SAT}{$\mathrm{SAT}$\xspace}
\newcommand{\C}{\ensuremath{\mathcal{C}}\xspace}
\DeclareMathOperator{\MEDS}{\textsc{eds}}
\newcommand{\cost}{\mathrm{cost}}
\newcommand{\R}{\ensuremath{\mathcal{R}}\xspace}
\newcommand{\Q}{\ensuremath{\mathcal{Q}}\xspace}
\newcommand{\N}{\ensuremath{\mathbb{N}}\xspace}
\newcommand{\cH}{\ensuremath{\mathcal{H}}\xspace}
\definecolor{lgray}{gray}{0.6}
\definecolor{bgray}{gray}{0.9}
\definecolor{Green}{rgb}{0.0, 0.5, 0.0}
\definecolor{Blue}{rgb}{0.08, 0.38, 0.74}
\begin{document}

\maketitle

\begin{abstract}
In the \NP-hard \EDS problem (EDS) we are given a graph $G=(V,E)$ and an integer $k$, and need to determine whether there is a set $F\subseteq E$ of at most $k$ edges that are incident with all (other) edges of $G$. It is known that this problem is fixed-parameter tractable and admits a polynomial kernel when parameterized by $k$. A caveat for this parameter is that it needs to be large, i.e., at least equal to half the size of a maximum matching of $G$, for instances not to be trivially negative. Motivated by this, we study the existence of polynomial kernels for EDS when parameterized by \emph{structural} parameters that may be much smaller than $k$.

Unfortunately, at first glance this looks rather hopeless: Even when parameterized by the deletion distance to a disjoint union of paths $P_3$ of length two there is no polynomial kernelization (under standard assumptions), ruling out polynomial kernels for many smaller parameters like the feedback vertex set size. In contrast, somewhat surprisingly, there is a polynomial kernelization for deletion distance to a disjoint union of paths $P_5$ of length \emph{four}. As our main result, we fully classify for all finite sets $\cH$ of graphs, whether a kernel size polynomial in $|X|$ is possible when given $X$ such that each connected component of $G-X$ is isomorphic to a graph in $\cH$. 
\end{abstract}

\section{Introduction}

In the \EDS problem (EDS) we are given a graph $G=(V,E)$ and an integer $k$, and need to determine whether there is a set $F\subseteq E$ of at most $k$ edges that are incident with all (other) edges of $G$. It is known that this is equivalent to the existence of a maximal matching of size at most $k$. The \EDS problem is \NP-hard but admits a simple $2$-approximation by taking any maximal matching of $G$. It can be solved in time $\Oh^*(2.2351^k)$\footnote{$\Oh^*$-notation hides factors that are polynomial in the input size.}~\cite{IwaideN16}, making it fixed-parameter tractable for parameter $k$. Additionally, for EDS any given instance $(G,k)$ can be efficiently reduced to an equivalent one $(G',k')$ with only $\Oh(k^2)$ vertices and $\Oh(k^3)$ edges \cite{XiaoKP13} (this is called a \emph{kernelization}).

The drawback of choosing the solution size $k$ as the parameter is that $k$ is large on many types of easy instances. 
This has been addressed for many other problems by turning to so called \emph{structural parameters} that are independent of the solution size. Two lines of research in this direction have yielded polynomial kernels for several other \NP-hard problems. One possibility is to choose the parameter as the size of a set $X$ such that $G-X$ belongs to some class $\C$ where the problem in question can be efficiently solved; such sets $X$ are called \emph{modulators}. The other possibility is to parameterize above some lower bound for the solution, i.e., the parameter is the difference between the solution size $k$ and the lower bound. 

The \VC problem, where, given a graph $G$ and an integer $k$, we are asked whether there are $k$ vertices that are incident with all edges, has been successfully studied under different structural parameters. 
It had been observed that \VC is FPT parameterized by the size of a modulator to a class $\C$ when one can solve vertex cover on graphs that belong to $\C$ in polynomial time; e.g. if $\C$ is the graph class of forests or, more generally, of bipartite or K\H{o}nig graphs. 
Furthermore, there also exist kernelizations for \VC parameterized by modulators to some graph classes $\C$. The first of a number of such results is due to Jansen and Bodlaender~\cite{JansenB13} who gave a kernelization with $\Oh(\ell^3)$ vertices where $\ell$ is the size of a (minimum) feedback vertex set of the input graph. Clearly, the solution size $k$ cannot be bounded in terms of $\ell$ alone because forests already have arbitrarily large minimum vertex covers. This result has been generalized, e.g., for parameterization by the size of an odd cycle transversal~\cite{KratschW12}.

There are also parameterized algorithms for \VC above lower bounds that address the specific complaint about the seemingly unnecessarily large parameter value $k$ in many graph classes. It was first shown that \VC parameterized by $\ell=k-MM$ where $MM$ stands for the size of a maximum matching is FPT~\cite{RamanRS11}. In other words, the parameter value $\ell$ is the difference between $k$ and the obvious lower bound. This has been improved to work also for parameterization by $\ell=k-LP$ where $LP$ stands for the minimum \emph{fractional vertex cover} (as determined by the LP relaxation)~\cite{CyganPPW13,LokshtanovNRRS14} and, recently, even for parameter $\ell=k-(2LP-MM)$~\cite{GargP16}. All of these above lower bound parameterizations of \VC also have randomized polynomial kernels~\cite{KratschW12,Kratsch16}.

Motivated by the number of positive results for \VC parameterized by structural parameters we would like to know whether some of these results carry over to the related but somewhat more involved \EDS problem. 

\subparagraph{Our results.}
For kernelization subject to the size of a modulator to some tractable class $\C$ there is bad news: Even if $\C$ contains only the disjoint unions of paths of length two (consisting of three vertices each) we show that there is no polynomial kernelization for parameterization by $|X|$ with $G-X\in\C$ unless \containment (and the polynomial hierarchy collapses). The same is true when $\C$ contains at least all disjoint unions of triangles. Thus, for the usual program of studying modulators to well-known \emph{hereditary} graph classes $\C$ there is essentially nothing left to do because the only permissible connected components would have one or two vertices.\footnote{This very modest case actually admits a polynomial kernelization.} That said, as the next result shows, this perspective would ignore an interesting landscape of positive and negative results that can be obtained by permitting certain forms of connected components in $G-X$ but not necessarily all induced subgraphs thereof, i.e., by dropping the requirement that $\C$ needs to be hereditary (closed under induced subgraphs).

Indeed, there is, e.g., a polynomial kernelization for parameter $|X|$ when all connected components of $G-X$ are paths of length \emph{four}. This indicates that the structure even of constant-sized components permitted in $G-X$ determines in a nontrivial way whether or not there is a polynomial kernelization. Note the contrast with \VC where a modulator to component size $d$ admits a kernelization with $\Oh(k^d)$ vertices for each fixed $d$. Naturally, we are interested in finding out exactly which cases admit polynomial kernels.

This brings us to our main result. For $\cH$ a set of graphs, say that $G$ is an $\cH$-component graph if each connected component of $G$ is isomorphic to some graph in $\cH$. We fully classify the existence of polynomial kernels for parameterization by the size of a modulator to the class of $\cH$-component graphs for all finite sets $\cH$. To clarify, the input consists of $(G,k,X)$ such that $G-X$ is an $\cH$-component graph and the task is to determine whether $G$ has an edge dominating set of size at most $k$; the parameter is $|X|$.
Note that these problems are fixed-parameter tractable for all finite sets $\cH$ because $G$ has treewidth at most $|X|+\Oh(1)$.

\begin{theorem}\label{theorem:classification}
For every finite set \cH of graphs, the \EDS problem parameterized by the size of a given modulator $X$ to the class of $\cH$-component graphs falls into one of the following two cases:
\begin{enumerate}
 \item It has a kernelization with $\Oh(|X|^{d})$ vertices, $\Oh(|X|^{d+1})$ edges, and size $\Oh(|X|^{d+1}\log |X|)$. Moreover, unless \containment, there is no kernelization to size $\Oh(|X|^{d-\varepsilon})$ for any $\varepsilon>0$. Here $d=d(\cH)$ is a constant depending only on the set $\cH$.
 \item It has no polynomial kernelization unless \containment.
\end{enumerate}
\end{theorem}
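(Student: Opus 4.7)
The plan is to find a combinatorial invariant of $\cH$ that separates the two cases, then prove matching kernelization upper bounds and compositional lower bounds on either side of that boundary. Since paths of length four are good while $P_3$ and triangles are bad, the dividing property should be local to individual components: each $H\in\cH$ is classified ``compressible'' or not, and the whole family $\cH$ is tractable precisely when every $H\in\cH$ is compressible. For each $H$ I would catalog, for every subset $S\subseteq V(H)$, the minimum number of edges needed inside $H$ to extend to a maximal matching when the vertices in $S$ are allowed to be saturated from outside via edges to $X$. The resulting function, together with the symmetries of $H$, encodes the ``type'' of a component. The good case should amount to enough flexibility in optimal choices that isomorphic components with the same attachment pattern into $X$ are interchangeable up to a fixed additive offset in $k$ — exactly the property $P_5$ has and $P_3$ or $K_3$ lack.

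For the upper bound, assume every $H\in\cH$ has the good property. Call two components of $G-X$ equivalent if they are isomorphic and the isomorphism extends to their neighbourhood in $X$, restricted to a bounded number $d=d(\cH)$ of attachment points per component determined by the richness of the type functions. The number of equivalence classes is then $\Oh(|X|^{d})$. A reduction rule keeps only a constant number of representatives per class, crediting a fixed offset to $k$ for each removed component; the good property is exactly what guarantees that this bookkeeping preserves the edge dominating set value. Summing over classes yields $\Oh(|X|^{d})$ vertices, $\Oh(|X|^{d+1})$ edges, and the claimed $\Oh(|X|^{d+1}\log|X|)$ bit-size.

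For the lower bound I would cross-compose from \MCC or a suitable \SAT variant, selecting a witness $H^{\ast}\in\cH$ that fails the good property. Such a failure supplies two optimal ways to dominate $H^{\ast}$ whose saturation patterns differ on a designated vertex; attaching that vertex to $X$ forces the composition gadget to commit to one pattern and so encodes a bit of the embedded instance. Packing many such gadgets into the standard selection template of cross-composition gives the no-polynomial-kernel verdict in the bad case. The same idea, refined to compose $\Oh(|X|^{d})$ tiny instances and tracked at the level of bit-size, adapts into the weak lower bound that rules out kernels of size $\Oh(|X|^{d-\varepsilon})$ in the good case.

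The principal obstacle is the \emph{tightness} of the dichotomy. First, one must find the correct boundary — a property that is structural enough to drive a kernelization yet fragile enough that its failure admits a cross-composition. Second, the same invariant $d(\cH)$ must govern both the upper and the lower bound with no gap, which requires careful accounting of how many ``bits of choice'' a single copy of a component in $\cH$ can carry, and matching this to the depth of the composition gadget. Once the combinatorial boundary is pinned down with the right exponent, the remainder — reduction rules, offset bookkeeping, and the construction of selection gadgets — follows from fairly standard kernelization techniques.
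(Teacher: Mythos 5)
Your high-level strategy matches the paper's in outline: classify each $H\in\cH$ individually, kernelize by bounding the number of retained components when every $H$ is ``good'', and cross-compose from \MCC (using a witness $H^{*}$ that is ``bad'') for the lower bounds, with a cost-$t^{1/d}$ composition for the polynomial lower bound. However, the proposal has genuine gaps precisely at the points you defer to ``standard techniques''.

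First, the dichotomy criterion is never actually identified. You describe the desired \emph{behaviour} (components ``interchangeable up to a fixed additive offset'') rather than a checkable property of $H$. The paper needs a whole taxonomy --- extendable vertices $Q(H)$, the unique maximum free set $W(H)$, uncovered vertices $U(H)$, the cost function $\cost(Y)=|Y|+\MEDS(H-Y)-\MEDS(H)$, and (strongly) beneficial sets --- and the ``bad'' side is the disjunction of four separate conditions (an extendable vertex that is not free; a strongly beneficial set meeting $U(H)$; a vertex outside $N[W(H)]\cup U(H)$; a strongly beneficial set $B\subseteq N(W(H))$ such that no minimum edge dominating set of $H-B$ covers $N(W(H))\setminus B$). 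Each condition requires its own argument to extract a \emph{control pair} $(C,B)$ that drives the composition. Relatedly, the exponent $d$ is not pinned down: in the paper it is the maximum size of a strongly beneficial set over $\cH$, and the fact that a component's interaction with $X$ decomposes into such sets (so that only subsets $Y\subseteq X$ of size at most $d$ matter) is itself a lemma, not a definition.

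Second, the kernelization rule ``keep a constant number of representatives per isomorphism-plus-attachment class'' is not safe as stated. A single component can simultaneously offer a discount (via different strongly beneficial sets) to many different subsets $Y\subseteq X$, so distinct demands may compete for the same retained component; the paper resolves this with a system-of-distinct-representatives argument, running Hopcroft--Karp on an auxiliary bipartite graph whose left side is indexed by pairs $(Y,\beta)$ with $|Y|\le d$ and $\beta=\cost$, together with a separate crown-type rule for components attached through free vertices and a counting rule for uncovered vertices (which is why the guaranteed bound is $\Oh(|X|^2)$ even when $d$ would suggest better). Third, the $\Oh(|X|^{d-\varepsilon})$ lower bound in the good case cannot be obtained by ``the same idea, refined'': in that case there is, by definition, no control set forcing a solution to reveal which instance it came from, so the paper must build an entirely different gadgetry (Dell--Marx-style $d\cdot s$ selection groups, selection gadgets, and clique gadgets) to achieve a composition of cost $t^{1/d}$. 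These are the substantive parts of the proof, and the proposal leaves all of them open.
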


To obtain the classification one needs to understand how connected components of $G-X$ that are isomorphic to some graph $H\in\cH$ can interact with a solution for $G$, and to derive properties of $H$ that can be leveraged for kernels or lower bounds for kernelization. Crucially, edge dominating sets for $G$ may contain edges between $X$ and components of $G-X$. From the perspective of such a component (isomorphic to $H$) this is equivalent to first covering edges incident with some vertex set $B \subseteq V(H)$ (the endpoints of chosen edges to $X$) and then covering the remaining edges by a minimum edge dominating set for $H-B$. Depending on the size of a minimum edge dominating set of $H-B$ and further properties of $H$, such a set $B$ may be used to rule out any polynomial kernels or to give a lower bound of $\Oh(|X|^{d-\varepsilon})$ for the kernel size, where $d=|B|$. Conversely, absence of such sets or an upper bound for their size can be leveraged for kernels. Some sets $B$ may make others redundant, complicating both upper and lower bounds.

For a given finite set $\cH$ of graphs, the lower bound obtained from the classification is simply the strongest one over all $H\in\cH$. If this does not already rule out a polynomial kernelization then, for each $H\in\cH$, we can reduce the number of components isomorphic to $H$ to $\Oh(|X|^{d(H)})$ where $d(H)$ depends only on $H$. Moreover, we also have the almost matching lower bound of $\Oh(|X|^{d(H)-\varepsilon})$, assuming \ncontainment. The value $d(\cH)$ is the maximum over all $d(H)$ for $H\in\cH$ that yield such a polynomial lower bound; it can be computed in time depending only on $\cH$, i.e., in constant time for each fixed $\cH$.

Regarding parameterization above lower bounds, we prove that it is \NP-hard to determine whether a graph $G$ has an edge dominating set of size equal to the lower bound of half the size of a maximum matching. This rules out any positive results for parameter $\ell=k-\frac12MM$.

\subparagraph{Related work.}
The parameterized complexity of \EDS has been studied in a number of papers~\cite{Fernau06,FominGSS09,WangCFC09,Xiao10,XiaoKP13,XiaoN13a,EscoffierMPX15,IwaideN16}. Structural parameters were studied, e.g., by Escoffier et al.~\cite{EscoffierMPX15} who obtained an $\Oh^*(1.821^\ell)$ time algorithm where $\ell$ is the vertex cover size of the input graph, and by Kobler and Rotics~\cite{KoblerR03} who gave a polynomial-time algorithm for graphs of bounded clique-width. It is easy to see that EDS is fixed-parameter tractable with respect to the treewidth of the input graph. Prieto~\cite{Prieto05} was the first to find a kernelization to $\Oh(k^2)$ vertices for the standard parameterization by $k$; this was improved to $\Oh(k^2)$ vertices and $\Oh(k^3)$ edges by Xiao et al.~\cite{XiaoKP13} and further tweaked by Hagerup~\cite{Hagerup12}. Our work appears to be the first to study the existence of polynomial kernels for EDS subject to structural parameters, though some lower bounds, e.g., for parameter treewidth are obvious.

Classically, \EDS remains \NP-hard on planar cubic graphs, bipartite graphs with maximum degree three~\cite{YannakakisG80}. This implies \NP-hardness already for $|X|=0$ when considering parameterization by a modulator to any graph class containing this special case. \EDS has also been studied from the perspective of approximation~\cite{FujitoN02,ChlebikC06,CardinalLL09,SchmiedV12,EscoffierMPX15}, enumeration~\cite{KanteLMN12,GolovachHKV15,KanteLMNU15}, and exact exponential-time algorithms~\cite{RamanSS07,Xiao10,RooijB12,XiaoN14}.

\subparagraph{Organization.}
We begin with some preliminaries in Section~\ref{section::preliminaries}. Section~\ref{section::lowerbounds} provides some intuition for the main result by proving the lower bound for \EDS parameterized by the size of a modulator to a $P_3$-component graph as well as the polynomial kernelization for parameterization by the size of a modulator to a $P_5$-component graph. Section~\ref{section::classification} gives a detailed statement of the main result including the required definitions to determine which result applies for any given set $\cH$. Section~\ref{section::aboveLB} contains the hardness proof for parameter $\ell=k-\frac12MM$. We conclude in Section~\ref{section::conclusion}.

%%%%%%%%%%%%%%%%%%%%%%%%%%%%%%%%%%%%%%%%%%%%%%%%%%%%%%%%%%%%%%%%%%%%%%%%%%%%%%%%%%%%%%%%%%%%%%%%%%%%%%%%%%%%%%%%%%%%%%%%%%%%%%%%%%%%%%%%%%%%%%%%%%%%%%%%%%%%%%%%%%%%%%%%%%%%%%%%%%%%%%%%%%%%%%%%%%%%%%%%%%%%%%%%%%%%%%%%%%%%%%%%%%%%%%%%%%%%%%%%%%%%%%%%%%%%%%%%%%%%%%%%%%%%%%%%%%%%%%%%%%%%%%%%%%%%%%%%%%%%%%%%%%%%%%%%%%%%%%%%%%%%%%%%%%%%%%%%%%%%%%%%%%%%%%%%%%%%%%%%%%%%%%%%%%%%%%%%%%%%%%%%%%%%%%%%%%%%%%%%%%%%%%%%%%%%%%%%
\section{Preliminaries}\label{section::preliminaries}

We use standard graph notation as given by Diestel~\cite{Diestel12}. In particular, for a graph $G=(V,E)$ we let $N(v)=\{u\in V\mid \{u,v\}\in E\}$ and $N[v]=N(v)\cup\{v\}$; similarly, $N[X]=\bigcup_{x\in X}N[x]$ and $N(X)=N[X]\setminus X$. We let $E(X,Y)=\{\{x,y\}\mid x\in X, y\in Y\}$ and we let $\delta(v)=\{\{u,v\}\mid u\in V, \{u,v\}\in E\}$. By $G[X]$ we denote the induced subgraph of $G$ on vertex set $X$ and by $G-X$ the induced subgraph on vertex set $V\setminus X$; we let $G-v=G-\{v\}$. We denote the size of a minimum edge dominating set of a graph $G$ by $\MEDS(G)$. 

Let \cH be a set of graphs. We say that a graph $G$ is an \emph{\cH-component graph} if each connected component of $G$ is isomorphic to some graph in $\cH$. Clearly, disconnected graphs in $\cH$ do not affect which graphs $G$ are $\cH$-component graphs and, thus, our proofs need only consider the connected graphs $H\in\cH$. 
We write $H$-component graph rather than $\{H\}$-component graph for single (connected) graphs $H$.

Let $[n]$ denote the set $\{1, 2,\ldots, n\}$.

\subparagraph{Parameterized complexity.}
A \emph{parameterized problem} \Q is a subset of $\Sigma^*\times\N$ where $\Sigma$ is any finite set. The second component $k$ of instances $(x,k)$ is called the \emph{parameter}. A parameterized problem \Q is \emph{fixed-parameter tractable} if there is an algorithm that correctly solves all instances $(x,k)$ in time $f(k)|x|^c$ where $f$ is a computable function and $c$ is a constant independent of $k$. A \emph{kernelization} for \Q is an efficient algorithm that, given an instance $(x,k)$, takes time polynomial in $|x|+k$ and returns an instance $(x',k')$ of size at most $f(k)$ such that $(x,k)\in\Q$ if and only if $(x',k')\in\Q$ where $f$ is a computable function. The function $f$ is also called the \emph{size} of the kernelization and a kernelization is polynomial (resp.\ linear) if $f(k)$ is polynomially (resp.\ linearly) bounded in $k$.

We use the notion of a cross-composition~\cite{BodlaenderJK14}, which is a convenient front-end for the seminal kernel lower bound framework of Bodlaender et al.~\cite{BodlaenderDFH09} and Fortnow and Santhanam~\cite{FortnowS11}. A relation $\R\subseteq\Sigma^*\times\Sigma^*$ is a \emph{polynomial equivalence relation} if equivalence of two strings $x,y\in\Sigma^*$ can be tested in time polynomial in $|x|+|y|$ and if \R partitions any finite set $S\subseteq\Sigma^*$ into a number of classes that is polynomially bounded in the largest element of $S$.

\begin{definition}[(OR-)cross-composition \cite{BodlaenderJK14}]
Let $L\subseteq\Sigma^*$ be a language, let \R be a polynomial equivalence relation on $\Sigma^*$, and let $\Q\subseteq\Sigma^*\times\N$ be a parameterized problem. An \emph{\mbox{(OR-)}cross-composition of $L$ into $\Q$} (with respect to $\R$) is an algorithm that, given $t$ instances $x_1,\ldots,x_t\in\Sigma^*$ of $L$ belonging to the same equivalence class of \R, takes time polynomial in $\sum_{i=1}^t|x_i|$ and outputs an instance $(y,k)\in\Sigma^*\times\N$ such that the following hold:
\begin{itemize}
 \item ``PB'': The parameter value $k$ is polynomially bounded in $\max_{i=1}^t |x_i|+\log t$. 
 \item ``OR'': The instance $(y,k)$ is yes for \Q if and only if \emph{at least one} instance $x_i$ is yes for $L$.
\end{itemize}
An \emph{(OR-)cross-composition of $L$ into $\Q$ of cost $f(t)$} instead satisfies ``OR'' and ``CB'':
\begin{itemize}
 \item ``CB'': The parameter value $k$ is bounded by $\Oh(f(t)\cdot (\max_{i=1}^t|x_i|)^c)$, where $c$ is some constant independent of $t$.
\end{itemize}
\end{definition}

If $L$ is \NP-hard then both forms of cross-compositions are known to imply lower bounds for kernelizations for $\Q$. Theorem~\ref{theorem:polynomiallowerbound} additionally builds on Dell and van Melkebeek~\cite{DellM14}.

\begin{theorem}[{\cite[Corollary 3.6.]{BodlaenderJK14}}]\label{theorem:kernellowerbound:crosscomposition}
If an \NP-hard language L has a cross-composition to \Q then \Q admits no polynomial kernelization or polynomial compression unless \containment.
\end{theorem}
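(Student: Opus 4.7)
The strategy is to prove the contrapositive: assume that $\Q$ admits a polynomial compression of size $k^c$ (polynomial kernelization being a special case) and that an \NP-hard language $L$ cross-composes into $\Q$; I then build an OR-distillation of $L$ and invoke the Fortnow--Santhanam / Bodlaender--Downey--Fellows--Hermelin theorem, which forbids OR-distillations of \NP-hard languages unless $\containment$.

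Given $t$ instances $x_1,\ldots,x_t$ of $L$ with $n=\max_i|x_i|$, I first bucket them by the polynomial equivalence relation $\R$. By definition of a polynomial equivalence relation, the number of classes is bounded by some polynomial $q(n)$ and buckets can be computed in polynomial time. Within each class, I apply the cross-composition to obtain an instance $(y,k)$ of $\Q$ with $k\le p(n+\log t)$ for some polynomial $p$, then pass $(y,k)$ through the hypothetical polynomial compression to obtain a string of length at most $k^c\le p(n+\log t)^c$. Concatenating the at most $q(n)$ per-class outputs, together with a flag identifying which class produced each, yields a single instance of the disjoint union of the target languages whose total size is bounded by $q(n)\cdot p(n+\log t)^c$ and which is a yes instance if and only if at least one $x_i$ is a yes instance of $L$.

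Choosing $t$ so that $\log t$ is of the same order as $n$, for instance $t=2^n$, the combined output length $q(n)\cdot p(2n)^c$ remains polynomial in $n$ while the procedure has just answered the OR of a super-polynomial number of $n$-bit yes/no questions about $L$ with a string of length polynomial only in $n$. This is precisely an OR-distillation of $L$, so by the Fortnow--Santhanam theorem $L$ lies in $\mathsf{coNP/poly}$, and because $L$ is \NP-hard this entails $\containment$.

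The main obstacle in turning this plan into a full proof is the bookkeeping of sizes and parameter bounds in the chain ``group $\to$ compose $\to$ compress'': the polynomial $q$ counting equivalence classes, the polynomial $p$ supplied by the PB condition, and the degree $c$ of the compression all have to be composed in such a way that the final output length remains strictly below $t$ for a suitable choice of $t$. This is routine for the plain PB version used here, but becomes delicate in the cost-$f(t)$ refinement behind polynomial kernelization lower bounds, where one has to track the interplay between $f$ and $c$ more carefully and invoke the weak-compression machinery of Dell and van Melkebeek instead of Fortnow--Santhanam.
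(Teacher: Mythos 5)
This statement is imported from \cite{BodlaenderJK14} and the paper gives no proof of its own, so the only meaningful comparison is with the standard argument from that source -- which is exactly what you reconstruct: bucket the $t$ inputs by the polynomial equivalence relation, cross-compose each bucket, compress each output, and read the concatenation as an OR-distillation into the OR of the compression's target language, then apply Fortnow--Santhanam. Your sketch is correct, including the one point that genuinely needs care (the output is only bounded by $q(n)\cdot p(n+\log t)^c$ rather than by a polynomial in $n$ alone, which you handle by fixing $t=2^n$ so that the map is still compressive enough for the counting argument); nothing further is needed for this plain (non-cost) version of the theorem.
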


\begin{theorem}[{\cite[Theorem 3.8.]{BodlaenderJK14}}]\label{theorem:polynomiallowerbound}
Let $d,\varepsilon>0$. If an \NP-hard language L has a cross-composition into \Q of cost $f(t)=t^{1/d+o(1)}$, where $t$ is the number of instances, then \Q has no polynomial kernelization or polynomial compression of size $\Oh(k^{d-\varepsilon})$ unless \containment.
\end{theorem}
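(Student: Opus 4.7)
The plan is to derive the statement by contradiction, combining the cost-$f(t)$ cross-composition with the distillation lower bound of Dell and van Melkebeek. Assume $L$ is \NP-hard, that a cross-composition of $L$ into $\Q$ of cost $f(t)=t^{1/d+o(1)}$ exists, and, for contradiction, that $\Q$ admits a polynomial compression of size $\Oh(k^{d-\varepsilon})$ for some fixed $\varepsilon>0$. The goal is to turn these ingredients into a \emph{weak} OR-distillation of $L$ and then invoke the Dell--van Melkebeek bound to obtain \containment.

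First I would take $t$ arbitrary instances $x_1,\ldots,x_t\in\Sigma^*$ with $\max|x_i|\le s$ and bucket them by the polynomial equivalence relation \R supplied by the cross-composition. Its definition bounds the number of non-trivial buckets by some polynomial $p(s)$; after padding each bucket with trivially-no instances of the same class (so as to preserve the OR), we split the $t$ inputs into $p(s)$ classes of sizes $t_1,\ldots,t_{p(s)}$ with $\sum t_i = t$. Applying the cross-composition within each class produces an instance $(y_i,k_i)$ of $\Q$ with
\[
 k_i \;=\; \Oh\!\bigl(f(t_i)\cdot s^c\bigr) \;\le\; \Oh\!\bigl(t^{1/d+o(1)}\cdot s^c\bigr),
\]
constructible in time polynomial in the combined input size, by the definition of a cost-$f(t)$ cross-composition.

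Next I would feed each $(y_i,k_i)$ into the hypothetical compression for $\Q$, obtaining a bitstring of length
\[
 \Oh(k_i^{d-\varepsilon}) \;=\; \Oh\!\bigl(t^{(d-\varepsilon)/d+o(1)}\cdot s^{c(d-\varepsilon)}\bigr) \;=\; \Oh\!\bigl(t^{1-\varepsilon/d+o(1)}\cdot \mathrm{poly}(s)\bigr).
\]
Concatenating these compressed images over all $p(s)$ buckets, each tagged by its class index, yields a single string of length $\Oh(p(s)\cdot t^{1-\varepsilon/d+o(1)}\cdot\mathrm{poly}(s))=\Oh(t^{1-\varepsilon/d+o(1)}\cdot\mathrm{poly}(s))$ whose membership in the obvious disjoint-union language is equivalent, by the ``OR'' property of the cross-composition and the correctness of the compression, to $\bigvee_i (x_i\in L)$. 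Since $1-\varepsilon/d<1$, this string is genuinely sublinear in $t$ after the polynomial slack in $s$, which is exactly the setting of a weak OR-distillation of $L$.

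Finally, applying the Dell--van Melkebeek theorem to the \NP-hard language $L$ yields $\containment$, contradicting the standing hypothesis; this rules out the assumed compression and establishes the claim. The delicate step, and the place where I expect the main obstacle, is the arithmetic keeping the exponent on $t$ strictly below one: the $o(1)$ term inside $f(t)=t^{1/d+o(1)}$ must remain $o(1)$ after being multiplied by $d-\varepsilon$, so that a positive gap $\varepsilon/d-o(1)$ survives; and every polynomial factor in $s$ contributed by the $p(s)$ buckets, by the cross-composition itself, and by the compression must be absorbed cleanly into the single $\mathrm{poly}(s)$ slack permitted by the distillation lower bound. Once that bookkeeping is done, the rest of the argument is purely mechanical.
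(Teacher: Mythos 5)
Your proposal is essentially correct and follows the same route as the actual proof: this theorem is not proved in the paper but quoted from Bodlaender, Jansen, and Kratsch~\cite{BodlaenderJK14}, whose argument---as the paper itself remarks---builds on Dell and van Melkebeek~\cite{DellM14} exactly as you describe (partition by \R into polynomially many classes, compose per class, compress, and feed the resulting total of $\Oh(t^{1-\varepsilon/d+o(1)}\cdot\mathrm{poly}(s))$ bits into the complementary-witness/oracle-communication lower bound after choosing $t$ a sufficiently large polynomial in $s$). Your bookkeeping of the exponent gap $\varepsilon/d - o(1)$ and the absorption of the $p(s)$ bucket factor is the right delicate point, and it goes through; the only cosmetic slip is the padding of buckets with no-instances, which is unnecessary since the composition can simply be applied to each class as given.
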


All our composition-based proofs use for $L$ the \NP-hard \MCC problem. Therein we are given a graph $G=(V,E)$, an integer $k$, and a partition of $V$ into $k$ sets $V_1,\ldots,V_k$ of equal size; we need to determine whether there is a clique of size $k$ in $G$ that contains exactly one vertex from each set $V_i$. Such a set $X$ is called a \emph{multicolored $k$-clique}.

%%%%%%%%%%%%%%%%%%%%%%%%%%%%%%%%%%%%%%%%%%%%%%%%%%%%%%%%%%%%%%%%%%%%%%%%%%%%%%%%%%%%%%%%%%%%%%%%%%%%%%%%%%%%%%%%%%%%%%%%%%%%%%%%%%%%%%%%%%%%%%%%%%%%%%%%%%%%%%%%%%%%%%%%%%%%%%%%%%%%%%%%%%%%%%%%%%%%%%%%%%%%%%%%%%%%%%%%%%%%%%%%%%%%%%%%%%%%%%%%%%%%%%%%%%%%%%%%%%%%%%%%%%%%%%%%%%%%%%%%%%%%%%%%%%%%%%%%%%%%%%%%%%%%%%%%%%%%%%%%%%%%%%%%%%%%%%%%%%%%%%%%%%%%%%%%%%%%%%%%%%%%%%%%%%%%%%%%%%%%%%%%%%%%%%%%%%%%%%%%%%%%%%%%%%%%%%%%
\section[EDS parameterized by the size of a modulator to a P3- resp. P5-component graph]{EDS parameterized by the size of a modulator to a $\boldsymbol{P_3}$- resp.\ $\boldsymbol{P_5}$-component graph}\label{section::lowerbounds}

In this section we study the difference of \EDS parameterized by the size of a modulator to a $P_3$-component graph and \EDS parameterized by the size of a modulator to a $P_5$-component graph, which are both more restrictive than parameterization by size of a feedback vertex set (modulator to a forest). Note that the latter is FPT, because the treewidth is at most the size of the feedback vertex set plus one and \EDS parameterized by the treewidth is FPT. Hence, \EDS parameterized by the above modulators is FPT too. 

First, we show that \EDS parameterized by the size of a modulator to a $P_3$-component graph has no polynomial kernel unless $\containment$. This rules out polynomial kernels for a large number of interesting parameters like feedback vertex set size or size of a modulator to a linear forest. 
Somewhat surprisingly, we then show that when parameterized by the modulator to a $P_5$-component graph we do get a polynomial kernel.

%%%%%%%%%%%%%%%%%%%%%%%%%%%%%%%%%%%%%%%%%%%%%%%%%%%%%%%%%%%%%%%%%%%%%%%%%%%%%%%%%%%%%%%%%%%%%%%%%%%%%%%%%%%%%%%%%%%%%%%%%%%%%%%%%%%%%%%%%%%%%%%%%%%%%%%%%%%%%%%%%%%%%%%%%%%%%%%%%%%%%%%%%%%%%%%%%%%%%%%%%%%%%%%%
\subsection[Lower bound for EDS parameterized by the size of a modulator to a P3-component graph]{Lower bound for EDS parameterized by the size of a modulator to a $\boldsymbol{P_3}$-component graph}

We give a kernelization lower bound for \EDS parameterized by the size of a modulator $X$, such that deleting $X$ results in a disjoint union of $P_3$'s. To prove this we give a cross-composition from \MCC.

\begin{theorem} \label{theorem::P3}
  \EDS parameterized by the size of a modulator to a $P_3$-component graph (and thus also parameterized by the size of a modulator to a linear forest) does not admit a polynomial kernel unless $\containment$.
\end{theorem}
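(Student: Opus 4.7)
The plan is to prove Theorem~\ref{theorem::P3} by giving an OR-cross-composition from \MCC into \EDS parameterized by the size of a modulator $X$ to a $P_3$-component graph; Theorem~\ref{theorem:kernellowerbound:crosscomposition} will then yield the claimed lower bound. As the polynomial equivalence relation $\R$ I take the relation that identifies \MCC instances with the same number of color classes $k$ and the same sequence of class sizes $|V_1|,\ldots,|V_k|$, with malformed inputs handled by outputting a fixed trivial NO-instance of EDS. Starting from $t$ equivalent instances $G_1,\ldots,G_t$ on the common partition $V=V_1\cup\cdots\cup V_k$ with $|V|=n$, and rounding $t$ up to the next power of two, the goal is to produce a single EDS instance $(G',k',X)$ whose modulator satisfies $|X|=\mathrm{poly}(n+\log t)$, in which every component of $G'-X$ is a $P_3$, and such that $G'$ has an edge dominating set of size at most $k'$ iff at least one $G_j$ contains a multicolored $k$-clique.

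The construction exploits that $\MEDS(P_3)=1$ but that $X$-adjacencies constrain which internal edge of a $P_3$ $a$--$b$--$c$ can be used: an $X$-edge incident to an endpoint $a$ forces either $\{a,b\}$ into the EDS or the $X$-edge to be covered from its $X$-side, while an $X$-edge to the middle vertex $b$ piggybacks on either internal choice. I will place into $X$ one vertex-choice vertex $s_{i,v}$ for every color $i\in[k]$ and every $v\in V_i$, together with two instance-selector vertices $y_b^0,y_b^1$ per bit $b\in[\log t]$, giving $|X|=O(n+\log t)$ and discharging the PB-condition. For every instance $j\in[t]$ and every edge $e=\{u,v\}$ of $G_j$ (with $u\in V_i$, $v\in V_{i'}$) I will attach a dedicated \emph{edge-checker} $P_3$ to $s_{i,u}$, $s_{i',v}$ and to the $\log t$ instance-selector vertices encoding $j$ in binary, arranged so that this $P_3$ can be dominated at the baseline cost of a single edge iff $s_{i,u}$, $s_{i',v}$ and the correct $y_b^{\mathrm{bit}_b(j)}$ are already ``covered'' by other edges of the solution; otherwise the EDS has to pay one extra edge at this component. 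Additional consistency $P_3$ components, attached to the $s_{i,\cdot}$'s within each color class and to the $\{y_b^0,y_b^1\}$ pairs, will force any EDS of budget $k'$ to cover exactly one $s_{i,v_i}$ per color $i$ and exactly one consistent instance index $j\in[t]$.

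The budget $k'$ is then set to the tight baseline (one edge per $P_3$ component of $G'-X$ plus the unavoidable cost of the consistency structures inside $X$), so that reaching $k'$ forces every single edge-checker $P_3$ to be paid at baseline only. In the forward direction, any multicolored clique $\{v_1,\ldots,v_k\}$ in some $G_j$ canonically produces a solution of size exactly $k'$ by activating each $s_{i,v_i}$ and each $y_b^{\mathrm{bit}_b(j)}$. In the reverse direction, an EDS of size $k'$ cannot afford the one-extra-edge penalty at any edge-checker, and the consistency constraints then pin down a single $j$ and a single $v_i$ per color, whose induced edges must be present in $G_j$, i.e., form a multicolored $k$-clique. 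The main obstacle is the combinatorial engineering of the $X$-attachments: the local ``logic'' of a $P_3$ is extremely limited (a binary choice of internal edge, with $\MEDS(P_3)=1$), so the three global requirements---one-hot vertex selection per color, one-hot binary-encoded instance selection, and edge-of-instance consistency---must be distributed across many $P_3$ components in such a way that their baseline deficits add up correctly exactly for those solution configurations that witness a multicolored clique in one of the $t$ input instances.
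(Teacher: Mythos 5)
Your high-level plan coincides with the paper's: a cross-composition from \MCC with vertex-selector and instance-selector vertices placed in $X$, one $P_3$ per edge of each input instance, and a tight budget. However, the central gadget is specified with the forcing direction inverted, and this breaks the argument. You describe each edge-checker $P_3$ so that it ``can be dominated at the baseline cost of a single edge iff $s_{i,u}$, $s_{i',v}$ and the correct $y_b^{\mathrm{bit}_b(j)}$ are already covered; otherwise the EDS has to pay one extra edge.'' Combined with your tight budget, which forces \emph{every} component to be at baseline, this would require the attached selector vertices of every edge-checker of every instance to be covered --- i.e., essentially all $s_{i,v}$ and all $y_b^0,y_b^1$ --- which is incompatible with the one-hot selection you need. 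If you instead relax the budget so that all but $\binom{k}{2}$ checkers may pay the penalty, the tightness that drives the reverse direction collapses. What is needed is the converse mechanism: by default a $P_3$ should impose \emph{no} requirement on its $X$-neighbors (it just takes an internal edge covering the attached endpoint, at cost $1$), and only an explicitly \emph{activated} $P_3$ should force its $X$-neighbors into the solution. The paper achieves activation with a pendant-forced set $S$ of $\binom{k}{2}$ vertices (one per color pair, each with a private degree-one neighbor in $S'$) adjacent only to the \emph{middle} vertices of the $P_3$'s: covering $S$ consumes the budget of exactly $\binom{k}{2}$ components on external edges that leave their endpoints uncovered, and only then must those endpoints' neighbors (the two clique vertices in $V$ and the instance code in $W$) be covered. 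Your proposal has no analogue of this gadget, so nothing ever forces any checker to fire, and the trivial solution of one internal edge per $P_3$ meets the budget regardless of whether any instance has a multicolored clique.

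Relatedly, the ``consistency $P_3$ components'' that are supposed to enforce exactly one covered $s_{i,v}$ per color and a consistent instance code are only asserted, and a $P_3$ cannot express ``exactly one of these $X$-vertices is covered'': an uncovered endpoint forces \emph{all} of its $X$-neighbors to be covered, and a covered endpoint forces nothing. The paper enforces the cardinality constraints not by local gadgets but by pendant-forced vertices ($T'$ and $Z'$) together with global edge counting: exactly $k$ solution edges go from $T$ to $V$ and exactly $s$ from $Z$ to $W$, and the accounting shows that no other solution edge touches $V\cup W$, so exactly one vertex per color class and exactly one instance code are covered. To complete your proof you would need to add both the activation gadget and this budget bookkeeping explicitly.
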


\begin{proof}
  To prove the theorem we give a cross-composition from the \NP-hard \MCC problem to \EDS parameterized by the size of a modulator to a $P_3$-component graph. Input instances are of the form $(G_i,k_i)$ where $G_i$ comes with a partition of the vertex set into $k$ color classes. (Since the color classes are of equal size it holds that $k \leq |V(G_i)|$.)  
  For the polynomial equivalence relation $\mathcal{R}$ we take the relation that puts two instances $(G_1,k_1)$, $(G_2,k_2)$ of \MCC in the same equivalence class if $k_1=k_2$ and $|V(G_1)|=|V(G_2)|$. It is easy to check that $\mathcal{R}$ is a polynomial equivalence relation. (Instances with size at most $N$ have at most $N$ vertices. Thus, we get at most $N^2$ classes for instances of size at most $N$.)
  
  Let a sequence of instances $I_i=(G_i,k)_{i=1}^{t}$ of \MCC be given that are equivalent under $\mathcal{R}$.
  We identify the color classes of the input graphs so that all graphs have the same vertex set $V$ and the same color classes $V_1, V_2, \ldots, V_k$. Let $n:=|V_i|$ be the number of vertices of each color class; thus, each instance has $|V|=n\cdot k$ vertices. We assume w.l.o.g.\ that every instance has at least one edge in $E(V_p,V_q)$ for all $1 \leq p < q \leq k$; otherwise, this instance would be a trivial no instance and we can delete it. Furthermore, we can assume w.l.o.g.\ that $t=2^s$ for an integer $s$, since we may copy some instances if needed (while at most doubling the number of instances and increasing $\log t$ by less than one).

  \begin{figure}[t]
   \centering
  \begin{tikzpicture}[scale=1, transform shape]
     % Z, Z', W, R
     \node[draw, ellipse, minimum width=3cm, minimum height=0.5cm] (Z') at (0,0)[label=right:$Z'$] {};
     \node[draw, ellipse, minimum width=3cm, minimum height=0.5cm] (Z) at (0,-1) [label=right:$Z$] {};
     \node[draw, rectangle, rounded corners, minimum width=6cm, minimum height=0.7cm] (W) at (0,-2) [label=right:$W$] {};
      %around Instances
     \node[draw, rectangle, minimum width=12cm, minimum height=4cm, fill, color=bgray] at (0,-4.5) [] {}; 
     % Instances
     \foreach \i/\j/\place in {0/1/left,1/2/left,2/3/left,3/4/left,5/{t-1}/left,6/{t}/left}
	\node[draw, rectangle, minimum width=1cm, minimum height=3cm] (I_\i) at (1.6*\i-4.8,-4.5) [label=\place:$I_{\j}$] {};
     \node at (1.6*5-6.7,-4.5) {$\ldots$}; 
     % V, S
     \node[draw, rectangle, rounded corners, minimum width=4cm, minimum height=1cm] (V) at (-2.5,-7.5) [label=left:$V$] {};
     \node[draw, rectangle, rounded corners, minimum width=4cm, minimum height=0.55cm] (S) at (2.5,-7.12) [label=right:$S$] {};
     \node[draw, rectangle, rounded corners, minimum width=4cm, minimum height=0.55cm] (S) at (2.5,-7.88) [label=right:$S'$] {};
     % T,T',Q,Q'
     \node[draw, ellipse, minimum width=3cm, minimum height=0.5cm] (T) at (-2.5,-9) [label=left:$T$] {};
     \node[draw, ellipse, minimum width=3cm, minimum height=0.5cm] (T') at (-2.5,-10) [label=left:$T'$] {};
     % Edges Z-Z', 
     \foreach \i in {-2,-1,...,3}{
	\node[draw, circle, inner sep=1pt, fill] (zp_\i) at (0.5*\i-0.25,0) {};
	\node[draw, circle, inner sep=1pt, fill] (z_\i) at (0.5*\i-0.25,-1) {};
	\draw (zp_\i) -- (z_\i);}
     % Edges T-T'
     \foreach \i in {-2,-1,...,1}{
	\node[draw, circle, inner sep=1pt, fill] (t_\i) at (0.5*\i -2.5+0.25,-9) {};
	\node[draw, circle, inner sep=1pt, fill] (tp_\i) at (0.5*\i -2.5+0.25,-10) {};
	\draw (t_\i) -- (tp_\i);
      }
      %Edges Z-W
      \foreach \j in {-5,-4,...,6}{
	  \node[draw, circle, inner sep=1pt, fill] (w_\j) at (0.5*\j -0.25,-2) {};	
	  \foreach \i in {-2,-1,...,3}{
	      \draw (z_\i) -- (w_\j);}}
      %Edges T-V
      \foreach \j in {-5,-4,...,6}{
	  \node[draw, circle, inner sep=1pt, fill] (v_\j) at (0.3*\j -2.65,-7.5) {};
      }
      \foreach \i in {-2,-1,...,1}{
           \foreach \j in {0,1,2}{
              \pgfmathtruncatemacro{\number}{3 *\i +1 +\j};
	      \draw (t_\i) -- (v_\number);}}
      %Edges S
      \foreach \j in {1,2,...,6}{
	  \node[draw, circle, inner sep=1pt, fill] (s1_\j) at (0.7*\j +0.05,-7.1) {};
	  \node[draw, circle, inner sep=1pt, fill] (s2_\j) at (0.7*\j +0.05,-7.9) {};
	  \draw (s1_\j) -- (s2_\j);
	}
      %Edges inside Instances
      \foreach \i in {0,1,2,3,5,6}{
	  \foreach \j in {1,3,7,9}{
	    \node[draw, circle, inner sep=1pt, fill] (x_\i_\j) at (1.6*\i +1.2-6.25,-3-0.3*\j) {};
	    \node[draw, circle, inner sep=1pt, fill] (e_\i_\j) at (1.6*\i +1.2-6,-0.3*\j-3) {};
	    \node[draw, circle, inner sep=1pt, fill] (y_\i_\j) at (1.6*\i +1.2-5.75,-3-0.3*\j) {};
	    \draw (x_\i_\j) -- (e_\i_\j);
	    \draw (y_\i_\j) -- (e_\i_\j);}
	    \node at (1.6*\i +1.2-6,-2.6-0.3*6) {$\vdots$};
      }    
      %Edges between instances
      %G_0
      \foreach \i in {-5,-4,...,0}{
	  \draw[color=lgray] (w_\i) -- (x_0_1);
	  \draw[color=lgray] (w_\i) -- (x_0_3);}
      %G_3
      \foreach \i in {-5,-4,-3,5,6}{
	  \draw[color=lgray] (w_\i) -- (x_3_1);
	  \draw[color=lgray] (w_\i) -- (x_3_3);}
      %G_0
      \draw[color=lgray] (x_0_9) -- (v_-1);
      \draw[color=lgray] (x_0_9) -- (v_1);
      \draw[color=lgray] (e_0_9) -- (s1_1);
      %G_0
      \draw[color=lgray] (x_0_7) -- (v_-5);
      \draw[color=lgray] (x_0_7) -- (v_-2);
      \draw[color=lgray] (e_0_7) -- (s1_2);
      %G_3
      \draw[color=lgray] (x_3_9) -- (v_3);
      \draw[color=lgray] (x_3_9) -- (v_5);
      \draw[color=lgray] (e_3_9) -- (s1_3);
      %G_3
      \draw[color=lgray] (x_3_7) -- (v_0);
      \draw[color=lgray] (x_3_7) -- (v_1);
      \draw[color=lgray] (e_3_7) -- (s1_1);
      % indicate edges
      \foreach \k in {40,60,...,140}{
	\foreach \i in {1,2,5,6}{
	  \foreach \j in {1,3,7,9}{
	    \draw[color=lgray] (x_\i_\j) -- +(\k:0.23);}}
	\foreach \i in {0,3}{
	  \foreach \j in {7,9}{
	    \draw[color=lgray] (x_\i_\j) -- +(\k:0.23);}}
      }
      \foreach \k in {260,290}{
	\foreach \i in {1,2,5,6}{
	  \foreach \j in {1,3,7,9}{
	    \draw[color=lgray] (x_\i_\j) -- +(\k:0.23);
	    \draw[color=lgray] (e_\i_\j) -- +(-30-15*\i:0.23);}}
	\foreach \i in {0,3}{
	  \foreach \j in {1,3}{
	    \draw[color=lgray] (x_\i_\j) -- +(\k:0.23);
	    \draw[color=lgray] (e_\i_\j) -- +(-30-15*\i:0.23);}}
      }
  \end{tikzpicture}
  \caption{Construction of the graph $G'$ with $k=4$, where $X'=W \cup Z \cup Z' \cup V \cup T \cup T' \cup S\cup S'$}
  \label{figure::lower_bound}
  \end{figure}
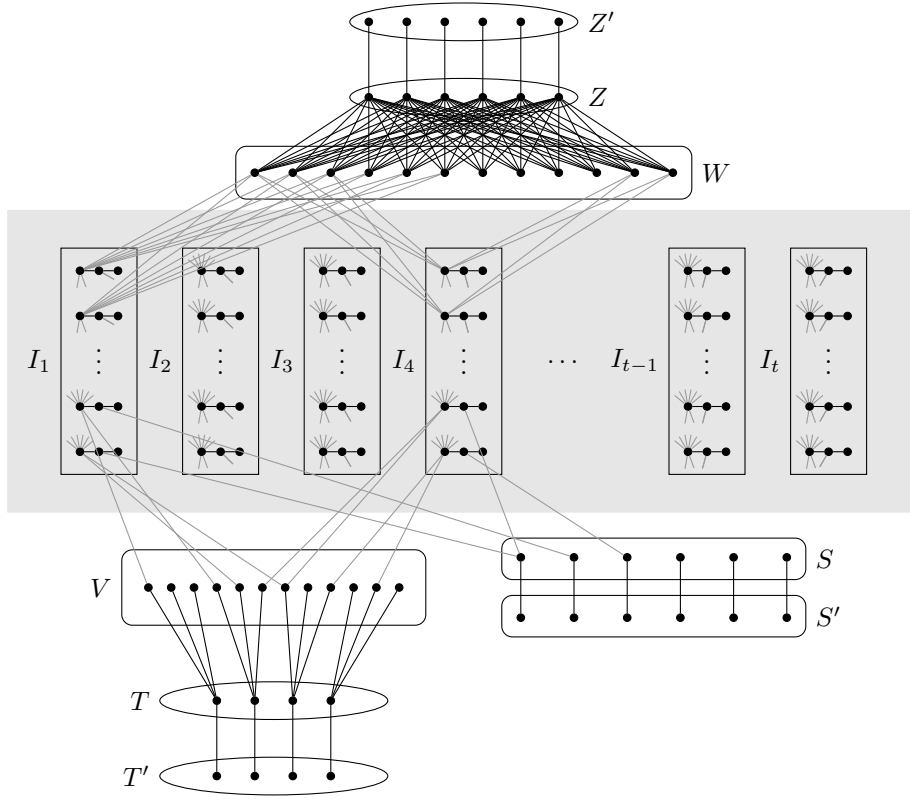
  Now, we construct an instance $(G',k',X')$ of \EDS parameterized by the size of a modulator to a $P_3$-component graph, where the size of $X'$ is polynomially bounded in $n+k+s$ (see Figure \ref{figure::lower_bound} for an illustration). 
  We add a set $V$ consisting of $k \cdot n$ vertices to graph $G'$ which represents the vertices of the $t$ instances. The set $V$ is partitioned into the $k$ color classes $V_1,V_2,\ldots, V_k$. To choose which vertices are contained in a clique of size $k$, we add a set $T=\{t_1,t_2,\ldots,t_k\}$ and a set $T'=\{t_1',t_2',\ldots,t_k'\}$, each of size $k$, to $G'$. We make $t_j \in T$, with $j \in [k]$, adjacent to all vertices in $V_j$ and to vertex $t_j' \in T'$.
  Next, we add two sets $Z$, $Z'$, each of size $s$, and a set $W$ of size $2s$ to $G'$ and add edges to $G'$ such that each vertex in $Z$ has exactly one private neighbor in $Z'$ and is adjacent to all vertices in $W$. The set $W$ contains $\binom{2s}{s} \geq 2^s$ different subsets of size $s$. For each instance $(G_i,k)$, with $i \in [t]$, we pick a different subset of size $s$ of $W$ and denote it by $W(i)$.
  For all $1\leq p<q \leq k$ we add a vertex $s_{p,q}$ and a vertex $s_{p,q}'$ to $G'$; these will correspond to edge sets $E(V_p,V_q)$. Let $S=\{s_{p,q} \mid 1\leq p < q \leq k\}$ and $S'=\{s_{p,q}' \mid 1 \leq p < q \leq k\}$. We make vertex $s_{p,q}$ adjacent to vertex $s_{p,q}'$ for all $1 \leq p < q \leq k$.
  For each graph $G_i$, for $i \in [t]$, we add $|E(G_i)|$ paths of length two to the graph $G'$; every $P_3$ represents exactly one edge of the graph $G_i$. Let $P_i^e=u_{i,1}^e u_i^e u_{i,2}^e$ denote the path of instance $i \in [t]$ that represents edge $e \in E(G_i)$. 
  Finally, we make vertices in $P_i^e$, with $i \in [t]$ and $e \in E(G_i)$, adjacent to vertices in the sets $W$, $V$, and $S$ as follows:
  We make vertex $u_{i,1}^e$ of path $P_i^e$, with $i \in [t]$, which represents edge $e=\{x,y\} \in E(G_i)$ adjacent to the vertices $x,y$ in $V$ and to all vertices in the set $W(i) \subseteq W$. Additionally, we make vertex $u_i^e$ adjacent to vertex $s_{p,q}$ where $1 \leq p < q \leq k$ such that $e \in E(V_p,V_q)$.
  
  The set $X'$ is defined to contain all vertices that do not participate in the paths $P_i^e$, i.e., $X'=W \cup Z \cup Z' \cup V \cup T \cup T' \cup S\cup S'$. Clearly, $G-X'$ is a $P_3$-component graph and $|X'|=4s+k \cdot n+2k+2\cdot\binom{k}{2}$. Let $k'=k+s+\sum_{i=1}^t |E(G_i)|$. Note that the size of $k'$ can depend linearly on the number of instances, because our parameter is the size of $X'$, which is polynomially bounded in $n+s$, as $k\leq n$.
  We return the instance $(G',k',X')$; clearly, this instance can be generated in polynomial time.
  
  Now, we have to show that $(G',k',X')$ is a $\mathrm{YES}$-instance of \EDS if and only if there exists an $i^* \in [t]$ such that $(G_{i^*},k)$ is a $\mathrm{YES}$-instance of \MCC.
  
  $(\Rightarrow:)$ Assume first that $(G',k',X')$ is yes for EDS and that there exists an edge dominating set $F$ of size at most $k'$ in $G'$. 
  We can always pick $F$ such that it fulfills the following properties (most hold for all solutions of size at most $k'$): 
  \begin{enumerate}
   \item The vertex sets $S$, $T$, and $Z$ must be subsets of $V(F)$: E.g., for each edge $\{z,z'\}$ with $z\in Z$ and $z'\in Z'$ the set $V(F)$ must contain $z$ or $z'$; if it contains $z'$ then $\{z,z'\}\in F$ as it is the only edge incident with $z'$; either way we get $z\in V(F)$. The same applies for $S$ and $S'$, and for $T$ and $T'$.
   \item Because $S,T,Z\subseteq V(F)$ but $S\cup T\cup Z$ is an independent set, the set $F$ must contain at least $|S|$ edges incident with $S$, $|T|$ edges incident with $T$, and $|Z|$ edges incident with $Z$. By straightforward replacement arguments we may assume that $F$ contains exactly the following edges incident with $S\cup T\cup Z$: $|T|$ edges between $T$ and $V$, $|Z|$ edges between $Z$ and $W$, and $|S|$ edges between $S$ and middle vertices $u_i^e$ of $P_3$'s in $G'-X'$. Furthermore, we can assume that these edges are a matching, because no color class is empty, no edge set $E(V_p,V_q)$ is empty, and $Z$ is adjacent to all vertices in $W$.
   \item For each $P_i^e=u^e_{i,1} u^e_i u^e_{i,2}$, which represents the edge $e$ of instance $(G_i,k)$, at least vertex $u^e_i$ must be an endpoint of an edge in $F$: Indeed, to cover the edge $\{u^e_i,u^e_{i,2}\}$ one of its two vertices must be in $V(F)$. Similar to Property 1 above, if $u^e_{i,2}\in V(F)$ then $F$ must contain its sole incident edge $\{u^e_i,u^e_{i,2}\}$ and, hence, $u^e_i\in V(F)$.
   \item An edge in $F$ cannot have its endpoints in two different $P_3$'s of $G'-X'$ because no such edges exist. 
  \end{enumerate}
  Let $F_T=F \cap E(T,V)$, let $F_Z=F \cap E(Z,W)$, let $F_S=F \cap E(S, \{u^e_i\mid i\in[t], e\in E(G_i)\})$, and let $F_R=F \setminus (F_T \cup F_Z \cup F_S)$. Hence, due to Properties 1 and 2, we have
  \[
  |F_R| \leq k' - |F_T| -|F_Z| - |F_S|\leq \sum_{i=1}^t |E(G_i)| - \binom{k}{2}.
  \] 
  By Property 3, all vertices $u^e_i$ are endpoints of edges in $F$. Among $F_T\cup F_Z\cup F_S$ this can only be true for the $|S|=\binom{k}{2}$ edges in $F_S$. Since there are exactly $\sum_{i=1}^t |E(G_i)|$ vertices $u^e_i$, which is (greater or) equal to $|F_R|+|F_S|$, and there are no edges connecting different such vertices, each edge in $F_R\cup F_S$ is incident with a private vertex $u^e_i$. This also implies that all edges in $F_R$ have no endpoints in $V\cup W$ as those sets are not adjacent to any vertex $u^e_i$.
  Thus, in $W$ exactly the $|Z|=s$ endpoints of $F_Z$ are endpoints of $F$. Similarly, in $V$ exactly the $|T|=k$ endpoints of $F_T$ are endpoints of $F$; let $X\subseteq V$ denote this set of $k$ vertices. Observe that by construction of $G'$ the set $X$ contains exactly one vertex from each color class, because $t_j \in T$, for $j \in [k]$, is only adjacent to vertices of $V_j$.

  Now, consider any path $P_i^e=u^e_{i,1} u^e_i u^e_{i,2}$ where $u_i^e$ is an endpoint of an edge $f\in F_S$. Clearly, the other endpoint of $f$ lies in $S$, and, by the above accounting, no other edge of $F$ is incident with $u^e_{i,1}$ or $u^e_{i,2}$. In particular, this implies that all neighbors of $u^e_{i,1}$ in $W$ and $V$ must be endpoints of edges in $F$. If $e=\{x,y\}$ then these neighbors of $u^e_{i,1}$ are the set $W(i)\subseteq W$ and the vertices $x,y\in V$, and, by construction of $G'$, the edge $\{x,y\}$ must exist in $G_i$. Thus, $W(i) \cup \{x,y\} \subseteq V(F)$ which implies that $x,y \in X$.
  
  Repeating this argument for all $|S|=\binom{k}{2}$ paths of this type, we can conclude the following: (1) All paths correspond to the same instance $i^*\in[t]$ because we require $W(i)\subseteq V(F)$, but exactly $|Z|=|W(i^*)|=s$ such vertices are in $V(F)$. (Different values of $i$ would require different sets $W(i)$, exceeding size $s$.) (2) There are $\binom{k}{2}$ edges of $G_{i^*}$ represented by the paths and all their endpoints must be in $X=V\cap V(F)$. Since $|X|=k$, the edges must form a clique of size $k$ on vertex set $X$ in $G_{i^*}$. We already observed above that $X$ contains exactly one vertex per color class, hence, instance $(G_{i^*},k)$ is yes, as claimed.
    
  $(\Leftarrow:)$
  For the other direction, assume that for some $i^* \in [t]$ the \MCC instance $(G_{i^*},k)$ is a $\mathrm{YES}$-instance. Let $X=\{x_1,x_2,\ldots,x_k\} \subseteq V$ be a multicolored clique of size $k$ in $G_{i^*}$ with $x_j \in V_j$ for $j \in [k]$, let $E'$ be the set of edges of the clique $X$, and let $e_{p,q}=\{x_p,x_q\}$ for $1\leq p < q \leq k$. We construct an edge dominating set $F$ of $G'$ of size at most $k'$ as follows:
  First we add the $k$ edges $\{t_j,x_j\}$ for $j \in [k]$ between $T$ and $X\subseteq V$; thus, $T\cup X\subseteq V(F)$. We then add a maximum matching (of size $s$) between $W(i^*) \subseteq W$ and $Z$ to the set $F$. This matching saturates $W(i^*)$ and $Z$ because $|Z|=|W(i^*)|=s$; thus, $W(i^*)\cup Z\subseteq V(F)$. Next, we add the edges $\{u^{e_{p,q}}_{i^*},s_{p,q}\}$ for all edges $e_{p,q} \in E'$, with $1\leq p < q \leq k$, to the set $F$; hence $S \subseteq V(F)$. Finally, for all other paths $P_i^e$, with $i \in [t]$, $e \in E(G_i)$, and $i \neq i^*$ or $e \notin E'$, we add the edge $\{u^e_{i,1}, u^e_i\}$ to $F$. (We have thus selected exactly one edge incident with each path of $G'-X'$.) By construction, it holds that $|F|=k+s+ \sum_{i=1}^t |E(G_i)| = k'$.
  
  It remains to show that $F$ is indeed an edge dominating set of $G'$. To prove this, it suffices to show that $V(G')-V(F)$ is an independent set in $G'$. We already know that $S\cup T\cup W(i^*)\cup X\cup Z\subseteq V(F)$. Moreover, $V(F)$ contains the middle vertex $u^e_i$ for all $P_3$'s in $G'-X'$ and it contains $u_{i,1}^e$ for all $P_3$'s that do not correspond to an edge of the clique $X$ (i.e., with $i\neq i^*$ or with $i=i^*$ but $e\neq e_{p,q}$ for any $1\leq p<q\leq k$). The sets $S'$, $T'$, and $Z'$ are independent sets whose neighborhoods $S$, $T$, and $Z$ are subsets of $V(F)$. Similarly, all vertices $u_{i,2}^e$ have their single neighbor $u^e_i$ in $V(F)$. Thus, only vertices in $W\setminus W(i^*)$ and $V\setminus X$ could possibly be adjacent to vertices $u_{i^*,1}^{e_{p,q}}$, which correspond to the edges of $G_{i^*}[X]$, in $G'-V(F)$, but this can be easily refuted: Indeed, each $u_{i^*,1}^{e_{p,q}}$ is adjacent only to $x_p$ and $x_q$ in $V$, which are both in $X\subseteq V(F)$, and to the vertices in $W(i^*)$ in $W$, but $W(i^*)\subseteq V(F)$ as well. Thus $V(G')-V(F)$ is an independent set in $G'$ and hence $F$ is an edge dominating set for $G'$ of size at most $k'$. Thus, $(G',k',X')$ is yes, which completes the cross-composition.  

  By Theorem~\ref{theorem:kernellowerbound:crosscomposition} the cross-composition from \MCC implies the claimed lower bound for kernelization.
\end{proof}

We proved that \EDS parameterized by the size of a modulator to a $P_3$-component graph has no polynomial kernelization unless \containment. A similar proof establishes the same lower bound for modulators to $K_3$-component graphs. As mentioned in the introduction this rules out polynomial kernels using modulators to essentially all interesting hereditary graph classes.\footnote{It certainly does completely settle the question for modulators to $\cH$-component graphs for all \emph{hereditary} classes $\cH$. If $\cH$ contains any connected graph with at least three vertices then we get a lower bound; else all connected components have one or two vertices and there is a polynomial kernel.}

%%%%%%%%%%%%%%%%%%%%%%%%%%%%%%%%%%%%%%%%%%%%%%%%%%%%%%%%%%%%%%%%%%%%%%%%%%%%%%%%%%%%%%%%%%%%%%%%%%%%%%%%%%%%%%%%%%%%%%%%%%%%%%%%%%%%%%%%%%%%%%%%%%%%%%%%%%%%%%%%%%%%%%%%%%%%%%%%%%%%%%%%%%%%%%%%%%%%%%%%%%%%%%%%
\subsection[Polynomial kernel for EDS parameterized by the size of a modulator to a P5-component graph]{Polynomial kernel for EDS parameterized by the size of a modulator to a $\boldsymbol{P_5}$-component graph}

To illustrate why other, non-hereditary, sets $\cH$ may well allow polynomial kernels for parameterization by the size of a modulator $X$ to an $\cH$-component graph, we sketch a simple kernelization for the case of $\cH=\{P_5\}$, i.e., when components of $G-X$ are isomorphic to the path of length four. This does not use the full generality of the kernelization obtained in Section \ref{section::classification} because $P_5$ does not have any (later called) uncovered vertices or (later called) strongly beneficial sets (which are the main source of complication).

For the kernelization we need the following theorem which is due to Hopcroft and Karp \cite{DBLP:journals/siamcomp/HopcroftK73}. The second claim of the theorem is not standard (but well known).
\begin{theorem}[\cite{DBLP:journals/siamcomp/HopcroftK73}] \label{theorem::matching}
    Let $G$ be an undirected bipartite graph with partition $R$ and $S$, on $n$ vertices and $m$ edges. Then we can find a maximum matching of $G$ in time $\Oh(m \sqrt{n})$. Furthermore, in time $\Oh(m \sqrt{n})$ we can find either a maximum matching that saturates $R$ or a set $Y \subseteq R$ such that $|N_G(Y)|<|Y|$ and such that there exists a maximum matching $M$ in $G - N_G[Y]$ that saturates $R \setminus Y$.
\end{theorem}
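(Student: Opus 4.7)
The first claim is just the classical Hopcroft--Karp algorithm, which I would invoke as a black box. The interesting part is the structural ``either/or'' claim, and I would prove it via alternating-path reachability from the unsaturated $R$-vertices.

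Concretely, I first run Hopcroft--Karp to obtain a maximum matching $M^*$ of $G$ in time $\Oh(m\sqrt{n})$. If $M^*$ saturates $R$, return it. Otherwise, let $R_0\subseteq R$ denote the nonempty set of $M^*$-unsaturated vertices, and run, in time $\Oh(m+n)$, a BFS on the alternating graph starting from $R_0$: from an $R$-vertex follow any non-matching edge, from an $S$-vertex follow its unique matching edge (if one exists). Define $Y\subseteq R$ as the set of reached $R$-vertices, and observe that the set of reached $S$-vertices equals $N_G(Y)$; return $Y$.

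Correctness rests on two checks. First, $|N_G(Y)|<|Y|$: maximality of $M^*$ forbids augmenting paths, so every reached $S$-vertex is $M^*$-matched, and following its matching edge pairs $N_G(Y)$ bijectively with $Y\setminus R_0$, giving $|N_G(Y)|=|Y|-|R_0|<|Y|$. Second, $M:=M^*\cap E(G-N_G[Y])$ is a matching of $G-N_G[Y]$ that saturates $R\setminus Y$, hence is maximum in that bipartite subgraph: every $r\in R\setminus Y$ is $M^*$-matched (else $r\in R_0\subseteq Y$) to some $s\in S$, and $s\notin N_G(Y)$, for otherwise extending the alternating path reaching $s$ by the matching edge $\{s,r\}$ would force $r\in Y$, a contradiction.

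The only nonroutine conceptual point---where I would expect to spend the most care---is recognising that this particular reachability definition of $Y$ is precisely what simultaneously yields the Hall violation and the non-crossing of $M^*$-edges from $R\setminus Y$ into $N_G(Y)$; an arbitrary Hall-violating $Y$ would not automatically support a matching of $G-N_G[Y]$ that saturates $R\setminus Y$. The total running time is $\Oh(m\sqrt{n})$ for the matching plus $\Oh(m+n)$ for the BFS, matching the bound in the statement.
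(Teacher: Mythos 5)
Your proof is correct. The paper does not actually prove this statement---it is invoked as a known result (citing Hopcroft and Karp, with the remark that the second claim is ``not standard (but well known)'')---so there is no in-paper argument to compare against. Your derivation via alternating-path reachability from the $M^*$-unsaturated vertices of $R$ is the standard way to establish the second claim, and the key verifications are all present and sound: the reached $S$-vertices are exactly $N_G(Y)$ and are paired bijectively with $Y\setminus R_0$ by the matching (giving the Hall violation), no $M^*$-edge incident with $R\setminus Y$ enters $N_G(Y)$ (so $M^*$ restricted to $G-N_G[Y]$ saturates $R\setminus Y$ and is therefore maximum there), and the extra work beyond Hopcroft--Karp is a single $\Oh(m+n)$ traversal.
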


\begin{theorem} \label{theorem::kernel_P5}
    \EDS parameterized by the size of a given modulator $X$ to a $P_5$-component graph admits a kernel with $\Oh(|X|)$ vertices.
\end{theorem}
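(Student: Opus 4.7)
The plan is to reduce the number of $P_5$-components of $G-X$ to $\Oh(|X|)$ via marking rules based on interaction patterns. The central structural fact is that for every $B\subseteq V(P_5)$,
\[
  |B| + \MEDS(P_5 - B) \;\geq\; \MEDS(P_5) \;=\; 2,
\]
verified by a direct enumeration of the $32$ subsets of $V(P_5)$. Equality is attained exactly for $B\in\{\emptyset,\{v_1\},\{v_2\},\{v_4\},\{v_5\},\{v_2,v_4\}\}$, which I call the \emph{efficient patterns}. In particular, $P_5$ has no strongly beneficial set, so each component contributes at least $2$ to the cost of any solution via its incident edges; by a standard exchange argument one may assume that each component in a solution uses an efficient pattern, where $B$ denotes the set of $C$-vertices incident to boundary edges of $F$ between $X$ and $V(C)$.

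Given this, for each efficient non-empty pattern I mark only enough components to witness every useful interaction with $X$. For each singleton pattern $\{v_i\}$ with $i\in\{1,2,4,5\}$ and each $x\in X$ adjacent to a position-$i$ vertex of some component, I mark one witnessing component; this accounts for at most $4|X|$ components. For the pair pattern $\{v_2,v_4\}$ I apply Theorem~\ref{theorem::matching} to a bipartite interaction graph whose left side is $X$ and whose right side represents $\{v_2,v_4\}$-slots of components, obtaining either a matching saturating $X$ or a deficient witness $Y\subseteq X$ together with a matching of $X\setminus Y$ in the remainder; in either case an additional $\Oh(|X|)$ components suffice to realise every efficient use of this pattern. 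I also retain a constant pool of components without any boundary edges to act as generic backup. All unmarked components are deleted and $k$ is decremented by $2$ per deletion.

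Safeness is by a two-directional exchange. The forward direction is trivial: append two canonical internal edges per deleted component. For the reverse direction, given a solution $F$ of the original instance, I reroute each deleted component's boundary edges to marked components of the same pattern and replace the deleted component's contribution by two internal edges; the matching produced by Theorem~\ref{theorem::matching} provides a feasible target for the pair-pattern reroutes, and the absence of strongly beneficial sets makes every swap cost-neutral. The main obstacle will be the simultaneous handling of several patterns in the same component (e.g.\ a marked component needed by $F$ via $\{v_1\}$ but reserved by the reduction for a pair-pattern swap); this is resolved by combining the efficient-pattern exchange with the deficient-set witness from Theorem~\ref{theorem::matching}, so that reshuffling one pattern never exhausts the supply of another. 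After reduction, $G$ has $|X|+5\cdot\Oh(|X|)=\Oh(|X|)$ vertices as claimed.
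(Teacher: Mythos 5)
There is a genuine gap --- in fact two. Your structural enumeration is correct (every non-middle vertex of $P_5$ is what the paper calls free, and $\{v_2,v_4\}$ is the only larger zero-surplus pattern), but the marking scheme built on it is not a safe reduction. First, marking ``one witnessing component'' per pair $(x,i)$, with the witness chosen arbitrarily, does not preserve a system of distinct representatives: if $x$ is adjacent to free vertices of components $C$ and $C'$ while $x'$ is adjacent only to $C$, your rule may mark $C$ as the witness for both and delete $C'$; then the reduced instance (with $k$ already lowered by $2$ for $C'$) can cover only one of $x,x'$ at zero extra cost, whereas the original could cover both. A single marked component cannot absorb two singleton patterns at once, since $|\{v_1,v_2\}|+\MEDS(P_5-\{v_1,v_2\})=3>2$, so the conflict you flag as the ``main obstacle'' cannot be resolved by stacking patterns onto one witness; what is needed is a matching assigning each relevant $x\in X$ a \emph{private} component, i.e.\ the dichotomy of Theorem~\ref{theorem::matching} applied to the whole free-vertex interaction graph, not only to the $\{v_2,v_4\}$ slots. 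This is precisely what the paper does: it either saturates $X$ (and then deletes those modulator vertices outright, each owning a private component through which a locally optimal solution covers it) or isolates a deficient set $Y$ with fewer than $|Y|$ incident components, and only those few components are kept.

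Second, the direction from a reduced solution back to an original one is not ``trivial.'' A deleted component may be adjacent to modulator vertices that the reduced solution leaves uncovered, and a minimum edge dominating set of $P_5$ has only four endpoints, so no choice of two internal edges covers $v_1$, $v_3$ and $v_5$ simultaneously; if a deleted component attaches to uncovered vertices of $X$ at such positions, appending ``two canonical internal edges'' leaves edges between that component and $X$ undominated. The paper sidesteps this by deleting the saturated part $X_1$ of the modulator \emph{before} deleting components, so that every subsequently deleted component meets the remaining modulator only at its middle vertex $v_3$, which the canonical local solution $\{v_2,v_3\},\{v_3,v_4\}$ does cover. Without an argument guaranteeing that deleted components attach to $X$ only at positions covered by the re-inserted local solutions, the rule ``delete all unmarked components and decrement $k$ by $2$ each'' is unsound.
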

\begin{proof}
    Let $(G,k,X)$ be an instance of \EDS parameterized by the size of a modulator to a $P_5$-component graph, and let $\C$ be the set of connected components of $G-X$.
  We construct a bipartite graph $G_B$ where one part is the set $X$, the other part consists of one vertex $s_P$ for every connected component $P$ in $\C$, and where there is an edge between $x \in X$ and $s_P$ with $P=w_1w_2w_3w_4w_5 \in \C$ if and only if $x$ is adjacent to a vertex of $P$ that is not the middle vertex $w_3$. Now, we apply Theorem~\ref{theorem::matching} to obtain either a maximum matching in $G_B$ that saturates $X$ or a set $Y \subseteq X$ such that $|N_{G_B}(Y)| < |Y|$ and such that there exists a maximum matching in $G_B - N_{G_B}[Y]$ that saturates $X \setminus Y$. If there exists a maximum matching in $G_B$ that saturates $X$ then let $X_1=X$ and $X_2 = \emptyset$. Otherwise, if there exists a set $Y$ with the above properties then let $X_1=X \setminus Y$ and $X_2 = Y$. Observe that $X_2$ also contains the vertices in $X$ that are only adjacent to middle vertices of components in $\C$, and the vertices in $X$ that are not adjacent to any component in $\C$. Let $M$ be a maximum matching in $G_B-N_{G_B}[X_2]$ that saturates $X_1$. The partition $X_1 \dot\cup X_2$ of $X$ fulfills the following properties:
  \begin{itemize}
      \item Let $\C_2$ be the set of connected components $P$ in $\C$ where $s_P$ is a vertex in $N_{G_B}(X_2)$, i.e., $\C_2=\{P=w_1w_2w_3w_4w_5 \in~\C \mid N_G(\{w_1,w_2,w_4,w_5\}) \cap X_2 \neq \emptyset\}$.
      It holds either that $\C_2$ is the empty set (when $X_2 = \emptyset$) or that it contains less than $|X_2|$ connected components of $\C$, i.e., $|\C_2|< |X_2|$ (when $Y=X_2 \neq \emptyset$).
      \item For every vertex $x \in X_1$, let $P_x=w_1^xw_2^xw_3^xw_4^xw_5^x$ be the connected component in $\C_1:=\C \setminus \C_2$ that is paired to $x$ by $M$, i.e., $\{x,s_{P_x}\} \in M$. It holds that there exists a vertex $w^x \in \{w_1^x,w_2^x,w_4^x,w_5^x\}$ such that $\{w^x,x\} \in E(G)$ (definition of $G_B$). Note that $\C_1$ also contains all connected components that are not adjacent to any vertex in $X$ or where only the middle vertex of a path in $\C$ is adjacent to a vertex in $X$.
  \end{itemize}
  Using the above partition, one can show that there exists an optimum solution $S$ that contains for each path $P_x$ with $x \in X_1$ the locally optimal solution $\{\{x,w^x\},\{w_3^x,w_2^x\}\}$ resp.\ $\{\{x,w^x\},\{w_3^x,w_4^x\}\}$ depending on whether $w^x \in \{w_4^x,w_5^x\}$ or $w^x \in \{w_1^x,w_2^x\}$. More generally, for every vertex $w$ of a path $P \in \C$, except the middle vertex, and every vertex $x \in X$ that is adjacent to $w$ there exists a local optimum solution to $P$ that uses edge $\{w,x\}$ and has the middle vertex of $P$ as an endpoint of the second solution edge.
  This is the crucial difference to a path $P'=v_1v_2v_3$ of length two. Here, the only locally optimal solution that dominates $P'$ and contains an edge between $P'$ and $X$ is $\{\{v_2,x\}\}$ with $x \in X$, but this local solution does not contain the vertices $v_1$ and $v_3$. We used this in our lower bound construction to control which $P_3$'s may be used to ``buy'' vertices in $X$.
  \begin{redrule} \label{rule::P5_X}
    Delete $X_1$ from $G$, i.e., let $G'=G-X_1$, $X'=X \setminus X_1=X_2$, and $k'=k$.
  \end{redrule}
  \begin{claim}
    Reduction Rule~\ref{rule::P5_X} is safe.
  \end{claim}
  \begin{claimproof}
    Let $F$ be an edge dominating set of size at most $k$ in $G$. We construct an edge dominating set $F'$ of size at most $k'=k$ in $G'$ by deleting every edge $e=\{x,y\} \in F$ if both endpoints of $e$ are contained in $X_1$, or if exactly one endpoint is contained in $X_1$ and the other endpoint is isolated in $G'$; and by replacing every edge $e=\{x,y\} \in F$ with $x \in X_1$ and $y \notin X_1$ by exactly one edge in $\delta_{G'}(y)$ if $\delta_{G'}(y) \neq \emptyset$.
    It holds that $F'$ has size at most $k=k'$ because we either delete edges in $F$ or replace them one for one by a new edge. Since every vertex in $V(G') \cap V(F)$ is either contained in $V(F')$ or isolated in $G'$ it holds that $F'$ is an edge dominating set in $G'$.

    For the other direction, let $F'$ be an edge dominating set of size at most $k'$ in $G'$.
    Consider the path $P_x=w_1^xw_2^xw_3^xw_4^xw_5^x$ for some vertex $x \in X_1$. It holds that the only vertex in $P_x$ that can be adjacent to a vertex in $X'=X\setminus X_1 = X_2$ is vertex $w_3^x$; otherwise $P_x$ would be a component in $\C_2$ and not in $\C_1$ (by definition of $\C_1$ and $\C_2$).
    Furthermore, the edge dominating set $F'$ must dominate the two non-adjacent edges $\{w_1^x,w_2^x\}$ and $\{w_4^x,w_5^x\}$. Since $w_1^x$, $w_2^x$, $w_4^x$, and $w_5^x$ are only adjacent to vertices in $P_x$ the set $F'$ must contain one of the two edges $e^x_{1,2}=\{w_1^x,w_2^x\}$, $e^x_{2,3}=\{w_2^x,w_3^x\}$ and one of the two edges $e^x_{3,4}=\{w_3^x,w_4^x\}$, $e^x_{4,5}=\{w_4^x,w_5^x\}$. To obtain an edge dominating set of size at most $k$ in $G$ we replace for each vertex $x \in X_1$ these edges with the local optimum solution $\{\{x,w^x\},\{w_3^x,w_2^x\}\}$ resp.\ $\{\{x,w^x\},\{w_3^x,w_4^x\}\}$ depending whether $w^x \in \{w_4^x,w_5^x\}$ or $w^x \in \{w_1^x,w_2^x\}$.
    It holds that $|F| \leq |F'|$ because for every vertex $x \in X_1$ we replace the at least two edges in $F' \cap \{e^x_{1,2}, e^x_{2,3}, e^x_{3,4}, e^x_{4,5}\}$ by the two edges of the locally optimal solution $\{\{x,w^x\},\{w_3^x,w_2^x\}\}$ resp.\ $\{\{x,w^x\},\{w_3^x,w_4^x\}\}$.

    It remains to show that $F$ is indeed an edge dominating set in $G$. The set $V(F)$ contains all vertices in $V(F')$, except some vertices in the connected components $P_x$ with $x \in X_1$ where we change the edge dominating set $F'$. Furthermore, $V(F)$ contains all vertices in $X_1$ because for every vertex $x \in X_1$ the edge $\{w^x,x\}$ is contained in $F$.
    Thus, the only edges that are possibly not dominated by $F$ have one endpoint in a path $P_x$ with $x \in X_1$. Since $w_3^x$ is contained in $V(F)$ (by construction), since every edge in $P_x$ is dominated by $F$ (by construction), and since the vertices in $\{w^x_1,w^x_2,w^x_4,w^x_5\}$ are only adjacent to vertices in $P_x \cup X_1$, it follows that $F$ is an edge dominating set in $G$.
  \end{claimproof}

  After applying Reduction Rule~\ref{rule::P5_X} it holds that for each path $P=w_1w_2w_3w_4w_5 \in \C_1$ only the vertex $w_3$ can be adjacent to a vertex in $X$, and we can assume that every (optimum) solution contains the edges $\{w_2,w_3\}$ and $\{w_3,w_4\}$.
  Additionally, one can show that there exists an optimum solution that does not contain any edge between $\C_1$ and $X$ because we can replace any such edge $e=\{x,v\}$ with $v \in V(\C_1)$ by the edge $\{x,u\}$ with $u \in N_G(x) \setminus V(\C_1)$ (or delete this edge when $N_G(x) \setminus V(\C_1)=\emptyset$).
  This allows us to delete $\C_1$ from $G$.
  \begin{redrule} \label{rule::P5_cc}
    Delete all connected components in $\C_1$ and decrease $k$ by the size of a minimum edge dominating set in $\C_1$, i.e., let $G'=G-\C_1$, $X'=X$, and $k'=k-\MEDS(\C_1)$.
  \end{redrule}
  \begin{claim}
    Reduction Rule~\ref{rule::P5_cc} is safe.
  \end{claim}
  \begin{claimproof}
    First, we will show that there exists an edge dominating set $F$ of size at most $k$ in $G$ such that no edge in $F$ has one endpoint in a connected component of $\C_1$ and the other endpoint in $X$.
    Let $F$ be an edge dominating set of size at most $k$ in $G$ with $F \cap E(\C_1,X)$ minimal, and let $P=w_1w_2w_3w_4w_5$ be a path in $\C_1$.
    We can assume, w.l.o.g., that $F$ contains the edges $\{w_2,w_3\}$ and $\{w_3,w_4\}$ because $F$ must dominate the non-adjacent edges $\{w_1,w_2\}$, $\{w_4,w_5\}$, and the vertices $w_1$, $w_2$, $w_4$, $w_5$ are only adjacent to vertices in $P$; otherwise, $P$ is contained in $\C_2$ and not $\C_1$.
    Now, assume for contradiction that there exists an edge $e=\{x,y\} \in F \cap E(\C_1,C)$ with $x \in X$ and $y \in P$ where $P=w_1w_2w_3w_4w_5$ is a path in $\C_1$. It holds that $y=w_3$ because $w_3$ is the only vertex in $P$ that is adjacent to a vertex in $X$.
    If every vertex $u \in N_G(x)$ is contained in $V(F)$ then let $\widetilde{F}=F \setminus \{e\}$. Otherwise, let $\widetilde{F}=F \setminus \{e\} \cup \{\{x,u\}\}$, where $u \in N_G(x) \setminus V(F)$. It holds that $\widetilde{F}$ is an edge dominating set in $G$ because $y=w_3$ is still a vertex in $V(\widetilde{F})$ which implies $V(F) \subseteq V(\widetilde{F})$. Furthermore, $u$ is not contained in a connected component of $\C_1$ because for every path $P=w_1w_2w_3w_4w_5$ in $\C_1$ the vertex $w_3$ is contained in $V(F)$ and no other vertex is adjacent to a vertex in $X$.
    Now, the set $\widetilde{F}$ is an edge dominating set of size at most $k$ in $G$ with $\widetilde{F} \cap E(\C_1,X) \subsetneq  F \cap E(\C_1,X)$ which contradicts the minimality of $F \cap E(\C_1,X)$ and proves that there exists an edge dominating set $F$ of size at most $k$ in $G$ with $F \cap E(\C_1,X) = \emptyset$.
    This implies that $F'=F \setminus E(\C_1)$ is an edge dominating set of size at most $k'$ in $G'$ when $F$ is a solution to $(G,k,X)$ with $F \cap E(\C_1,X) = \emptyset$.

    For the other direction, let $F'$ be an edge dominating set of size at most $k'$ in $G'$. To obtain an edge dominating set $F$ of size at most $k$ in $G$ we add for every path $P=w_1w_2w_3w_4w_5$ in $\C_1$ the two edges $\{w_2,w_3\}$ and $\{w_3,w_4\}$, which are a minimum edge dominating set of $P$, to $F'$. It follows that $F$ has size $|F'|+\MEDS(\C_1) \leq k$.
    The set $F$ dominates all edges in $G-X$ as well as all edges between $\C_2$ and $X$ because $F' \subseteq F$, and because $F$ contains an edge dominating set of $\C_1$. Additionally, $F$ dominates all edges between $\C_1$ and $X$ because $F$ dominates all middle vertices of the paths in $\C_1$ which are the only vertices in $\C_1$ that are adjacent to $X$. Hence, $F$ is an edge dominating set of size at most $k$ in $G$.
  \end{claimproof}
  Let $(G',k',X')$ be the reduced instance. It holds that the set of connected components in $G'-X'$ is $\C_2$ because we delete all other connected components during Reduction Rule~\ref{rule::P5_cc}. Since $|\C_2| \leq |X_2|=|X'|$ it follows that $G'$ has at most $5 \cdot |\C_2|+|X'| \leq 6|X'|$ vertices.
  It remains to show that we can perform the reduction in polynomial time. We apply each Reduction Rule at most once. Furthermore, we can apply the Reduction Rules in polynomial time because we can compute the partition of $X$ as well as the sets $\C_1$ and $\C_2$ in polynomial time, and because we can delete sets of vertices from $G$ and $X$ in polynomial time.
\end{proof}

While this is not the full story about the classification in the following section, it hopefully shows the spirit of how upper and lower bounds for kernelization can arise. Solution edges between components of $G-X$ and $X$ play a crucial role and they affect the solutions for components in nontrivial ways, e.g., apart from control opportunities, it depends on how much budget is needed for $H-B$ when edges between $B$ and $X$ are in the solution.

%%%%%%%%%%%%%%%%%%%%%%%%%%%%%%%%%%%%%%%%%%%%%%%%%%%%%%%%%%%%%%%%%%%%%%%%%%%%%%%%%%%%%%%%%%%%%%%%%%%%%%%%%%%%%%%%%%%%%%%%%%%%%%%%%%%%%%%%%%%%%%%%%%%%%%%%%%%%%%%%%%%%%%%%%%%%%%%%%%%%%%%%%%%%%%%%%%%%%%%%%%%%%%%%%%%%%%%%%%%%%%%%%%%%%%%%%%%%%%%%%%%%%%%%%%%%%%%%%%%%%%%%%%%%%%%%%%%%%%%%%%%%%%%%%%%%%%%%%%%%%%%%%%%%%%%%%%%%%%%%%%%%%%%%%%%%%%%%%%%%%%%%%%%%%%%%%%%%%%%%%%%%%%%%%%%%%%%%%%%%%%%%%%%%%%%%%%%%%%%%%%%%%%%%%%%%%%%%
\section[EDS parameterized by the size of a modulator to an H-component graph]{EDS parameterized by the size of a modulator to an $\boldsymbol{\cH}$-component graph}  \label{section::classification}

In this section, we develop a complete classification of \EDS parameterized by the size of a modulator to an $\cH$-component graph regarding existence of polynomial kernels for all finite sets $\cH$. This is motivated by the observed difference between modulating to $P_3$-component graphs (no polynomial kernel unless \containment) vs.\ modulating to $P_5$-component graphs (polynomial kernelization). 
To this end, we will study which properties graphs $H\in\cH$ must have, such that \EDS parameterized by the size of a modulator to an $\cH$-component graph has resp.\ does not have a polynomial kernel. To recall, the input of our problem is a tuple $(G,k,X)$ where $G-X$ is an $\cH$-component graph and we ask whether $G$ has an edge dominating set of size at most $k$; the parameter is $|X|$.

In contrast to \VC, where we can delete a vertex in the modulator if we know that this vertex must be in a solution of certain size, this is not the case for \EDS because we do not necessarily know which incident edge should be chosen. Of course, we can check for a vertex $x$ in the modulator $X$ how not having this vertex as an endpoint of a solution edge influences the size of a minimum edge dominating set of $G-X$. But, even if we find out that a vertex $x$ in the modulator $X$ must be an endpoint of a solution edge, we do not know if the other endpoint of the solution edge incident with $x$ is in $X$ or in a connected component of $G-X$. If there would be a connected component $C$ in $G-X$ with the property that there exists a vertex $v \in N(x) \cap V(C)$ with $\MEDS(C)=\MEDS(C-v) +1$, then it could be possible to have $x$ as an endpoint of a solution edge without paying more than the cost of a minimum edge dominating set in $C$. 
Thus, instead of finding vertices in the modulator that must be endpoint of a solution edge, we want to find vertices in the modulator that can be endpoints of a solution edge without spending more budget than the size of a minimum edge dominating set in $G-X$. Similarly, getting edges to $r$ vertices in $X$ while increasing the cost in $C$ by less than $r$ is of interest (cost equal to $r$ can always be had). The following definition classifies relevant vertices and vertex sets in a graph $H$, which may occur as a component of $G-X$.

\newcommand{\helpfulcomment}[1]{\emph{(#1)}}

\begin{definition}\label{definition:basicterms}
    Let $H=(V,E)$ be a connected graph.
    \begin{itemize}
        \item We call a vertex $v \in V$ \emph{extendable} if $\MEDS(H-v)+1=\MEDS(H)$. We denote the set of extendable vertices of $H$ by $Q(H)$. 
        \helpfulcomment{Intuitively, these vertices allow a local solution for an $H$-component in $G-X$ that includes an edge $\{v,x\}$ with $x\in X$ and $v \in V(H)$.}
        \item We call a set $Y \subseteq Q(H)$ \emph{free} if for all vertices $v \in Y$ and for all minimum edge dominating sets $F$ in $H$ there exists a minimum edge dominating set $F'$ in $H-v$ of size $|F|-1$ and with $V(F) \setminus Y \subseteq V(F')$. By $W(H)$ we denote the unique maximum free set of $H$. We call a vertex $w \in W(H)$ \emph{free}. \footnote{We show in Proposition~\ref{proposition::properties}~(\ref{proposition::free}) that $W(H)$ is unique.}
        \helpfulcomment{Intuitively, vertices in $Y$ can be used for solution edges between components and $X$, while covering the same vertices of $H-Y$ as any local optimum solution; thus, they cannot be used for lower bounds like for $P_3$-components.}
        \item We call a vertex $v \in V$ \emph{uncovered} if no minimum edge dominating set $F$ of $H$ contains an edge incident with $v$, i.e.\ $v \notin V(F)$. We denote the set of uncovered vertices by $U(H)$.
        \helpfulcomment{Intuitively, $H$-components with any $v\in U(H)$ adjacent to $x\in X$ are easy to handle because $x\notin V(F)$ would imply that the local cost for $H$ increases above $\MEDS(H)$.}
        \item For any $Y\subseteq V$ define $\cost(Y):=|Y|+\MEDS(H-Y)-\MEDS(H)$.
        
        \helpfulcomment{Intuitively, $\cost(Y)$ is equal to the additional budget that is needed for an $H$-component of $G-X$ when exactly the vertices in $Y$ have solution edges to $X$. Note that $\cost(\{v\})=0$ for all extendable vertices $v$.}
        \item We call a set $B\subseteq V\setminus W(H)$ \emph{beneficial} if for all $\widetilde{B}\subsetneq B$ we have $|B|-\cost(B)>|\widetilde{B}|-\cost(\widetilde{B})$ or, equivalently, $\MEDS(H-B)<\MEDS(H-\widetilde{B})$. Note that this must also hold for $\widetilde{B}=\emptyset$ which implies that for all beneficial sets we have $|B|-\cost(B) >0$ or, equivalently, $\MEDS(H-B)<\MEDS(H)$.
        
        \helpfulcomment{Intuitively, the solution may include $|B|$ edges between $B$ and some $X'\subseteq X$ while increasing the cost for the $H$-component by exactly $\cost(B)$; this saves $|B|-\cost(B)>0$ over taking any $|B|$ edges incident with $X'$. The condition for all $\widetilde{B}\subsetneq B$ ensures that the savings of getting $|B|$ edges at cost $\cost(B)$ is greater than for any  proper subset.} 
        \item We call a beneficial set $B$ \emph{strongly beneficial} if $\cost(B) < \sum_{i=1}^h \cost(B_i)$ holds for all covers $B_1, B_2,\ldots,B_h \subsetneq B$ of $B$. 
        \helpfulcomment{Intuitively, for a strongly beneficial set $B$ we cannot get the same number of edges to $X$ by using sets $B_i$ in several different $H$-components.}
    \end{itemize}
\end{definition}

\begin{figure}
    \centering
    \begin{tikzpicture}[scale=1, transform shape] 
    %P5
    \node[circle, inner sep=1.5pt, fill] (v1) at (1,1) [label=above:$b$] {};
    \node[circle, inner sep=1.5pt, fill] (v2) at (0,1) [label=above:$a$] {};
    \node[circle, inner sep=1.5pt, fill] (v3) at (0,0) [label=120:$f$] {};
    \node[circle, inner sep=1.5pt, fill] (v4) at (0,-1) [label=below:$k$] {};
    \node[circle, inner sep=1.5pt, fill] (v5) at (1,-1) [label=below:$l$] {};
    %triangles
    \node[circle, inner sep=1.5pt, fill] (w1) at (2.5,1) [label=above:$c$] {};
    \node[circle, inner sep=1.5pt, fill] (w2) at (3.5,1) [label=above:$d$] {};
    %P6
    \foreach \i/\j/\k in {1/-1,2/0,3/1,4/2,5/3,6/4}
    \node[circle, inner sep=1.5pt, fill] (u1) at (-1,0) [label=below:$e$] {};
    \node[circle, inner sep=1.5pt, fill] (u2) at (0,0) {};
    \node[circle, inner sep=1.5pt, fill] (u3) at (1,0) [label=below:$g$] {};
    \node[circle, inner sep=1.5pt, fill] (u4) at (2,0) [label=below:$h$] {};
    \node[circle, inner sep=1.5pt, fill] (u5) at (3,0) [label=below:$i$] {};
    \node[circle, inner sep=1.5pt, fill] (u6) at (4,0) [label=below:$j$] {};
    % Edges
    \foreach \i/\j in {1/2,2/3,3/4,4/5}
        \draw (v\i) -- (v\j);
    \foreach \i/\j in {1/2,2/3,3/4,4/5,5/6}
        \draw (u\i) -- (u\j);
    \draw (w1) -- (u4);
    \draw (w1) -- (u5);
    \draw (w2) -- (u5);
    \draw (w2) -- (u6);
    %Free vertices:
    \foreach \i in {1,2,4,5}
        \node[draw=Green, thick, circle, inner sep=3pt] at (v\i) {};
    % extendable not free
    \foreach \x/\i in {u/4,u/6,w/2}
        \node[draw=orange, thick, rectangle, inner sep=4pt] at (\x\i) {};
    % uncovered
    \node[draw=Blue, thick, regular polygon, regular polygon sides=3, inner sep=2.5pt] at (u1) {};
    % EDS
    \draw[decorate,decoration={snake,amplitude=.4mm,segment length=1mm}] (v1) -- (v2);
    \draw[decorate,decoration={snake,amplitude=.4mm,segment length=1mm}] (v3) -- (v4);
    \draw[decorate,decoration={snake,amplitude=.4mm,segment length=1mm}] (w1) -- (u4);
    \draw[decorate,decoration={snake,amplitude=.4mm,segment length=1mm}] (u5) -- (u6);
    \end{tikzpicture}
    \caption{Example of an $H$-component with $\MEDS(H)=4$. The wavy edges are a possible minimum edge dominating set of $H$.}
    \label{figure::example_def}
\end{figure}
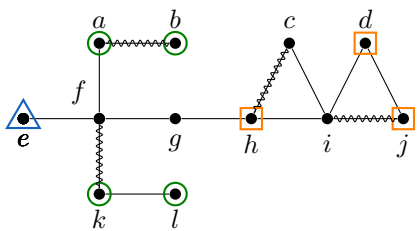

\begin{example}[Illustration of Definition \ref{definition:basicterms}]
    Figure~\ref{figure::example_def} shows a connected graph $H$. The size of an edge dominating set in $H$ is at least four because a solution has to dominate the four pairwise non-adjacent edges $\{a,b\}, \{k,l\}, \{j,d\}$ and $\{g,h\}$. Thus, $\MEDS(H)=4$ because the wavy edges are an edge dominating set of $H$. 
    
    The vertices $\{a,b,k,l\}$, marked with a green cycle, as well as the vertices $\{d,h,j\}$, marked with an orange rectangle, are extendable. But only the green marked vertices $\{a,b,k,l\}$ are free: Let $F$ be any minimum edge dominating set in $H$. The set $F$ must contain exactly one of the two edges $e_1=\{a,b\}$ and $e_2=\{a,f\}$, and exactly one of the two edges $e_3=\{k,l\}$ and $e_4=\{k,f\}$. Now, $F' = F \setminus \{e_1,e_2,e_3,e_4\} \cup \{f,k\}$ is an edge dominating set in $H-a$ and $H-b$ of size $|F|-1$, and $F' = F \setminus \{e_1,e_2,e_3,e_4\} \cup \{a,f\}$ is an edge dominating set in $H-k$ and $H-l$ of size $|F|-1$ which implies that the vertices $\{a,b,k,l\}$ are free. The vertices $\{d,h,j\}$ are not free because no minimum edge dominating set $F'$ in $H-d$, resp.\ $H-h$, resp.\ $H-j$ has vertex $c$, which is not extendable, as an endpoint of a solution edge, but the graph $H$ has a minimum edge dominating set that has $c$ as an endpoint, namely the one containing the wavy edge $\{a,b\},\{h,c\},\{d,j\}$.
    The vertex $e$, marked with a blue triangle, is uncovered. 
    
    The set $\{c,g\}$ is strongly beneficial, whereas the set $\{c,g,i,j\}$ is only beneficial, but not strongly beneficial: The set $\{c,g\}$ is beneficial because $\MEDS(H-\{c,g\})=3$ and $\MEDS(H-c)=\MEDS(H-g)=\MEDS(H) = 4$, and strongly beneficial because the only possible non-trivial cover of $\{c,g\}$ is $\{c\},\{g\}$ and $\cost(\{c,g\}) = 1 < 2 = \cost(\{c\}) + \cost(\{g\})$. The set $\{c,g,i,j\}$ is beneficial because $\MEDS(H-\{c,g,i,j\}) = 2$ and $\MEDS(H-B) \geq 3$ for all $B \subsetneq \{c,g,i,j\}$. But $\{c,g,i,j\}$ is not strongly beneficial because $\cost(\{c,g,i,j\}) = 2 = 1 + 1 + 0 = \cost(\{c,g\}) + \cost(\{i\}) + \cost(\{j\})$.
    Observe that the set $\{c,g,i\}$ is not beneficial even though $\MEDS(H-\{c,g,i\})=3 < 4=\MEDS(H)$, because $\{c,g\} \subsetneq \{c,g,i\}$ and $\MEDS(H-\{c,g,i\}) = 3 = \MEDS(H-\{c,g\})$.
\end{example}

We are now able to give a more detailed version of Theorem~\ref{theorem:classification}, which specifies for each finite set $\cH$ of connected graphs the kernelization complexity of \EDS parameterized by the size of a modulator to $\cH$-component graphs.

\begin{theorem}\label{theorem:detailedclassification}
Let $\cH$ be any finite set of connected graphs. The \EDS problem parameterized by the size of a modulator to $\cH$-component graphs behaves as follows:
\begin{enumerate}
 \item If $\cH$ contains any graph $H$ fulfilling one of the following items then there is no polynomial kernelization unless \containment:\label{item:lowerbound}
 \begin{enumerate}
  \item There is an extendable vertex in $H$ that is not free, i.e., $Q(H)\setminus W(H)\neq\emptyset$.\label{item:lowerbound:CP_Q}
  \item There is a strongly beneficial set $B$ in $H$ that contains an uncovered vertex, i.e., $B\cap U(H)\neq\emptyset$.\label{item:lowerbound:CP_U}
  \item There is a vertex in $H$ that is neither uncovered, free, nor neighbor of a free vertex, i.e., $V(H)\setminus (N[W(H)]\cup U(H))\neq\emptyset$.\label{item:lowerbound:CP_R}
  \item There is a strongly beneficial set $B\subseteq N(W(H))$ in $H$ such that no minimum edge dominating set $F_B$ of $H-B$ covers all vertices of $N(W(H))\setminus B$.\label{item:lowerbound:CP_other}
 \end{enumerate}
 \item Else, if $\cH$ contains at least one graph that has a strongly beneficial set, then there is a kernelization to $\Oh(|X|^d)$ vertices, $\Oh(|X|^{d+1})$ edges, and size $\Oh(|X|^{d+1}\log|X|)$, and there is no kernelization to size $\Oh(|X|^{d-\varepsilon})$, for any $\varepsilon>0$, unless \containment where $d$ is the size of the largest strongly beneficial set in any $H\in\cH$.\label{item:polynomialkernel}
 \item Else, there is a kernelization to $\Oh(|X|^2)$ vertices, $\Oh(|X|^3)$ edges, and size $\Oh(|X|^3\log|X|)$, and there is no kernelization to size $\Oh(|X|^{2-\varepsilon})$, for any $\varepsilon>0$, unless \containment.\label{item:smallkernel}
\end{enumerate}
\end{theorem}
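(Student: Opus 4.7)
The plan is to split the proof according to the three cases of the theorem, with substantial shared machinery. For case~\ref{item:lowerbound}, I would give four separate cross-compositions from \MCC, one per structural obstruction (\ref{item:lowerbound:CP_Q})--(\ref{item:lowerbound:CP_other}), each extending the template of Theorem~\ref{theorem::P3}. In that template, $k$ selection gadgets encode the chosen vertex of each color class, $\binom{k}{2}$ edge gadgets encode the chosen clique edges, and the instance index is encoded via $\binom{2s}{s}$ size-$s$ subsets of a set $W$. The role played by the middle vertex of $P_3$ is now played by the witness structure of the respective obstruction: an extendable-but-not-free vertex $v$ together with a witness EDS $F$ and a vertex $u\in V(F)\setminus W(H)$ for case~(\ref{item:lowerbound:CP_Q}); an uncovered vertex inside a strongly beneficial set for case~(\ref{item:lowerbound:CP_U}); a vertex of $V(H)\setminus(N[W(H)]\cup U(H))$ for case~(\ref{item:lowerbound:CP_R}); and a strongly beneficial set $B\subseteq N(W(H))$ together with the non-covering property of any minimum EDS of $H-B$ for case~(\ref{item:lowerbound:CP_other}). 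In each case the common property is that an attachment from $X$ to the special vertex cannot be simulated via free vertices, so that $H$-components faithfully encode instance-specific binary choices and the correctness argument of the $P_3$-composition carries over.

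For cases~\ref{item:polynomialkernel} and~\ref{item:smallkernel}, I would generalize Reduction Rules~\ref{rule::P5_X} and~\ref{rule::P5_cc}. For each $H\in\cH$ and each strongly beneficial set $B\subseteq V(H)$, set up a bipartite auxiliary graph whose one side consists of $|B|$-tuples of vertices from $X$ and whose other side consists of $H$-components of $G-X$ that can realize $B$ via attachments to such a tuple, and apply a generalization of Theorem~\ref{theorem::matching} to keep only $\Oh(|X|^{|B|})$ such components per $B$. Summing over all strongly beneficial sets across $H\in\cH$ gives $\Oh(|X|^d)$ surviving components with $d=d(\cH)$, which yields the claimed vertex, edge, and size bounds. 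The matching polynomial lower bound follows by augmenting the four compositions above so that each gadget packs $d$ selector coordinates into one $H$-component using a strongly beneficial set of size $d$; this produces a cross-composition of cost $t^{1/d+o(1)}$, and Theorem~\ref{theorem:polynomiallowerbound} concludes. Case~\ref{item:smallkernel} is the specialization $d=2$: even when no strongly beneficial set exists, extendable singletons still force an $|X|$-factor coming from the bipartite matching between $X$ and $H$-components.

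The main obstacle will be proving safety of the generalized reduction rules under the simultaneous negation of conditions~(\ref{item:lowerbound:CP_Q})--(\ref{item:lowerbound:CP_other}). Whenever a component is marked for deletion, any solution edge using that component must be either rerouted to an unmatched component or absorbed into a local replacement, and the resulting extra budget must be charged exactly to the strongly beneficial set whose matching slot becomes available. The delicate part is that strongly beneficial sets may overlap and may share vertices with $N(W(H))$, so the charging relies on uniqueness of $W(H)$ and on the observation that any beneficial set that is not strongly beneficial admits a proper cover by smaller beneficial sets, possibly spread across several components, and hence never needs its own slot. A parallel challenge on the lower bound side is that matching the $|X|^d$ exponent for case~(\ref{item:lowerbound:CP_other}) requires combining free-vertex attachments with beneficial-set forcing inside the same gadget, which is technically more involved than in the other three cases.
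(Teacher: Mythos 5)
Your plan for Item~\ref{item:lowerbound} and for the kernelizations is broadly in the spirit of the paper, but two pieces are genuinely off. First, on Item~\ref{item:lowerbound}: rather than four separate compositions, the paper abstracts the single property actually used in the $P_3$-composition into a \emph{control pair} $(C,B)$ --- a strongly beneficial set $B$ together with a set $C$ that is covered by some minimum edge dominating set of $H$, covered by \emph{no} minimum edge dominating set of $H-B$, and contains no vertex extendable in $H-B$ --- proves one cross-composition for any $H$ admitting such a pair, and then shows each of (\ref{item:lowerbound:CP_Q})--(\ref{item:lowerbound:CP_other}) yields one. Your sketches name plausible witnesses but never address the condition $C\cap Q(H-B)=\emptyset$, without which a component can ``buy'' an extra edge to $V\cup W$ at no cost and the accounting collapses; establishing this is the actual content of the four lemmas, and for case~(\ref{item:lowerbound:CP_other}) it requires an augmentation argument replacing $B$ by $B\cup\{c\}$ repeatedly, not just the non-covering property you cite.

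The more serious gap is the $\Oh(|X|^{d-\varepsilon})$ lower bound in Item~\ref{item:polynomialkernel}. You propose to obtain it by ``augmenting the four compositions,'' but in Item~\ref{item:polynomialkernel} \emph{none} of the four obstructions holds, so no control pair exists: every vertex of $H$ is free, a neighbor of a free vertex, or uncovered, and every strongly beneficial set $B$ has a minimum edge dominating set of $H-B$ covering $N(W(H))\setminus B$. Consequently the entire instance-selection mechanism of the $P_3$-composition --- the set $W$ with $\binom{2s}{s}$ subsets, whose coverage is forced through the control vertices $C$ --- is unavailable, and the $P_3$-style correctness argument does not carry over. The paper instead builds a cross-composition of cost $t^{1/d}$ in which the instance index is encoded as a $d$-tuple over $d$ groups of $s=\Theta(t^{1/d})$ selection gadgets, with separate clique gadgets verifying the clique, and all forcing is routed through the $d$ coordinates of the beneficial set $B$ itself; this is a genuinely different construction that your plan does not supply. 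Finally, your account of Item~\ref{item:smallkernel} is inaccurate on both sides: the $|X|^2$ in the kernel comes from components whose \emph{uncovered} vertices see $X_U^l$ (absent uncovered vertices the kernel is linear, as the paper remarks), not from extendable singletons (which cannot occur here since every extendable vertex is free), and the $\Oh(|X|^{2-\varepsilon})$ lower bound is obtained by a reduction from \VC using the known Dell--van Melkebeek bound, valid for every $\cH$, not by a $d=2$ specialization of any composition.
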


The rest of this section is devoted to proving Theorem~\ref{theorem:detailedclassification}, following the proof outline below. From this, Theorem~\ref{theorem:classification} directly follows because disconnected graphs in $\cH$ do not affect the resulting class of $\cH$-component graphs, i.e., given \emph{any} finite set $\cH$ of graphs we can take the subset $\cH'$ of connected graphs in \cH and apply Theorem~\ref{theorem:detailedclassification} to $\cH'$. As an example for applying the theorem, for $\cH=\{P_3\}$ we get Item~\ref{item:lowerbound:CP_Q}, for $\cH=\{P_4\}$ we get Item~\ref{item:lowerbound:CP_U}, for $\cH=\{K_3\}$ and $\cH=\{K_5\}$ we get Item~\ref{item:lowerbound:CP_R}, and for $\cH=\{P_2\}=\{K_2\}$, $\cH=\{K_4\}$, $\cH=\{P_5\}$, as well as $\cH = \{E = \begin{tikzpicture}[scale=0.11, transform shape] 
    \node[draw, circle, inner sep=3.5pt, fill] (v1) at (1,1) {};
    \node[draw, circle, inner sep=3.5pt, fill] (v2) at (0,1) {};
    \node[draw, circle, inner sep=3.5pt, fill] (v3) at (0,0) {};
    \node[draw, circle, inner sep=3.5pt, fill] (v4) at (0,-1) {};
    \node[draw, circle, inner sep=3.5pt, fill] (v5) at (1,-1) {};
    \node[draw, circle, inner sep=3.5pt, fill] (u) at (1,0) {};
    \foreach \i/\j in {1/2,2/3,3/4,4/5}{
        \draw[line width=0.3] (v\i) -- (v\j);}
    \draw[line width=0.3] (v3) -- (u);
    \end{tikzpicture}\}$ we get Item~\ref{item:smallkernel}.
    
\begin{remark*}
    We showed that \EDS parameterized by the size of a given modulator $X$ to a $P_5$-component graph admits a kernel with $\Oh(|X|)$ vertices (see Theorem~\ref{theorem::kernel_P5}). The reason why we the kernelization procedure of Item~\ref{item:smallkernel} only reduces to $\Oh(|X|^2)$ vertices instead of $\Oh(|X|)$ vertices is that $H$-components can have uncovered vertices. This leads to a different marking argument similar to the case for \EDS parameterized by solution size. Note that EDS parameterized by solution size is covered by Item~\ref{item:smallkernel}.
\end{remark*}

\begin{proof}[Proof outline for Theorem~\ref{theorem:detailedclassification}.]
We begin by establishing a number of useful properties of the terms introduced in Definition~\ref{definition:basicterms}, e.g., that each graph $H$ containing a beneficial set $B$ also contains a strongly beneficial set $B'\subseteq B$ (Proposition~\ref{proposition::properties} (\ref{proposition::strongly})).

The kernelization lower bound of Item~\ref{item:lowerbound} is proved by generalizing the lower bound obtained for $P_3$-component graphs in Theorem~\ref{theorem::P3}. We define so-called \emph{control pairs} by abstracting properties of $P_3$-components used in the proof (Definition~\ref{definition:controlpair}) and show that there is no polynomial kernelization when any graph $H\in\cH$ has a control pair (Theorem~\ref{theorem::LB}). We then show that graphs $H$ fulfilling Items \ref{item:lowerbound:CP_Q}, \ref{item:lowerbound:CP_U}, \ref{item:lowerbound:CP_R}, or \ref{item:lowerbound:CP_other} have control pairs (Lemmas \ref{lemma::CP_Q}, \ref{lemma::CP_U}, \ref{lemma::CP_R}, and \ref{lemma::CP_other}). 

In Item \ref{item:lowerbound:CP_other}, and in the items below, we (may) use that no graph in $\cH$ fulfills Items \ref{item:lowerbound:CP_Q}, \ref{item:lowerbound:CP_U}, or \ref{item:lowerbound:CP_R}. Accordingly, each graph $H\in\cH$ has $V(H)=N[W(H)]\cup U(H)$, i.e., each vertex of $H$ is uncovered, free, or neighbor of a free vertex. Moreover, every extendable vertex is also free, i.e., $Q(H)=W(H)$, and strongly beneficial sets contain no (uncovered) vertices of $U(H)$. This implies that all strongly beneficial sets are subsets of $N(W(H))$, the neighborhood of the free vertices, as neither uncovered nor free vertices can be contained and no further vertices except those in $N(W(H))$ exist in $H$ (in this case).

For Item~\ref{item:polynomialkernel} we have that no graph in \cH fulfills any of the Items \ref{item:lowerbound:CP_Q} through \ref{item:lowerbound:CP_other} and that at least one graph in $\cH$ has a strongly beneficial set. Thus, in addition to the above restrictions on $H\in\cH$, we know that for each strongly beneficial set $B$, which here must be a subset of $N(W(H))$, there is a minimum edge dominating set $F_B$ of $H-B$ that covers all vertices in $N(W(H))\setminus B$. We give a general kernelization procedure that reduces the number of components in $G-X$ to $\Oh(|X|^d)$ where $d$ is the size of the largest strongly beneficial set among graphs $H\in\cH$ (Lemma \ref{lemma::kernel2}). We then rule out kernels of size $\Oh(|X|^{d-\varepsilon})$ using only $H$-components, where $H$ is any graph in $\cH$ that exhibits the largest size $d$ of strongly beneficial sets (Lemma \ref{theorem::polynomiallowerbound}). Note that in the present item $d$ is always at least two because having a strongly beneficial set $B$ of size one would mean that $v\in B$ is an extendable vertex that is not free (because beneficial sets are disjoint from the set $W(H)$ of free vertices), which is handled by Item~\ref{item:lowerbound:CP_Q}.

Finally, for Item~\ref{item:smallkernel}, it remains to consider the case that no graph $H\in\cH$ fulfills any of the Items \ref{item:lowerbound:CP_Q} through \ref{item:lowerbound:CP_other} and that no graph in $H$ has a strongly beneficial set. It follows that no graph in $H$ has any beneficial sets (Proposition~\ref{proposition::properties} (\ref{proposition::strongly})) and, as before, we have $V(H)=N[W(H)]\cup U(H)$. We obtain a kernelization to $\Oh(|X|^2)$ vertices, $\Oh(|X|^3)$ edges, and size $\Oh(|X|^3\log|X|)$ (Lemma \ref{lemma::kernel1}). The lower bound ruling out kernels of size $\Oh(|X|^{2-\varepsilon})$ for any $\varepsilon>0$, and in fact for any set $\cH$, follows easily by a simple reduction from \VC for which a lower bound ruling out size $\Oh(n^{2-\varepsilon})$ is known \cite{DellM14} (Lemma \ref{lemma::quadraticlowerbound}).

Since the arguments required for the kernelization in Item~\ref{item:smallkernel} are simpler than for that of Item~\ref{item:polynomialkernel} and can serve as an introduction to it, the proofs are given in the order of Item~\ref{item:lowerbound}, Item~\ref{item:smallkernel}, followed by Item~\ref{item:polynomialkernel}.
\end{proof}

Before starting on the lower bound part of Theorem~\ref{theorem:detailedclassification}, we establish a few basic properties of the terms defined in Definition~\ref{definition:basicterms}; these mostly follow readily from their definition. We also justify the definition of $W(H)$ as the unique maximum cardinality free set in $H$.

\begin{proposition}[\footnote{The proof of Proposition~\ref{proposition::properties} is deferred to Section \ref{section::proposition}.}] \label{proposition::properties}
  Let $H=(V,E)$ be a connected graph, let $W=W(H)$ be the set of free vertices, let $Q=Q(H)$ be the set of extendable vertices, and let $U=U(H)$ be the set of uncovered vertices.
  \begin{enumerate}
   \item The set $W$ is well defined. \label{proposition::free}
   \item The set $U$ is an independent set and no vertex in $Q$ is adjacent to a vertex in $U$; hence $N_H(U) \cap (Q \cup U ) = \emptyset$. \label{proposition::neighborhood}
   \item If $v \in N_H(U)$ is a vertex that is adjacent to a vertex in $U$, then $v$ is an endpoint of an edge in every minimum edge dominating set of $H$. \label{proposition::N(U)}
   \item It holds for all vertices $v \in V$ that $\MEDS(H)-1 \leq \MEDS(H-v) \leq \MEDS(H)$. \label{proposition::vertex}
   \item Let $Y \subseteq V$. It holds for all subsets $X \subseteq Y$ that $\MEDS(H-X) - |Y \setminus X| \leq \MEDS(H-Y) \leq \MEDS(H-X)$, and that $\cost(X) \leq \cost(Y)$. \label{proposition::monoton}
   \item Let $F$ be a minimum edge dominating set in $H$. There exists a minimum edge dominating set $F'$ in $H$ with $(V(F) \cup N_H(W) ) \setminus W \subseteq V(F')$. \label{proposition::N(W)IN} 
   \item Every set that consists of one vertex $v \in Q \setminus W$ is strongly beneficial. Furthermore, these are the only beneficial sets of size one. \label{proposition::Q-W} 
   \item If $B$ is a beneficial set of size at least two then $B$ contains no extendable vertex; hence $B \cap Q=\emptyset$. \label{proposition::disjointQ} 
   \item If there exists a set $Y \subseteq V \setminus W$ with $\MEDS(H-Y) < \MEDS(H)$, then there exists a beneficial set $B \subseteq Y$ with $\MEDS(H-B) = \MEDS(H-Y)$. \label{proposition::BFS}
   \item If there exists a set $Y \subseteq V \setminus W$ with $\MEDS(H-Y) < \MEDS(H)$, then there exists a beneficial set $B \subseteq Y$ with $\MEDS(H-B) +1 = \MEDS(H)$. Furthermore, $B$ is strongly beneficial. \label{proposition::BFScost1}
   \item If $H$ has a beneficial set $B$, then $H$ has also a strongly beneficial set $B' \subseteq B$. \label{proposition::strongly}
   \item Let $F$ be a minimum edge dominating set in $H$. If $e=\{x,y\}$ is an edge in $F$ with $x, y \notin Q$, then $\{x,y\}$ is a strongly beneficial set. \label{proposition::BFSedge}
   \item Let $B$ be a beneficial set. $B$ is strongly beneficial if and only if for every non-trivial partition $B_1,B_2,\ldots,B_h$ of $B$ it holds that $\cost(B) < \sum_{i=1}^h \cost(B_i)$. \label{proposition::definition}
   \item Let $Y \subseteq V \setminus W$. There exists a partition $B_1,B_2,\ldots,B_h$ of $Y$ where  $B_i$ is either strongly beneficial or where $B_i$ has $\cost(B_i)=|B_i|$, for all $i \in [h]$, such that $\cost(Y) \geq \sum_{i=1}^h \cost(B_i)$. (Note that we also allow trivial partitions.) \label{proposition::decomposeB}
  \end{enumerate}
\end{proposition}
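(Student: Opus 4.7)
My plan is to prove the fourteen items in their natural dependency order: Items 1--5 are foundational and follow directly from Definition~\ref{definition:basicterms} together with standard edge-domination manipulations; Items 6--8 describe how the distinguished sets $W$, $Q$, and $U$ sit inside any minimum EDS; Items 9--11 promote a witness subset $Y$ to a (strongly) beneficial subset; and Items 12--14 deliver the decomposition machinery that the kernelization of Section~\ref{section::classification} will consume. Throughout I will freely use the identity $|B|-\cost(B)=\MEDS(H)-\MEDS(H-B)$ and the observation that any EDS of $H-v$ extends to an EDS of $H$ by appending at most one edge incident with $v$.

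For Items 1--5 the arguments are short. Item 1 is handled by showing that the union of two free sets is again free: if $v\in W_1$ and $F$ is a min EDS of $H$, freeness of $W_1$ provides an $F'$ of $H-v$ with $V(F)\setminus W_1\subseteq V(F')$, and $V(F)\setminus(W_1\cup W_2)\subseteq V(F)\setminus W_1$ delivers the required containment. Items 2 and 3 follow because any edge inside $U$, or between $Q$ and $U$, would allow a min EDS of $H$ to cover some vertex of $U$, contradicting its definition. For Item 4, I replace each edge $\{v,w\}$ of a min EDS by an arbitrary edge of $H-v$ incident with $w$ (dropping it when $w$ has no such edge) to get the upper bound, and append a single edge at $v$ to any min EDS of $H-v$ for the lower bound; Item 5 then iterates Item 4 one vertex at a time and the cost inequality is algebraic.

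Item 6 is a marking pass over $N_H(W)\setminus W$: whenever such a vertex $v'$ is not in $V(F)$, I pick $w\in W\cap N_H(v')$, apply freeness of $W$ to obtain $F_w$ in $H-w$ preserving $V(F)\setminus W$, and replace $F$ by $F_w\cup\{\{w,v'\}\}$; previously marked vertices lie outside $W$, so they survive. Item 7 follows from the identity $|B|-\cost(B)=\MEDS(H)-\MEDS(H-B)$, since a size-$1$ beneficial $\{v\}$ forces $\cost(\{v\})=0$ and strong beneficiality is vacuous in the absence of non-empty proper subsets; Item 8 combines the beneficial strict inequality on $B\setminus\{v\}$ with cost monotonicity from Item 5 to derive a contradiction with $v\in Q\setminus W$. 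Items 9--11 rest on a common trick: take $B\subseteq Y$ minimal achieving the target value of $\MEDS(H-B)$, so that the beneficial strict inequality holds by minimality; Item 10 first uses monotonicity to pass to a $Y'\subseteq Y$ with $\MEDS(H-Y')=\MEDS(H)-1$, and its $B$ is strongly beneficial because $\cost(B)=|B|-1$ is the minimum possible for a beneficial set while Item 8 gives $B\cap Q=\emptyset$, forcing every nontrivial cover to be costly.

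Item 12 applies Item 10 to $Y=\{x,y\}$, with $x,y\notin Q$ forcing $\cost(\{x,y\})\le 1$ through the edge $\{x,y\}\in F$. Item 13 is an equivalence: passing from a cover to a partition only removes elements from each $B_i$, which by monotonicity can only decrease each $\cost(B_i)$. The main obstacle, and the workhorse of the kernelization, is Item 14, which I would prove by strong induction on $|Y|$: if $Y$ is strongly beneficial or satisfies $\cost(Y)=|Y|$ then the trivial partition works; if $Y$ is not beneficial, Item 9 supplies a strictly smaller beneficial subset $B_1\subsetneq Y$ to peel off; and if $Y$ is beneficial but not strongly beneficial, its witnessing cover (refined to a partition via Item 13) supplies proper subsets with $\cost(Y)\ge\sum_i\cost(B_i)$ on which to recurse. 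The delicate bookkeeping is to propagate this inequality across each recursive step and to ensure that every leaf is either strongly beneficial or a singleton with $\cost=1$; the former is handled by Item 7 (cost-$0$ singletons are strongly beneficial), while the latter is controlled by Item 4, which pins down $\cost\in\{0,1\}$ for singletons.
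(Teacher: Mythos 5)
Your treatment of thirteen of the fourteen items follows essentially the same route as the paper's deferred proofs (union of free sets for Item 1, the exchange argument for Item 6, minimal subsets for Items 9 and 10, disjointification of a cover for Item 13, and the three-way case analysis with induction on $|Y|$ for Item 14), and those parts are sound. The genuine gap is Item 8. You propose to derive a contradiction from ``$B$ beneficial, $|B|\ge 2$, $v\in B\cap Q$'' by combining the beneficial inequality for $\widetilde B=B\setminus\{v\}$ with cost monotonicity from Item 5. But those two facts together only yield $\cost(B\setminus\{v\})=\cost(B)$, equivalently $\MEDS(H-B)=\MEDS(H-(B\setminus\{v\}))-1$, which is entirely consistent with $v$ being extendable; no contradiction follows. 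In fact Item 8, read literally for merely beneficial sets, is \emph{false}: in the paper's own running example (Figure 2), $\{c,g,i,j\}$ is a beneficial set of size four containing the extendable vertex $j$. What the deferred proof in Section 6 actually establishes (and what the downstream arguments use) is the statement for \emph{strongly} beneficial sets, by a different mechanism: $\{v\}$ and $B\setminus\{v\}$ form a cover of $B$ by proper subsets with $\cost(\{v\})+\cost(B\setminus\{v\})\le 0+\cost(B)$, contradicting strong beneficiality. You need that cover argument (or must restrict the claim to strongly beneficial sets); your route cannot be repaired.

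This propagates into your Item 10, where you invoke Item 8 to get $B\cap Q=\emptyset$ for the beneficial set $B$ with $\cost(B)=|B|-1$. That special case is recoverable without Item 8: if $v\in B\cap Q$ then $|\{v\}|-\cost(\{v\})=1-0=1=|B|-\cost(B)$, which already violates the beneficial condition for the proper subset $\{v\}$. Better, the paper's proof of strong beneficiality in Item 10 bypasses $Q$ entirely: for any cover $B_1,\dots,B_h\subsetneq B$ with $\sum_i\cost(B_i)\le\cost(B)=|B|-1$, the bound $\sum_i|B_i|\ge|B|$ forces some $B_{i^*}$ with $|B_{i^*}|-\cost(B_{i^*})\ge 1=|B|-\cost(B)$, contradicting that $B$ is beneficial since $B_{i^*}\subsetneq B$. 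I suggest substituting these arguments for your Items 8 and 10. A minor further remark: in Item 14 the recursion need not bottom out at singletons; the intended base cases are sets that are strongly beneficial or satisfy $\cost(B_i)=|B_i|$, and these may have any size.
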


%%%%%%%%%%%%%%%%%%%%%%%%%%%%%%%%%%%%%%%%%%%%%%%%%%%%%%%%%%%%%%%%%%%%%%%%%%%%%%%%%%%%%%%%%%%%%%%%%%%%%%%%%%%%%%%%%%%%%%%%%%%%%%%%%%%%%%%%%%%%%%%%%%%%%%%%%%%%%%%%%%%%%%%%%%%%%%%%%%%%%%%%%%%%%%%%%%%%%%%%%%%%%%%%%%%%
\subsection[Generalizing the lower bound obtained for P3-component graphs]{Generalizing the lower bound obtained for $\boldsymbol{P_3}$-component graphs}

We want to generalize Theorem \ref{theorem::P3} to get a lower bound that covers a variety of different $H$-components. In the proof of Theorem \ref{theorem::P3}, we used one endpoint of each $P_3$ to control that we choose the edges only from one instance and to make sure that the $\binom{k}{2}$ edges have their endpoints in a set of size $k$. The middle vertex of each $P_3$ is extendable (but not free, so $P_3$ fits Item~\ref{item:lowerbound:CP_Q} of Theorem~\ref{theorem:detailedclassification}) and the set consisting of this single vertex is beneficial. Hence, we were able to add edges between the middle vertices of the $P_3$'s and the set $S$ without spending more budget. Accordingly, to generalize Theorem~\ref{theorem::P3}, we define what we call control pairs consisting of a set of control vertices and a beneficial set.

\begin{definition}\label{definition:controlpair}
 Let $H=(V,E)$ be a connected graph and let $B \subseteq V$, $C \subseteq V \setminus (Q(H) \cup B)$. We call the pair $(C,B)$ \emph{control pair}, if 
 \begin{itemize}
    \item $B$ is strongly beneficial,
    \item no vertex $c \in C$ is extendable in $H-B$, i.e., $C \cap Q(H-B) = \emptyset$,
    \item there exists a minimum edge dominating set $F$ in $H$ such that $C \subseteq V(F)$, and
    \item for all minimum edge dominating sets $F_B$ in $H-B$ it holds that $C \nsubseteq V(F_B)$.
 \end{itemize}
\end{definition}

Let $H$ be a connected graph that contains a control pair. We show that \EDS parameterized by the size of a modulator to an $\cH$-component graph has no polynomial kernel unless \containment for all $\cH\ni H$. The lower bound construction generalizes the construction used for Theorem \ref{theorem::P3}, making the proof more complicated. Observe that for $H=P_3=v_1v_2v_3$ the set $B$ is the vertex $v_2$ and the set $C$ is the vertex $v_1$ (or $v_2$).

\begin{theorem} \label{theorem::LB}
 Let $H$ be a connected graph and let $B \subseteq V$, $C \subseteq V \setminus (Q(H) \cup B)$ such that $(C,B)$ is a control pair. For all sets $\cH\ni H$ of graphs, the \EDS problem parameterized by a modulator to an $\cH$-component graph admits no polynomial kernelization unless \containment.
\end{theorem}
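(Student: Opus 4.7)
The plan is to extend the cross-composition from \MCC of Theorem~\ref{theorem::P3} to handle arbitrary $H$-components equipped with a control pair $(C,B)$. In the $P_3$-case the middle vertex was extendable (hence a strongly beneficial singleton) while an endpoint served as the control set; here we replace ``middle vertex'' by the strongly beneficial set $B$ and ``endpoint'' by the control set $C$. I would reuse the same polynomial equivalence relation as in Theorem~\ref{theorem::P3} (same $k$ and same $|V|$) as well as the gadgets $T\cup T'$, $Z\cup Z'$, and $W$, with distinct $s$-subsets $W(i)\subseteq W$ for the $t=2^s$ input instances.

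The construction then adds, for each instance $i$ and each edge $e\in E(G_i)$, a fresh copy $H^e_i$ of $H$ with designated copies $B^e_i$ and $C^e_i$. The single ``middle-to-$s_{p,q}$'' edges used for $P_3$ are replaced by complete bipartite links between $B^e_i$ and a size-$|B|$ block $S_{p,q}\subseteq S$ (for $e\in E(V_p,V_q)$), where $|S|=|B|\cdot\binom{k}{2}$ and $S$ has a matched copy $S'$ of pendants. The ``endpoint-to-$W(i)$ and to endpoints of $e$'' edges used for $P_3$ are replaced by making every vertex of $C^e_i$ adjacent to every vertex of $W(i)$ and to both endpoints of $e$ in $V$. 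The modulator $X'$ collects all vertices outside the $H^e_i$ copies; its size is polynomial in $n+k+s$ since $|B|$ and $|C|$ depend only on $H$. The budget is $k':=\sum_i|E(G_i)|\cdot \MEDS(H)+\binom{k}{2}\cdot\cost(B)+k+s$.

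For the $(\Leftarrow)$-direction, given a multicolored clique $\{x_1,\ldots,x_k\}$ in some $G_{i^*}$ with clique edges $e_{p,q}=\{x_p,x_q\}$, I would build $F$ from: the matching between $T$ and $\{x_1,\ldots,x_k\}$; the matching between $Z$ and $W(i^*)$; for every clique edge $e_{p,q}$, a perfect matching $B^{e_{p,q}}_{i^*}$-to-$S_{p,q}$ together with a minimum edge dominating set of $H^{e_{p,q}}_{i^*}-B^{e_{p,q}}_{i^*}$; and for every remaining component $H^e_i$, a minimum edge dominating set $F^e_i$ of $H^e_i$ with $C^e_i\subseteq V(F^e_i)$, which exists by the control-pair definition. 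A direct count yields $|F|=k'$, and the domination check is analogous to Theorem~\ref{theorem::P3}: clique components cover their $C$-to-$V\cup W$ edges via $V(F)\supseteq W(i^*)\cup\{x_1,\ldots,x_k\}$, and non-clique components via $C^e_i\subseteq V(F)$.

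For the $(\Rightarrow)$-direction, by standard replacement arguments $F$ may be assumed to contain exactly $|T|$, $|Z|$, and $|S|$ matching edges covering $T$, $Z$, and $S$ respectively, with the $S$-edges going into $S'\cup\bigcup_{e,i} B^e_i$. Letting $F_{SB}$ be the part of the $S$-matching incident with $\bigcup B^e_i$, $B'_{e,i}=B^e_i\cap V(F_{SB})$, $F^e_i=F\cap E(H^e_i)$, and $F_{\mathrm{outer}}$ the $F$-edges between the $C^e_i$'s and $V\cup W$, the bound $|F^e_i|\ge\MEDS(H-B'_{e,i})=\MEDS(H)-(|B'_{e,i}|-\cost(B'_{e,i}))$ combined with $|F|\le k'$ yields
\[
\sum_{(e,i)}\cost(B'_{e,i})\;+\;|F_{\mathrm{outer}}|\;\le\;\binom{k}{2}\cdot\cost(B).
\]
The hard part will be to combine this inequality with strong beneficiality of $B$, via Proposition~\ref{proposition::properties}(\ref{proposition::strongly}) and the partition lemma (\ref{proposition::decomposeB}), to conclude that every nonempty $B'_{e,i}$ equals $B^e_i$ and that exactly $\binom{k}{2}$ components satisfy $B'_{e,i}=B^e_i$: partial fillings would spread the ``savings'' $|B'|-\cost(B')$ over a strictly larger total cost, violating the inequality. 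Once this structural step is in place, the control-pair property supplies for each of the $\binom{k}{2}$ saving components $(e,i)$ a witness $c\in C^e_i\setminus V(F^e_i)$, and since $F_{\mathrm{outer}}=\emptyset$ all neighbors of $c$ in $V\cup W$ lie in $V(F)$. Combined with $|V(F)\cap W|=s=|W(i)|$ this forces $W(i)=V(F)\cap W$ for every saving component, pinning all $\binom{k}{2}$ of them to a single instance $i^*$; similarly the two endpoints of $e$ lie in the $k$-element set $V(F)\cap V$ (the $T$-matching endpoints), which by the color-class structure of $V$ must form a multicolored clique of size $k$ in $G_{i^*}$. The cross-composition then yields the claimed lower bound via Theorem~\ref{theorem:kernellowerbound:crosscomposition}.
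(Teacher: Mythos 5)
Your overall plan follows the paper's proof closely: the same equivalence relation, the same $T\cup T'$, $Z\cup Z'$, $W$ gadgets with sets $W(i)$, the same budget $k'$, the same cost accounting $|F_i^e|\ge\MEDS(H-B_i^e(F))$, and the same endgame in which the control set $C$ pins down $W(i^*)$ and the clique vertices. There is, however, one concrete deviation that creates a genuine gap: you connect each copy $B_i^e$ to the block $S_{p,q}$ by a \emph{complete bipartite} graph, whereas the argument needs a \emph{labelled} connection in which the $j$-th vertex $b_{i,j}^e$ of $B_i^e$ is adjacent only to the $j$-th vertex $s_{p,q}^j$ of $S_{p,q}$ (so that every $b_{i,j}^e$ has a unique neighbour in $S$). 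With the labelled connection, the matching covering $S_{p,q}$ selects, for each $j\in[|B|]$, exactly one copy of $b_j$, so the nonempty sets $B_i^e(F)$, read as abstract subsets of $B$, form a \emph{partition of $B$}; strong beneficiality, via Proposition~\ref{proposition::properties}~(\ref{proposition::definition}), then forces this partition to be trivial, which is exactly your claim that every nonempty $B'_{e,i}$ equals $B^e_i$.

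With complete bipartite links this step fails. The $|B|$ matched $B$-vertices may then correspond to an arbitrary multiset of proper subsets of $B$ of total cardinality $|B|$ that need not cover $B$ --- for instance several copies of the same cheap subset $A\subsetneq B$ spread over different $H$-copies. Strong beneficiality only lower-bounds $\sum_i\cost(B_i)$ for families that \emph{cover} $B$; it gives no control over, say, $h$ copies of a single $A$ with $h\cdot|A|=|B|$ and $h\cdot\cost(A)\le\cost(B)$, which is compatible with $B$ being strongly beneficial once the costs of its subsets are sufficiently asymmetric. In that situation a solution pays for all of $S_{p,q}$ within budget while no single component has its full $B$ covered, so no component is forced down to a minimum edge dominating set of $H-B$, the fourth control-pair property never fires, and nothing forces $W(i^*)\subseteq V(F)$ or the clique structure --- soundness breaks. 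So the ``hard part'' you defer is not provable from strong beneficiality under your wiring; replacing the complete bipartite links by the labelled matching connection repairs it, after which your argument coincides with the paper's. (Minor additional point: you should also normalize, by the paper's replacement argument, so that all $|S|$ solution edges at $S$ go into the $B$-copies rather than into $S'$, since otherwise some label $j$ might not be represented at all.)
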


\begin{proof}
    We give a cross-composition from \prob{Multicolored-Clique}; the theorem then follows directly from Theorem~\ref{theorem:kernellowerbound:crosscomposition}. In $G-X$ we use only components isomorphic to $H$, so $X$ is a modulator to $\cH$-component graphs for all $\cH$ with $H\in\cH$.

    We choose the same polynomial equivalence relation $\mathcal{R}$ as in the proof of Theorem \ref{theorem::P3}. Assume that we are given a sequence $I_i=(G_i,k)_{i=1}^t$ of \MCC instance that are in the same equivalence class of $\mathcal{R}$. Since all color classes have the same size we identify the vertex sets of each color class. Let $V$ be the vertex set (of size $k \cdot n$) of the $t$ instances and let $V_1,V_2,\ldots, V_k$ be the different color classes (each of size $n$). We assume w.l.o.g.\ that every instance has at least one edge in $E(V_p,V_q)$ for all $1 \leq p < q \leq k$; otherwise, this instance would be a trivial no instance and we can delete it. Furthermore, we can assume w.l.o.g.\ that $t=2^s$.
    
    We construct an instance $(G',k',X')$ of \EDS parameterized by a modulator to an $H$-component graph; thus $X'$ is a modulator to $\cH$-component graphs for all \cH with $H\in\cH$. As in the proof of Theorem \ref{theorem::P3} we add sets $V$, $T=\{t_1,t_2,\ldots,t_k\}$, $T'=\{t_1',t_2',\ldots,t_k'\}$, $W$, $Z$, and $Z'$ to $G'$ and connect them in the same way. Again, for each instance $G_i$, with $i \in [t]$, we pick a different subset of size $s$ of $W$ and denote it by $W(i)$.
    
    Instead of adding one vertex corresponding to each edge set $E(V_p,V_q)$ with $1\leq p < q \leq k$, we add a set $S_{p,q}$ of size $d:=|B|$ to $G'$ as well as a copy $S_{p,q}'$. Let $S_{p,q}=\{s_{p,q}^1,s_{p,q}^2,\ldots,s_{p,q}^{d}\}$, let $S_{p,q}'=\{s'^1_{p,q},s'^2_{p,q},\ldots,s'^d_{p,q}\}$, let $S=\bigcup_{1\leq p < q \leq k} S_{p,q}$, and let $S'=\bigcup_{1\leq p < q \leq k} S_{p,q}'$. We make every vertex $s_{p,q}^j$ with $1 \leq p < q \leq k$ and $j \in [d]$ adjacent to vertex $s'^j_{p,q}$.
    For each graph $G_i$, with $i \in [t]$, we add $|E(G_i)|$ copies of graph $H$ to $G'$. We denote by $H_i^e$ the copy of $H$ that represents edge $e=\{x,y\} \in E(G_i)$ of instance $i \in [t]$ and by $(C_i^e,B_i^e)$ the control pair of $H_i^e$. For $e=\{x,y\}$ in instance $i$, we make every vertex in $C_i^e$ adjacent to all vertices in $W(i)$ and to the vertices $x,y \in V$. 
    
    To be able to refer to single vertices of $B$ in copies $H_i^e$ of $H$, let $B=\{b_1,\ldots,b_d\}$ and let $B_i^e=\{b_{i,1}^e,b_{i,2}^e,\ldots,b_{i,d}^e\}$ where $b_{i,j}^e$ corresponds to $b_j$ in $B$ (i.e., this correspondence constitutes an isomorphism between $H$ and $H_i^e$).
    For all $i \in [t]$, $e \in E(G_i)$, and $j \in [d]$ we make vertex $b_{i,j}^e \in B_i^e$ adjacent to vertex $s_{p,q}^j \in S_{p,q}$ if $e \in E(V_p,V_q)$ for $1\leq p < q \leq k$. Note that every vertex $b_{i,j}^e$ is adjacent to exactly one vertex in $S$.
    
     The modulator $X'$ contains all vertices that are not contained in a copy of $H$; thus $X'=V(G') \setminus \bigcup_{i=1}^t \bigcup_{e \in E(G_i)} V(H_i^e) = V \cup T \cup T' \cup W \cup Z \cup Z' \cup S \cup S'$ and $X'$ has size $|X'|=k \cdot n+2k+4s+2 \cdot \binom{k}{2} |B|$. Let $k'=s+k+\sum_{i=1}^t |E(G_i)| \cdot \MEDS(H) + \binom{k}{2} \cdot \cost(B)$. Note that $0 \leq \cost(B) < |B|$.
    
    We will show that $(G',k',X')$ is a $\mathrm{YES}$-instance of \EDS if and only if there exists an $i^* \in [t]$ such that $(G_{i^*},k)$ is a $\mathrm{YES}$-instance of \MCC. 
    
    $(\Rightarrow:)$ Let $F$ be an edge dominating set of size at most $k'$ in $G'$. 
    Analogously to the proof of Theorem \ref{theorem::P3} we can observe, that $Z \cup T \cup S \subseteq V(F)$ and that we can choose the $|T|$ resp.\ $|Z|$ resp.\ $|S|$ edges that have one endpoint in $T$ resp.\ $Z$ resp.\ $S$ always from the edge set $E(T,V)$ resp.\ $E(Z,W)$ resp.\ $E(S, \bigcup_{i=1}^t \bigcup_{e \in E(G_i)} H_i^e)$. Additionally, we can assume w.l.o.g.\ that the edges in $F \cap E(T,V)$ resp.\ $F \cap E(Z,W)$ resp.\ $F \cap E(S, \bigcup_{i=1}^t \bigcup_{e \in E(G_i)} H_i^e)$ are a matching (simple replacement argument). Furthermore, an edge in $F$ cannot have its endpoints in different copies of $H$.
    
    Let $F_T=F \cap E(T,V)$, let $F_Z=F \cap E(Z,W)$, let $F_S=F \cap E(S, \bigcup_{i=1}^t \bigcup_{e \in E(G_i)} H_i^e)$, and let $F_R=F \setminus (F_T \cup F_Z \cup F_S)$. Recall, the edge sets $F_T$, $F_Z$ and $F_S$ are matchings in $G'$.
    We can assume that every edge in $F_R$ has at least one endpoint in $V(G') \setminus X' = \bigcup_{i=1}^t \bigcup_{e \in E(G_i)} V(H_i^e)$, because every edge in $E(G'[X'])$ is dominated by an edge in $F_T \cup F_Z \cup F_S$. This follows from the fact that $Z \cup T \cup S$ are covered by the edge dominating set $F$ and that $X' \setminus (Z \cup T \cup S)$ is an independent set.
    
    Let $B_i^e(F) = \{v \in B_i^e \mid \exists s \in S \colon \{v,s\} \in F_S\}$ be the set of vertices in $B_i^e$ that are incident with an edge in $F_S$, where $i \in [t]$ and $e \in E(G_i)$. 
    Since $F_S$ is a matching between $S$ and $\bigcup_{i=1}^t \bigcup_{e \in E(G_i)} H_i^e$ which covers $S$ and contains all edges of $F$ that are incident with $S$, and since $|F_S|=|S|$ it holds that no edge in $F \setminus F_S$ has an endpoint in $S$ and that $\sum_{i=1}^t \sum_{e \in E(G_i)} |B_i^e(F)| = |S| = \binom{k}{2} \cdot |B|$.
    For $i \in [t]$ and $e \in E(G_i)$, let $F_i^e = \{ f \in F_R \mid f \cap V(H_i^e) \neq \emptyset\}$ be the set of edges in $F_R$ that have at least one endpoint in $H_i^e$. It holds that $\bigcup_{i=1}^t \dot{\bigcup}_{e \in E(G_i)} F_i^e =F_R$, because every edge in $F_R$ is incident with a vertex in $\bigcup_{i=1}^t \bigcup_{e \in E(G_i)} V(H_i^e)$. Moreover, the sets $F_i^e$ are a partition of $F_R$ because no edge is incident with vertices of different graphs $H_i^e$.
    
    Every edge in the graph $H_i^e$, with $i \in [t]$ and $e \in E(G_i)$, must be dominated by edges in $F_R$ and $F_S$, because they cannot be dominated by edges in $F_T \cup F_Z\subseteq E(T,V)\cup E(Z,W)$. Thus, $F_i^e$ must dominate all edges in $H_i^e - B_i^e(F)$, because the set $B_i^e(F)$ contains all vertices in $H_i^e$ that are incident with edges in $F_S$. This implies that for all $i \in [t]$ and $e \in E(G_i)$, the set $F_i^e$ has at least the size of a minimum edge dominating set in $H_i^e - B_i^e(F)$; hence $\MEDS(H_i^e - B_i^e(F) ) \leq |F_i^e|$ .
    Combining all this, we get
    \begin{align*}
        |F| &= |F_T| + |F_Z| + |F_S| + |F_R| \\
            &= k + s + \binom{k}{2} |B| + \sum_{i=1}^t \sum_{e \in E(G_i)} |F_i^e| \quad\quad\quad\quad\quad\quad \text{// because the sets $F_i^e$ partition $F_R$}\\
            &\geq k + s + \binom{k}{2} |B| + \sum_{i=1}^t \sum_{e \in E(G_i)} \MEDS(H_i^e - B_i^e(F) ) \quad\quad \text{// bc.\ } \MEDS(H_i^e - B_i^e(F) ) \leq |F_i^e| \\
            &= k + s + \binom{k}{2} |B| + \sum_{i=1}^t \sum_{e \in E(G_i)} \left(\MEDS(H_i^e) - |B_i^e(F)| + \cost(B_i^e(F))\right) \quad \text{// def.\ } \cost(B_i^e(F)) \\
            &=k + s + \binom{k}{2} |B| + \sum_{i=1}^t |E(G_i)|  \MEDS(H) - \binom{k}{2} |B| + \sum_{i=1}^t \sum_{e \in E(G_i)} \cost(B_i^e(F)) \\
            &= k + s + \sum_{i=1}^t |E(G_i)|  \MEDS(H) + \sum_{i=1}^t \sum_{e \in E(G_i)} \cost(B_i^e(F)) \\
            &= k + s + \sum_{i=1}^t |E(G_i)|  \MEDS(H) + \sum_{1 \leq p < q \leq k} \sum_{\substack{x \in V_p\\ y \in V_q}} \sum_{\substack{i\in[t]\\ \{x,y\} \in E(G_i)}} \cost(B^{\{x,y\}}_i(F))\\
    \intertext{Because we also have that}
        |F| &\leq k'=k+s+ \sum_{i=1}^t |E(G_i)|  \MEDS(H)+ \binom{k}{2}  \cost(B), \\
    \end{align*}
    it follows directly that
    \begin{align}
        \sum_{1 \leq p < q \leq k} \sum_{\substack{x \in V_p\\ y \in V_q}} \sum_{\substack{i\in[t]\\ \{x,y\} \in E(G_i)}} \cost(B^{\{x,y\}}_i(F))
        \leq \binom{k}{2} \cdot \cost(B) \text{.} \label{align::minimal_beneficial}
    \end{align}
    Therefore, there exist $1 \leq \bar{p} < \bar{q} \leq k$ such that
    \begin{align*}
        \sum_{\substack{x \in V_{\bar{p}}\\ y \in V_{\bar{q}}}} \sum_{\substack{i\in[t]\\ \{x,y\} \in E(G_i)}} \cost(B^{\{x,y\}}_i(F)) \leq \cost(B) \text{.}
    \end{align*}
    \begin{claim} \label{claim::cover}
        The set $B_C:=\bigcup_{\substack{x \in V_{\bar{p}}\\ y \in V_{\bar{q}}}} \dot\bigcup_{\substack{i \in [t]\\ \{x,y\} \in E(G_i)}} B_i^{\{x,y\}}(F)$ contains exactly one copy of every vertex in $B$.
    \end{claim}
    \begin{claimproof}
        Let $j \in [|B|]$. The vertex $s_{\bar{p},\bar{q}}^j \in S_{\bar{p},\bar{q}}$ is endpoint of an edge $f$ in $F_S$. Let $v$ be the other endpoint of this edge; hence $f=\{s_{\bar{p},\bar{q}}^j,v\}$. Since $F_S \subseteq E(S,\bigcup_{i=1}^t \bigcup_{e \in E(G_i)} B_i^e)$ and $s^j_{\bar{p},\bar{q}}$ is only adjacent to vertices in $\bigcup_{i=1}^t \bigcup_{e \in E(G_i)} B_i^e$ that correspond to vertex $b_j$ in $B$, it holds that $v=b^{\{x,y\}}_{i^*,j}$ for some $i^* \in [t]$, $x \in V_{\bar{p}}$, $y \in V_{\bar{q}}$ with $\{x,y\} \in E(G_{i^*})$. Thus, $B_C$ contains at least one copy of $b_j$ for all $j \in [|B|]$. 
        
        Assume, $B_C$ contains at least two copies of a vertex $b_j$ in $B$, with $j \in [|B|]$. Let $i_1,i_2 \in [t]$, let $e_1 \in E(V_{\bar{p}},V_{\bar{q}}) \cap E(G_{i_1})$, and let $e_2 \in E(V_{\bar{p}},V_{\bar{q}}) \cap E(G_{i_2})$ such that $b^{e_1}_{i_1,j}$, $b^{e_2}_{i_2,j}$ are contained in $B_C$ and either $i_1 \neq i_2$ or $e_1 \neq e_2$. Since both vertices $b^{e_1}_{i_1,j}$ and $b^{e_2}_{i_2,j}$ are only adjacent to vertex $s_{\bar{p},\bar{q}}^j \in S_{\bar{p},\bar{q}}$, the set $F_S$ is not a matching, which is a contradiction and proves the claim. 
    \end{claimproof}
    
    Let $B_1, B_2,\ldots, B_h \subseteq B$ be the sets in $B$ that correspond to the nonempty sets in $\{ B_i^{\{x,y\}}(F) \mid x \in V_{\bar{p}}, y \in V_{\bar{q}}, i \in [t] \colon \{x,y\} \in E(G_i) \}$. That is, for each nonempty set $B_i^{\{x,y\}}(F)$, with $i\in[t]$, there is a set $B_r=\{b_j \mid b_{i,j}^{\{x,y\}}\in B_i^{\{x,y\}}(F)\}$.
    It holds that the sets $B_1,B_2,\ldots, B_h$ are a partition of $B$ (Claim \ref{claim::cover}) and that $\sum_{r=1}^h \cost(B_r) \leq \cost(B)$ (inequality (\ref{align::minimal_beneficial})). This implies that $h=1$: Otherwise $B_1,B_2, \ldots, B_h$ would be a non-trivial partition of $B$ with $\cost(B) \geq \sum_{r=1}^h \cost(B_i)$ which implies that $B$ is not strongly beneficial (Proposition~\ref{proposition::properties}~(\ref{proposition::definition})). 
    
    Thus, there exists exactly one vertex $x \in V_{\bar{p}}$, exactly one vertex $y \in V_{\bar{q}}$, and exactly one $i^* \in [t]$ with the property that $\{x,y\} \in E(G_{i^*})$ and that $B_{i^*}^{\{x,y\}}(F)$ is not the empty set. Furthermore, it holds that $B_{i^*}^{\{x,y\}}(F) = B_{i^*}^{\{x,y\}}$ (Claim \ref{claim::cover}).
    It follows that for each graph $H_i^e$ with $i \in [t]$ and $e \in E(G_i)$ either $B_i^e(F) = B_i^e$ or $B_i^e(F) = \emptyset$.
    Therefore, for all $1 \leq p < q \leq k$ there exists exactly one edge $e \in E(V_p,V_q)$ and exactly one $i \in [t]$ such that $e \in E(G_i)$ and $B_i^e(F) = B_i^e$: Either all $1 \leq p < q \leq k$ fulfill inequation \ref{align::minimal_beneficial} with equality or there exist $1 \leq p'<q' \leq k$ such that inequation \ref{align::minimal_beneficial} holds with "$<$". This would imply that $B$ is not beneficial (see proof of Claim \ref{claim::cover} and definition of strongly beneficial).
    Consequently, the edges in $F_S$ are incident with $\binom{k}{2}$ different copies of $H$ and cover the copy of $B$ in these $\binom{k}{2}$ copies.
    
    Now, we consider which vertices are contained or not contained in $V(F_R)$. 
    It holds that $|F_R| \leq \sum_{i=1}^t |E(G_i)| \cdot \MEDS(H) - \binom{k}{2} (\cost(B)-|B|)$; to see this consider $k'\geq |F|=|F_S|+|F_T|+|F_Z|+|F_R|$.
    For each graph $H_i^e$, with $i \in [t]$ and $e\in E(G_i)$, with $B_i^e(F) = \emptyset$ we need at least $\MEDS(H)$ edges to dominate all edges in $E(H_i^e)$ and for each graph $H_i^e$, here $i \in [t]$ and $e\in E(G_i)$, with $B_i^e(F) = B_i^e$ we need at least $\MEDS(H-B)$ edges to dominate all edges in $E(H_i^e-B_i^e)$. Since no two different copies of $H$ are adjacent, since $\MEDS(H-B)=\MEDS(H)-|B|+\cost(B)$, and since there are exactly $\binom{k}{2}$ copies of $H$ where the vertices that correspond to vertices in $B$ are covered by edges in $F_S$, it holds that we have exactly $\MEDS(H)$ resp.\ $\MEDS(H-B)$ edges of $F_R$ to dominate all edges in $H_i^e$ resp.\ $H_i^e - B_i^e$.
    
    It follows, that $F_R$ contains no edge that has one endpoint in $W$ or $V$, because the vertex sets $V$ and $W$ are only adjacent to copies of vertices in $C$ in $G-X$ and a vertex $c \in C$ is neither extendable in $H$ nor in $H-B$. Hence, without using more than $\MEDS(H)$ resp.\ $\MEDS(H-B)$ edges to dominate all edges in $H$ resp.\ $H-B$ we cannot have an edge in $F_R$ that has one endpoint in $V \cup W$.
    
    Let $i \in [t]$ and $e=\{x,y\} \in E(G_i)$ such that $B_i^e(F)=B_i^e$. It holds that $|F_i^e| = \MEDS(H-B)$, and therefore $C_i^e \nsubseteq V(F_i^e)$ (by definition of a control pair). Furthermore, $C_i^e \cap B_i^e$ is empty by the choice of $C$ and $B$. This implies that $N(C_i^e) = \{x,y\} \cup W(i) \subseteq V(F)$, i.e., that all neighbors of $C_i^e$ in $V$ and $W$ must be endpoints of $F$. It holds that $\{x,y\} \subseteq V(F_T)$ and $W(i) \subseteq V(F_Z)$ because neither edges in $F_S$ nor in $F_R$ have endpoints in $V\cup W$. Since, $|F_Z|=|W(i)|$ and $F_Z \subseteq E(Z,W)$ it follows that $V(F_Z) \cap W = W(i)$. Thus, all graphs $H_i^e$, here $i \in [t]$ and $e \in E(G_i)$, with $B_i^e(F)=B_i^e$ must belong to the same instance (because only edges in $F_Z$ contain edges that have endpoints in $W$ and because $F_Z$ has size $s=|W(i)|$).
    Let $i^* \in [t]$ be the number of this instance and let $e=\{x,y \} \in E(G_{i^*})$ such that $B_{i^*}^e(F)=B_{i^*}^e$. It must hold that $x,y \in V(F_T)$ (because no other edges in $F$ have an endpoint in $V$). Since $|F_T|=k$ and $F_T \subseteq E(T,V)$ it follows that the $\binom{k}{2}$ graphs $H^e_{i^*}$, for $e \in E(G_{i^*})$, with $B_{i^*}^e(F)=B_{i^*}^e$ must correspond to a set of edges in $E(G_{i^*})$ that have their endpoints in the set $X=V(F_T) \cap V$ of size $k$. 
    Consequently, the set $X$ is a clique in $G_{i^*}$ and $(G_{i^*},k)$ is a $\mathrm{YES}$-instance of \prob{Multicolored-Clique}.
    
    $(\Leftarrow:)$ This direction of the correctness proof is similar to the corresponding one in the proof of Theorem \ref{theorem::P3} and follows easily from the construction; we sketch this only briefly. If $(G_{i^*},k)$ is yes and $X=\{x_1,\ldots,x_k\}\subseteq V$ is a multicolored $k$-clique in $G_{i^*}$ with $x_j\in V_j$, for all $j \in [k]$, then select $k+s$ solution edges for $F$ in $E(T,V)$ and $E(Z,W)$ as for Theorem~\ref{theorem::P3}. In particular, this ensures that $W(i^*)\cup X\subseteq V(F)$. For each set $S_{p,q}$ with $1\leq p<q\leq k$ add the edges between the $d=|B|$ vertices between $S_{p,q}$ and the copy $B_{i^*}^{\{x_p,x_q\}}$ of $B$ in $H_{i^*}^{\{x_p,x_q\}}$. At this point, we have used up the budget (intuitively) intended for edges incident with $S$, $T$, and $Z$. All edges in $E(G[X'])$ are already dominated by $F$ as well as all edges incident with $S$; all edges in graphs $H_i^e$ and some edges between those graphs and $V\cup W$ remain.
    
    In each graph $H_{i^*}^{\{x_p,x_q\}}$ for $1\leq p<q\leq k$ select an edge dominating set for $H_{i^*}^{\{x_p,x_q\}}-B_{i^*}^{\{x_p,x_q\}}$ of cost $\MEDS(H-B)$. Together with previously added edges incident with $B_{i^*}^{\{x_p,x_q\}}$, this dominates all edges in this copy of $H$. Furthermore, edges between $H_{i^*}^{\{x_p,x_q\}}$ and $V\cup W$ are already dominated because their other endpoints are in $X\cup W(i^*)\subseteq V(F)$. For all other graphs $H_i^e$, i.e. with $i\neq i^*$ or with $i=i^*$ but $e\notin E(G[X])$ we can select an edge dominating set of size $\MEDS(H)$ that is incident with $C_i^e$. This dominates all edges in this $H$-graph and, crucially, dominates all edges between $H_i^e$ and $V\cup W$ because their endpoints in $H_i^e$ are all in $C_i^e$. (This is the only place where we need the third property of control pairs.) Thus we have selected an edge dominating set and it can be readily checked that we have picked exactly $k'=s+k+\sum_{i=1}^t |E(G_i)| \cdot \MEDS(H) + \binom{k}{2} \cdot \cost(B)$. Note that $\binom{k}{2} \cdot \cost(B)$ is exactly the additional cost of selecting $\binom{k}{2}$ times an edge dominating set for $H_{i^*}^{\{x_p,x_q\}}-B_{i^*}^{\{x_p,x_q\}}$ and $|B|$ edges between $B_{i^*}^{\{x_p,x_q\}}$ and $S_{p,q}$ rather than the optimum solution for $H_{i^*}^e$.
\end{proof}

Theorem \ref{theorem::LB} implies that whenever we have a family of connected graphs $\mathcal{H}$ which contains at least one graph $H \in \mathcal{H}$ that has a control pair, then \EDS parameterized by the size of a modulator to an $\mathcal{H}$-component graph does not have a polynomial kernel unless \containment. 
Now, one can ask which connected graphs have a control pair and whether \EDS parameterized by the size of a modulator to an $\mathcal{H}$-component graph has a polynomial kernel when no graph in \cH has a control pair.
First, we show which connected graphs have a control-pair. In a second step, we give a polynomial kernel for all remaining connected graphs of constant size.

\begin{lemma} \label{lemma::CP_Q}
Every connected graph $H$ that has at least one extendable vertex that is not free contains a control pair.
\end{lemma}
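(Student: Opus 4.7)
The plan is to pick any $v\in Q(H)\setminus W(H)$ and set $B:=\{v\}$; by Proposition~\ref{proposition::properties}(\ref{proposition::Q-W}) this singleton is strongly beneficial with $\cost(B)=0$, satisfying the first condition of a control pair. The task then reduces to producing a suitable control set $C$.

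I would first note that $\{v\}$ is not free: a direct check of the definition shows the union of two free sets is free, so $W(H)$ coincides with the union of all free sets, and $v\notin W(H)$ forces $\{v\}$ to be non-free. Likewise $W(H)\cup\{v\}$ is non-free; but for any $u\in W(H)$ the freeness of $W(H)$ itself already produces, for each minimum eds $F$ of $H$, a minimum eds $F'$ of $H-u$ of size $|F|-1$ with $V(F)\setminus(W(H)\cup\{v\})\subseteq V(F)\setminus W(H)\subseteq V(F')$. Hence the failure of freeness for $W(H)\cup\{v\}$ must be witnessed by removing $v$: there exists a minimum eds $F^*$ of $H$ such that for every minimum eds $F'$ of $H-v$ (all of size $\MEDS(H)-1$ since $v$ is extendable), $V(F^*)\setminus(W(H)\cup\{v\})\nsubseteq V(F')$. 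I would pick $F^*$ among such witnesses so as to minimize $|V(F^*)\cap((Q(H)\setminus W(H))\cup Q(H-v))|$, and then define $C:=V(F^*)\setminus(\{v\}\cup Q(H)\cup Q(H-v))$. By construction, $C\subseteq V(H)\setminus(Q(H)\cup B)$, $C\cap Q(H-B)=\emptyset$, and $C\subseteq V(F^*)$, so three of the four control-pair conditions are immediate.

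The main obstacle lies in verifying the final condition, that no minimum eds $F'$ of $H-v$ satisfies $C\subseteq V(F')$. If some such $F'$ did exist, then every vertex of $V(F^*)\setminus(W(H)\cup\{v\})$ that is missing from $V(F')$ would necessarily lie in $(Q(H)\setminus W(H))\cup Q(H-v)$. Picking such a witness $w$, I would split into two cases. If $w\in Q(H)\setminus W(H)$, the extendability of $w$ (so $\MEDS(H-w)=\MEDS(H)-1$) combined with Proposition~\ref{proposition::properties}(\ref{proposition::N(W)IN}) allows me to shift an edge of $F^*$ incident to $w$ onto a minimum eds of $H-w$ and produce a new minimum eds $\widetilde F$ of $H$ avoiding $w$; if instead $w\in Q(H-v)$, I perform an analogous swap using a minimum eds of $H-\{v,w\}$ extended by a suitable edge back to $H$. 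In either case one checks, with the help of the cost and $\MEDS$ bookkeeping from Proposition~\ref{proposition::properties}, that $\widetilde F$ still witnesses the non-freeness of $W(H)\cup\{v\}$ while having strictly smaller intersection with $(Q(H)\setminus W(H))\cup Q(H-v)$, contradicting the minimal choice of $F^*$. Executing these edge swaps cleanly and verifying that the resulting set is indeed a minimum eds of $H$ with the required non-containment property is where the bulk of the technical work resides.
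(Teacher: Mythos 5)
Your choice of $B=\{v\}$ and the verification of the first three control-pair conditions are fine, but the argument for the fourth condition has a genuine gap, and it originates in invoking the failure of freeness for the set $W(H)\cup\{v\}$ rather than for $Q(H)$. The witness $F^*$ you obtain only guarantees that no minimum edge dominating set $F'$ of $H-v$ contains the \emph{larger} set $V(F^*)\setminus(W(H)\cup\{v\})$; the vertex that $F'$ misses may well be another extendable-but-not-free vertex $w\in Q(H)\setminus(W(H)\cup\{v\})$, which you have deliberately excluded from $C$, so $C\subseteq V(F')$ can still hold. Your proposed repair --- minimizing $|V(F^*)\cap((Q(H)\setminus W(H))\cup Q(H-v))|$ and swapping away such a $w$ --- does not go through as described: an extendable non-free vertex need not be avoidable by \emph{any} minimum edge dominating set (the middle vertex of a $P_3$ is extendable, not free, and lies in every minimum solution), so ``produce a new minimum eds of $H$ avoiding $w$'' can be impossible; in the case $w\in Q(H-v)$, a minimum eds of $H-\{v,w\}$ extended by one edge has size $\MEDS(H)-1$ and hence cannot be an edge dominating set of $H$; and even where a swap is available, nothing in your sketch shows that the modified set is still a witness of non-freeness or that the relevant intersection strictly decreases. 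These are precisely the points where the proof must do real work, and they are left unverified.

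The paper sidesteps all of this by applying the freeness failure to $Q(H)$ itself: since $W(H)$ contains every free set and $Q(H)\setminus W(H)\neq\emptyset$, the set $Q(H)$ is not free, and because vertices of $W(H)$ cannot witness this failure one directly obtains some $v\in Q(H)\setminus W(H)$ and a minimum eds $F$ such that \emph{no} minimum eds of $H-v$ contains $V(F)\setminus Q(H)$. Setting $C=V(F)\setminus Q(H)$ then makes the fourth condition immediate, and the only remaining work is to show $C\cap Q(H-v)=\emptyset$: any $c\in C$ extendable in $H-v$ would be extendable in $H$ (take a minimum eds of $H-v-c$, of size $\MEDS(H)-2$, and add an edge $\{v,u\}$ with $u\in N_H(v)\setminus\{c\}$ to obtain an eds of $H-c$ of size $\MEDS(H)-1$; the case $N_H(v)\setminus\{c\}=\emptyset$ contradicts Proposition~\ref{proposition::properties}~(\ref{proposition::vertex})), contradicting $c\notin Q(H)$. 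If you redo your freeness step subtracting $Q(H)$ instead of $W(H)\cup\{v\}$, the minimization and swapping machinery becomes unnecessary.
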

\begin{proof}
  Since $H$ has an extendable vertex that is not free, there exists a vertex $v \in Q(H) \setminus W(H)$ and a minimum edge dominating set $F$ in $H$ such that for every minimum edge dominating set $F'$ in $H-v$, which has size $|F|-1$, it holds that $V(F) \setminus Q(H) \nsubseteq V(F')$: If there would be no such vertex $v \in Q(H) \setminus W(H)$, then it would hold that for all $v \in Q(H) \setminus W(H)$ and for all minimum edge dominating sets $F$ in $H$ there exists a minimum edge dominating set $F'$ in $H-v$ of size $|F|-1$ and with $V(F) \setminus Q(H) \subseteq V(F')$. This holds also for all vertices in $W(H)$ and would imply that $Q(H)$ is free. Now, let $v \in Q(H) \setminus W(H)$ be a vertex that is extendable but not free, and let $F$ be a minimum edge dominating set in $H$ such that there exists no minimum edge dominating set $F'$ (of size $|F|-1$) in $H-v$ with $V(F) \setminus Q(H) \subseteq V(F')$.
  Let $C=V(F) \setminus Q(H)$ and $B=\{v\}$. Since $B=\{v\} \subseteq Q(H)$ it holds that $C \subseteq V \setminus (Q(H) \cup B)$. We will show that $(C,B)$ is a control pair. The set $B$ is strongly beneficial, because $v$ is extendable and not free (Proposition~\ref{proposition::properties}~(\ref{proposition::Q-W})). Furthermore, by construction, it holds that $F$ is a minimum edge dominating set with $C \subseteq V(F)$ and that for all minimum edge dominating set $F'$ in $H-B=H-v$ it holds that $C \nsubseteq V(F')$ (choice of $v$ and $C$). 
  
  It remains to show that no vertex in $C$ is extendable in $H-v$. Assume for contradiction that there exists a vertex $c \in C \cap Q(H-v)$. We will show that this implies that $c$ is also extendable in $H$, i.e., $c\in Q(H)$, which is a contradiction to the choice of $C$. Let $F_c$ be a minimum edge dominating set in $H-v-c$. Since $c$ is extendable in $H-v$, it holds that $|F_c|=\MEDS(H-v-c) = \MEDS(H-v)-1$. If $N_H(v) \setminus \{c\} = \emptyset$, then $F_c$ is also a minimum edge dominating set in $H-c$ and it would follow that $\MEDS(H-c)\leq |F_c|=\MEDS(H-v)-1 = \MEDS(H)-2$; but $\MEDS(H-c) \geq \MEDS(H)-1$ (Proposition~\ref{proposition::properties}~(\ref{proposition::vertex})), a contradiction. Thus, $N_H(v) \setminus \{c\} \neq \emptyset$ and we pick an arbitrary vertex $w\in N_H(v) \setminus \{c\}$. Now, $F_c \cup \{ \{v,w\} \}$ would be an edge dominating set in $H-c$ of size $\MEDS(H)-1$, hence $c$ would be extendable in $H$, which is a contradiction to the choice of $C$.
\end{proof}

\begin{lemma} \label{lemma::CP_U}
Every connected graph $H$ that has a strongly beneficial set that contains at least one uncovered vertex contains a control pair.
\end{lemma}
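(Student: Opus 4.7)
The plan is to construct a control pair $(C, B^*)$ where $B^* \supseteq B$ is obtained from $B$ by absorbing problematic vertices. I will fix any $u \in B \cap U(H)$ and observe $u \notin Q(H)$: otherwise extending a minimum edge dominating set of $H-u$ by any edge $\{u,w\}$ (with $w$ a neighbor of $u$, which exists since $H$ is connected on at least two vertices) would give a minimum edge dominating set of $H$ containing $u$, contradicting $u \in U(H)$. By Proposition~\ref{proposition::properties}(\ref{proposition::Q-W}) the singleton $\{u\}$ is not beneficial, so $|B| \geq 2$. The natural candidate $C = N_H(u) \setminus B$ can fail the condition $C \cap Q(H-B) = \emptyset$ when some neighbor of $u$ becomes extendable once $B$ is removed, so I will iteratively absorb such vertices: while there exists $v \in (N_H(u) \setminus B) \cap Q(H-B)$, replace $B$ by $B \cup \{v\}$.

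The central step will be a preservation lemma: each such augmentation keeps $B$ strongly beneficial. Since $v \in Q(H - B)$ gives $\MEDS(H - B - v) = \MEDS(H - B) - 1$, a direct computation yields $\cost(B \cup \{v\}) = \cost(B)$. Beneficiality of $B \cup \{v\}$ will follow from a short case distinction on whether a proper subset $B'' \subsetneq B \cup \{v\}$ contains $v$, using monotonicity (Proposition~\ref{proposition::properties}(\ref{proposition::monoton})) and beneficiality of $B$. For strong beneficiality, I will examine any non-trivial partition $B_1,\ldots,B_h$ of $B \cup \{v\}$: if the block containing $v$ is $\{v\}$ itself then the remaining blocks form a (possibly trivial) partition of $B$, and the inequality reduces via $\cost(\{v\})=1$, which holds because $v \in N_H(U)$ implies $v \notin Q(H)$ by Proposition~\ref{proposition::properties}(\ref{proposition::neighborhood}); otherwise that block has the form $B'_1 \cup \{v\}$ with $B'_1 \neq \emptyset$, and combining $\cost(B'_1 \cup \{v\}) \geq \cost(B'_1)$ (a further monotonicity computation) with strong beneficiality of $B$ applied to the induced non-trivial partition $B'_1, B_2, \ldots, B_h$ of $B$ yields the desired cost inequality.

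Since $|B| \leq |V(H)|$, the augmentation terminates; let $B^*$ denote the final set and put $C := N_H(u) \setminus B^*$. To check $(C, B^*)$ is a control pair: by construction $C \cap B^* = \emptyset$ and $C \cap Q(H-B^*) = \emptyset$, while Proposition~\ref{proposition::properties}(\ref{proposition::neighborhood}) gives $C \subseteq N_H(U) \subseteq V(H) \setminus Q(H)$ and Proposition~\ref{proposition::properties}(\ref{proposition::N(U)}) gives $C \subseteq V(F)$ for every minimum edge dominating set $F$ of $H$. For $C \neq \emptyset$ I will argue by contradiction: if $N_H(u) \subseteq B^*$ then $u$ is isolated in $H - (B^* \setminus \{u\})$, so $\MEDS(H - (B^* \setminus \{u\})) = \MEDS(H - B^*)$, contradicting beneficiality of $B^*$. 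Finally, assuming some minimum edge dominating set $F_{B^*}$ of $H - B^*$ contained $C$, the only edges of $H - (B^* \setminus \{u\})$ outside $H - B^*$ would be the edges $\{u,w\}$ with $w \in N_H(u) \setminus B^* = C$, each dominated through $w \in V(F_{B^*})$; so $F_{B^*}$ would itself be an edge dominating set of $H - (B^* \setminus \{u\})$ of size $\MEDS(H-B^*)$, once more contradicting beneficiality of $B^*$.

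The hard part will be the preservation lemma: keeping strong beneficiality through each augmentation depends on the cost identity $\cost(B \cup \{v\}) = \cost(B)$, which requires $v \in Q(H-B)$, and on the boundary value $\cost(\{v\})=1$, which requires $v \in N_H(U) \setminus Q(H)$. Both ingredients stem from the choice of $u \in U(H)$ and $v \in N_H(u)$, so the proof really does use the uncovered-vertex hypothesis crucially at the cost-bookkeeping level of Proposition~\ref{proposition::properties}(\ref{proposition::definition}).
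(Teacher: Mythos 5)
Your argument is correct, but it takes a genuinely different route from the paper. The paper sets $C=N_H(B\cap U(H))\setminus B$ and verifies all four control-pair conditions for the \emph{original} $B$; in particular, it disposes of the condition $C\cap Q(H-B)=\emptyset$ directly: if $c\in C$ were extendable in $H-B$, then adding the edge $\{c,b\}$ (for an uncovered neighbor $b\in B\cap U(H)$ of $c$) to a minimum edge dominating set of $H-B-c$ yields an edge dominating set of $H-(B\setminus\{b\})$ of size $\MEDS(H-B)$, contradicting beneficiality of $B$. Note that this very same ``reconnect to the uncovered vertex'' move, which you yourself use for the non-emptiness and non-coverage checks, shows that your loop condition $(N_H(u)\setminus B)\cap Q(H-B)\neq\emptyset$ can never be satisfied: such a $v$ would already witness $\MEDS(H-(B\setminus\{u\}))\leq\MEDS(H-B)$. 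So your augmentation step and the accompanying preservation lemma are dead code here -- they import the machinery the paper reserves for Lemma~\ref{lemma::CP_other}, where extendable vertices in $C$ genuinely can occur and absorption is unavoidable. Your preservation lemma itself is sound (it is essentially the claim proved inside Lemma~\ref{lemma::CP_other}, with $\cost(\{v\})=1$ justified via $v\in N_H(U)\subseteq V\setminus Q(H)$ instead of $v\in N(W(H))$), and your terminal verification of the control-pair conditions for $(C,B^*)$ is correct, so the proof goes through; what the paper's route buys is brevity and no dependence on the delicate strong-beneficiality bookkeeping, while yours buys uniformity with the harder Lemma~\ref{lemma::CP_other} at the cost of carrying an augmentation that, for this lemma, never fires.
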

\begin{proof}
 Let $B$ be a strongly beneficial set in $H$ that contains at least one uncovered vertex and let $C=N_H(B \cap U(H)) \setminus B$ be the neighborhood of all uncovered vertices in $B$ without the vertices in $B$. Since $H$ is connected and every vertex in $U(H)$ has no neighbor in $U(H)$ or $Q(H)$ (Proposition~\ref{proposition::properties}~(\ref{proposition::neighborhood})), it holds that $C \subseteq V(H) \setminus Q(H)$. Furthermore, $C$ is not the empty set. Otherwise, if all neighbours of $B \cap U(H)$ are also contained in $B$ then $\MEDS(H-B)=\MEDS(H-(B \setminus U(H)))$. This implies that $B$ is not beneficial which is a contradiction. We will show that $(C,B)$ is a control pair. The set $B$ is strongly beneficial and the set $C$ is a subset of $V(H) \setminus (Q(H) \cup B)$ (by choice). It holds that every minimum edge dominating set in $H$ contains $C$, because $C \subseteq N(U(H))$ (Proposition~\ref{proposition::properties}~(\ref{proposition::N(U)})). Moreover, there exists no minimum edge dominating set $F'$ in $H-B$ such that $C \subseteq V(F')$: If not then such a set $F'$ would also be a minimum edge dominating set in $H - (B \setminus U(H))$ because $U(H) \cap B\subseteq U(H)$ is an independent set (Proposition~\ref{proposition::properties}~(\ref{proposition::neighborhood})) and all neighbors of $U(H) \cap B$ are contained in $C$. But, this implies that $B$ is not beneficial because $\MEDS(H - \widetilde{B}) \leq \MEDS(H-B)$ where $\widetilde{B}=B \setminus U(H)\subsetneq B$.
 
 Next, we show that no vertex in $C$ is extendable in $H-B$. Assume that there exists a vertex $c \in Q(H-B) \cap C$ that is extendable in $H-B$. Let $F'$ be a minimum edge dominating set in $H-B-c$; hence $|F'|=\MEDS(H-B)-1$. Since $c \in C = N_H(B \cap U(H))$, there exists a vertex $b \in B \cap U(H)$ with $\{c,b\} \in E(H)$. Now, $F' \cup \{\{c,b\}\}$ is an edge dominating set in $H-(B \setminus \{b\})$ of size $|F'|+1 = \MEDS(H-B)$. Thus, $\MEDS(H-(B \setminus \{b\})) \leq \MEDS(H-B)$, which implies that $B$ is not beneficial, which is a contradiction.
\end{proof}

\begin{lemma} \label{lemma::CP_R}
Every connected graph $H$ that contains at least one vertex that is not in $N[W(H)] \cup U(H)$ contains a control pair.
\end{lemma}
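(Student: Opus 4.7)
The plan is to perform a case analysis on the extendability of vertices, reducing to the already-proved Lemmas~\ref{lemma::CP_Q} or~\ref{lemma::CP_U} wherever possible and constructing a control pair by hand in the remaining subcase.

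First I fix a vertex $r\in V(H)\setminus (N[W(H)]\cup U(H))$. If $r$ is extendable, then $r\in Q(H)\setminus W(H)$ (using $r\notin W(H)\subseteq N[W(H)]$) and Lemma~\ref{lemma::CP_Q} directly supplies a control pair; so assume $r\notin Q(H)$. Since $r\notin U(H)$, I may pick a minimum edge dominating set $F$ of $H$, chosen to be a matching, that contains an edge $e=\{r,v\}$. The hypothesis $r\notin N[W(H)]$ forces $v\notin W(H)$, for otherwise $r\in N(W(H))$. If in addition $v\in Q(H)$, then $v\in Q(H)\setminus W(H)$ and Lemma~\ref{lemma::CP_Q} supplies a control pair at $v$. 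Otherwise $r,v\notin Q(H)$, and Proposition~\ref{proposition::properties}(\ref{proposition::BFSedge}) yields that $B:=\{r,v\}$ is a strongly beneficial set. Note that $B\cap U(H)=\emptyset$ (since $r\notin U(H)$ and $v\in V(F)$ ensures $v\notin U(H)$), so Lemma~\ref{lemma::CP_U} does not apply to this particular $B$.

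In this remaining subcase I construct $C$ explicitly. The naive choice $C=V(F)\setminus B$ fails because, whenever $\cost(B)=1$, the matching $F\setminus\{e\}$ is itself a minimum EDS of $H-B$ with vertex set equal to $V(F)\setminus B$. My plan is therefore to choose a \emph{different} minimum EDS $F^*$ of $H$ that covers more vertices of $V(H)\setminus B$ than $F$ does, and to set $C:=V(F^*)\setminus (B\cup Q(H)\cup Q(H-B))$. The assumption $r\notin N[W(H)]$ is the crucial ingredient: no neighbour of $r$ is free, so the ``free swap'' that characterises $W(H)$ cannot be invoked to produce, from $F_B\cup\{e\}$, a minimum EDS of $H$ whose vertex coverage matches $V(F^*)$. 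Concretely, if $C\subseteq V(F_B)$ held for some minimum EDS $F_B$ of $H-B$, then $F_B\cup\{e\}$ would be a minimum EDS of $H$ covering $C\cup B$, and an exchange argument around the non-free vertex $r$ should derive a free vertex adjacent to $r$, contradicting $r\notin N[W(H)]$.

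The main obstacle is precisely this constructive step, particularly in the extremal scenario where $U(H-B)=\emptyset$ and no strongly beneficial set of $H$ contains an uncovered vertex, as happens for $H=K_3$ or $H=K_5$. In such cases no single vertex in $V(H)\setminus B$ is forced out of every minimum EDS of $H-B$, so $C$ cannot be a singleton. One must instead argue that $|C|>\max_{F_B}|V(F_B)|\leq 2\MEDS(H-B)$ while simultaneously ensuring that $C$ lies in $V(F^*)$ for a single minimum EDS $F^*$ of $H$, and that $C$ avoids $Q(H)\cup Q(H-B)$. Guaranteeing enough flexibility in the minimum EDSs of $H$ to make this work --- leveraging the constraints $r\notin N[W(H)]$, the identity $Q(H)=W(H)$ (which we may assume by the failure of Lemma~\ref{lemma::CP_Q}), and $B\cap U(H)=\emptyset$ --- is where the real work of the lemma sits.
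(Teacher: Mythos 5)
There is a genuine gap, and you have correctly located it yourself: the construction of the control set $C$. Your reduction to Lemmas~\ref{lemma::CP_Q} and~\ref{lemma::CP_U} and your identification of $B=\{r,v\}$ as a strongly beneficial set via Proposition~\ref{proposition::properties}~(\ref{proposition::BFSedge}) match the paper, and your observation that $C=V(F)\setminus B$ fails (because $F\setminus\{e\}$ is a minimum EDS of $H-B$ covering exactly that set) is correct. But the direction you then pursue --- choosing $C$ inside $V(F^*)$ for a single cleverly chosen minimum EDS $F^*$ of $H$ --- is not how the difficulty is resolved, and the ``extremal scenario'' you worry about does not actually arise once the right two ingredients are in place.

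The first missing ingredient is a preliminary claim: among the vertices of $R:=V(H)\setminus(N[W(H)]\cup U(H))$ there is one, say $r$, that is \emph{omitted} by some minimum edge dominating set of $H$. The paper proves this by contradiction: if every $r\in R$ lies in every minimum EDS, take $F$ maximizing $|N_H(r)\cap V(F)|$ with $\{r,x\}\in F$; one checks $N_H(r)\setminus V(F)$ is a nonempty subset of $U(H)$ (it avoids $R$, $W(H)$, and, by Proposition~\ref{proposition::properties}~(\ref{proposition::N(W)IN}) and maximality, $N(W(H))$; and it is nonempty since otherwise $x$ would be extendable but not free). Then $F\setminus\{\{r,x\}\}$ dominates $H-Y$ for $Y=(N_H(r)\setminus V(F))\cup\{x\}$, so $\MEDS(H-Y)<\MEDS(H)$, which by Proposition~\ref{proposition::properties}~(\ref{proposition::strongly}) and~(\ref{proposition::disjointQ}) yields a strongly beneficial set of size at least two inside $Y$, hence one meeting $U(H)$ --- contradicting that Lemma~\ref{lemma::CP_U} does not apply. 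The second missing ingredient is the actual choice $C:=N_H(r)\setminus\{v\}$ (not a subset of $V(F^*)$ for one EDS): the minimum EDS omitting $r$ must cover all of $N_H(r)\supseteq C$, giving the third control-pair condition; and if some minimum EDS $F_B$ of $H-B$ covered $C$, then $F_B$ would already dominate $H-v$ with $\MEDS(H)-1$ edges, making $v$ extendable --- impossible since $Q(H)=W(H)$ and $v\in N(r)$ with $r\notin N[W(H)]$. The same one-edge-extension argument shows $C\cap Q(H-B)=\emptyset$. Without these two steps your proof does not go through, in particular not for $H=K_3$ or $H=K_5$, which are exactly instances of this lemma.
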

\begin{proof}
  We can assume that the graph $H$ neither contains an extendable vertex that is not free nor a strongly beneficial set which contains at least one uncovered vertex; otherwise we can apply Lemma \ref{lemma::CP_Q} resp.\ Lemma \ref{lemma::CP_U} to find a control pair.
  
  First, we prove that there exists a vertex $v \notin N[W(H)] \cup U(H)$ that is not contained in every minimum edge dominating set of $H$. 
  Assume for contradiction that every vertex in $R:= V(H) \setminus (N[W(H)] \cup U(H) )$ is contained in every minimum edge dominating set of $H$. Let $v \in R$ and let $F$ be a minimum edge dominating set in $H$ with $|N_H(v) \cap V(F)|$ maximal. Let $x \in V(H)$ such that $\{v,x\} \in F$; note that $x \notin W(H)$, because $v \notin N[W(H)]$. Consider the vertex set $X:=N_H(v) \setminus V(F)$. It holds that $X$ neither contains a vertex of $R$ (because every vertex in $R$ is contained in every minimum edge dominating set) nor a vertex of $W(H)$ (because $v \notin N[W(H)]$). In addition, $X$ contains no vertex of $N(W(H))$: If $X$ would contain a vertex in $N(W(H))$ then we know that there exists an edge dominating set $F'$ in $H$ such that $(V(F) \cup N(W(H)) \setminus W(H) \subseteq V(F')$ (Proposition~\ref{proposition::properties}~(\ref{proposition::N(W)IN})). This implies that $|N_H(v) \cap V(F)| < |N_H(v) \cap V(F')|$ because $N_H(v) \cap V(F')$ contains all vertices that are contained in $N_H(v) \cap V(F)$ and the vertex in $N(W(H))$ that is contained in $X$; note that $W(H) \cap N_H(v) = \emptyset$. Furthermore, $X$ is not the empty set because this would imply that $x$ is extendable but not free: The set $F - \{\{x,v\}\}$ would be a minimum edge dominating set of $H-x$, because $N_H(v) \subseteq V(F)$. Thus, the set $X$ is contained in $U(H)$ and not empty. Let $Y = X \cup \{x\}$. 
  The set $F - \{\{v,x\}\}$ is an edge dominating set in $H-Y$ because $F- \{v,x\}$ dominates all edges in $H-Y$ that are not adjacent to $v$ (the vertex $x$ is contained in $Y$), and vertex $v$ is only adjacent to the vertices in $Y$ and vertices that are contained in $V(F)$. Thus, it holds that $\MEDS(H-Y) < \MEDS(H)$.
  This implies that there exists a strongly beneficial set $B \subseteq Y$ (Proposition~\ref{proposition::properties}~(\ref{proposition::strongly})). Since $x \notin W(H)$ and every extendable vertex is also free, no vertex in $Y$ is extendable and, therefore, $|B| \geq 2$ (Proposition~\ref{proposition::properties}~(\ref{proposition::Q-W}) and (\ref{proposition::disjointQ})). This directly implies that $B \cap U(H) \neq\emptyset$ because $Y$ contains only one element, namely $x$, that is not in $U(H)$. This, however, is a contradiction since we assumed that $H$ contains no strongly beneficial set that contains at least one uncovered vertex. Thus, there exists a vertex in the set $R$ that is not contained in every minimum edge dominating set.
  
  Let $v \in R$ be such a vertex that is not contained in every minimum edge dominating set and let $F$ be a minimum edge dominating set in $H$ that contains $v$ (such a minimum edge dominating set exists; otherwise $v$ would be uncovered). Let $x \in V(H)$ such that $\{v,x\} \in F$. The vertex $x$ is not contained in $W(H)$ because vertex $x$ is adjacent to vertex $v$, and because vertex $v$ is in $R$. Hence, neither vertex $v$ nor vertex $x$ are extendable because every extendable vertex is also free.
  The fact that $v,x \notin Q(H)$ together with Proposition~\ref{proposition::properties}~(\ref{proposition::BFSedge}) implies that the set $B=\{v,x\}$ is strongly beneficial.
  Let $C:=N_H(v) - x$. It holds that $C \cap Q(H)$ is empty because $W(H)=Q(H)$ and $v \notin N[W(H)]$. Furthermore, $C \cap B$ is empty (by choice of $C$).
  Next, we will show that $(C, B)$ is a control pair. We already showed that $B$ is strongly beneficial. Recall that we chose a vertex $v \in R$ that is not contained in every minimum edge dominating set of $H$. Thus, there exists a minimum edge dominating set $F_v$ in $H$ that does not contain $v$. This edge dominating must contain all vertices in $N_H(v)$. Since $C$ is a subset of $N_H(v)$, there exists a minimum edge dominating $\hat{F}$ in $H$ with $C \subseteq V(\hat{F})$. Furthermore, there exists no minimum edge dominating set $F_B$ in $H-B$ with $C \subseteq V(F_B)$: Otherwise, $x$ would be extendable (and not free) because $F_B$ would also be an edge dominating set in $H-x$. To prove that no vertex in $C$ is extendable in $H-B$ assume for contradiction that there exists a vertex $c \in C$ that is extendable in $H-B$. Thus, there exists a minimum edge dominating set $F_c$ in $H-B-c$ of size $\MEDS(H-B)-1$. We can extend $F_c$ to an edge dominating set in $H-c$ of size $\MEDS(H-B)=\MEDS(H)-1$ by adding the edge $\{v,x\}$ to $F_c$. But now, $c$ would also be extendable in $H$, which contradicts the assumption.
\end{proof}

So far, we showed that every connected graph $H$ that contains an extendable vertex that is not free, a vertex in $V(H) \setminus (N[W(H)] \cup U(H) )$, or a strongly beneficial set that contains at least one vertex of $U(H)$ has a control pair. Thus, for all remaining connected graphs $H$ it holds that every extendable vertex is also free ($Q(H)=W(H)$) and that $V(H) \setminus (N[W(H)] \cup U(H) )$ is empty; hence $V(H) = N[W(H)] \cup U(H)$. Moreover, no strongly beneficial set contains any vertex of $U(H)$ and, hence, strongly beneficial sets must be subsets of $N(W(H))$; recall that (strongly) beneficial sets are subsets of $V(H)\setminus W(H)$.

\begin{lemma}\label{lemma::CP_other}
Every connected graph $H$ that has a strongly beneficial set $B\subseteq N(W(H))$ such that no minimum edge dominating set $F_B$ of $H-B$ covers all vertices of $N(W(H))\setminus B$ contains a control pair.
\end{lemma}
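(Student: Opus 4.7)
The plan is to set $C := N(W(H)) \setminus B$ after first possibly replacing $B$ by a maximum-cardinality strongly beneficial subset of $N(W(H))$ that still satisfies the lemma's hypothesis, and then to argue that $(C, B)$ is a control pair. Four of the five defining properties from Definition~\ref{definition:controlpair} are immediate in the present setting, where none of Items~\ref{item:lowerbound:CP_Q}--\ref{item:lowerbound:CP_R} applies: $B$ is strongly beneficial by hypothesis; $C \cap (Q(H) \cup B) = \emptyset$ since here $Q(H) = W(H)$ and $N(W(H)) \cap W(H) = \emptyset$ by definition; some minimum edge dominating set of $H$ contains $C$ by Proposition~\ref{proposition::properties}~(\ref{proposition::N(W)IN}); and no minimum edge dominating set of $H - B$ contains $C$ by the hypothesis on $B$.

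The remaining property, that no $c \in C$ is extendable in $H - B$, is where the argument really happens. I would prove it by contradiction: assuming some $c \in N(W(H)) \setminus B$ is extendable in $H - B$, I would show that $B^+ := B \cup \{c\}$ is a strongly beneficial subset of $N(W(H))$ that still satisfies the hypothesis of the lemma, contradicting the maximality of $|B|$.

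The cost bookkeeping is straightforward: extendability gives $\MEDS(H - B^+) = \MEDS(H - B) - 1$, so $\cost(B^+) = \cost(B)$. Beneficialness of $B^+$ is then checked by a short case split on whether $c \in \widetilde B$, for $\widetilde B \subsetneq B^+$, using that $B$ is beneficial and Proposition~\ref{proposition::properties}~(\ref{proposition::vertex}). For strong beneficialness, I take an arbitrary non-trivial partition $P_1, \ldots, P_h$ of $B^+$ with $c \in P_1$, write $P_1 = P'_1 \cup \{c\}$ with $P'_1 \subseteq B$, and observe that $\{P'_1, P_2, \ldots, P_h\}$ (dropping $P'_1$ if empty) is a partition of $B$. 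Combining the monotonicity $\cost(P'_1 \cup \{c\}) \geq \cost(P'_1)$ from Proposition~\ref{proposition::properties}~(\ref{proposition::vertex}), the fact that $\cost(\{c\}) = 1$ (because $c \notin Q(H)$, using $Q(H) = W(H)$ and $c \notin W(H)$), and the strong beneficialness of $B$, one obtains $\cost(B^+) = \cost(B) < \sum_i \cost(P_i)$ in every subcase, via Proposition~\ref{proposition::properties}~(\ref{proposition::definition}).

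Finally, to preserve the hypothesis: suppose some minimum edge dominating set $F^*$ of $H - B^+$ contains $N(W(H)) \setminus B^+$. Picking any neighbor $w \in W(H)$ of $c$ (which lies outside $B$ since $B \cap W(H) = \emptyset$), the set $F^* \cup \{\{c, w\}\}$ is a minimum edge dominating set of $H - B$ of size $\MEDS(H-B^+) + 1 = \MEDS(H-B)$ whose vertex set contains $(N(W(H)) \setminus B^+) \cup \{c\} = N(W(H)) \setminus B$, contradicting the hypothesis on $B$. I expect the most delicate step to be the verification that $B^+$ is strongly beneficial, since the sub-case in which $c$ is grouped with a proper non-empty subset of $B$ requires combining the monotonicity of $\cost$ with the strong beneficialness of $B$ in just the right way.
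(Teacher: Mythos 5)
Your proof is correct and follows essentially the same augmentation argument as the paper: pick an extremal $B$, and if some $c\in C$ is extendable in $H-B$, show that $B\cup\{c\}$ is again a strongly beneficial subset of $N(W(H))$ satisfying the hypothesis, with identical verifications of $\cost(B\cup\{c\})=\cost(B)$, (strong) beneficialness via the partition criterion, and preservation of the ``no covering minimum edge dominating set'' property. The only difference is the extremal choice: the paper minimizes $|Q(H-B)\cap C|$ and must additionally prove $Q(H-B')\cap C'\subsetneq Q(H-B)\cap C$, whereas your choice of maximizing $|B|$ makes the contradiction immediate and slightly streamlines that final step.
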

\begin{proof}
  We can assume that graph $H$ neither contains an extendable vertex that is not free, a strongly beneficial set which contains at least one uncovered vertex, nor a vertex that is not in $N[W(H)] \cup U(H)$; otherwise we can apply Lemma \ref{lemma::CP_Q} resp.\ Lemma \ref{lemma::CP_U} resp.\ Lemma \ref{lemma::CP_R}. Thus $H=N[W(H)] \cup U(H)$ and every strongly beneficial set is contained in $N(W(H))$.
  
  Note that every strongly beneficial set $B \subseteq N(W(H))$ with the property that there exists no minimum edge dominating set in $H-B$ that contains all vertices of $N(W(H)) \setminus B$ together with the set $C=N(W(H)) \setminus B$ fulfills all except one property of a control pair: The set $B$ is strongly beneficial, $C \subseteq V(H) \setminus (Q(H) \cup B)$, and for every minimum edge dominating set $F_B$ in $H-B$ is holds that $C \nsubseteq V(F_B)$ (choice of $B$). Furthermore, there exists a minimum edge dominating set $F$ in $H$ such that $C \subseteq V(F)$, because $C \subseteq N(W(H))$ (Proposition~\ref{proposition::properties}~(\ref{proposition::N(W)IN})). But, we do not know whether $C \cap Q(H-B) = \emptyset$. We will show that if there exists a strongly beneficial set $B$ in $H$ such that no minimum edge dominating set in $H-B$ contains $N(W(H)) \setminus B$, then there exists also a strongly beneficial set $B'$ in $H$ such that no minimum edge dominating set in $H-B'$ contains $C'=N(W(H)) \setminus B'$, and such that $Q(H-B') \cap C' = \emptyset$.
  
  Let $B \subseteq N(W(H))$ be a strongly beneficial set in $H$ such that no minimum edge dominating set of $H-B$ contains all vertices in $C:=N(W(H)) \setminus B$, and with $|Q(H-B) \cap C|$ minimal under these strongly beneficial sets. (Such a set $B$ exists by assumption.) If $Q(H-B) \cap C = \emptyset$, then $B$ fulfills the desired property and $(B,C)$ is a control pair. Thus, assume that $|Q(H-B) \cap C|>0$ and let $c \in Q(H-B) \cap C$.
  We show that $B'=B \cup \{c\}$ is a strongly beneficial set in $H$ that fulfills the same properties as $B$, and with $|Q(H-B) \cap C| > |Q(H-B') \cap C'|$, where $C'=N(W(H)) \setminus B'$, which is a contradiction to the choice of $B$.
  \begin{claim}
   The set $B'=B \cup \{c\}$ is a strongly beneficial set in $H$.
  \end{claim}
  \begin{claimproof}
   It holds that $\MEDS(H-B')+1 = \MEDS(H-B)$, because $c$ is extendable in $H-B$ (and $H-B'=(H-B)-c$). This implies that $\cost(B')=\cost(B)$, because $\cost(B')=\MEDS(H-B') + |B'|-\MEDS(H) = \MEDS(H-B) - 1 + |B|+1 - \MEDS(H) = \cost(B)$. 
   First, assume for contradiction that $B'$ is not beneficial. Hence, there exists a proper subset $\widetilde{B} \subsetneq B'$ of $B'$ such that $\MEDS(H-B') \geq \MEDS(H-\widetilde{B})$ (definition of beneficial). Note that $B \setminus \widetilde{B} \neq \emptyset$ is not the empty set, otherwise $\widetilde{B}=B$ and $\MEDS(H-\widetilde{B})=\MEDS(H-B) > \MEDS(H-B')$. Since $\widetilde{B} \setminus \{c\} \subseteq \widetilde{B}$ (they can be equal if $c \notin \widetilde{B}$) and $|\widetilde{B} \setminus (\widetilde{B} \setminus \{c\})|\leq 1$ it follows from Proposition~\ref{proposition::properties}~(\ref{proposition::monoton}) that $\MEDS(H - \widetilde{B}) \geq \MEDS(H- (\widetilde{B} \setminus \{c\})) - 1$. Combining the above inequations we obtain:
   \[
    \MEDS(H-B) = \MEDS(H-B')+1 \geq \MEDS(H-\widetilde{B}) +1 \geq \MEDS(H-(\widetilde{B} \setminus \{c\}))
   \]
   But, this implies that $B$ is not beneficial because $\widetilde{B} \setminus \{c\} \subsetneq B$ (we showed that $B \setminus \widetilde{B} \neq \emptyset$), and $\MEDS(H-B) \geq \MEDS(H-(\widetilde{B} \setminus \{c\}))$. This contradicts the choice of $B$, hence $B'$ is beneficial.
   
   Next, we show that $B'$ is strongly beneficial. Again, assume for contradiction that $B'$ is not strongly beneficial (but beneficial). Thus, there exists a non-trivial partition $B_1,B_2,\ldots, B_h$ of $B'$ such that $\cost(B') \geq \sum_{i=1}^h \cost(B_i)$ (Proposition~\ref{proposition::properties}~(\ref{proposition::definition})). We assume w.l.o.g.\ that $c \in B_1$ and it follows from Proposition~\ref{proposition::properties}~(\ref{proposition::monoton}) that $\cost(B_1 \setminus \{c\}) \leq \cost(B_1)$. It holds that $B_1 \setminus \{c\}, B_2, \ldots, B_h$ is a partition of $B$. Additionally, $B_1 \setminus \{c\}, B_2, \ldots, B_h$ is a non-trivial partition of $B$ if $B_1 \setminus \{c\} \neq \emptyset$, because $h \geq 2$ and no set is the empty set. Furthermore, $B_2,B_3, \ldots, B_h$ is a non-trivial partition of $B$ if $B_1 \setminus \{c\} = \emptyset$: Otherwise it holds that $B_1=\{c\}$ and $B_2=B$ which implies that $\cost(B_1)+\cost(B_2)=1+\cost(B) = 1+ \cost(B') > \cost(B')$ which is a contradiction. (Note that the first equality holds because $c \in N(W(H))$ is not extendable in $H$). Now, it holds that
   \[
    \cost(B_1 \setminus \{c\}) + \sum_{i=2}^h \cost(B_i) \leq \sum_{i=1}^h \cost(B_i) \leq \cost(B') = \cost(B).
   \]
  But this implies that $B$ is not strongly beneficial, because we showed that there exists a non-trivial partition $(B_1 \setminus \{c\},) B_2, \ldots B_h$ of $B$ with $\sum_{i=1}^h \cost(B_i) \leq \cost(B)$. This is a contradiction, thus $B'$ is strongly beneficial.
  \end{claimproof}
  Now, we will show (by contradiction) that there exists no minimum edge dominating set $F_{B'}$ in $H-B'$ such that $C':=N(W(H)) \setminus B' \subseteq V(F_{B'})$. Assume that there exists a minimum edge dominating set $F_{B'}$ in $H-B'$ such that $C' \subseteq V(F_{B'})$. The set $F_B=F_{B'} \cup \{\{c,v\}\}$ with $v \in N_{H-B}(c)$ is a minimum edge dominating set of size $|F|+1=\MEDS(H-B')+1=\MEDS(H-B)$ in $H-B$ because $F_B$ dominates all edges in $H-B'$ and all edges that are incident with $c$ (since $c \in V(F_B)$). (Note that $N_{H-B}(c)$ is not empty, because $c \in N(W(H))$ and $W(H) \cap B = \emptyset$ (definition of beneficial sets).) But, the minimum edge dominating set $F_B$ in $H-B$ contains all vertices in $C:=N(W(H)) \setminus B$ as an endpoint: it holds $C = C' \cup \{c\}$ and $C' \subseteq V(F_{B'}) \subseteq V(F_B)$ as well as $c \in V(F_B)$.
  This contradicts the choice of $B$, hence there exists no minimum edge dominating set $F_{B'}$ in $H-B'$ such that $C' \subseteq V(F_{B'})$.
  
  Finally, we show that $|Q(H-B) \cap C| > |Q(H-B') \cap C'|$ which contradicts the choice of $B$.
  \begin{claim}
   It holds that $Q(H-B') \cap C' \subsetneq Q(H-B) \cap C$
  \end{claim}
  \begin{claimproof}
   $(\subseteq :)$ Let $v \in Q(H-B') \cap C')$ be a vertex that is extendable in $H-B'$. (If $Q(H-B')$ is empty then $Q(H-B') \cap C' \subseteq Q(H-B) \cap C$.) Thus, there exists a minimum edge dominating set $F'_v$ in $H-B'-v$ of size $\MEDS(H-B')-1$. Consider $F_v:=F'_v \cup \{\{c,u\}\}$ with $u \in N_{H-B-v}(c)$. (Again, $N_{H-B-v}(c) \neq \emptyset$ because $c$ is adjacent to a vertex in $W(H)$ and $W(H) \cap (B' \cup \{v\}) = \emptyset$.)
   The set $F_v$ is an edge dominating set (of size $\MEDS(H-B')=\MEDS(H-B)-1$) in $H-B-v$ because $F_v$ dominates all edges in $H-B'-v$ and all edges that are incident with $c$.  Furthermore, $F_v$ is a minimum edge dominating set in $H-B-v$ because $|F_v| \geq \MEDS(H-B-v) \geq \MEDS(H-B) -1=|F_v|$ (Proposition~\ref{proposition::properties}~(\ref{proposition::vertex})). It follows that $\MEDS(H-B-v)=\MEDS(H-B)-1$; thus $v \in Q(H-B)$ (definition of extendable vertices). Hence, $Q(H-B') \cap C' \subseteq Q(H-B) \cap C$.
   
   $(\neq:)$ The vertex $c$ is extendable in $H-B$ (choice of $c$); hence $c \in Q(H-B) \cap C$. But $c$ is not  extendable in $H-B'$, because $c \in B'$ and thus not contained in $H-B'$. This implies $Q(H-B) \cap C \neq Q(H-B') \cap C'$ and concludes the proof.
  \end{claimproof}
  Now, $B'$ is a strongly beneficial set in $H$ such that no minimum edge dominating set $F_{B'}$ in $H-B'$ contains all vertices in $C'$. Furthermore, $Q(H-B') \cap C' \subsetneq Q(H-B) \cap C$ which contradicts the choice of $B$ because $|Q(H-B') \cap C'| < |Q(H-B) \cap C|$.
  
  Overall, we showed that there exists a strongly beneficial set $B \subseteq N(W(H))$ such that there exists no minimum edge dominating set in $H-B$ that covers all vertices of $C:=N(W(H)) \setminus B \subseteq V \setminus (Q(H) \cup B)$ and such that $Q(H-B) \cap C = \emptyset$. It holds that $(B,C)$ is a control pair (see argumentation above).
\end{proof}

%%%%%%%%%%%%%%%%%%%%%%%%%%%%%%%%%%%%%%%%%%%%%%%%%%%%%%%%%%%%%%%%%%%%%%%%%%%%%%%%%%%%%%%%%%%%%%%%%%%%%%%%%%%%%%%%%%%%%%%%%%%%%%%%%%%%%%%%%%%%%%%%%%%%%%%%%%%%%%%%%%%%%%%%%%%%%%%%%%%%%%%%%%%%%%%%%%%%%%%%%%%%%%%%%%%%
\subsection[Generalizing the upper bound obtained for P5-component graphs]{Generalizing the upper bound obtained for $\boldsymbol{P_5}$-component graphs}

We showed for many finite sets \cH that \EDS parameterized by the size of a modulator to an $\mathcal{H}$-component graph has a no polynomial kernel, unless \containment.
Now, we will show that \EDS parameterized by the size of a modulator to \cH-component graphs admits a polynomial kernel for all remaining choice of finite sets $\cH$. Recall that all remaining sets \cH only contain graphs $H$ with $V(H)=N[W(H)] \cup U(H)$. Furthermore, it holds for every strongly beneficial set $B$ in $H$ that there exists a minimum edge dominating set $F_B$ in $H-B$ with $N(W(H)) \setminus B \subseteq V(F_B)$.

We first show that if no graph in \cH has any beneficial set then there is a kernel with $\Oh(|X|^2)$ vertices and $\Oh(|X|^3)$ edges. (Recall, for every kernel lower bound we showed that there exists a strongly beneficial set, thus all graphs that have no beneficial set can only have vertices that are uncovered, free, or neighbours of free vertices.) This will later be extended to a more involved kernelization that also handles $H$-components where $H$ does have beneficial sets but they always have a minimum edge dominating set $F_B$ in $H-B$ as above. 

\begin{lemma} \label{lemma::kernel1}
Let $\cH$ be a finite set of connected graphs that contain no beneficial sets and such that each graph $H\in\cH$ has $V(H)=N[W(H)]\cup U(H)$, i.e., each graph $H$ only has vertices that are uncovered, free, or neighbors of a free vertex. Then \EDS parameterized by the size of a modulator $X$ to an $\mathcal{H}$-component graph admits a kernel with $\Oh(|X|^2)$ vertices, $\Oh(|X|^3)$ edges, and size $\Oh(|X|^3\log|X|)$. 
\end{lemma}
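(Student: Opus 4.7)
The plan is to generalize the $P_5$-kernelization of Theorem~\ref{theorem::kernel_P5}, with an additional marking step to account for the uncovered vertices that may now appear in components of $G-X$. Write $\C$ for the set of connected components of $G-X$; each $C \in \C$ is isomorphic to some $H_C \in \cH$, and by assumption $V(H_C) = N[W(H_C)] \cup U(H_C)$. Since no $H \in \cH$ has a beneficial set, Proposition~\ref{proposition::properties}(\ref{proposition::Q-W}) gives $Q(H) = W(H)$ for each $H \in \cH$, and by Proposition~\ref{proposition::properties}(\ref{proposition::N(W)IN}) every vertex in $N(W(H)) \setminus W(H)$ lies in some minimum edge dominating set of $H$.

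First I would build the bipartite auxiliary graph $G_B$ on parts $X$ and $\{s_C : C \in \C\}$, where $\{x,s_C\} \in E(G_B)$ if and only if $x$ is adjacent in $G$ to a free vertex of $C$. Applying Theorem~\ref{theorem::matching} yields a partition $X = X_1 \dot\cup X_2$ and $\C = \C_1 \dot\cup \C_2$ with $|\C_2| \le |X_2|$, and a matching $M$ in $G_B - N_{G_B}[X_2]$ that saturates $X_1$ by pairing each $x \in X_1$ with a component $C_x \in \C_1$ via a free neighbor $w^x \in W(H_{C_x}) \cap N_G(x)$. The first reduction rule deletes $X_1$: its safety follows the $P_5$ template, using the defining property of free vertices in Definition~\ref{definition:basicterms} to exchange, for each $x \in X_1$, the role of any solution edge at $x$ with the edge $\{x,w^x\}$ at no additional cost inside $C_x$.

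Next, components in $\C_1$ have no adjacency to $X = X_2$ via free vertices. Those whose only adjacencies to $X$ go through $N(W(H_C)) \setminus W(H_C)$ can be deleted by a second reduction rule that decreases $k$ by $\sum_C \MEDS(H_C)$, analogous to Reduction Rule~\ref{rule::P5_cc}: since $N(W(H_C))$ is always covered by some minimum edge dominating set of $H_C$ (Proposition~\ref{proposition::properties}(\ref{proposition::N(W)IN})) and no beneficial set exists, there is an optimum that uses no edge between such a component and $X$. The remaining work is the new ingredient: handling components in $\C_1$ whose only interaction with $X$ is via uncovered vertices, which the $P_5$ case does not encounter. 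Here I would apply a marking argument in the spirit of the $\Oh(k^2)$ kernel for EDS parameterized by solution size. Concretely, for every ordered pair $(x,y) \in X \times X$ and every adjacency ``type'' of a pair $(u,v)$ of uncovered vertices inside a single component (the number of types is bounded by a constant depending only on $\cH$), mark one component in $\C_1$ that realizes this type with $\{x,u\},\{y,v\} \in E(G)$; delete the rest, keeping $k$ unchanged. Safety follows because, W.l.o.g., a solution $F$ is a matching, so $F \cap \delta(X)$ has at most $|X|$ edges, and each such edge ``uses'' at most one component; any such use of an unmarked component can be re-routed onto a marked one of the same type without changing the total cost.

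After these three reductions there are $|\C_2| + \Oh(|X|^2) = \Oh(|X|^2)$ components left, each of constant size (since $\cH$ is finite), so the kernel has $\Oh(|X|^2)$ vertices in total. The edge count is bounded by $\Oh(|X|^2)$ edges inside $X$, $\Oh(|X|^2)$ edges inside components, and at most $\Oh(|X|)$ edges from each $x \in X$ to each of the $\Oh(|X|^2)$ remaining component vertices, giving $\Oh(|X|^3)$ edges and encoding size $\Oh(|X|^3 \log|X|)$. The main obstacle is the safety proof of the marking step: one must show that once $X_1$ and the ``non-uncovered-interaction'' components of $\C_1$ are removed, every optimal solution can be deformed into one using only the marked witnesses, which requires a careful exchange argument leveraging (i) that uncovered vertices belong to no minimum edge dominating set of their component, and (ii) the absence of beneficial sets, which prevents the kind of ``bulk savings'' across components that underlies the lower bounds in Item~\ref{item:lowerbound} of Theorem~\ref{theorem:detailedclassification}.
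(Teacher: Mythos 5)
Your treatment of the free vertices is essentially the paper's: the bipartite auxiliary graph between $X$ and the components, the Hopcroft--Karp dichotomy from Theorem~\ref{theorem::matching}, the deletion of the saturated part of $X$, and the subsequent deletion (with a corresponding decrease of $k$) of components that meet $X$ only through $N(W(H_C))\setminus W(H_C)$ all match Reduction Rules~\ref{rule::W} and~\ref{rule::delete} and their safety proofs. The gap is in the one genuinely new ingredient, the uncovered vertices, and it is a real one. The reason components with uncovered vertices cannot simply be discarded is a \emph{forcing} effect: if $x\in X$ is not an endpoint of the solution, then every uncovered neighbour $u$ of $x$ in a component $C$ must be covered, and since no minimum edge dominating set of $C$ covers $u$ and $C$ has no beneficial sets, this costs $\MEDS(C)+1$ locally. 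Your marking scheme keeps one witness component per ordered pair $(x,y)\in X\times X$ and constant-size type, hence only $\Oh(|X|)$ -- and in degenerate cases only $\Oh(1)$ -- surviving components with an uncovered neighbour of a fixed $x$. After deleting the rest, the reduced instance may admit a solution that leaves $x$ uncovered and pays the small residual penalty, while in the original instance leaving $x$ uncovered costs far more than the budget slack; so the backward direction of safety fails. Your exchange argument (``re-route each of the at most $|X|$ edges of $F\cap\delta(X)$ onto a marked component of the same type'') only addresses the forward direction; it says nothing about the edges $\{x,u\}$ that reappear un-dominated when the deleted components are restored together with their minimum edge dominating sets, which by definition avoid $u$. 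A second, smaller error: deleting components while ``keeping $k$ unchanged'' is not meaningful -- $k$ must decrease by the $\MEDS$ of whatever is deleted.

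The paper closes exactly this hole with a counting threshold rather than a pair-marking: after normalizing to $k-\MEDS(G-X)<|X|$ (a step your proposal omits but implicitly needs), a vertex $x$ adjacent to uncovered vertices in at least $|X|+1$ components is \emph{forced} into every feasible solution, and this fact is recorded by attaching a pendant edge to $x$ before its components are deleted; for the remaining $x$, all of their at most $|X|$ uncovered-neighbour components are kept, giving the $\Oh(|X|^2)$ bound. Your marking could be repaired along the same lines -- keep up to $|X|+1$ witness components per single vertex $x\in X$ (not per pair), or adopt the pendant trick -- but as written the rule is unsafe.
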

\begin{proof}
  Let $(G,k,X)$ be an instance of \EDS parameterized by the size of a modulator to an $\mathcal{H}$-component graph. 
We can assume that $k - \MEDS(G-X) < |X|$. Otherwise, we can return a trivial solution consisting of a minimum edge dominating set in $G-X$ and one edge in $\delta_G(x)$ for each $x \in X$. Let $\mathcal{C}$ be the set of connected components in $G-X$, let $W$ be the set of all free vertices in $G-X$, and let $U$ be the set of all uncovered vertices in $G-X$, hence $W = \bigcup_{C \in \mathcal{C}} W(C)$ and $U = \bigcup_{C \in \mathcal{C}} U(C)$.
  Let $X_W \subseteq X$ be the set of vertices in $X$ that are adjacent to a vertex in $W$, hence $X_W= \{ x \in X \mid \exists w \in W \colon \{x,w\} \in E(G) \}$, and let $\mathcal{C}_W$ be the set of connected components $C$ in $\mathcal{C}$ where $W(C)$ is adjacent to a vertex in $X_W$, hence with $E(W(C),X_W) \neq \emptyset$. 
  To find vertices in $X$ that can be covered by every edge dominating set of size at most $k$ without spending extra budged we construct a bipartite graph $G_W$. One part of $G_W$ is the set $X_W$ and the other part consists of one vertex $s_C$ for each connected component $C$ in $\mathcal{C}_W$. We add an edge between a vertex $x \in X_W$ and a vertex $s_C$ with $C \in \mathcal{C}_W$ if and only if vertex $x$ is adjacent to a vertex in $W(C)$ in $G$. Now, we apply Theorem \ref{theorem::matching} to obtain either a maximum matching in $G_W$ that saturates $X_W$, or to find a set $Y \subseteq X_W$ such that $|N_{G_W}(Y)|<|Y|$ and such that there exists a maximum matching in $G_W - N_{G_W}[Y]$ that saturates $X_W \setminus Y$.
  If there exists a maximum matching in $G_W$ that saturates $X_W$ then let $X_W^h=X_W$, let $X_W^l=\emptyset$, and let $\mathcal{C}_W^l=\emptyset$. Otherwise, if there exists a set $Y$ with the above properties then let $X_W^l=Y$, let $X_W^h=X_W \setminus Y$, and let $\mathcal{C}_W^l$ be the connected components $C$ in $\mathcal{C}_W$ where $W(C)$ is adjacent to a vertex in $X_W^l$. It holds that $|\mathcal{C}_W^l|=|N_{G_W}(Y)|<|Y|=|X_W^l|$ because every connected component in $\mathcal{C}_W^l$ corresponds to a vertex in $N_{G_W}(Y)$; hence $|\mathcal{C}_W^l| \leq |X_W^l|$.
  \begin{redrule} \label{rule::W}
    Delete the set $X_W^h$ from $G$, i.e., let $G'=G-X_W^h$, $X'=X \setminus X_W^h$ and $k'=k$.
  \end{redrule}
  \begin{claim} \label{claim::X_W}
    Reduction Rule \ref{rule::W} is safe.
  \end{claim}
  \begin{claimproof}
    Let $F$ be an edge dominating set of size at most $k$ in $G$. We construct an edge dominating set $F'$ of size at most $k'=k$ in $G'$ as follows: First, we delete every edge $e \in F$, if both endpoints of $e$ are contained in $X_W^h$. Next, for every edge $e=\{x,y\} \in F$ that has exactly one endpoint in $X_W^h$ (w.l.o.g.\ $x \in X_W^h$) we either replace $e$ by one edge in $\delta_{G'}(y)$, if $\delta_{G'}(y) \neq \emptyset$ or delete $e$, if $\delta_{G'}(y) = \emptyset$. It holds that $F'$ has size at most $k=k'$ because we only delete edges or replace edges. Furthermore, $V(G') \setminus V(F')$ is an independent set because $V(F')$ contains every vertex in $V(F) \setminus X_W^h$ that is not isolated in $G'$, because $V(G) \setminus V(F)$ is an independent set, and because $V(G')=V(G) \setminus X_W^h$. Thus, $F'$ is an edge dominating set of size at most $k'$ in $G'$.
    
    For the other direction, let $F'$ be an edge dominating set of size at most $k'$ in $G'$ that is also a matching. 
    Let $M$ be a maximum matching in $G_W - N_{G_W}[X_W^l]$ that saturates $X_W^h$, and for each $x \in X_W^h$ let $C_x$ be the connected component in $\mathcal{C}_W$ with $\{s_C,x\} \in M$. Consider the connected component $C_x$ for a vertex $x \in X_W^h$. 
    Since $M$ is a maximum matching in $G_W - N_{G_W}[X_W^l]$ it holds that the connected component $C_x$ does not correspond to a vertex in $N_{G_W}(X_W^l)$. This implies that no free vertex in $C_x$ is adjacent to a vertex in $X_W^l$ (construction of $G_W$).
    Let $F'_x$ be the set of edges in $F'$ that have at least one endpoint in $C_x$; hence $F'_x=\{f \in F' \mid f \cap C_x \neq \emptyset\}$. We partition $F'_x$ in three sets: Let $F'_{x,1}$ be the set of edges in $F'_x$ that have one endpoint in $X$ and the other endpoint in $C_x$, let $F'_{x,2}$ be the set of edges in $F'_x$ that have one endpoint in $U(C_x)$ and the other endpoint in $N(W(C_x))$, and let $F'_{x,3}$ be the set of remaining edges in $F'_x$. Recall that uncovered vertices are only adjacent to vertices in $N(W(C_x)) \cup X$ (Proposition~\ref{proposition::properties}~(\ref{proposition::neighborhood})). Thus, all edges in $F'_x$ that have one endpoint in $U(C_x)$ (the set of uncovered vertices in $C_x$) are contained in $F'_{x,1} \cup F'_{x,2}$. Let $B_x$ be the set of vertices in $C_x$ that are incident with an edge in $F'_{x,1} \cup F'_{x,2}$, hence $B_x=V(C_x) \cap V(F'_{x,1} \cup F'_{x,2})$. 
    Recall that every vertex in $W(C_x)$ is only adjacent to vertices in $N(W(C_x))$ because no free vertex of $C_x$ is adjacent to a vertex in $X_W^l$ (and we delete set $X_W^h$ to obtain $G'$). Thus, $V(F'_{x,1} \cup F'_{x,2})$ contains no free vertex of $C_x$ because no free vertex in of $C_x$ is adjacent to a vertex in $X$ or an uncovered vertex of $C_x$; thus $B_x \subseteq N(W(C_x)) \cup U(C_x)$. 
    Since $C_x$ has no beneficial set, it holds that $\MEDS(C_x-B_x) = \MEDS(C_x)$ (Proposition~\ref{proposition::properties}~(\ref{proposition::BFS})); thus, $F'_x$ has size at least $\MEDS(C_x)$ plus the size of $|F'_{x,1} \cup F'_{x,2}|$. Furthermore, $F'_{x,3}$ is an edge dominating set in $C_x - B_x$ (by choice of $B_x$); thus $|F'_{x,3}| \geq \MEDS(C_x-B_x) = \MEDS(C_x)$. It follows that $|F'_x|=|F'_{x,1} \cup F'_{x,2}|+|F'_{x,3}| \geq |F'_{x,1} \cup F'_{x,2}| + \MEDS(C_x)$.
    
    Let $w_x \in N_G(x) \cap W(C_x)$ be a free vertex in $C_x$ that is adjacent to $x \in X_W^h$. We replace $F'_{x,3}$ by the set $F_{x,3}$ that consists of a minimum edge dominating set in $C_x - w_x$ that covers all vertices in $N(W(C_x))$ (which exists by Proposition~\ref{proposition::properties}~(\ref{proposition::N(W)IN})) and the edge $\{w_x,x\}$. The edge set $F_{x,3}$ has size $\MEDS(C_x)$ because $w_x \in W(C_x)$ (definition of free); hence $|F_{x,3}| \leq |F'_{x,3}|$. We do this for all $x \in X_W^h$ to obtain $F$.     
    It holds that $F$ has the same size as $F'$ because we only replace $F'_{x,3}$ by $F_{x,3}$ for each $x \in X_W^h$, and because $|F_{x,3}| \leq |F'_{x,3}|$. 
    
    It remains to prove that $F$ is indeed an edge dominating set in $G$. The set $V(F)$ contains all vertices in $V(F')$, except some free vertices in the connected components where we change the edge dominating set. But, all neighbours of these free vertices are contained in $V(F)$, because these free vertices are only adjacent to vertices in $X_W^h$, which are contained in $V(F)$, and to vertices in the connected component of $G-X$ they belong to. Thus, $F$ is an edge dominating set in $G$.
  \end{claimproof}
  
  Now, let $X_U \subseteq X$ be the set of vertices in $X$ that are adjacent to a vertex in $U$, hence $X_U= \{ x \in X \mid \exists u \in U \colon \{x,u\} \in E(G) \}$. We partition $X_U$ in two sets as follows: let $X_U^h \subseteq X_U$ be the set of vertices in $X_U$ that are adjacent to the set of uncovered vertices in at least $|X|+1$ connected components (hence $X_U^h = \{ x \in X_U \mid \exists C_1,C_2,\ldots,C_{|X|+1} \in \mathcal{C} \forall j \in [|X|+1] \colon x \in N_G(U(C_j)) \}$) and let $X_U^l = X_U \setminus X_U^h$ be the set of vertices in $X_U$ that are adjacent to the set of uncovered vertices in less than $|X|$ connected components in $\mathcal{C}$. By $\mathcal{C}_U^l$ we denote the connected components $C$ in $\mathcal{C}$ with $E(U(C),X_U^l) \neq \emptyset$. It holds that $|\mathcal{C}_U^l| \leq |X_U^l| \cdot |X|$.
  \begin{redrule} \label{rule::U}
    For all $x \in X_U^h$ add a vertex $x'$ and the edge $\{x,x'\}$ to $G$.
  \end{redrule}
  Let $\mathcal{C}_S$ be the set of new connected components in $G'$; these are the connected components consisting of a single vertex $x'$ that we add during Reduction Rule \ref{rule::U}.
  It is easy to verify that Reduction Rule \ref{rule::U} is safe, i.e.\ that there exists a solution for $(G,k,X)$ if and only if there exists a solution for $(G',k',X')$. This follows from the fact that every edge dominating set $F$ in $G$ of size at most $k$ must contain the vertices in $X_U^h$ as endpoints: If there would be a vertex $x \in X_U^h$ that is not contained in $V(F)$, then there exist at least $|X|+1$ connected components that contain a vertex in $U$ that is adjacent to $x$. Now, these at least $|X|+1$ vertices must be contained in $V(F)$. But, every connected component $C$ with this property is adjacent to $\MEDS(C) +1$ edges in $F$ because no minimum edge dominating set in $C$ covers an uncovered vertex. Thus, $|F| \geq \MEDS(G-X) + |X|+1 > k$ because two connected components in $G-X$ are not adjacent. This is a contradiction and shows that every vertex in $X_U^h$ must be contained in an edge dominating set of size at most $k$ in $G$. By adding the vertex $x'$ and the edge $\{x,x'\}$, with $x \in X_U^h$, to $G$, we encode that vertex $x$ must be in every solution.

  Let $\mathcal{C}_D \subseteq \mathcal{C}$ be the set of connected components in $G-X$ that are not contained in $\mathcal{C}_U^l \cup \mathcal{C}_W^l \cup \mathcal{C}_S$. Note, $\mathcal{C}_D$ contains all connected components where neither a free nor an uncovered vertex is adjacent to a vertex in $X$.
  \begin{redrule} \label{rule::delete}
    Delete all connected components in $\mathcal{C}_D$ and decrease $k$ by the size of a minimum edge dominating set in $\mathcal{C}_D$.
  \end{redrule}
  \begin{claim}
   Reduction Rule \ref{rule::delete} is safe.
  \end{claim}
  \begin{claimproof}
   Let $F$ be an edge dominating set of size at most $k$ in $G$. If no edge in $F$ has one endpoint in a connected component of $\mathcal{C}_D$ and the other endpoint in $X$, then $F' = F \setminus E(\mathcal{C}_D)$ is an edge dominating set of size at most $|F| - \MEDS(\mathcal{C}_D) \leq k'$ in $G'$ and we are done. Thus, assume that there exists at least one edge that has one endpoint in $X$ and one endpoint in a connected component of $\mathcal{C}_D$; hence $F \cap E(V(\mathcal{C}_D),X) \neq \emptyset$. We denote the set of edges in $F$ that are incident with a connected component in $\mathcal{C}_D$ by $F_D$, and let $F_{D,X} = F \cap E(V(\mathcal{C}_D),X)$.
   
   It holds that every vertex $v$ in $V(F_{D,X})$ is not a free vertex: Otherwise, there exists a vertex $x \in X_W$ such that $\{x,v\} \in F_{D,X}$. But, we delete every vertex in $X_W^h$ during Reduction Rule \ref{rule::W} and every connected component that contains a free vertex that is adjacent to a vertex in $X_W^l$ is a connected component in $\mathcal{C}_W^l$ and therefore not in $\mathcal{C}_D$.
   
   Let $\widetilde{X} \subseteq X$ be the set of vertices in $X$ that are contained in $V(F_D)$, hence $\widetilde{X}=X \cap V(F_D)$. Since no vertex in $\widetilde{X}$ is adjacent to a free vertex in $\mathcal{C}_D$, and since the connected components do not have beneficial sets it holds that the size of a minimum edge dominating set in $\mathcal{C}_D$ plus the size of the set $\widetilde{X}$ is smaller or equal to the number of edges in $F_D$.
   Now, to obtain an edge dominating set $F'$ of size at most $k'$ in $G'$, we delete the edge set $F_D$ from $F$ and we add for all vertices $x \in \widetilde{X}$ exactly one edge of the set $\delta_{G'}(x)$ to the edge dominating set (or none if the edge set $\delta_{G'}(x)$ is empty). By construction it follows that $F'$ has size at most $k'$ because we delete $|F_D| \geq \MEDS(\mathcal{C}_D) + |\widetilde{X}|$ edges from $F$ and add at most $|\widetilde{X}|$ edges to $F$. Furthermore, $F'$ is an edge dominating set in $G'$, because $V(F')$ contains all vertices of $V(F)$ that are contained in $V(G')$ and not isolated in $G'$.
   
   For the other direction, let $F'$ be an edge dominating set of size at most $k'$ in $G'$. To obtain an edge dominating set $F$ of size at most $k$ in $G$ we add for every connected component $C \in \mathcal{C}_D$ a minimum edge dominating set $F_C$ in $C$ with $N(W(C)) \subseteq V(F_C)$ to $F'$ (the existence of such a minimum edge dominating set follows from Proposition~\ref{proposition::properties}~(\ref{proposition::N(W)IN})). To show that $F$ is indeed an edge dominating set in $G$ we only have to show that every edge $e \in E(\mathcal{C}_D,X)$ is dominated by $F$. All other edges are dominated by $F$, because they are already dominated by $F' \subseteq F$ or by $F_C$ with $C \in \mathcal{C}_D$. 
   Assume for contradiction that there exists a connected component $C \in \mathcal{C}_D$ and an edge $e=\{v,x\}$ in $G$ with $v \in V(C)$ and $x \in X$ such that neither vertex $v$ nor vertex $x$ is an endpoint of an edge in $F$. Since all vertices in $N(W(C))$ are endpoints of an edge in $F$ (choice of $F_C$), and since $C$ contains only free vertices, neighbours of free vertices and uncovered vertices, it holds that $v$ is either a free or uncovered vertex in $C$; hence $x$ is contained in $X_W \cup X_U$. It holds that all vertices in $X_U^h$ are contained in $V(F') \subseteq V(F)$ because every vertex $x$ in $X_U^h$ is adjacent to a connected component in $\mathcal{C}_S$ that consists of a single vertex $x'$ which is only adjacent to vertex $x$ in $X_U^h$. Furthermore, during Reduction Rule \ref{rule::W} we delete the vertex set $X_W^h$. Thus, $x$ is contained in $X_U^l \cup X_W^l$. Now, if $v$ is a free vertex in $C$, then $x$ must be a vertex in $X_W^l$ which implies that $C$ is a connected component in $\mathcal{C}_W^l$ and not in $\mathcal{C}_D$, which is a contradiction. Similar, if $v$ is an uncovered vertex in $C$, then $x$ must be a vertex in $X_U^l$ which implies that $C$ is a connected component in $\mathcal{C}_U^l$ and not in $\mathcal{C}_D$, which is a contradiction. Hence, $F$ is an edge dominating set in $G$.
  \end{claimproof}
    We already showed that the reduction is safe. Next, we show that the reduced instance $(G',k',X')$ has at most $\Oh(|X|^2)$ vertices. The set of connected components in $G'-X'$ is $\mathcal{C}':=\mathcal{C}_U^l \cup \mathcal{C}_W^l \cup \mathcal{C}_S$, because we only add connected components to $G$ during Reduction Rule \ref{rule::U} (namely the components in $\mathcal{C}_S$) and we only delete connected components during Reduction Rule \ref{rule::delete}. We delete all connected components that are not contained in $\mathcal{C}_U^l \cup \mathcal{C}_W^l \cup \mathcal{C}_S$.
    It follows that $G'-X'$ has at most $2|X|^2$ connected components, because $|\mathcal{C}'| \leq |\mathcal{C}_U^l| + |\mathcal{C}_W^l| + |\mathcal{C}_S| \leq |X_U^l| \cdot |X| + |X_W^l|+ |X_U^h| \leq 2 |X|^2$. Since every connected component has constant size, and since $V(G')=V(\mathcal{C}') \cup X$ it holds that $G'$ has at most $\Oh(|X|^2)$ vertices. Next, we have to bound the number of edges. Every connected component has only constant size, thus it has only a constant number of edges (because our graph is simple); hence $|E[\mathcal{C}']| \in \Oh(|X|^2)$. The number of edges between vertices in $X$ is at most $|X|^2$. All remaining edges are between $X$ and $\mathcal{C}'$ and there are at most $|X| \cdot |V(\mathcal{C}')| \in  \Oh(|X|^3)$ edges between $X$ and $\mathcal{C}'$. This sums up to at most $\Oh(|X|^3)$ edges.
    
    It is easy to see that we can perform the reduction in polynomial time: We apply every Reduction Rule exactly once and we also compute every set exactly once. Furthermore, we can compute the sets $W$ and $U$ in polynomial time because we can compute a minimum edge dominating set in a connected component of constant size in constant time. Moreover, we can compute all remaining sets in polynomial time (by applying Theorem \ref{theorem::matching} or by simple counting). We can also apply each Reduction Rule in polynomial time because we only delete resp.\ add vertex or edge sets of size polynomial in $|G|$ which we can compute in polynomial time.
\end{proof}

\begin{remark}
 If $\cH$ is a finite set of connected graphs that contain neither beneficial sets nor uncovered vertices (and such that each graph $H \in \cH$ has $V(H)=N[W(H)]$ then \EDS parameterized by the size of a modulator $X$ to an \cH-component graph admits a kernel with $\Oh(|X|)$ vertices, $\Oh(|X|^2)$ edges, and size $\Oh(|X|^2 \log|X|)$.
 This holds because the only set that has size $\Oh(|X|^2)$ is the set of connected components in $\mathcal{C}_U^l$. But, if we have no uncovered vertex then $\mathcal{C}_U^l = \emptyset$. Hence, the reduced instance has only $\Oh(|X|)$ many connected components, and therefore, only $\Oh(|X|)$ many vertices.
\end{remark}

\begin{lemma}\label{lemma::kernel2}
Let $d\in \N$ and let $\cH$ be a finite set of connected graphs such that no graph $H\in\cH$ has a strongly beneficial set of size exceeding $d$, such that $V(H)=N[W(H)]\cup U(H)$ for all $H\in\cH$, and such that each strongly beneficial set $B$ of any graph $H\in\cH$ is contained in $N(W(H))$. Moreover, assume that for each strongly beneficial set $B$ of a graph $H\in\cH$ there exists a minimum edge dominating set in $H-B$ that covers all vertices in $N(W(H))\setminus B$.
Then \EDS parameterized by the size of a modulator $X$ to the class \cH-component graphs admits a kernel with $\Oh(|X|^d)$ vertices, $\Oh(|X|^{d+1})$ edges, and size $\Oh(|X|^{d+1} \log|X|)$.
\end{lemma}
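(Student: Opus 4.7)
The plan is to extend the kernelization of Lemma~\ref{lemma::kernel1} by adding a marking phase that preserves enough components to witness every possible way in which a solution can use strongly beneficial sets to get many ``cheap'' edges to $X$. As in Lemma~\ref{lemma::kernel1}, I first handle interactions through free vertices and through uncovered vertices. For free vertex interactions, I build the bipartite ``free-contact'' graph $G_W$ between $X$ and the components whose free vertices touch $X$, apply Theorem~\ref{theorem::matching} to partition $X$ into $X_W^h$ (matchable onto distinct components via free vertices) and $X_W^l$ (a Hall-violating set of smaller size), and delete $X_W^h$; safety is exactly as in Claim~\ref{claim::X_W}, using the hypothesis that every strongly beneficial set $B$ has a minimum edge dominating set in $H-B$ covering $N(W(H))\setminus B$, so free vertex swaps go through. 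For uncovered vertex interactions, I split $X_U$ into $X_U^h$ (adjacent to uncovered vertices in more than $|X|$ components, hence forced into every solution, encoded by attaching a pendant) and $X_U^l$ (touching at most $|X|$ components). After these rules, at most $\Oh(|X|^2)$ components are retained so far by virtue of being adjacent to $X$ through free or uncovered vertices.

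The new step handles the remaining components, which can only be useful if they expose a strongly beneficial set $B$ to $X$. For each $H \in \cH$ (constantly many), each strongly beneficial set $B \subseteq N(W(H))$ of $H$ with $2 \le |B| \le d$ (constantly many per $H$), and each ordered tuple $(x_1,\dots,x_{|B|})$ of distinct vertices from $X$, I mark one component of $G-X$ isomorphic to $H$, if one exists, in which there is an isomorphism sending the $i$th vertex of $B$ to a neighbor of $x_i$ for each $i$. This is a classical ``keep one witness per $d$-tuple'' marking, totaling $\Oh(|\cH|\cdot |X|^d) = \Oh(|X|^d)$ marked components on top of the $\Oh(|X|^2)$ from the earlier rules, so $\Oh(|X|^d)$ overall (recall $d\ge 2$). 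All unmarked components (that are neither in $\mathcal C_W^l$, $\mathcal C_U^l$, nor added pendants, nor marked) are then deleted by the analogue of Reduction Rule~\ref{rule::delete}, decreasing $k$ by the sum of their $\MEDS$ values.

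The main obstacle is proving safety of the strongly-beneficial marking. Given an optimum solution $F$ in $G$, I need to show that an equal-cost solution exists that only uses deleted-but-counted $\MEDS$ in those components, marked components (possibly through their $B$), and the components kept for free/uncovered reasons. The core swap goes as follows: consider an unmarked, deleted component $C$ isomorphic to some $H$ that is nevertheless touched by $F$. Using Proposition~\ref{proposition::properties}~(\ref{proposition::decomposeB}), the set $Y_C$ of vertices of $C$ incident with edges of $F$ going to $X$ decomposes as $Y_C = B_1 \cup \dots \cup B_h$ with each $B_i$ either strongly beneficial or satisfying $\cost(B_i)=|B_i|$; the latter parts can be replaced by single edges inside $X$ or to any other kept component at no extra cost, so I may assume $Y_C$ itself is a single strongly beneficial set $B_i \subseteq N(W(H))$ matched in $F$ to some tuple of vertices of $X$. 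By the marking, there is a marked component $C'$ isomorphic to $H$ admitting the analogous adjacencies; I reroute the edges of $F$ incident with $Y_C$ from $C$ to $C'$ and replace the remainder of $F$ inside $C$ by the minimum edge dominating set of $H-B$ that covers $N(W(H))\setminus B$ (available by hypothesis), and inside $C'$ by any minimum edge dominating set. The total cost is preserved and all adjacencies of $C'$ to $X$ through $N(W(H))\setminus B$ are still dominated. Iterating over all such $C$ gives a solution living entirely in the reduced graph.

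Finally, the size bounds are immediate: each of the $\Oh(|X|^d)$ components has at most $\max_{H\in\cH}|V(H)|=\Oh(1)$ vertices, hence $\Oh(|X|^d)$ vertices total; edges inside $X$ contribute $\Oh(|X|^2)$, edges inside components contribute $\Oh(|X|^d)$, and edges between $X$ and components contribute $\Oh(|X|\cdot |X|^d) = \Oh(|X|^{d+1})$, giving $\Oh(|X|^{d+1})$ edges; an incidence list encoding uses $\Oh(|X|^{d+1}\log|X|)$ bits. All reductions and the marking can be performed in polynomial time since $\cH$ and hence the set of strongly beneficial sets per graph are of constant size, and enumerating tuples of size at most $d$ from $X$ takes time $\Oh(|X|^d)$.
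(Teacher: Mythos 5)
Your overall architecture matches the paper's (reuse the free/uncovered machinery of Lemma~\ref{lemma::kernel1}, then retain $\Oh(|X|^d)$ extra components to witness uses of strongly beneficial sets), but there are two genuine gaps in the plan.

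First, the marking step. Keeping \emph{one} witness component per tuple $(H,B,(x_1,\dots,x_{|B|}))$ is not enough. Since a solution $F$ can be assumed to be a matching, the tuples $Y_1,\dots,Y_p\subseteq X$ served by strongly beneficial sets in distinct deleted components are pairwise disjoint, so there can be up to $\lfloor |X|/2\rfloor$ simultaneous demands; nothing prevents several of these disjoint tuples from having the \emph{same} designated witness $C'$, and a single constant-size component cannot in general serve two demands at once (the two beneficial sets involved need not be disjoint, and even if they are, $\cost(B'_1\cup B'_2)$ need not equal $\cost(B'_1)+\cost(B'_2)$). The naive fix of keeping $\Theta(|X|)$ witnesses per tuple blows the count up to $\Oh(|X|^{d+1})$ components and loses the claimed vertex bound. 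The paper instead builds an auxiliary bipartite graph $G_A$ with one vertex per pair $(Y,\beta)$, $Y\subseteq X$, $2\le|Y|\le d$, $\beta\in[|Y|-1]$, and one vertex per component containing a strongly beneficial set, and applies Theorem~\ref{theorem::matching}: the matching supplies \emph{distinct} representatives for all demand types outside the Hall-violating set $Z$, while for types inside $Z$ the entire (smaller) neighborhood $N_{G_A}(Z)$ is retained, so any deleted component realizing such a type would itself have been kept --- a contradiction. You need this system-of-distinct-representatives argument (or an equivalent one) to make the rerouting well-defined.

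Second, the free-vertex rule. You invoke the safety proof of Claim~\ref{claim::X_W} ``exactly as'' in Lemma~\ref{lemma::kernel1}, but that proof crucially uses that the matched component $C_x$ has no beneficial set, via $\MEDS(C_x-B_x)=\MEDS(C_x)$, to conclude that $F'$ spends at least $\MEDS(C_x)$ edges inside $C_x$ on top of the edges leaving it; only then can one swap in the local solution through a free vertex $w_x$ at no extra cost. When components may carry beneficial sets, the single component matched to $x$ might already be spending \emph{less} than $\MEDS(C_x)$ internally because a beneficial set of $C_x$ is matched into $X$, and the swap then increases the solution size. The paper repairs this by putting $|X|$ copies of each $x\in X_W$ into $G_W$, so each $x\in X_W^h$ receives $|X|$ candidate components; since at most $|X|-1$ components of $G'-X'$ can be incident with more than $\MEDS$ edges of $F'$ (as $k-\MEDS(G-X)<|X|$), at least one candidate is ``clean'' and absorbs the swap. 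This costs only $|\mathcal C_W^l|\le|X|\cdot|X_W^l|=\Oh(|X|^2)$ retained components, which is harmless since $d\ge2$. Your decomposition via Proposition~\ref{proposition::properties}~(\ref{proposition::decomposeB}) and the treatment of parts with $\cost(B_i)=|B_i|$ are in line with the paper; the missing ingredients are exactly these two matching-based selection arguments.
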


\begin{proof}
  Let $(G,k,X)$ be an instance of \EDS parameterized by the size of a modulator to an \cH-component graph. Again, we can assume that $k-\MEDS(G-X) < |X|$ (see proof of Lemma \ref{lemma::kernel1}). The kernelization is similar to the previous kernelization. We construct a different graph $G_W$ to compute $X_W^h$ because we have to be a little bit more careful with connected components in $\mathcal{C}_U^l$ and because we have beneficial sets. Furthermore, we define the set $\mathcal{C}_D$ differently (for this purpose we compute another auxiliary graph).
  
  In the previous kernelization (Lemma \ref{lemma::kernel1}) the connected components in $G-X$ have no beneficial set. Thus, for every connected component $C$ that has at least one free vertex and whose set of free vertices is adjacent to at least one vertex in $X$, we could assume that an edge dominating set contains at least one of these edges between the free vertices and $X$. We can not assume this anymore because a connected component can also contain a beneficial set $B$. Now, it could be necessary that an edge dominating set of size at most $k$ contains a matching between $B$ and $X$. For example, assume that there exists a connected component $C$ that contains a free vertex that is adjacent to vertex $x_1$ in $X$, and that $C$ has also a beneficial set $B=\{b_1,b_2,b_3\}$ of size three with $\cost(B)=1$ such that $b_i$ is adjacent to a vertex $x_i$ in $X$ (where all $x_i$ are pairwise different). Now, we can either increase the $\cost$ locally by one to cover three vertices in $X$ or use the same local $\cost$ to cover one vertex, namely $x_1$, in $X$. Further, assume that $x_1,x_2,x_3$ are endpoints of edges in every edge dominating set of size at most $k$ and that there exist no way to cover $x_2$ and $x_3$ with an extra budget of one. Hence, the edge dominating set will probably contain the edges between $B$ and $\{x_1,x_2,x_3\}$ as well as a minimum edge dominating set in $C-B$, and no edge between a free vertex of $C$ and $X$.
  
  Besides this, in the previous kernelization we could assume that every edge that is incident with an uncovered vertex increases the cost locally by one. We cannot assume this anymore because a connected component $C$ with a beneficial set $B$ can cover $|B|$ vertices in $N(B) \cap U(C)$ while increasing the cost locally by $\cost(B) < |B|$. Thus, a beneficial set $B$ could be useful to cover some vertices in $U(C)$ and to cover vertices in $X$. But, if all vertices in $U(C)$ are only adjacent to vertices in $X$ that must be in every edge dominating set of size at most $k$ then we will never cover the vertices in $U(C)$ by edges inside a connected component. 
  
  For these reasons, we compute the set $X_W^h$ using a different auxiliary graph which leads to $\Oh(|X|^2)$ connected components in $\mathcal{C}_W^l$ (in the worst case). Note that it would be possible to bound the number of connected components in $\mathcal{C}_W^l$ by $\Oh(|X|)$ by defining an auxiliary graph that handles the connected components that have free vertices or beneficial sets at the same time. This would make the analysis more complicated. But, we are only able to bound the number of connected components that contain a beneficial set by $\Oh(|X|^d)$, with $d \geq 2$. Thus, even if no graph has uncovered vertices we will not be able to reduce to less than $\Oh(|X|^2)$ vertices because we have connected components that contain beneficial sets (of size $d \geq 2$). Recall, in the previous kernelization we can reduce to $\Oh(|X|)$ vertices if we have no uncovered vertices.
  
  As before, let $\mathcal{C}$ be the set of connected components in $G-X$, let $W$ be the set of all free vertices in $G-X$, and let $U$ be the set of all uncovered vertices in $G-X$. 
  Let $X_W \subseteq X$ be the set of vertices in $X$ that are adjacent to a vertex in $W$, hence $X_W= \{ x \in X \mid \exists w \in W \colon \{x,w\} \in E(G) \}$, and let $\mathcal{C}_W$ be the set of connected components $C$ in $\mathcal{C}$ where $W(C)$ is adjacent to a vertex in $X_W$, hence with $E(W(C),X_W) \neq \emptyset$. 
  Again, we compute a bipartite graph $G_W$: One part consists of $|X|$ vertices $x^1,x^2,\ldots,x^{|X|}$ for every vertex $x$ in $X_W$. We denote this set by $R$. The other part consists of one vertex $s_C$ for every connected component $C$ in $\mathcal{C}_W$. We add an edge between a copy $x^i$ of vertex in $x \in X_W$, with $i \in [|X|]$, and a vertex $s_C$ with $C \in \mathcal{C}_W$ if and only if $x$ is adjacent to a vertex in $W(C)$. Now, we apply Theorem \ref{theorem::matching} to obtain either a maximum matching in $G_W$ that saturates $R$, or to find a set $Y \subseteq R$ such that $|N_{G_W}(Y)|<|Y|$ and such that there exists a maximum matching in $G_W - N_{G_W}[Y]$ that saturates $R \setminus Y$. Observe, since every copy of a vertex $x \in X_W$ has the same neighborhood it holds that either all copies of $x$ are contained in $Y$ or none.
  If there exists a maximum matching in $G_W$ that saturates $R$ then let $X_W^h=X_W$, let $X_W^l=\emptyset$, and let $\mathcal{C}_W^l=\emptyset$. Otherwise, if there exists a set $Y$ with the above properties then let $X_W^l = \{ x \in X_W \mid x^1 \in Y\}$ be the vertices in $X$ whose copies are contained in $Y$, let $X_W^h = X_W \setminus X_W^l$, and let $\mathcal{C}_W^l$ be the set of connected components $C$ in $\mathcal{C}_W$ where $W(C)$ is adjacent to a vertex in $X_W^l$. Note that every connected component $C$ in $\mathcal{C}_W^l$ corresponds to a vertex in $N_{G_W}(Y)$. Thus, the set $\mathcal{C}_W^l$ contains at most $|N_{G_W}(Y)| < |Y|=|X| \cdot |X_W^l|$ connected components.
  Now, we apply Reduction Rule \ref{rule::W}. 
  
  \begin{claim}
   Reduction Rule \ref{rule::W} is safe.
  \end{claim}
  \begin{claimproof}
    Let $F$ be an edge dominating set of size at most $k$ in $G$. We can construct an edge dominating set of size at most $k'=k$ in $G'$ as in the proof of Claim \ref{claim::X_W}. We delete every edge $e=\{x,y\} \in F$ if $x,y \in X_W^h$ or if $x \in X_W^h$ and $y$ is isolated in $G'$. Furthermore, we replace each edge $e=\{x,v\} \in F$ with $x \in X_W^h$ and $v \notin X_W^h$ (not isolated in $G'$) by an edge in $\delta_{G'}(v)$. By construction, the resulting set $F'$ is an edge dominating set of size at most $k=k'$ in $G'$.
    
    For the other direction, let $F'$ be an edge dominating set of size at most $k'$ in $G'$. Recall that $k-\MEDS(G-X) < |X|$, which implies that $k'-\MEDS(G'-X') < |X|$ because $\MEDS(G'-X')=\MEDS(G-X)$ and $k=k'$. Thus, there are at most $|X|-1$ connected components $C$ of $G'-X'$ that are incident with more than $\MEDS(C)$ edges of $F'$. Recall, the graph $G_W - N_{G_W}[Y]$ (resp.\ the graph $G_W$ if $X_W^l = \emptyset$) contains a matching $M$ that saturates $R \setminus Y = \{x^i \mid x \in X_W^h \wedge i \in [|X|]\}$. For every vertex $x \in X_W^h$ let $C_x^1, C_x^2, \ldots, C_x^{|X|}$ be the connected components in $\mathcal{C}_W \setminus \mathcal{C}_W^l$ with $\{x^i,s_{C_x^i}\} \in M$. Note that the set of free vertices in these connected components is not adjacent to a vertex in $X_W^l$ because all connected components whose set of free vertices is adjacent to a vertex in $X_W^l$ correspond to a vertex in $N_{G_W}(Y)$. 
    
    Since at most $|X|-1$ connected components $C$ of $G'-X'$ are incident with more that $\MEDS(C)$ edges of $F'$ at least one of the connected components $C_x^1, C_x^2, \ldots, C_x^{|X|}$ is only incident with $\MEDS(C_x^i)$ edges of $F'$. Say, w.l.o.g., that for all $x \in X_W^h$ the connected component $C_x^1$ is only incident with $\MEDS(C_x^1)$ edges of $F'$. (Note that for two different vertices $x,y \in X_W^h$ the connected components $C_x^1$ and $C_y^1$ are different.) Furthermore, the set of free vertices in $C_x^1$ is only adjacent to vertices in $C_x^1$ because we delete $X_W^h$ to obtain $G'$ and every connected component whose set of free vertices is adjacent to a vertex in $X_W^l$ is contained in $\mathcal{C}_W^l$. Since we have only $\MEDS(C_x^1)$ edges to dominate all edges in $C_x^1$ it holds that no edge has an endpoint in $U(C_x^1)$ or one endpoint in $C_x^1$ and the other endpoint in $X$. Let $w_x \in W(C_x^1)$ be a free vertex in $C_x^1$ that is adjacent to vertex $x$ in $G$; hence $\{w_x,x\} \in E(G)$.
    Now, for every $x \in X_W^h$ we delete all edges that are incident with $C_x^1$ from $F'$ and add a minimum edge dominating set in $C_x^1 - w_x$ that covers $N(W(C_x^1))$ (Proposition \ref{proposition::properties} (\ref{proposition::N(W)IN})) as well as the edge $\{w_x,x\}$ to obtain $F$. It holds that $F$ has size $k$ because $\MEDS(C_x^1)=\MEDS(C_x^1 - w_x) +1$.
    It remains to prove that $F$ is an edge dominating set in $G$. The set $V(F)$ contains all vertices in $V(F')$ except some free vertices in the connected components where we change the edge dominating set. But, all neighbors of these vertices are contained in $V(F)$ because these free vertices are only adjacent to vertices in $X_W^h$ (which are contained in $V(F)$) and to vertices in the connected component of $G-X$ they belong to. Thus, $F$ is an edge dominating set of size at most $k$ in $G$.
  \end{claimproof}

  Let $X_U \subseteq U$ be the set of vertices in $X$ that are adjacent to a vertex in $U$, let $X_U^h \subseteq X_U$ be the set of vertices in $X_U$ that are adjacent to the set of uncovered vertices in at least $|X|+1$ connected components, and let $X_U^l = X_U \setminus X_U^h$. Again, by $\mathcal{C}_U^l$ we denote the connected components $C$ in $\mathcal{C}$ with $E(U(C),X_U^l) \neq \emptyset$, and it holds that $|\mathcal{C}_U^l| \leq |X| \cdot |X_U^l|$.
  
  Next, we apply Reduction Rule \ref{rule::U}. Let $\mathcal{C}_S$ be the set of new connected components in $G-X$ that we add during Reduction Rule \ref{rule::U}. 
  To prove that Reduction Rule \ref{rule::U} is safe in the previous kernelization (Lemma \ref{lemma::kernel1}), we only showed that every vertex in $X_U^h$ must be covered by every solution of size at most $k$. The same argumentation holds here; thus, Reduction Rule \ref{rule::U} is safe.
  
  Let $\mathcal{C}_B$ be the set of connected components $C$ in $\mathcal{C} \setminus (\mathcal{C}_W^l \cup \mathcal{C}_U^l \cup \mathcal{C}_S)$ that contain a strongly beneficial set. 
  To find connected components in $\mathcal{C}_B$ that can be safely removed from $G$ we construct an auxiliary graph $G_A$ as follows:
  \begin{itemize}
   \item Add for each set $Y \subseteq X$ with $2 \leq |Y| \leq d$ and for each $\beta \in [|Y|-1]$ a vertex $r_{Y,\beta}$ to $G_A$; denote the union of these vertices by $R$.
   \item Add for each connected component $C$ in $\mathcal{C}_B$ a vertex $s_{C}$ to $G_A$; denote the union of these vertices by $S$.
   \item For each connected component $C \in \mathcal{C}_B$ we add the edge $\{r_{Y,\beta},s_C\}$ to $G_A$ if and only if there exists a strongly beneficial set $B$ of size $|Y|$ in $C$ with $\cost(B)=\beta$ such that there exists a perfect matching $M$ in $(Y \cup B, E(G) \cap E(Y,B))$. Hence, we add an edge between a vertex $s_C$ that represents connected component $C$ and a vertex $r_{Y, \beta}$ if and only if a local solution for $C$ that contains a maximum matching between $Y$ and $B$ (that covers $Y$) increases the cost of a local solution for $C$ only by $\beta$.
  \end{itemize}
  
  Thus, the auxiliary graphs tells us which connected components $C \in \mathcal{C}_B$ can help us to cover a set $Y \subseteq X$ (of size at most $d$) without using $|Y|$ additional edges (or more precisely by using $\beta$ ``additional'' edges). 
  \begin{claim}
   We can construct graph $G_A$ in polynomial time
  \end{claim}
  \begin{claimproof}
   The set $R$ has $\sum_{i=2}^d \binom{|X|}{i} \cdot i \in \Oh(|X|^d)$ many vertices, one for each pair $(Y,\beta)$ where $Y \subseteq X$ with $2 \leq |Y| \leq d$ and $\beta \in [|Y|]$. Hence we can construct $R$ in polynomial time. Since the set $S$ contains one vertex for each connected component in $\mathcal{C}_B$ we can construct $S$ in polynomial time.
   
   For every connected component $C \in \mathcal{C}_B$ we can compute a minimum edge dominating set in $C$ as well as in $C-B$, for every $B \subseteq V(C)$, in polynomial time because $C$ is of constant size. Thus, we can compute all strongly beneficial sets $B$ in $C$ as well as the value $\cost(B)$ in polynomial time: There are only a constant number of possible sets and we can compute $\MEDS(C)$ and $\MEDS(H-B)$ in polynomial time. This is only possible, because $C$ is of constant size. Let $C$ be a connected component in $\mathcal{C}_B$, let $B$ be a strongly beneficial set in $C$, and let $Z=N(B) \cap X$ be the set of vertices in $X$ that are adjacent to a vertex in $B$. For each set $Y \subseteq Z$ of size $|B|$ we add the edge $\{r_{Y,\cost(B)},s_C\}$ to $G_A$, if there exists a perfect matching in $(B \cup Y, E(B,Y))$. We can do this in polynomial time, because we only have to compute $\binom{|Z|}{|B|} \leq |X|^d$ many maximum matchings. We can do this for every connected component and every beneficial set in a component because there are only a polynomial number of connected components and because every connected component has only a constant number of strongly beneficial sets.
  \end{claimproof}
  Now, we apply Theorem \ref{theorem::matching} to obtain either a maximum matching in $G_A$ that saturates $R$, or to find a set $Z \subseteq R$ such that $|N_{G_A}(Z)| < |Z|$ and such that there exists a maximum matching in $G_A - N_{G_A}[Z]$ that saturates $R \setminus Z$. Note that if there exists a matching in $G_A$ that saturates $R$ we can set $Z= \emptyset$. Thus, we can assume that we always find a set $Z$ that fulfills the above properties. 
  Let $M$ be a maximum matching in $G_A - N_{G_A}[Z]$. By choice of $Z$, it holds that the matching $M$ saturates $R \setminus Z$.  
  Now, let $\mathcal{C}_{B}^l$ be the set of connected components $C$ in $\mathcal{C}_B$ that correspond to a vertex $s_C$ in $N_{G_A}(Z)$, and let $\mathcal{C}_B^h=\{ C \in \mathcal{C}_B \mid \exists Y \subseteq X, \beta \in [|Y|-1] \colon \{r_{Y,\beta},s_C\} \in M\}$ be the set of connected components $C$ in $\mathcal{C}_B$ that correspond to a vertex $s_C$ in $S$ with the property that $\{r_{Y,\beta},s_C\}$ is an edge in $M$ for a set $Y \subseteq X$ and an integer $\beta \in [|Y|-1]$. Note that $r_{Y,\beta} \in R \setminus Z$. Now, we can bound the number of connected components in $\mathcal{C}_B^l$ and $\mathcal{C}_B^h$: It holds that $|\mathcal{C}_B^l| = |N_{G_A}(Z)| < |Z|$ (property of $Z$) and that $|\mathcal{C}_B^h| = |R \setminus Z|$ (because we add one vertex for every vertex in $R \setminus Z$ to $\mathcal{C}_B^h$); thus $|\mathcal{C}_B^l \cup \mathcal{C}_B^h| \leq |R| \in \Oh(|X|^d)$.
   
  So far, we know that $|\mathcal{C}_W^l| \leq |X| \cdot |X_W^l|$, that $|\mathcal{C}_U^l| \leq |X| \cdot |X_U^l|$, that $|\mathcal{C}_S| \leq |X_U^h|$, and that $|\mathcal{C}_B^l \cup \mathcal{C}_B^h| \in \Oh(|X|^d)$. We will show that the remaining connected components of $G-X$ can be safely removed by reducing the value $k$ accordingly.
  Let $\mathcal{C}_D$ be the set of connected components in $G-X$ that are not contained in $\mathcal{C}_U^l \cup \mathcal{C}_S \cup \mathcal{C}_W^l \cup \mathcal{C}_B^l \cup \mathcal{C}_B^h$.
  We apply Reduction Rule \ref{rule::delete} to delete all connected components in $\mathcal{C}_D$ and to obtain our reduced instance. 

  \begin{claim}
    Reduction Rule \ref{rule::delete} is safe.
  \end{claim}
  \begin{claimproof}
   Let $(G,k,X)$ be the instance before applying Reduction Rule \ref{rule::delete}, and let $(G',k',X')$ be the instance after applying Reduction Rule \ref{rule::delete}. Note that $X=X'$, because we only delete connected components of $G-X$ and decrease $k$.
   
   Let $F'$ be an edge dominating set of size at most $k'$ in $G'$. To obtain an edge dominating set of size at most $k$ in $G$, we add for each connected component $C$ in $\mathcal{C}_D$ a minimum edge dominating set $F_C$ with $N(W(C)) \subseteq V(F_C)$ to $F'$. The existence of such a minimum edge dominating set follow from Proposition~\ref{proposition::properties}~(\ref{proposition::N(W)IN}). We denote the resulting set by $F$. By construction, the set $F$ has size at most $k$ because $k'=k-\MEDS(\mathcal{C}_D)$ and we add exactly a minimum edge dominating set of $\mathcal{C}_D$ to $F'$ to obtain $F$. 
   To show that $F$ is an edge dominating set of $G$, we have to show that $F$ dominates every edge between $X$ and a connected component $C$ in $\mathcal{C}_D$: All other edges are either dominated by $F'$ or by the added minimum edge dominating set $F_C$ with $C \in \mathcal{C}_D$. 
   Assume for contradiction that there exists a connected component $C \in \mathcal{C}_D$, a vertex $v \in V(C)$, and a vertex $x \in X$ such that $\{v,x\} \in E(G)$ is not dominated by $F$. 
   All vertices in $N(W(C))$ are incident with an edge in $F_C$ (choice of $F_C$) and therefore incident with an edge in $F$. Thus, $v$ must be a free or uncovered vertex because $V(C)=N[W(C)] \cup U(C)$. If $v$ is a free vertex, then $x$ must be contained in $X_W$. Since we delete $X_W^h$ during Reduction Rule \ref{rule::W}, it holds that $x \in X_W^l$. But this implies that $C$ is a connected component in $\mathcal{C}_W^l$ because $\mathcal{C}_W^l$ contains all connected components whose set of free vertices is adjacent to a vertex in $X_W^l$. Thus, $C$ is contained in $\mathcal{C}_W^l$ and not in $\mathcal{C}_D$ which is a contradiction. If $v$ is an uncovered vertex then $x$ must be a vertex in $X_U$. All vertices in $X_U^h$ must be incident with an edge in $F'$ because every vertex $x$ in $X_U^h$ is adjacent to a connected component in $\mathcal{C}_S$ that consists of a single vertex $x'$ which is only adjacent to this vertex $x$ in $X$; hence $x \in X_U^l$. But, this implies that $C$ is contained in $\mathcal{C}_U^l$, and not in $\mathcal{C}_D$, because every connected component whose set of uncovered vertices is adjacent to a vertex in $X_U^l$ is contained in $\mathcal{C}_W^l$. This is a contradiction. Thus, $F$ is an edge dominating set of size at most $k$ in $G$.
   
   For the other direction, let $F$ be an edge dominating set of size at most $k$ in $G$ that is also a matching. Recall, every vertex $v \in W(C)$ that is adjacent to a vertex $x \in X$ is contained $x \in X_W^l$ because we delete $X_W^h$ during Reduction Rule \ref{rule::W}; hence $C \in \mathcal{C}_W^l$. Thus, every connected component $C$ with $E(W(C),X)) \neq \emptyset$ is contained in $\mathcal{C}_W^l$, and therefore, not contained in $\mathcal{C}_D \cup \mathcal{C}_B$.
   Let $C_1,C_2,\ldots, C_p$ be all connected components in $\mathcal{C}_D \cup \mathcal{C}_B^h$ that are incident with an edge in $F$ that has its other endpoint in $X$; thus for all $i \in [p]$ it holds $E(C_i,X) \cap F \neq \emptyset$. For each connected component $C_i$, with $i \in [p]$, let $B_i=\{ c \in V(C_i) \mid \exists x \in X \colon \{x,c\} \in F\}$ be the set of vertices in $V(C_i)$ that are incident with an edge in $F$ that has its other endpoint in $X$. Note that $B_i \subseteq V(C_i) \setminus W(C_i)$ because $E(W(C_i),X) = \emptyset$ for all connected components in $\mathcal{C}_D \cup \mathcal{C}_B$.
   Since $B_i \subseteq V(C) \setminus W(C)$ we can apply Proposition \ref{proposition::properties} (\ref{proposition::decomposeB}): For all $i \in [p]$ let $B_i^1, B_i^2,\ldots, B_i^{q_i} \subseteq B_i$ be a partition of $B_i$ where $B_i^j$ is either strongly beneficial or has $\cost(B_i^j)=|B_i^j|$, for all $j \in [q_i]$, such that $\cost(B_i) \geq \sum_{j=1}^{q_i} \cost(B_i^j)$.
   Let $Y_i^j = \{x \in X \mid \exists v \in V(B_i^j) \colon \{x,v\} \in F \}$, with $i \in [p]$ and $j \in [q_i]$, be the set of vertices in $X$ that are incident with an edge in $F$ whose other endpoint is contained in $B_i^j$, and let $\beta_i^j=\cost(B_i^j)$. Note that the sets $B_i^j$ and $Y_i^j$ have the same size because $F$ is a matching in $G$. Therefore, if $B_i^j$ is a strongly beneficial set then the size of $Y_i^j$ is at most $d$ because every strongly beneficial set has size at most $d$.
   
   Now, if $B_i^j$, with $i \in [p]$ and $j \in [q_i]$, is a strongly beneficial set in $C_i$ then it holds that $2 \leq |B_i^j| = |Y_i^j| \leq d$ and that $\beta_i^j=\cost(B_i^j)<|B_i^j|=|Y_i^j|$. 
   Thus, the vertex $r_{Y_i^j,\beta_i^j}$ is contained in $R$. Note that $C_i$ is a connected component in $\mathcal{C}_B^h$ if $B_i^j$ is strongly beneficial for an index $j \in [q_i]$ because every connected component in $\mathcal{C} \setminus (\mathcal{C}_W^l \cup \mathcal{C}_U^l \cup \mathcal{C}_S)$ that contains a strongly beneficial set is contained in $\mathcal{C}_B$, and because $C_i \in \mathcal{C}_D \cup \mathcal{C}_B^h$. Recall that $\mathcal{C}_D$ contains all connected components that are not contained in $\mathcal{C}_W^l \cup \mathcal{C}_U^l \cup \mathcal{C}_S \cup \mathcal{C}_B^l \cup \mathcal{C}_B^h$.
   Furthermore, the vertex $r_{Y_i^j,\beta_i^j}$ must be contained in $R \setminus Z$: If $r_{Y_i^j,\beta_i^j}$ is a vertex in $Z$ then $N_{G_A}(r_{Y_i^j,\beta_i^j}) \subseteq N_{G_A}(Z)$. But, every vertex in $N_{G_A}(Z)$ corresponds to a connected component in $\mathcal{C}_B^l$ which would imply that $C_i$ is contained in $\mathcal{C}_B^l$ because $r_{C_i} \in N_{G_A}(r_{Y_i^j,\beta_i^j})$ which is a contradiction; thus, $r_{Y_i^j,\beta_i^j} \in R \setminus Z$.
   
   Let $C_i^j$ be the connected component in $\mathcal{C}_B^h$ whose corresponding vertex in $S$ is matched to $r_{Y_i^j,\beta_i^j}$; hence with $\{ r_{Y_i^j,\beta_i^j}, s_{C_i^j}\} \in M$. Since $\{ r_{Y_i^j,\beta_i^j}, s_{C_i^j}\} \in E(G_A)$ there exists a strongly beneficial set $\bar{B}_i^j$ of size $|Y_i^j|$ in $C_i^j$ with $\cost(\bar{B}_i^j)=\beta_i^j(=\cost(B_i^j))$ such that there exists a perfect matching between $Y_i^j$ and $\bar{B}_i^j$ in $(Y_i^j \cup \bar{B}_i^j, E(G) \cap E(Y_i^j, \bar{B}_i^j))$.
   Hence, for every strongly beneficial set $B_i^j$, with $i \in [p]$ and $j \in [q_i]$, the set $Y_i^j$ is associated with a different connected component $C_i^j$ in $\mathcal{C}_B^h$ and a beneficial set $\bar{B}_i^j$ that has the same advantage as $B_i^j$.
   
   In the case that the size of $B_i^j$, with $i \in [p]$ and $j \in [q_i]$, is equal to $\cost(B_i^j)$ it holds that $Y_i^j$ is equal to $\beta_i^j$. Thus, every edge that is incident with a vertex in $Y_i^j$ increases the cost locally by one. 
   
   To construct an edge dominating set of size at most $k'$ in $G'$ we delete all edges in $F$ that are incident with a vertex in a connected component of $\mathcal{C}_D \cup \mathcal{C}_B^h$; denote the resulting set by $\widetilde{F}$. Recall that $C_1,C_2,\ldots, C_p$ are the connected components in $\mathcal{C}_D \cup \mathcal{C}_W^h$ that are incident with an edge in $F$ whose other endpoint is contained in $X$. 
   Next, for each $i \in [p]$ and $j \in [q_i]$ with $B_i^j \subseteq B_i$ strongly beneficial we add a minimum edge dominating set in $C_i^j - \bar{B}_i^j$ with $N(W(C_i^j)) \setminus \bar{B}_i^j \subseteq V(F_C)$ to $\widetilde{F}$ as well as a maximum matching between $\bar{B}_i^j$ and $Y_i^j$ that saturates both sets: Such a minimum edge dominating set exists by assumption and such a matching exists because $\{ r_{Y_i^j,\beta_i^j}, s_{C_i^j}\}$ is an edge in $G_A$. Recall that every connected component $C_i^j$ is contained in $\mathcal{C}_B^h$. Thus, all added edges are contained in $G'$. For all remaining connected components $C$ in $\mathcal{C}_B^h$ we add a minimum edge dominating set in $C$ to $\widetilde{F}$. Finally, we add for each vertex $y$ that is contained in a set $Y_i^j$ with $|Y_i^j|=|B_i^j|=\cost(B_i^j)=\beta_i^j$, where $i \in [p]$ and $j \in [q_i]$, an arbitrary edge in $\delta_{G'}(y)$ to $\widetilde{F}$ if $\delta_{G'}(y)$ is not the empty set. Otherwise, if $v$ is isolated in $G'$ we add no edge to $\widetilde{F}$.  
   We denote the resulting set by $F'$.
   
   First, we show that $F'$ is indeed an edge dominating set of $G'$. Every vertex in $X$ that is incident with an edge in $F$ that has its other endpoint in a connected component of $\mathcal{C}_D \cup \mathcal{C}_B^h$ is contained in a set $Y_i^j$, with $i \in [p]$ and $j \in [q_i]$. During the reduction we delete only edges that are incident with a connected component in $\mathcal{C}_D \cup \mathcal{C}_B^h$, but we also add for each vertex $v$ in $Y_i^j$ that is not isolated in $G'$ an edge in $\delta_{G'}(y)$ to $F'$. Thus, every vertex in $X$ that is covered by $V(F)$ and not isolated in $G'$ is also covered by $V(F')$.
   Furthermore, every edge in a connected component is dominated: We delete only edges that are incident with connected components in $\mathcal{C}_D \cup \mathcal{C}_B^h$ and we add for all connected components in $\mathcal{C}_B^h$ an edge dominating set to $F'$. Since the connected components in $\mathcal{C}_D$ are not contained in $G'$ it holds that $F'$ dominates all edges in $G'-X'$. Thus, the only edges that are possibly not dominated by $F'$ have one endpoint in $X$ and the other endpoint in a connected component of $\mathcal{C}_B^h$ because these are the only connected components where we change the edge dominating set (and because $V(F')$ covers all vertices in $X \cap V(F)$ that are not isolated in $G'$). Assume that there exists a connected component $C \in \mathcal{C}_B^h$, a vertex $v \in V(C)$, and a vertex $x \in X$ such that edge $\{v,x\} \in E(G')$ is not dominated by $F'$; hence $v,x \notin V(F')$. No vertex in $W(C)$ is adjacent to a vertex in $X$ (otherwise $C \in \mathcal{C}_W^l$ and not in $\mathcal{C}_B^h$). Furthermore, every vertex in $N(W(C))$ is incident with an edge in $F'$ (by construction). Thus, $v$ must be a vertex in $U(C)$, and $x$ must be a vertex in $X_U$. Every vertex in $X_U^h$ must be incident with an edge in $F'$ because $X_U^h \subseteq V(F)$, and because every vertex in $X$ that is covered by the edge dominating set $F$ and is not isolated in $G'$ is contained in $V(F')$; hence $x \in X_U^l$. But, if $x \in X_U^l$ then it follows that $C$ is a connected component in $\mathcal{C}_U^l$ and not in $\mathcal{C}_B^h$, which is a contradiction; thus $F'$ is an edge dominating set in $G'$.
   
   It remains to show that $F'$ contains at most $k'$ edges. 
   The connected component $C_i$, with $i \in [p]$, is incident with at least $\MEDS(C_i) + \cost(B_i)$ many edges (definition of $\cost$). For all remaining connected components $C \in \mathcal{C}_D \cup \mathcal{C}_B^h$ we need at least $\MEDS(C)$ many edges. Thus, it holds that $\widetilde{F} \leq |F|-\MEDS(\mathcal{C}_D \cup \mathcal{C}_B^h) - \sum_{i=1}^p \sum_{j=1}^{q_i} \beta_i^j$ because
   \begin{align*}
    \widetilde{F} &\leq |F|-\left(\MEDS(\mathcal{C}_D \cup \mathcal{C}_B^h) + \sum_{i=1}^p \cost(B_i) \right) 
    \leq |F|-\MEDS(\mathcal{C}_D \cup \mathcal{C}_B^h) - \sum_{i=1}^p \sum_{j=1}^{q_i} \cost(B_i^j)
    \\&\leq |F|-\MEDS(\mathcal{C}_D \cup \mathcal{C}_B^h) - \sum_{i=1}^p \sum_{j=1}^{q_i} \beta_i^j
   \end{align*}
    
    To obtain $F'$ we add $\MEDS(C_i^j) + \cost(\bar{B}_i^j)=\MEDS(C_i^j) + \beta_i^j$ edges to $\widetilde{F}$, with $i \in [p]$, and $j \in [q_i]$, if $B_i^j \subseteq B_i$ is a strongly beneficial set in $C_i$: We add a minimum edge dominating set in $C_i^j-\bar{B}_i^j$ as well as a matching between $Y_i^j$ and $\bar{B}_i^j$ that saturates both sets to $\widetilde{F}$, and $\cost(\bar{B}_i^j)=\beta_i^j$. All these connected components $C_i^j$ are contained in $\mathcal{C}_B^h$. 
    For all remaining connected components $C$ in $\mathcal{C}_B^h$ we add a minimum edge dominating set of $C$ to $\widetilde{F}$. 
    Furthermore, for all vertices $y$ that are not isolated in $G'$ and that are contained in a set $Y_i^j$, with $i \in [p]$, and $j \in [q_i]$, where $|Y_i^j|=\beta_i^j=\cost(B_i^j)$, we add an arbitrary edge in $\delta_{G'}(y)$ to $\widetilde{F}$. Thus,
    \[
     |F'|\leq|\widetilde{F}| + \MEDS(\mathcal{C}_B^h) + \sum_{i=1}^p \sum_{j=1}^{q_i} \beta_i^j \leq |F| - \MEDS(\mathcal{C}_D) = k'
    \]
  This completes the proof.
  \end{claimproof}
  We showed that all reduction rules are safe.
  To show that the reduced $(G',k',X')$ instance has only $\Oh(|X|^d)$ vertices, we only have to bound the number of connected components in $G'-X'$ because every connected component has constant size. During the reduction rules we delete all connected components that are not contained in $\mathcal{C}':=\mathcal{C}_W^l \cup \mathcal{C}_U^l \cup \mathcal{C}_S \cup \mathcal{C}_B^h \cup \mathcal{C}_B^l$. We already showed that $|\mathcal{C}_W^l \cup \mathcal{C}_U^l \cup \mathcal{C}_S| \leq 2 \cdot |X|^2$ (see above). Furthermore, we showed that $|\mathcal{C}_B^l \cup \mathcal{C}_B^h| \in \Oh(|X|^d)$. This implies that $G'$ has at most $\Oh(|X|^d)$ connected components, and thus, at most $\Oh(|X|^d)$ vertices. (We assumed that there exists at least one graph $H$ in \cH that has a beneficial set and these beneficial set has at least size two; thus $d \geq 2$.) 
  Next, we have to bound the number of edges. Every connected component has only constant size, thus it has only a constant number of edges; hence $|E[\mathcal{C}']| \in \Oh(|X|^d)$. The number of edges between vertices in $X$ is at most $|X|^2$. All remaining edges are between $X$ and $\mathcal{C}'$. This are at most $|X| \cdot |V(\mathcal{C}')| \in \Oh(|X|^{d+1})$ many edges. This sums up to at most $\Oh(|X|^{d+1})$ edges.
  
  It remains to show that we can perform the reduction in polynomial time. We can compute the sets $W$ and $U$ in polynomial time because every connected component is of constant size, and therefore, we can compute minimum edge dominating sets in every connected component as well as in every subgraph of a connected component in polynomial time. Furthermore, we can construct the auxiliary graphs $G_W$ and $G_A$ in polynomial time. Hence, by applying Theorem \ref{theorem::matching} we can compute the set $X_W^h$, the set $X_W^l$, the set $\mathcal{C}_W^l$, the set $\mathcal{C}_B^l$, and the set $\mathcal{C}_B^h$ in polynomial time. The set $X_U^l$, the set $X_U^h$, as well as the set $\mathcal{C}_U^l$ can be computed in polynomial time by simple counting. Since we can compute all sets in polynomial time, we can apply the reduction rules in polynomial time because we only delete the set $X_W^h$ (Reduction Rule \ref{rule::W}) as well as all connected components that are not contained in $\mathcal{C}_U^l \cup \mathcal{C}_S \cup \mathcal{C}_W^l \cup \mathcal{C}_B^l \cup \mathcal{C}_B^h$ (Reduction Rule \ref{rule::delete}), and we only add one vertex for every vertex in $X_U^h$ (Reduction Rule \ref{rule::U}).
\end{proof}

\begin{theorem}\label{theorem::polynomiallowerbound}
Let $d\in\N$ and let $\cH$ be a finite set of connected graphs such that some $H\in\cH$ has a strongly beneficial set of size $d$. Then \EDS parameterized by the size of a modulator to $\mathcal{H}$-component graphs does not have a kernelization of size $\Oh(|X|^{d-\varepsilon})$, for any $\varepsilon>0$, unless \containment.
\end{theorem}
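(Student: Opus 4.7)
\smallskip\noindent\textbf{Proof plan.} My plan is to prove this by giving a cross-composition of cost $f(t) = t^{1/d + o(1)}$ from the \NP-hard \MCC problem into \EDS parameterized by the modulator size, using only $H$-components in $G-X$ for some fixed $H \in \cH$ that has a strongly beneficial set $B = \{b_1, \ldots, b_d\}$ of size $d$; Theorem~\ref{theorem:polynomiallowerbound} then yields the desired lower bound. The main gain over Theorem~\ref{theorem::LB} is that we must keep the parameter (modulator size) as small as $\poly(N) \cdot t^{1/d}$ rather than $\poly(N) \cdot \log t$, so the $\log t$-sized instance-selector gadget $W \cup Z \cup Z'$ must be replaced by $d$ independent coordinate-selector gadgets. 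Concretely, I would use the same polynomial equivalence relation as in Theorem~\ref{theorem::P3}, pad the number $t$ of input instances to a power $T^d$ with $T = \lceil t^{1/d}\rceil$, and re-index the inputs by $d$-tuples $\mathbf{j} = (j_1, \ldots, j_d) \in [T]^d$.

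\smallskip\noindent\textbf{Construction.} I would retain the color-class gadgets $V \cup T_{\mathrm{col}} \cup T_{\mathrm{col}}'$ and the pair-matching gadgets $S_{p,q}, S_{p,q}'$ of size $d$ (for $1 \leq p < q \leq k$) from Theorem~\ref{theorem::LB}. In place of the subset-coded instance selector, I add, for each dimension $r \in [d]$, a coordinate gadget $\Gamma_r$ consisting of $T$ ``slots'' $y_r^1, \ldots, y_r^T$ together with pendants and an anchor vertex enforcing that the solution must commit to one ``selected'' value $j_r^\ast \in [T]$ using $\Theta(T)$ coordinate-gadget edges. For each $\mathbf{j} \in [T]^d$ and each edge $e \in E(G_\mathbf{j})$, I add an $H$-copy $H_\mathbf{j}^e$ and let $b_{\mathbf{j},r}^e$ (the $r$th vertex of its strongly beneficial set) be adjacent both to $s_{p,q}^r$ (where $(p,q)$ is the color pair of $e$) and to the coordinate-$j_r$ slot $y_r^{j_r}$ of $\Gamma_r$. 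As in Theorem~\ref{theorem::LB}, suitable vertices of $H_\mathbf{j}^e$ (chosen from $N(W(H))$ to be consistent with the ``strongly beneficial but not control-pair'' regime) are made adjacent to the two endpoints $x, y \in V$ of the edge $e$. The overall budget is $k' = k + (\text{coord budget}) + \sum_\mathbf{j} |E(G_\mathbf{j})|\MEDS(H) + \binom{k}{2}\cost(B)$; the modulator has size $|X'| = \Oh(kn + k^2 d + dT) = \poly(N) \cdot t^{1/d}$, yielding cost $f(t) = t^{1/d + o(1)}$ as required.

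\smallskip\noindent\textbf{Correctness sketch.} In the forward direction, a multicolored clique in some $G_{\mathbf{j}^\ast}$ yields a solution by matching $S_{p,q}$ to the strongly beneficial set of the copy $H_{\mathbf{j}^\ast}^{\{x_p,x_q\}}$, using $\MEDS(H - B)$ edges inside each such chosen copy, $\MEDS(H)$ edges inside every other copy, $k$ color-class edges $T_{\mathrm{col}} \to V$, and one anchor edge per $\Gamma_r$ committing to $y_r^{j_r^\ast}$. In the reverse direction, the budget-tightness argument of Theorem~\ref{theorem::LB} carries over: strong-beneficialness of $B$ via Proposition~\ref{proposition::properties}~(\ref{proposition::definition}) forces the $d$ edges of $F_S$ incident with each $S_{p,q}$ to meet a single $H$-copy's $B$, giving $\binom{k}{2}$ chosen copies (one per pair); the color-class gadget forces their represented edges to span at most $k$ vertices, one per color class; and the coordinate gadgets force all $\binom{k}{2}$ chosen copies to report the same tuple $(j_1^\ast, \ldots, j_d^\ast)$, singling out one instance $G_{\mathbf{j}^\ast}$ on which the chosen edges form a multicolored $k$-clique.

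\smallskip\noindent\textbf{Main obstacle.} The principal difficulty is the design of the coordinate gadget in the absence of a control pair: since every $H \in \cH$ at this stage admits, for every strongly beneficial set $B$, a min edge dominating set $F_B$ of $H-B$ covering all of $N(W(H)) \setminus B$, we cannot reuse the ``$C \not\subseteq V(F_B)$'' mechanism of Theorem~\ref{theorem::LB} that coerces specific external neighbors of chosen copies into $V(F)$. Instead, the coordinate coupling must be enforced purely by budget arithmetic: the gadget $\Gamma_r$ together with the adjacencies $b_{\mathbf{j},r}^e \sim y_r^{j_r}$ must be set up so that having two chosen copies report distinct $j_r$-values strictly exceeds the permitted budget, while a consistent choice exactly meets it. A further subtlety is ruling out ``split'' uses of $F_S$ across multiple copies (handled by strong beneficialness and Proposition~\ref{proposition::properties}~(\ref{proposition::definition})) and ruling out alternative uses of the coordinate budget that bypass the intended mechanism; I expect both to follow from a careful case analysis analogous to, but slightly more delicate than, the counting in Theorem~\ref{theorem::LB}.
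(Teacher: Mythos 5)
Your high-level strategy matches the paper's: a cross-composition of cost $t^{1/d}$ from \MCC, padding to $s^d$ instances, re-indexing by $d$-tuples, and attaching the $r$th vertex of the strongly beneficial set of each $H$-copy to a gadget encoding the $r$th coordinate. However, the proposal does not actually resolve the obstacle you correctly identify in your last paragraph, and the two verification mechanisms you inherit from Theorem~\ref{theorem::LB} both fail in this regime. First, making ``suitable vertices of $H^e_{\mathbf{j}}$ chosen from $N(W(H))$'' adjacent to the endpoints $x,y\in V$ cannot force $x,y$ into $V(F)$: by hypothesis every strongly beneficial set $B$ admits a minimum edge dominating set of $H-B$ that covers all of $N(W(H))\setminus B$, so a ``chosen'' copy never leaves any vertex of $N(W(H))$ uncovered, and hence never coerces any external neighbor. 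This is precisely the $C\nsubseteq V(F_B)$ mechanism whose absence you note, yet your construction still relies on it to certify that the $\binom{k}{2}$ selected edges span only $k$ vertices. Second, your coordinate gadgets $\Gamma_r$ are covered internally by anchor edges, so the adjacency $b^e_{\mathbf{j},r}\sim y_r^{j_r}$ imposes no constraint whatsoever on which copies are chosen; and since each $b^e_{\mathbf{j},r}$ now has \emph{two} external neighbors ($s^r_{p,q}$ and $y_r^{j_r}$), the clean matching accounting of Theorem~\ref{theorem::LB} (each beneficial-set vertex has exactly one neighbor outside its copy) also breaks.

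The paper overcomes this by inverting the direction of the forcing: rather than chosen copies forcing modulator vertices into $V(F)$, gadgets in the modulator force specific vertex sets to be covered, and budget arithmetic ensures the \emph{only} way to cover them is via edges into beneficial sets of $H$-copies. Concretely, each coordinate $r\in[d-1]$ gets a \emph{selection gadget} (a complete $s$-partite graph with $s$ odd, calibrated so that exactly one part $X'_{r,h^*_r}$ must remain uncovered, forcing its private neighbor set $X_{r,h^*_r}$ of size $\binom{k}{2}$ --- one vertex per color pair --- to be covered externally), and the $d$th coordinate gets a \emph{clique gadget} built from sets $\widetilde T_i, Z_v, Z'_v$ and a set $Y$ of $\binom{k}{2}n^2$ edge-vertices, which selects one vertex per color class and forces exactly the $\binom{k}{2}$ vertices $y^{\{x_i,x_j\}}$ to be covered from $H$-copies; this gadget, not a set $V$ with $T,T'$, is what certifies clique membership. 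Strong beneficialness (via Proposition~\ref{proposition::properties}~(\ref{proposition::definition})) then forces, for each color pair, all $d$ forced vertices to be served by a \emph{single} copy, which pins down both the instance index and the edge. These gadgets are the substantive content of the proof and are missing from your proposal, so as written the argument has a genuine gap.
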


At a first glance, $d$-dimension $d$-\prob{Set Cover} seems to be a suitable problem to prove Theorem \ref{theorem::polynomiallowerbound} by giving a polynomial parameter transformation from $d$-dimension $d$-\prob{Set Cover} parameterized by the size of the universe to \EDS parameterized by the size of a modulator to $\mathcal{H}$-component graphs, where $\cH$ contains a graph $H$ that has a strongly beneficial set $B$ of size $d$. Indeed, if the beneficial set $B=\{b_1,b_2,\ldots,b_d\}$ has for example $\cost(B)= 1$ then there exists an easy polynomial parameter transformation from $d$-dimension $d$-\prob{Set Cover} parameterized by the size of the universe to \EDS parameterized by the size of a modulator to $\mathcal{H}$-component graphs, where $\cH$ contains a graph $H$ that has a strongly beneficial $B$ set of size $d$: For an instance $((U:=U_1 \dot\cup U_2 \dot\cup \ldots \dot\cup U_d,\mathcal{F}),k)$ of $d$-dimension $d$-\prob{Set Cover} we construct an instance $(G,k',X)$ of \EDS by adding for each set $F=\{u_1,u_2,\ldots,u_d\} \in \mathcal{F}$, where $u_i \in U_i$ for all $i \in [d]$, a copy $H_F$ of $H$ to $G$ as well as two vertices $u$, $u'$ for each element $u \in U$. We add an edge between $u$ and $u'$ for each $u \in U$ as well as an edge between $u$ and the copy of $b_i$ in $H_F$ if $u \in U_i \cap F$. Let $k'=\MEDS(G-X)+k$ and $X$ be the set of all vertices that are not contained in a copy of $H$. In general, if $\cost(B) > 1$ then one would set $k'=\MEDS(G-X)+k \cdot \cost(B)$.

But, there could be cases where it seems unlikely that such a polynomial parameter transformation exists, and where the above construction is not correct. For example, assume that there exists a graph $H$ that has a strongly beneficial set $B=\{b_1,b_2,\ldots,b_{15}\}$ of size 15 with $\cost(B)=5$. It could be possible that also the sets $B_1=\{b_1,b_2,\ldots,b_{10}\}$, $B_2=\{b_5,b_6,\ldots,b_{15}\}$, and $B_3=\{b_1,b_2,\ldots,b_5,b_{10},b_{11},\ldots,b_{15}\}$ are strongly beneficial and that $\cost(B_i)=3$ for all $i \in [3]$. Note that this does not violate the definition of strongly beneficial sets. Now, instead of covering 30 vertices in the modulator by using only edges between the modulator of $X$ and two different copies of $H$ and using 10 edges more than one need to cover these two copies, one can cover 30 vertices in the modulator by using edges between copies of $B_1$, $B_2$ and $B_3$ in different copies of $H$ and using only 9 edges more than one need to cover these three copies. Hence, an edge dominating set in $G$ of size at most $k'$ would not lead to a set cover in $(U,\mathcal{F})$ of size at most $k$, because an edge dominating set in $G$ could contain edges between $X$ and more than $k$ connected components of $G-X$. Thus, this would lead to a set cover with more that $k$ sets, where some sets are only subsets of sets in $\mathcal{F}$.
We can handle this problem by giving a cross-composition of cost $t^{1/d}$ from the \NP-hard \MCC problem.

\begin{proof}
Fix a graph $H\in\cH$ that contains a strongly beneficial set of size $d$, and fix a strongly beneficial set $B=\{b_1,\ldots,b_d\}$ of size $d$ in $H$. If any of the Items \ref{item:lowerbound:CP_Q} through \ref{item:lowerbound:CP_other} of Theorem~\ref{theorem:detailedclassification} applies to $H$ then we already ruled out \emph{any} polynomial kernelization (unless \containment). Thus, it suffices to prove the theorem in the remaining case where we know that $V(H)=N[W(H)]\cup U(W)$, that $B\subseteq N(W(H))$, and that there is a minimum edge dominating set $F_B$ of $H-B$ that covers $N(W(H))\setminus B$.

To prove the theorem, we give a cross-composition of cost $f(t)=t^{1/d}$ from the \NP-hard \MCC problem to \EDS parameterized by the size of a modulator to an \cH-component graph, where $t$ is the number of \MCC instances. We will construct an instance $(G',k',X')$ where all components of $G'-X'$ are isomorphic to $H$, implying that the result holds for all sets $\cH$ containing $H$ (though a stronger lower bound may follow using another $H'\in\cH$).
 
  We choose the same equivalence relation $\mathcal{R}$ as in the proof of Theorem \ref{theorem::P3}. Let a sequence of instances $I_i=(G_i,k)_{i=1}^t$ of \MCC be given that are in the same equivalence class of $\mathcal{R}$. As before, since all color classes have the same size we can identify for each color class the vertex sets. Let $V$ be the vertex set (of size $k \cdot n$) of the $t$ instances and let $V_1,V_2,\ldots, V_k$ be the different color classes (of size $n$). 
  We assume, w.l.o.g.\ that every instance has at least one edge in $E(V_p,V_q)$ for all $1 \leq p < q \leq k$; otherwise, this instance would be a trivial no instance and we can delete it.
  We copy some instance until we have $\widetilde{t}=s^d$ instances, where $s$ is the least odd integer with $t\leq s^d$. It holds that $s=\lceil t^{1/d} \rceil$ or $s=\lceil t^{1/d} \rceil + 1$; hence $s \leq t^{1/d}+2$. Clearly, this does not affect whether at least one instance is yes for MCC.
  
  In the proof of Theorem \ref{theorem::P3} resp.\ Theorem \ref{theorem::LB} we add a set $W$ of size $2 \cdot \log(t)$ to the modulator to encode for each path $P_3$ resp.\ graph $H$ which instance it corresponds to. We cannot apply this construction here because we do not have the ``control set'' $C$ of a control pair; we only have a (large) strongly beneficial set $B$. Therefore, we have to find a different approach to encode to which instance a copy of the graph $H$ corresponds. Like Dell and Marx \cite{DBLP:conf/soda/DellM12} we add $d \cdot s$ vertex sets to the graph $G'$ (more precisely the modulator $X'$), which form $d$ groups of size $s$ each. The goal is to associate each instance with a different choice of $d$ out of the $d\cdot s$ vertex sets, picking one from each group; there are $\widetilde{t}=s^d$ choices. 
  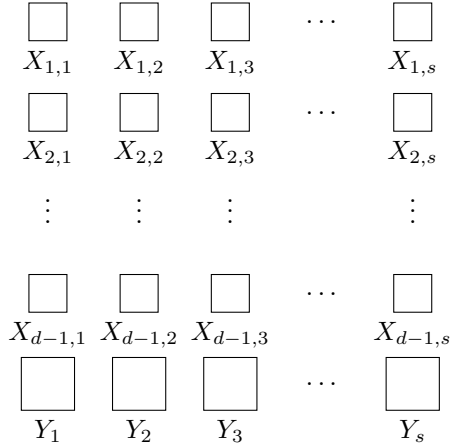
\begin{figure}
   \centering
   \begin{tikzpicture}[scale=0.6]
      \foreach \s/\ls in {1,2,3,5/s}{
	\node at (2*\s -2, -4) {$\vdots$};
	\foreach \t/\lt in {1,2,4/d-1}{
	  \node[draw, rectangle, minimum width=0.5cm, minimum height=0.5cm] (X_\t_\s) at (2*\s - 2, -2 *\t +2) [label=below:$X_{\lt,\ls}$] {};
	 }
	 \node[draw, rectangle, minimum width=0.7cm, minimum height=0.7cm] (Y_\s) at (2*\s -2, -8) [label=below:$Y_\ls$] {};
      }
      \node at (6, -8) {$\ldots$};
      \foreach \t in {1,2,4}{
	\node at (6,-2*\t +2) {$\ldots$};}
   \end{tikzpicture}
   \caption{The $d \cdot s$ sets that encode the $\widetilde{t}=s^d$ instances in the construction of $G'$.}
   \label{figure::instance}
  \end{figure}
  
  We construct an instance $(G',k',X')$ of \EDS parameterized by the size of a modulator to an $H$-component graph; of course, this is also an instance of \EDS parameterized by the size of modulator to an \cH-component graph. (See Figures \ref{figure::instance} and \ref{figure::LBgadgets} for an illustration.)
  
  We add $(d-1) \cdot s$ vertex sets, each of size $\binom{k}{2}$, to $G'$; we denote these sets by $X_{i,j}$ where $i \in [d-1]$ and $j \in [s]$. Every vertex in $X_{i,j}$, with $i \in [d-1]$ and $j \in [s]$, represents a different edge set $E(V_p,V_q)$ for $1\leq p<q \leq k$. By $x_{i,j}^{p,q}$ we denote the vertex in $X_{i,j}$ that represents the edge set $E(V_p,V_q)$.
  Next, we add $s$ sets, each of size $\binom{k}{2} \cdot n^2$, to $G'$. We denote these sets by $Y_j$ with $j \in [s]$. Every vertex in $Y_j$, with $j \in [s]$, represents a possible edge (of a \MCC instance) between two vertices in different color classes $V_p$ and $V_q$, with $1\leq p<q\leq k$. By $y_j^{\{u,v\}}$ we denote the vertex in $Y_j$ that represents the possible edge $\{u,v\}$ with $u \in V_p$, $v \in V_q$.
  
  We modify the indexing of the input instances from using $i$ with $ i \in [\widetilde{t}]$ to using index vectors $h=(h_1,h_2,\ldots, h_d) \in [s]^d$; there are $s^d=\widetilde{t}$ different index vectors. Henceforth, we refer to instances and their graphs through their index $h$. In the rest of the construction, every instance $(G_h,k)$ of \MCC with $h=(h_1,h_2,\ldots, h_d) \in [s]^d$ only interacts with the vertex sets $X_{i,h_i}$ for $i \in [d-1]$ and the vertex set $Y_{h_d}$.
  For every instance $G_h$, with $h \in [s]^d$, we add $|E(G_h)|$ copies of the graph $H$ to $G'$ and denote the copy of $H$ that represents edge $e \in E(G_h)$ by $H_h^e$. Let $B_h^e=\{b_{h,1}^e,b_{h,2}^e,\ldots, b_{h,d}^e\}$ be the copy of the beneficial set $B$ in $H_h^e$, i.e. vertex $b_{h,i}^e$ corresponds to vertex $b_i$ in $B$, for all $i \in [d]$.
  We add all edges $\{b_{h,i}^e,x_{i,h_i}^{p,q} \}$, with $i \in [d-1]$, as well as the edge $\{b_{h,d}^e,y_{h_d}^e\}$ to $G'$, with $h=(h_1,h_2,\ldots,h_d) \in [s]^d$, $e \in E(G_h) \cap E(V_p,V_q)$ and $1 \leq p < q \leq k$. That is, an edge $e$ between color classes $V_p$ and $V_q$ in $G_h$ is (in part) represented by connecting its corresponding graph $H^e_h$ to the sets $X_{i,h_i}$ corresponding to $h$: The $i$th vertex $b^e_{h,i}$ of the beneficial set $B_h^e$ in $H^e_h$ is made adjacent to $x^{p,q}_{i,h_i}\in X_{i,h_i}$, for $i \in [d-1]$. 
  These edges between $H_h^e$ and $X_{i,h_i}$ represent only the colors of the endpoints of $e$. Whereas, the edges between $H_h^e$ and $Y_{h_d}$ represent the endpoints of $e$: The vertex $b_{h,d}^e$ of the beneficial set $B_h^e$ in $H_h^e$ is made adjacent to $y_{h_d}^e$.
  Thus, every vertex $b^e_{h,i}$, with $i \in [d]$, is adjacent to exactly one vertex that is not contained in $V(H_h^e)$. 
  
  We need the sets $X_{i,j}$ with $i \in [d-1]$, and $j \in [s]$ only to encode the $\widetilde{t}$ instances; to make sure that there exists a clique in at least one instance we primarily use the sets $Y_j$ with $j \in [s]$. 
  Our goal is that for every $i \in [d-1]$ exactly one of the sets $X_{i,1}, X_{i,2}, \ldots X_{i,s}$ is contained in the set of endpoints of an edge dominating set of size at most $k'$ in $G'$. We obtain this by means of the following gadget:
  \begin{figure}
   \centering
  \begin{minipage}[t]{0.35\textwidth}
    \centering
    \begin{tikzpicture}[scale=1]
        \foreach \s in {1,2,...,5}{
	  \node[draw, rectangle, inner sep=0.35cm, rotate=360/5 * (\s) + 18] (c_\s) at ({360/5 * (\s) + 18}:1cm) {};
	  \node at ({360/5 * (\s) + 18}:1cm) {{\scriptsize{$X'_\s$}}};
	}
        \foreach \s in {1,2,...,5}{
	  \node[draw, rectangle, inner sep=0.25cm, rotate=360/5 * (\s) + 18] (v_\s) at ({360/5 * (\s) + 18}:2.1cm) {};
	  \node at ({360/5 * (\s) + 18}:2.1cm) {{\scriptsize{$X_\s$}}};
	}
        \foreach \s in {1,2,...,5}{
	  \draw[line width=3] (c_\s) -- (v_\s);
	  \foreach \t in {1,...,5}{
	    \draw[line width=3] (c_\s) -- (c_\t);
        }}
    \end{tikzpicture}
         \subcaption{selection gadget with $s=5$}
         \label{figure::selection_gadget}
    \end{minipage}
    \hspace{3em}
    \begin{minipage}[t]{0.45\textwidth}
    \centering
    \begin{tikzpicture}[scale=0.75]
    \foreach \s in {1,2,...,9}{
      \node[draw, rectangle, minimum width=0.25cm, minimum height=0.25cm] (zp_\s) at (\s-5,1) {};
      \node[draw, rectangle, minimum width=0.25cm, minimum height=0.25cm] (z_\s) at (\s-5,2) {};
      \draw[line width=3] (z_\s) -- (zp_\s);
      }
    \foreach \s in {1,2,3}{
      \node[draw, rectangle, minimum width=0.25cm, minimum height=0.25cm] (t_\s) at (3*\s-6,0) [label=below:$\widetilde{T}_\s$] {};
      }
    %edges T_1
    \foreach \s in {1,2,3}{
      \draw[line width=3] (t_1) -- (zp_\s);
    }
    \foreach \s in {4,5,6}{
      \draw[line width=3] (t_2) -- (zp_\s);
    }
    \foreach \s in {7,8,9}{
      \draw[line width=3] (t_3) -- (zp_\s);
    }
    \foreach \s in {1,2,...,27}{
      \node[draw, circle, inner sep=1pt, fill] (y_\s) at (0.33*\s-4.62,3) {};}
    \node at (5,1) {$Z'$};
    \node at (5,2) {$Z$};
    \node at (5,3)  {$Y$};
    %Edges between Z and Y
    \foreach \i/\s in {1/4,2/5,3/6}{
      \draw[fill] ($(z_1.north) + (-0.6mm,0)$) -- ($(z_1.north) + (0.6mm,0)$) -- (y_\i.south) -- ($(z_1.north) + (-0.5mm,0)$) -- cycle ;
      \draw[fill] ($(z_\s.north) + (-0.6mm,0)$) -- ($(z_\s.north) + (0.6mm,0)$) -- (y_\i.south) -- ($(z_\s.north) + (-0.5mm,0)$) -- cycle ;
    }
    \foreach \i/\s in {4,5,6}{
      \draw[fill] ($(z_2.north) + (-0.6mm,0)$) -- ($(z_2.north) + (0.6mm,0)$) -- (y_\i.south) -- ($(z_2.north) + (-0.5mm,0)$) -- cycle ;
      \draw[fill] ($(z_\s.north) + (-0.6mm,0)$) -- ($(z_\s.north) + (0.6mm,0)$) -- (y_\i.south) -- ($(z_\s.north) + (-0.5mm,0)$) -- cycle ;
    }
    \foreach \i/\s in {7/4,8/5,9/6}{
      \draw[fill] ($(z_3.north) + (-0.6mm,0)$) -- ($(z_3.north) + (0.6mm,0)$) -- (y_\i.south) -- ($(z_3.north) + (-0.5mm,0)$) -- cycle ;
      \draw[fill] ($(z_\s.north) + (-0.6mm,0)$) -- ($(z_\s.north) + (0.6mm,0)$) -- (y_\i.south) -- ($(z_\s.north) + (-0.5mm,0)$) -- cycle ;
    }
    \foreach \i/\s in {10/7,13/8,16/9}{
      \draw[fill] ($(z_1.north) + (-0.6mm,0)$) -- ($(z_1.north) + (0.6mm,0)$) -- (y_\i.south) -- ($(z_1.north) + (-0.5mm,0)$) -- cycle ;
      \draw[fill] ($(z_\s.north) + (-0.6mm,0)$) -- ($(z_\s.north) + (0.6mm,0)$) -- (y_\i.south) -- ($(z_\s.north) + (-0.5mm,0)$) -- cycle ;
    }
    \foreach \i/\s in {11/7,14/8,17/9}{
      \draw[fill] ($(z_2.north) + (-0.6mm,0)$) -- ($(z_2.north) + (0.6mm,0)$) -- (y_\i.south) -- ($(z_2.north) + (-0.5mm,0)$) -- cycle ;
      \draw[fill] ($(z_\s.north) + (-0.6mm,0)$) -- ($(z_\s.north) + (0.6mm,0)$) -- (y_\i.south) -- ($(z_\s.north) + (-0.5mm,0)$) -- cycle ;
    }
    \foreach \i/\s in {12/7,15/8,18/9}{
      \draw[fill] ($(z_3.north) + (-0.6mm,0)$) -- ($(z_3.north) + (0.6mm,0)$) -- (y_\i.south) -- ($(z_3.north) + (-0.5mm,0)$) -- cycle ;
      \draw[fill] ($(z_\s.north) + (-0.6mm,0)$) -- ($(z_\s.north) + (0.6mm,0)$) -- (y_\i.south) -- ($(z_\s.north) + (-0.5mm,0)$) -- cycle ;
    }
    \foreach \i/\s in {19/4,20/5,21/6}{
      \draw[fill] ($(z_7.north) + (-0.6mm,0)$) -- ($(z_7.north) + (0.6mm,0)$) -- (y_\i.south) -- ($(z_7.north) + (-0.5mm,0)$) -- cycle ;
      \draw[fill] ($(z_\s.north) + (-0.6mm,0)$) -- ($(z_\s.north) + (0.6mm,0)$) -- (y_\i.south) -- ($(z_\s.north) + (-0.5mm,0)$) -- cycle ;
    }
    \foreach \i/\s in {22/4,23/5,24/6}{
      \draw[fill] ($(z_8.north) + (-0.6mm,0)$) -- ($(z_8.north) + (0.6mm,0)$) -- (y_\i.south) -- ($(z_8.north) + (-0.5mm,0)$) -- cycle ;
      \draw[fill] ($(z_\s.north) + (-0.6mm,0)$) -- ($(z_\s.north) + (0.6mm,0)$) -- (y_\i.south) -- ($(z_\s.north) + (-0.5mm,0)$) -- cycle ;
    }
    \foreach \i/\s in {25/4,26/5,27/6}{
      \draw[fill] ($(z_9.north) + (-0.6mm,0)$) -- ($(z_9.north) + (0.6mm,0)$) -- (y_\i.south) -- ($(z_9.north) + (-0.5mm,0)$) -- cycle ;
      \draw[fill] ($(z_\s.north) + (-0.6mm,0)$) -- ($(z_\s.north) + (0.6mm,0)$) -- (y_\i.south) -- ($(z_\s.north) + (-0.5mm,0)$) -- cycle ;
    }
    \end{tikzpicture}
         \subcaption{clique gadget with $k=3$ and $|V_i|=3$}
         \label{figure::clique_gadget}
    \end{minipage}
  \caption{Gadgets with notation as in the definition}
  \label{figure::LBgadgets}
  \end{figure}
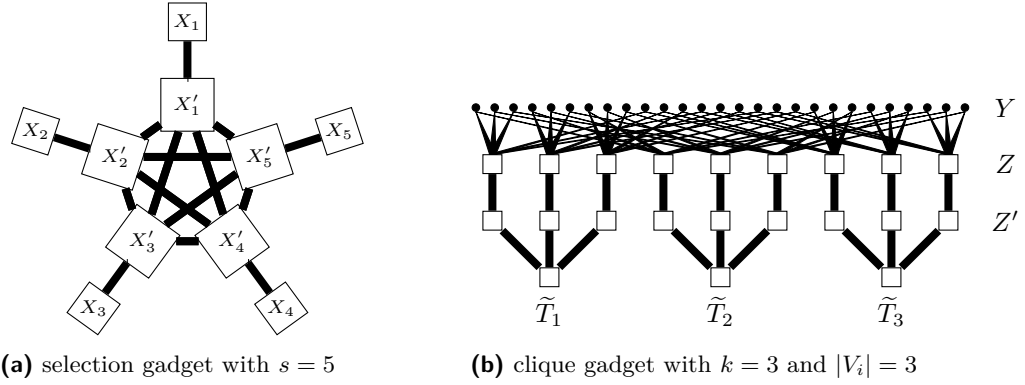
  
  A \emph{selection gadget} of size $\alpha$ consists of $s$ sets, say $X'_1,X'_2,\ldots,X'_s$, each of size $2 \cdot \binom{k}{2} \cdot d$ and $s$ sets, say $X_1,X_2,\ldots, X_s$, each of size $\alpha$. Each vertex in $X'_j$, with $j \in [s]$, is connected to all vertices in $X_j$ and to all vertices in $X'_{\bar{j}}$ with $\bar{j} \in [s]$ and $\bar{j} \neq j$ (see Figure \ref{figure::selection_gadget} for an illustration).
  Intuitively, with a local budget of $(s-1)/2\cdot 2 \cdot \binom{k}{2} \cdot d$ edges, one could cover exactly all $(s-1)\cdot 2\cdot\binom{k}{2}$ vertices of all but one set $X'_j$ by picking appropriate edges with endpoints in the other sets; here we use that $s$ is odd. This would force us to cover edges between $X'_j$ and $X_j$ by making all vertices in $X_j$ endpoints of solution edges with other endpoint outside of th selection gadget.
  We add $d-1$ selection gadgets of size $\binom{k}{2}$ to the modulator $X'$ and identify for each $j \in [s]$ the set $X_{i,j}$ with the set $X_j$ of one selection gadget. The $i$th selection gadget has the vertex sets $X_{i,1}, X_{i,2},\ldots, X_{i,s}$ and $X'_{i,1}, X'_{i,2},\ldots, X'_{i,s}$ where $i \in [d-1]$.
  
  We still have to make sure that we pick $\binom{k}{2}$ edges that have their endpoints in a vertex set of size $k$, so that they must form a $k$-clique. To guarantee this, we add for each set $Y_j$, with $j \in [s]$, a gadget that we call clique gadget (see Figure \ref{figure::clique_gadget}) to the graph $G'$:
  A \emph{clique gadget} consists of $k$ sets $\widetilde{T}_1,\widetilde{T}_2,\ldots,\widetilde{T}_k$ each of size $k-1$; every set represents one color class. Additionally, we add for every vertex $v \in V$ a set $Z_v$ of size $k-1$ as well as a copy of $Z_v$, named $Z'_v$, to the gadget. The final set $Y$ of the gadget contains $\binom{k}{2} \cdot n^2$ vertices, one for each possible edge in an instance. (Each edge has its endpoints in different color classes and we have $k$ color classes of size $n$.) We denote the vertex in $Y$ that represents the edge between vertex $v$ and $u$ in different color classes by $y^{\{v,u\}}$. (Later we will identify $Y$ with one set $Y_j$ for $j \in [s]$.)
  We connect every vertex in $Z_v$ for a vertex $v \in V_i$, with $i \in [k]$, to all vertices in $Z'_v$, and to every vertex $y^{\{v,u\}}$ with $u \in V \setminus V_i$. Furthermore, we connect every vertex in $\widetilde{T}_i$, for $i \in [k]$, to all vertices in $Z'_v$, if $v \in V_i$, i.e., if $v$ has color $i$.
  
  This gadget is perhaps the most vital part of our construction (apart from understanding strongly beneficial sets in $H$-components). There are two different cases for its behavior, which we will trigger by another selection gadget. 
  If there are no other constraints then it can be covered entirely by picking edges connecting sets $Z_v$ to sets $Z'_v$ (using $(k-1) \cdot k \cdot n$ edges). Else, as we will ensure for exactly one of these gadgets, the vertices in all sets $\widetilde{T}_i$ must be endpoints of solution edges because they have neighbors outside the gadget that are not contained in the solution. Nevertheless, we only want (have) $(k-1) \cdot k \cdot n$ solution edges that can be inside a clique gadget and we have to cover the vertices in all sets $\widetilde{T}_i$ only with this budget. To cover the vertices in $\widetilde{T}_i$ we add a matching between $\widetilde{T}_i$ and the vertices in one set $Z'_v$ for a single $v\in V_i$ to the solution. These $k$ vertices will be the vertices of a clique in one instance. Since, all sets $\widetilde{T}_i$, with $i \in [k]$, are covered, we can select $k-1$ edges each between the vertices of a set $Z_v$ (where $Z'_v$ is not covered by solution edges to $\widetilde{T}_1, \cup \ldots \cup \widetilde{T}_k$). We can pick these edges such that all, except the edges between $Z_v$ (where $v$ is a vertex in the ``clique'') and vertices in $y$ that represent an edge between $v$ and another vertex in the ``clique'', are dominated. We will dominate these $\binom{k}{2}$ edges in $Y$ that are incident with a non dominated edge via edges that have one endpoint in a copy of $H$, more precisely the vertex $b_d$ in a copy of $H$. This will guarantee that the edge between two clique vertices is an edge in the instance.
  
  As mentioned above, we add $s$ clique gadgets to the modulator $X'$ and identify each $Y_j$, here $j \in [s]$, with a different set $Y$ of a clique gadget. To distinguish between sets in the different clique gadgets, we denote the other sets for the clique gadget containing set $Y_j$, with $j \in [s]$, by $T_{j,i}$, with $i \in [k]$, and by $Z_{j,v}$ and $Z'_{j,v}$, with $v \in V$; let $Z_j=\bigcup_{v \in V} Z_{j,v}$, let $Z'_j=\bigcup_{v \in V} Z'_{j,v}$ and let $T_j:=\bigcup_{i=1}^k T_{j,i}$.
  Now, we have $s$ different clique gadgets and want to choose exactly one clique gadget where the set $T_j$ must be covered by the solution $F$. To this end, we add one last selection gadget of size $k \cdot (k-1)$ to $X'$ and identify the set $T_j$ with the set $X_j$ of the selection gadget, with $j \in [s]$. We denote the sets $X_j'$ of the selection gadget, with $j \in [s]$, by $T'_j$.
  
  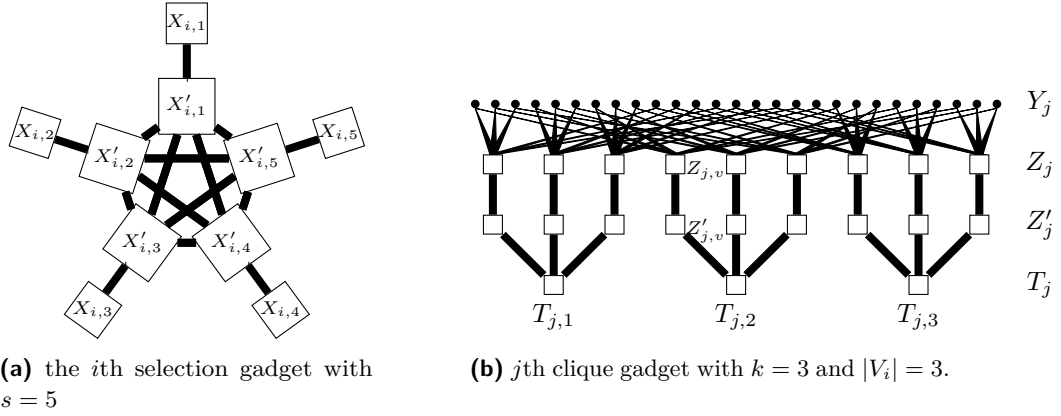
\begin{figure}
   \centering
  \begin{minipage}[t]{0.35\textwidth}
    \centering
    \begin{tikzpicture}[scale=1]
        \foreach \s in {1,2,...,5}{
	  \node[draw, rectangle, inner sep=0.37cm, rotate=360/5 * (\s) + 18] (c_\s) at ({360/5 * (\s) + 18}:1cm) {};
	  \node at ({360/5 * (\s) + 18}:1cm) {{\scriptsize{$X'_{i,\s}$}}};
	}
        \foreach \s in {1,2,...,5}{
	  \node[draw, rectangle, inner sep=0.27cm, rotate=360/5 * (\s) + 18] (v_\s) at ({360/5 * (\s) + 18}:2.1cm) {};
	  \node at ({360/5 * (\s) + 18}:2.1cm) {{\scriptsize{$X_{i,\s}$}}};
	}
        \foreach \s in {1,2,...,5}{
	  \draw[line width=3] (c_\s) -- (v_\s);
	  \foreach \t in {1,...,5}{
	    \draw[line width=3] (c_\s) -- (c_\t);
        }}
    \end{tikzpicture}
         \subcaption{the $i$th selection gadget with $s=5$}
    \end{minipage}
    \hspace{3em}
    \begin{minipage}[t]{0.45\textwidth}
    \centering
    \begin{tikzpicture}[scale=0.8]
    \foreach \s in {1,2,...,9}{
      \node[draw, rectangle, minimum width=0.25cm, minimum height=0.25cm] (zp_\s) at (\s-5,1) {};
      \node[draw, rectangle, minimum width=0.25cm, minimum height=0.25cm] (z_\s) at (\s-5,2) {};
      \draw[line width=3] (z_\s) -- (zp_\s);
      }
    \foreach \s in {1,2,3}{
      \node[draw, rectangle, minimum width=0.25cm, minimum height=0.25cm] (t_\s) at (3*\s-6,0) [label=below:$T_{j,\s}$] {};
      }
    %edges T_1
    \foreach \s in {1,2,3}{
      \draw[line width=3] (t_1) -- (zp_\s);
    }
    \foreach \s in {4,5,6}{
      \draw[line width=3] (t_2) -- (zp_\s);
    }
    \foreach \s in {7,8,9}{
      \draw[line width=3] (t_3) -- (zp_\s);
    }
    \foreach \s in {1,2,...,27}{
      \node[draw, circle, inner sep=1pt, fill] (y_\s) at (0.33*\s-4.62,3) {};}
    \node at (5,1) {$Z_j'$};
    \node at (5,2) {$Z_j$};
    \node at (5,3)  {$Y_j$};
    \node at (5,0) {$T_j$};
    %node Z_{v,j}$
    \node at (-0.5,1.9) {{\scriptsize$Z_{j,v}$}};
    \node at (-0.5,0.9) {{\scriptsize$Z'_{j,v}$}};
    %Edges between Z and Y
    \foreach \i/\s in {1/4,2/5,3/6}{
      \draw[fill] ($(z_1.north) + (-0.6mm,0)$) -- ($(z_1.north) + (0.6mm,0)$) -- (y_\i.south) -- ($(z_1.north) + (-0.5mm,0)$) -- cycle ;
      \draw[fill] ($(z_\s.north) + (-0.6mm,0)$) -- ($(z_\s.north) + (0.6mm,0)$) -- (y_\i.south) -- ($(z_\s.north) + (-0.5mm,0)$) -- cycle ;
    }
    \foreach \i/\s in {4,5,6}{
      \draw[fill] ($(z_2.north) + (-0.6mm,0)$) -- ($(z_2.north) + (0.6mm,0)$) -- (y_\i.south) -- ($(z_2.north) + (-0.5mm,0)$) -- cycle ;
      \draw[fill] ($(z_\s.north) + (-0.6mm,0)$) -- ($(z_\s.north) + (0.6mm,0)$) -- (y_\i.south) -- ($(z_\s.north) + (-0.5mm,0)$) -- cycle ;
    }
    \foreach \i/\s in {7/4,8/5,9/6}{
      \draw[fill] ($(z_3.north) + (-0.6mm,0)$) -- ($(z_3.north) + (0.6mm,0)$) -- (y_\i.south) -- ($(z_3.north) + (-0.5mm,0)$) -- cycle ;
      \draw[fill] ($(z_\s.north) + (-0.6mm,0)$) -- ($(z_\s.north) + (0.6mm,0)$) -- (y_\i.south) -- ($(z_\s.north) + (-0.5mm,0)$) -- cycle ;
    }
    \foreach \i/\s in {10/7,13/8,16/9}{
      \draw[fill] ($(z_1.north) + (-0.6mm,0)$) -- ($(z_1.north) + (0.6mm,0)$) -- (y_\i.south) -- ($(z_1.north) + (-0.5mm,0)$) -- cycle ;
      \draw[fill] ($(z_\s.north) + (-0.6mm,0)$) -- ($(z_\s.north) + (0.6mm,0)$) -- (y_\i.south) -- ($(z_\s.north) + (-0.5mm,0)$) -- cycle ;
    }
    \foreach \i/\s in {11/7,14/8,17/9}{
      \draw[fill] ($(z_2.north) + (-0.6mm,0)$) -- ($(z_2.north) + (0.6mm,0)$) -- (y_\i.south) -- ($(z_2.north) + (-0.5mm,0)$) -- cycle ;
      \draw[fill] ($(z_\s.north) + (-0.6mm,0)$) -- ($(z_\s.north) + (0.6mm,0)$) -- (y_\i.south) -- ($(z_\s.north) + (-0.5mm,0)$) -- cycle ;
    }
    \foreach \i/\s in {12/7,15/8,18/9}{
      \draw[fill] ($(z_3.north) + (-0.6mm,0)$) -- ($(z_3.north) + (0.6mm,0)$) -- (y_\i.south) -- ($(z_3.north) + (-0.5mm,0)$) -- cycle ;
      \draw[fill] ($(z_\s.north) + (-0.6mm,0)$) -- ($(z_\s.north) + (0.6mm,0)$) -- (y_\i.south) -- ($(z_\s.north) + (-0.5mm,0)$) -- cycle ;
    }
    \foreach \i/\s in {19/4,20/5,21/6}{
      \draw[fill] ($(z_7.north) + (-0.6mm,0)$) -- ($(z_7.north) + (0.6mm,0)$) -- (y_\i.south) -- ($(z_7.north) + (-0.5mm,0)$) -- cycle ;
      \draw[fill] ($(z_\s.north) + (-0.6mm,0)$) -- ($(z_\s.north) + (0.6mm,0)$) -- (y_\i.south) -- ($(z_\s.north) + (-0.5mm,0)$) -- cycle ;
    }
    \foreach \i/\s in {22/4,23/5,24/6}{
      \draw[fill] ($(z_8.north) + (-0.6mm,0)$) -- ($(z_8.north) + (0.6mm,0)$) -- (y_\i.south) -- ($(z_8.north) + (-0.5mm,0)$) -- cycle ;
      \draw[fill] ($(z_\s.north) + (-0.6mm,0)$) -- ($(z_\s.north) + (0.6mm,0)$) -- (y_\i.south) -- ($(z_\s.north) + (-0.5mm,0)$) -- cycle ;
    }
    \foreach \i/\s in {25/4,26/5,27/6}{
      \draw[fill] ($(z_9.north) + (-0.6mm,0)$) -- ($(z_9.north) + (0.6mm,0)$) -- (y_\i.south) -- ($(z_9.north) + (-0.5mm,0)$) -- cycle ;
      \draw[fill] ($(z_\s.north) + (-0.6mm,0)$) -- ($(z_\s.north) + (0.6mm,0)$) -- (y_\i.south) -- ($(z_\s.north) + (-0.5mm,0)$) -- cycle ;
    }
    \end{tikzpicture}
         \subcaption{$j$th clique gadget with $k=3$ and $|V_i|=3$.}
    \end{minipage}
  \caption{Gadgets with notation as in the construction of $G'$}
  \label{figure::LBgadgets_new}
  \end{figure}
  
  The set $X'$ contains all vertices that are not contained in a copy of graph $H$; in total these are $(d-1) \cdot s \cdot \binom{k}{2} \cdot (1 + 2 \cdot d)$ vertices for the $d-1$ selection gadgets of size $\binom{k}{2}$, $s \cdot ( (k + 2 \cdot k \cdot n) \cdot (k-1) + \binom{k}{2} \cdot n^2)$ vertices for the $s$ clique gadgets and $s \cdot 2 \cdot \binom{k}{2} \cdot d$ vertices for the last selection gadget (only those that we did not already count, because they are also contained in a clique gadget); hence $|X'| \in \Oh(s \cdot n^2 \cdot k^2 \cdot d^2)= \Oh(t^{1/d} \cdot poly(n))$, because $t^{1/d} \geq s-2$ and $d,k \leq n$. 
  Let 
  \[
  k'= d \cdot 2 \cdot \binom{k}{2} \cdot d \cdot \frac{s-1}{2} + s \cdot k \cdot n \cdot (k-1) + \sum_{h \in [s]^d} |E(G_h)| \cdot \MEDS(H) + \binom{k}{2} \cost(B):
  \]
  Intuitively, we have a local budget of $2 \cdot \binom{k}{2} \cdot d \cdot \frac{s-1}{2}$ to dominate all edges of the complete $s$-partite graph that is contained in each of the $d$ selection gadgets, a local budget of $k \cdot n \cdot (k-1)$ to dominate all edges between $Z_j$ and $Z'_j$ in each of the $s$ clique gadgets, a local budget of $\sum_{h \in [s]^d} |E(G_h)| \cdot \MEDS(H)$ to dominate all edges of $G-X$, and an extra budget of $\binom{k}{2} \cost(B)$ edges to dominate all remaining edges.
  
  We have to show that there exists an index vector $h^* \in [s]^d$ such that $(G_{h^*},k)$ is a $\mathrm{YES}$-instance if and only if $(G',k',X')$ is a $\mathrm{YES}$-instance. 
  
   $(\Rightarrow:)$ Assume that for some index vector $h^* \in [s]^d$ the \MCC instance $(G_{h^*}, k)$ is a $\mathrm{YES}$-instance. Let $X=\{x_1,x_2,\ldots,x_k\} \subseteq V$ be a multicolored clique of size  $k$ in $G_{h^*}$ with $x_i \in V_i$ for $i \in [k]$ and let $E'$ be the set of edges of the clique $X$. Let $h^*=(h^*_1,h^*_2,\ldots,h^*_d) \in [s]^d$. We construct an edge dominating set $F$ of $G'$ as follows:
   
   For each $i \in [d-1]$ we add a minimum edge dominating set in $G'[X'_{i,1} \cup X'_{i,2} \cup \ldots \cup X'_{i,s}]$ of size $2 \cdot \binom{k}{2} \cdot d \cdot \frac{s-1}{2}$ to $F$ such that each set, except the set $X'_{i,h^*_i}$, is covered by $F$. Such a minimum edge dominating set exists, because $G'[X'_{i,1} \cup X'_{i,2} \cup \ldots \cup X'_{i,s}]$ is a complete $s$-partite graph and $s$ is odd. 
   Thus, we dominate all edges, except the edges between the vertex sets $X'_{i,h^*_i}$ and $X_{i,h^*_i}$ in these $d-1$ selection gadgets.
   
    Next, we add a minimum edge dominating set in $G'[T'_1 \cup T'_2 \cup \ldots \cup T'_s]$ of size $2 \cdot \binom{k}{2} \cdot d \cdot \frac{s-1}{2}$ to $F$ such that each vertex set, except the set $T'_{h^*_d}$, is covered by $F$. (Such a minimum edge dominating set exists for the same reasons as above.)
    Consider the $s$ clique gadgets: For each $j \in \{1,2,\ldots,s\} \setminus \{h^*_d\}$ we add a perfect matching between the vertex sets $Z_j$ and $Z'_j$ to $F$; such a matching of size $|V| \cdot (k-1) = n \cdot k \cdot (k-1)$ exists by construction (for each $v \in V$ it holds that every vertex in $Z_{j,v} \subseteq Z_j$ is connected to every vertex in $Z'_{j,v} \subseteq Z'_j$ and both sets have the same size). Thus, all edges in these clique gadgets and between these clique gadgets and the selection gadget are dominated: The only uncovered vertices in a clique gadget are the vertices $Y_j$ and $T_j$, with $j \in [s]$ and $j \neq h^*_d$. These sets are independent sets and only the set $T_j$ is adjacent to a selection gadget, more precisely, to the vertex set $T'_j$ which is covered by $F$. 
    
    Since the edges between the vertex sets $T'_{h^*_d}$ and $T_{h^*_d}$ are not dominated so far, we add a perfect matching between the vertex sets $T_{h^*_d}$ and $\{Z'_{h^*_d,x} \mid x \in X\}$ to $F$; such a matching of size $|X| \cdot (k-1) = k \cdot (k-1)$ exists by construction: the set $T_{h^*_d,i}$, with $i \in [k]$, has size $k-1$ and every vertex in $T_{h^*_d,i}$ is connected to all $k-1$ vertices in $Z'_{h^*_d,x_i}$ with $x_i \in X \cap V_i$. Thus, we covered all edges inside the selection gadget of size $k \cdot (k-1)$ and between this selection gadget and the clique gadgets. 
    Next, we add for all $v \in V_i \setminus X$, with $i \in [k]$, a perfect matching between $Z_{h^*_d,v}$ and $\{y_{h^*_d}^{\{v,x\}} \mid x \in X - x_i\}$: Both sets have size $k-1$ and every vertex in $Z_{h^*_d,v}$ is adjacent to all vertices in $\{y_{h^*_d}^{\{v,x\}} \mid x \in X - x_i\}$. In total, these are $|V \setminus X| \cdot (k-1) = (n-1) \cdot k \cdot (k-1)$ edges. 
    
    So far, we dominate all edges, except the edges between vertices in $Z_{h^*_d,x}$, with $x \in X$, and the vertices in $\{y_{h^*_d}^{\{x,y\}} \mid x,y \in X, x \neq y\}$: The sets $T_{h^*_d}$, $Z'_{h^*_d,x}$, with $x \in X$, and $Z_{h^*_d,v}$, with $v \in V \setminus X$, are covered by $F$. Thus, the only edges that are not dominated in this clique gadget are those between the vertex sets $Z_{h^*_d,x}$ with $x \in X$ and $Y_{h^*_d}$. A vertex in $Z_{h^*_d,x_i}$, with $i \in [k]$, is adjacent to a vertex $y_{h^*_d}^{\{u,v\}}$ if $u=x_i$ and $v \in V \setminus V_i$. But, for all $v \in V \setminus (V_i \cup X)$, the vertex $y_{h^*_d}^{\{x_i,v\}}$ is already covered by $F$ (see above).
    
    Finally, we add an edge dominating set for the copies of $H$ to $F$. For all graphs $H_h^e$ with $h \in [s]^d$, and $e \in E(G_h)$, and either $h \neq h^*$ or $e \notin E'$ we add a minimum edge dominating set in $H_h^e$ that covers all vertices in $B_h^e$ to $F$; such a minimum edge dominating set exists by assumption. (Recall, $E'$ is the set of edges between vertices in $X$.)
    For all graphs $H_{h^*}^e$ with $e=\{x_p,x_q\} \in E'$, with $1 \leq p < q \leq k$, we add a minimum edge dominating set in $H_{h^*}^e -B_{h^*}^e$ to $F$ as well as the edges $\{b_{h^*,i}^e, x_{i,h^*_i}^{p,q}\}$, with $i \in [d-1]$, and the edge $\{b_{h^*,d}^e, y_{h^*_d}^e\}$. These edges exist by construction, because $E' \subseteq E(G_{h^*})$. Thus, the set $V(F)$ contains the vertex set $X_{i,h^*_i}$, with $i \in [d-1]$, and the vertex set $\{y_{h^*_d}^{\{x,y\}} \mid x,y \in X, x \neq y\}$, which implies that $F$ dominates all edges that are contained in a clique gadget and in a selection gadget. 
    
    Since all vertices in $B_h^i$, with $h \in [s]^d$ and $i \in [k]$, are dominated by $F$ and these are the only vertices in the connected component of $H_h^e$ that are adjacent to a vertex in $X'$, and $F$ dominates all clique gadgets, selection gadgets, and connected components of $G'-X'$, the set $F$ is an edge dominating set of $G'$.
    
    The set $F$ contains $d \cdot 2 \cdot \binom{k}{2} \cdot d \cdot \frac{s-1}{2}$ edges inside the selection gadgets, $(s-1) \cdot n \cdot k \cdot (k-1)$ edges inside the clique gadgets that do not contain $Y_{h^*_d}$, $k \cdot (k-1) + (n-1) \cdot k \cdot (k-1)$ edges inside the clique gadget that contains $Y_{h^*_d}$, $\MEDS(H)$ edges for all graphs $H_h^e$ with $h \in [s]^d$, $e \in E(G_h)$ and either $h \neq h^*$ or $e \notin E'$, and $\MEDS(H-B)+|B|=\MEDS(H)+\cost(B)$ edges for all graphs $H_{h^*}^e$ with $e \in E'$. This sums up to $k'$, implying that $(G',k',X')$ is a $\mathrm{YES}$-instance.
    
    $(\Leftarrow:)$ Assume that $(G',k',X')$ is a $\mathrm{YES}$-instance of $\EDS$ and let $F$ be an edge dominating set of size at most $k'$ in $G'$. 
    First, we consider how the edge dominating set $F$ interacts with the graph $G'$: 
    \begin{itemize}
     \item We need at least $k \cdot n \cdot (k-1)$ edges to dominate all edges between $Z_j=\bigcup_{v \in V} Z_{j,v}$, and $Z'_j=\bigcup_{v \in V} Z'_{j,v}$ in one clique gadget, because $G'[Z_{j,v} \cup Z'_{j,v}]$ is a complete bipartite graph whose bipartition has the parts $Z_{j,v}$ and $Z'_{j,v}$ for all $v \in V$ and $|Z_{j,v}|=|Z'_{j,v}|=k-1$. Thus, at least $k \cdot n \cdot (k-1)$ edges of $F$ must be contained inside a clique gadget because we need at least $k \cdot n \cdot (k-1)$ edges to dominate all edges between $Z_j$ and $Z'_j$, with $j \in [s]$, and because these sets are only adjacent to vertices inside the clique gadget they belong to.
      \item Furthermore, $F$ contains at least $2 \cdot \binom{k}{2} \cdot d \cdot \frac{s-1}{2}$ edges inside each selection gadget, because the $s$ sets $X'_{i,1}, X'_{i,2}, \ldots, X'_{i,s}$, with $i \in [d-1]$, resp.\ the $s$ sets $T'_1, T'_2, \ldots, T'_s$ of each selection gadget form a complete $s$-partite graph where each partition has size $2 \cdot \binom{k}{2} \cdot d$ and these sets are only adjacent to vertices in their selection gadget. Note that, the sets $T_1, T_2,\ldots, T_s$ are contained in one selection gadget and in the clique gadgets; but our counting is still correct, because each of the $k \cdot n \cdot (k-1)$ edges that are contained in the clique gadgets must have at least one endpoint in the vertex set $Z_j \cup Z'_j$, with $j \in [s]$, and each of the $2 \cdot \binom{k}{2} \cdot d \cdot \frac{s-1}{2}$ edges in the selection gadgets must have at least one endpoint in the vertex set $T'_1 \cup T'_2 \cup \ldots \cup T'_s$, and because the sets $Z_j \cup Z'_j$ and $T'_1 \cup T'_2 \cup \ldots \cup T'_s$ are not adjacent.
      \item To dominate all edges in $H_h^e$, with $h \in [s]^d$ and $e \in E(G_h)$, we need at least $\MEDS(H)$ edges that are adjacent to $V(H_h^e)$. Thus, we need at least $\sum_{h \in [s]^d} |E(G_h)| \cdot \MEDS(H)$ edges to dominate all edges of $G'-X'$.
    \end{itemize}
    
    Summarizing, for all, except $\binom{k}{2} \cdot \cost(B)$ edges of $F$ we know at least one endpoint and that these edges are either contained in a selection gadget, a clique gadget, or adjacent to a copy of $H$. 
    During the proof, we will show that we can make some assumptions about the edge dominating set $F$. To achieve these assumptions, we replace some edges in $F$ such that the resulting graph is still an edge dominating set of size $|F|$ in $G'$. But, a replacement of an edge will always preserve the previous assumptions.

    \begin{claim} \label{claim::selection_gadget}
     There exists an edge dominating set $F'$ of size $k'$ in $G'$ such that for each $i \in [d-1]$ there exists exactly one $j \in [s]$ such that no vertex in $X'_{i,j}$ is covered by $F'$ (hence $X'_{i,j} \cap V(F')=\emptyset$), and there exists exactly one $j \in [s]$ such that no vertex in $T'_{j}$ is covered by $F'$ (hence $T'_j \cap V(F') = \emptyset$).
    \end{claim}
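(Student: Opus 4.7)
The plan is to show that every edge dominating set $F$ of size at most $k'$ already satisfies the claimed structural property, so one may simply take $F' := F$. The argument has two parts: a global tight-budget analysis and a local analysis of each selection gadget.

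First, I would combine the three local lower bounds recorded just before the claim: each of the $d$ selection gadgets uses at least $2\binom{k}{2}d(s-1)/2$ edges of $F$, each of the $s$ clique gadgets uses at least $kn(k-1)$ edges of $F$ (to dominate all edges between $Z_j$ and $Z'_j$), and dominating all edges inside the $H$-copies requires $\sum_{h\in[s]^d}|E(G_h)|\MEDS(H)$ edges in total. The only remaining slack in $k'$ beyond these three contributions is $\binom{k}{2}\cost(B)$, which can be absorbed only by increasing the local cost from $\MEDS(H)$ to $\MEDS(H-B)+|B|=\MEDS(H)+\cost(B)$ in at most $\binom{k}{2}$ of the $H$-copies. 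Since $|F|\leq k'$, every one of these lower bounds must hold with equality; in particular, each selection gadget contributes \emph{exactly} $2\binom{k}{2}d(s-1)/2$ edges to $F$.

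Next, fix one selection gadget with parts $X'_{i,1},\dots,X'_{i,s}$ of common size $m=2\binom{k}{2}d$; the analysis for the last selection gadget on $T'_1,\dots,T'_s$ is identical. Every vertex of $\bigcup_{p\in[s]} X'_{i,p}$ has all its neighbors inside this gadget, so only gadget-edges of $F$ can cover it. If two distinct parts $X'_{i,j_1}$ and $X'_{i,j_2}$ were both completely disjoint from $V(F)$, the nonempty complete bipartite subgraph between them would be undominated, contradicting the edge-dominating property. Hence at most one part is completely uncovered, which means at least $m(s-1)$ vertices of $\bigcup_{p\in[s]} X'_{i,p}$ lie in $V(F)$. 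Letting $a$ count the $F$-edges inside the complete $s$-partite subgraph on $\bigcup_{p\in[s]} X'_{i,p}$ and $b$ the $F$-edges between some $X'_{i,j}$ and $X_{i,j}$, tightness gives $a+b=m(s-1)/2$, while the vertex count gives $2a+b\geq m(s-1)=2(a+b)$, forcing $b=0$ and $a=m(s-1)/2$. Thus the gadget-edges of $F$ form a matching inside the complete $s$-partite subgraph saturating exactly $m(s-1)$ vertices; the remaining $m$ unsaturated vertices must lie inside a single part (at most one part can be fully uncovered and each has only $m$ vertices), yielding a unique $j^*\in[s]$ with $X'_{i,j^*}\cap V(F)=\emptyset$. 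Applying the same argument to $T'_1,\dots,T'_s$ gives the unique uncovered $T'_{j^*}$.

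The main technical obstacle is propagating the global tightness of $|F|\leq k'$ to simultaneous tightness at every local gadget; once this is done, the matching structure inside each selection gadget is essentially forced by counting, and no modification of $F$ is required to obtain the claimed $F'$. As a useful by-product, since the edges between $X'_{i,j^*}$ and $X_{i,j^*}$ (respectively between $T'_{j^*}$ and $T_{j^*}$) still need to be dominated by $F$, we obtain $X_{i,j^*}\subseteq V(F')$ (respectively $T_{j^*}\subseteq V(F')$), which will be exploited by the subsequent claims in the composition.
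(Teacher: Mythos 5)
Your proposal breaks at its very first step. You assert that because $|F|\leq k'$, ``every one of these lower bounds must hold with equality; in particular, each selection gadget contributes \emph{exactly} $2\binom{k}{2}d\frac{s-1}{2}$ edges to $F$.'' This does not follow: $k'$ exceeds the sum of the three hard lower bounds by $\binom{k}{2}\cost(B)\geq\binom{k}{2}>0$, and at this point in the argument nothing pins down where that slack is spent. A priori an edge dominating set of size $k'$ could spend one or more of these slack edges inside a selection gadget (e.g.\ covering two additional vertices of the otherwise-uncovered part), with correspondingly fewer ``extra'' edges incident to the $H$-copies; ruling this out is exactly what the later Claim~\ref{claim::extra_budget} accomplishes, and that claim is proved \emph{after} and \emph{using} the present one, so you cannot invoke it here. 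Once the tightness equation $a+b=m(s-1)/2$ is gone, your counting no longer forces $b=0$, no longer forces the gadget edges to form a matching, and no longer yields a part that is \emph{entirely} uncovered. (A smaller slip: ``at most one part is completely uncovered'' does not by itself give ``at least $m(s-1)$ vertices are covered''; the correct route is that $V(G')\setminus V(F)$ is an independent set, hence all uncovered vertices of the gadget lie in a single part.)

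The paper's proof needs, and uses, a genuinely weaker counting fact plus a normalization step that your proposal omits. The counting only shows that \emph{not all} vertices of $X'_{i,1}\cup\dots\cup X'_{i,s}$ can be covered: full coverage would cost at least $\binom{k}{2}d$ edges beyond the gadget's lower bound, and $\binom{k}{2}d$ strictly exceeds the total available slack $\binom{k}{2}\cost(B)$ since $\cost(B)<|B|=d$ for a beneficial set. This yields only \emph{one} uncovered vertex in some part $X'_{i,h_i}$, not a fully uncovered part. The claim is then obtained constructively: because all vertices of a part have the same neighborhood, one uncovered vertex forces the entire neighborhood of $X'_{i,h_i}$ into $V(F)$, so every edge of $F$ incident with $X'_{i,h_i}$ can be replaced one-for-one by an edge avoiding $\bigcup_i X'_{i,h_i}\cup T'_{h_d}$, producing the modified set $F'$ with $X'_{i,h_i}\cap V(F')=\emptyset$. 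Your plan of taking $F'=F$ unmodified therefore does not establish the claim; you either need this replacement argument or a (currently missing, and likely circular) proof that the slack can never be spent inside a selection gadget.
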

    \begin{claimproof}
     Since the $s$ sets $X'_{i,1}, X'_{i,2}, \ldots, X'_{i,s}$, with $i \in [d-1]$, resp.\ the $s$ sets $T'_1, T'_2, \ldots, T'_s$ of each selection gadget form a complete $s$-partite graph, it holds that $V(F)$ contains at least $s-1$ sets of the $s$ sets $X'_{i,1}, X'_{i,2}, \ldots, X'_{i,s}$, with $i \in [d-1]$, resp.\ at least $s-1$ of the $s$ sets $T'_1, T'_2, \ldots, T'_s$. 
     First, we show that not all sets $X'_{i,1}, X'_{i,2}, \ldots, X'_{i,s}$, with $i \in [d-1]$, resp.\ not all sets $T'_1, T'_2, \ldots, T'_s$ can be covered by $F$:
     
     Assume that all vertices in $X'_{i,1}\cup X'_{i,2}\cup \ldots \cup X'_{i,s}$, for some $i \in [d-1]$, resp.\ all vertices in $T'_1 \cup T'_2 \cup \ldots \cup T'_s$ are contained in $V(F)$. This would imply that at least 
     \[
     \frac{1}{2} \cdot \left|\bigcup_{j=1}^s X'_{i,j}\right| = \frac{1}{2} \cdot \left|\bigcup_{j=1}^s T'_{j}\right| =\frac{1}{2} \cdot s \cdot 2 \cdot \binom{k}{2} \cdot d = 2 \cdot \binom{k}{2} \cdot d \cdot \frac{s-1}{2} + \binom{k}{2} \cdot d
     \] 
     edges of $F$ must be contained in this selection gadget. These are at least $\binom{k}{2} \cdot d$ edges more than the minimum number of edges in $F$ that must be contained in a selection gadget. But, we showed above that $F$ has at most $\binom{k}{2} \cdot \cost(B) < \binom{k}{2} \cdot |B| = \binom{k}{2} \cdot d$ additional edges. Thus, if $V(F)$ contains the entire set $X'_{i,1}\cup X'_{i,2}\cup \ldots \cup X'_{i,s}$, for some $i \in [d-1]$, resp.\ the entire set $T'_1 \cup T'_2 \cup \ldots \cup T'_s$, then $F$ contains more than $k'$ edges which is a contradiction.
     
     Thus, for each $i \in [d-1]$ there exists an $h_i \in [s]$ such that not all vertices in $X'_{i,h_i}$ are covered by $F$ and there exists an $h_d \in [s]$ such that not all vertices in $T'_{h_d}$ are covered by $F$. Let $h=(h_1,h_2,\ldots,h_d)$.
     Since all vertices in $X'_{i,h_i}$, with $i \in [d-1]$, resp.\ all vertices in $T'_{h_d}$ have the same neighborhood and at least one vertex in these sets in not contained in $V(F)$, it holds that all vertices in the neighborhood of $X'_{i,h_i}$, with $i \in [d-1]$, resp.\ in the neighborhood of $T'_{h_d}$ must be contained in $V(F)$; otherwise $F$ would not be an edge dominating set in $G'$.
     
     We replace every edge in $F$ that is incident with a vertex in $X'_{i,h_i}$, with $i \in [d-1]$, resp.\ to a vertex in $T'_{h_d}$. (Note, that the set $\bigcup_{i=1}^{d-1} X'_{i,h_i} \cup T'_{h_d}$ is an independent set, thus every edge in $F$ that is incident with this set must have its other endpoint outside this set.) Let $f=\{u,v\} \in F$ be an edge in $F$ with $u \in \bigcup_{i=1}^{d-1} X'_{i,h_i} \cup T'_{h_d}$. By construction, the vertex $v$ is a vertex in $X'_{i,j}$, with $i \in [d-1]$, $j \in [s]$ and $j \neq h_i$, or $T_j'$, with $j \in [s]$ and $j \neq h_d$, or $X_{i,h_i}$, with $i \in [d-1]$, or $T_{h_d}$. Hence, $v$ has a neighbor $v'$ that is not contained in $\bigcup_{i=1}^{d-1} X'_{i,h_i} \cup T'_{h_d}$.
     Thus, we can replace edge $f$ in $F$ with edge $f'=\{v,v'\}$ to obtain $F'$. The set $F'$ is still an edge dominating set: the only vertices that are not covered by the set $F'$ any more are contained in $\bigcup_{i=1}^{d-1} X'_{i,h_i} \cup T'_{h_d}$, but this set is an independent set and the neighborhood of this set is still covered by $F'$. This proves the claim.
    \end{claimproof}
    Assume that the edge dominating set $F$ fulfills the properties of Claim \ref{claim::selection_gadget} (if this is not the case we can replace $F$ by $F'$). Let $h^*=(h^*_1,h^*_2,\ldots, h^*_d) \in [s]^d$ such that no vertex in $X'_{i,h^*_i}$, for $i \in [d-1]$, is covered by $F$ and no vertex in $T'_{h^*_d}$ is covered by $F$. It follows, that the sets $X_{i,h^*_i}$, for $i \in [d-1]$, must be covered by $F$ because all vertices in $X'_{i,h^*_i}$ are adjacent to all vertices in $X_{i,h^*_i}$.
    The vertex sets $X_{i,h^*_i}$, with $i \in [d-1]$, are only adjacent to the sets $X'_{i,h^*_i}$ and copies of $H$; hence the edges of $F$ that cover $X_{i,h^*_i}$ have their other endpoint in a copy of $H$. 

    Let $F_{h^*} = \{ f \in F \mid \exists i \in [d-1] \colon f \cap X_{i,h^*_i} \neq \emptyset \}$ be the set of edges in $F$ that are incident with a vertex in $X_{i,h^*_i}$, with $i \in [d-1]$, and let $F_h^e = \{ f \in F \mid f \cap V(H_h^e) \neq \emptyset \}$ be the set of edges in $F$ that are incident with a vertex in $H_h^e$ with $h \in [s]^d$. 
    Let  $B_h^e(F_{h^*}) = \{ b \in B_h^e \mid \exists f \in F_{h^*} \colon b \in f \}$ be the set of vertices in $B_h^e$ that are incident with an edge in $F_{h^*}$.
    It holds that $F_h^e$, with $h \in [s]^d$ and $e \in E(G_h)$, has at least the size of a minimum edge dominating set in $H_h^e - B_h^e(F_{h^*})$ plus the size of $B_h^e(F_{h^*})$, because the edges in $F_h^e$ that have one endpoint in $B_h^e(F_{h^*})$ have their other endpoint not in $H_h^e$ and to dominate all remaining edges in $H_h^e$ we need at least $\MEDS(H_h^e - B_h^e(F_{h^*}))$ many edges. Since no two copies of $H$ are adjacent, this implies that at least 
    \[
    \sum_{h \in [s]^d} |E(G_h)| \cdot \MEDS(H) + \sum_{h \in [s]^d} \sum_{ e \in E(G_h)}  \cost(B_h^e(F_{h^*}))
    \]
    edges of $F$ are incident with a copy of $H$ because
     \begin{align} 
      \sum_{h \in [s]^d} \sum_{ e \in E(G_h)} |F_h^e| 
      &\geq \sum_{h \in [s]^d} \sum_{ e \in E(G_h)} \left( \MEDS(H-B_h^e(F_{h^*})) + |B_h^e(F_{h^*})| \right)  \nonumber \\
      &= \sum_{h \in [s]^d} \sum_{ e \in E(G_h)} \left( \MEDS(H) + \cost(B_h^e(F_{h^*})) \right) \nonumber \\
      &=\sum_{h \in [s]^d} |E(G_h)| \cdot \MEDS(H) + \sum_{h \in [s]^d} \sum_{ e \in E(G_h)}  \cost(B_h^e(F_{h^*}))\text{.} \label{align::bound_F_h^e}
    \end{align}
    Now, we have $\sum_{h \in [s]^d} \sum_{ e \in E(G_h)}  \cost(B_h^e(F_{h^*}))$ edges more in $F$ that are incident with a copy of $H$ than the lower bound of $\sum_{h \in [s]^d} |E(G_h)| \cdot \MEDS(H)$ edges. These edges belong neither to the $2 \cdot \binom{k}{2} \cdot d \cdot \frac{s-1}{2}$ edges that we need to dominate all edges in the complete $s$-partite graph that is a subgraph of every selection gadget nor to the $k \cdot n \cdot (k-1)$ edges that we need to dominate one clique gadget. Thus, 
    \[
     \sum_{h \in [s]^d} \sum_{ e \in E(G_h)}  \cost(B_h^e(F_{h^*})) \leq \binom{k}{2} \cdot \cost(B).
    \]
    \begin{claim} \label{claim::extra_budget} 
      \[
      \sum_{h \in [s]^d} \sum_{ e \in E(G_h)}  \cost(B_h^e(F_{h^*})) = \binom{k}{2} \cdot \cost(B).
      \]
    \end{claim}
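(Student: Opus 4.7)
The $\leq$ direction is already in hand from the preceding disjoint-budget accounting (selection-gadget edges, clique-gadget edges, and $|F_h^e|$ contributions are pairwise disjoint subsets of $F$, whose sizes sum to at most $|F| \leq k'$). The plan is to prove the matching lower bound $\sum_{h,e} \cost(B_h^e(F_{h^*})) \geq \binom{k}{2} \cost(B)$, giving equality.

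First, I would argue that $F_{h^*}$ is (without loss of generality) a matching of size exactly $(d-1)\binom{k}{2}$. Every vertex in $\bigcup_{i=1}^{d-1} X_{i,h^*_i}$ must lie in $V(F)$ because each such vertex has only uncovered neighbors in $X'_{i,h^*_i}$ (by Claim~\ref{claim::selection_gadget}), whose edges to $X_{i,h^*_i}$ must be dominated. The only external neighbors of $X_{i,h^*_i}$ in $G'$ are the vertices $b^e_{h,i}$ with $h_i = h^*_i$, so a standard replacement-to-matching argument (as used in the proof of Theorem~\ref{theorem::P3}) yields $|F_{h^*}| = (d-1)\binom{k}{2}$ and $\sum_{h,e} |B_h^e(F_{h^*})| = (d-1)\binom{k}{2}$.

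Next, I would partition the edges of $F_{h^*}$ by the pair $(p,q)$ determined by the label of their $X_{i,h^*_i}$-endpoint $x^{p,q}_{i,h^*_i}$, so that exactly $d-1$ edges of $F_{h^*}$ are associated with each of the $\binom{k}{2}$ pairs (one per coordinate $i \in [d-1]$). Pulling back through the canonical isomorphism $B \cong B_h^e$, the vertices covered by these $d-1$ edges form a collection of subsets of $B\setminus\{b_d\}$ indexed by the copies $(h,e)$ they land in, with total cardinality $d-1$. The key step is to show that, for each pair $(p,q)$, the sum of $\cost$ over these subsets is at least $\cost(B)$. Since $Q(H) = W(H)$ and $B \subseteq N(W(H))$ is disjoint from $W(H)$, the vertex $b_d$ is not extendable, hence $\cost(\{b_d\}) = 1$. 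If the contributions of some pair summed to strictly less than $\cost(B)$, appending $\{b_d\}$ would yield a cover of $B$ by proper subsets with total cost strictly less than $\cost(B) + 1$; after discarding any redundant subset and applying Proposition~\ref{proposition::properties}~(\ref{proposition::definition}), this produces a non-trivial partition of $B$ with total $\cost$ strictly less than $\cost(B)$, contradicting that $B$ is strongly beneficial. Summing over all $\binom{k}{2}$ pairs gives the desired lower bound.

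The main obstacle is bridging the gap that $B_h^e(F_{h^*})$ only records the first $d-1$ coordinates of $B$; consequently the strongly-beneficial inequality of Proposition~\ref{proposition::properties}~(\ref{proposition::definition}) cannot be invoked directly on the "raw" per-pair collection. The trick of appending the singleton $\{b_d\}$ (justified by $\cost(\{b_d\}) = 1$ in the present setting, after Items~\ref{item:lowerbound:CP_Q}--\ref{item:lowerbound:CP_other} have been excluded) restores the full set $B$ and turns a would-be violation of the upper bound into a violation of the strongly-beneficial property; getting this bookkeeping clean---ensuring one genuinely obtains a partition into \emph{proper} subsets rather than a dominated multiset---is the delicate part.
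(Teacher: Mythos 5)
Your proposal is correct and follows essentially the same route as the paper: for each pair $(p,q)$ the domination of the edges incident with $x^{p,q}_{i,h^*_i}$ forces the pulled-back sets $B_h^e(F_{h^*})$ to cover $B\setminus\{b_d\}$, and appending the proper subset $\{b_d\}$ with $\cost(\{b_d\})=1$ turns a per-pair contribution below $\cost(B)$ into a cover of $B$ by proper subsets of total cost at most $\cost(B)$, contradicting strong beneficiality (the paper phrases this contrapositively rather than as a direct per-pair lower bound). The ``w.l.o.g.\ $F_{h^*}$ is a matching'' normalization is superfluous --- ``at least one copy of each $b_i$, $i\in[d-1]$'' already suffices --- and the resulting cover need not be converted into a partition, since the definition of strongly beneficial already quantifies over covers by proper subsets.
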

    \begin{claimproof}
      First, we rewrite the left-hand side as follows:
      \[
      \sum_{h \in [s]^d} \sum_{ e \in E(G_h)}  \cost(B_h^e(F_{h^*})) = \sum_{1 \leq p<q \leq k} \sum_{h \in [s]^d} \sum_{e \in E(G_h) \cap E(V_p,V_q)}  \cost(B_h^e(F_{h^*})).
      \]
      We assume for contradiction that 
      \[
      \sum_{1 \leq p<q \leq k} \sum_{h \in [s]^d} \sum_{e \in E(G_h) \cap E(V_p,V_q)}  \cost(B_h^e(F_{h^*})) < \binom{k}{2} \cdot \cost(B).
      \]
      This implies that there exist $1 \leq \bar{p} < \bar{q} \leq k$ such that 
      \[
       \sum_{h \in [s]^d} \sum_{e \in E(G_h) \cap E(V_{\bar{p}},V_{\bar{q}})}  \cost(B_h^e(F_{h^*})) < \cost(B).
      \]
      We will show that $B':=\bigcup_{h \in [s]^d} \bigcup_{e \in E(G_h) \cap E(V_{\bar{p}},V_{\bar{q}})} B_h^e(F_{h^*})$ contains at least one copy of every vertex in $B \setminus \{b_d\}$. Let $i \in [d-1]$. Consider vertex $x_{i,h^*_i}^{\bar{p},\bar{q}}$ in $X_{i,h^*_i}$, which is covered by an edge in $F_{h^*}$ (by definition of $F_{h^*}$) and let $f=\{x_{i,h^*_i}^{\bar{p},\bar{q}}, v\}$ be an edge in $F_{h^*}$ that has $x_{i,h^*_i}^{\bar{p},\bar{q}}$ as one endpoint. The vertex $v$ must be a vertex in a copy of $H$ because $X_{i,h^*_i}$ is only adjacent to vertices in $X'_{i,h^*_i}$ (which are not covered by $F$), and adjacent to copies of $H$. More precisely, since $f$ is an edge in $E(G')$, the vertex $v$ must be contained in $\{b_{h,i}^e \mid h \in [s]^d, h_i=h^*_i, e \in E(G_h) \cap E(V_{\bar{p}},V_{\bar{q}}) \}$ (construction of $G'$: we add edges $\{x_{i,h_i}^e,b_{h,i}^e\}$ to $G'$ with $h=(h_1,h_2,\ldots,h_d) \in [s]^d$ and $e \in E(G_h)$). But, every vertex in this set is a copy of $b_i$ and it follows that $B'$ contains at least one copy of $b_i$. Since, this holds for all $i \in [d-1]$ it follows that $B'$ contains at least one copy of each vertex in $B \setminus \{b_d\}$.
      
      Let $B_1,B_2,\ldots,B_l$ be the subsets of $B$ that correspond to the nonempty sets in $\{ B_h^e(F_{h^*}) \mid h \in [s]^d, e \in E(G_h) \cap E(V_{\bar{p}},V_{\bar{q}})\}$. Now, the sets $B_1,B_2,\ldots,B_l$ together with the set $\{b_d\}$ cover the set $B$. Since the vertex $b_d$ is not extendable in $H$ (Proposition \ref{proposition::properties} (\ref{proposition::disjointQ})) it holds that $\cost(\{b_d\})=1$. Thus, it holds that $\sum_{i=1}^l \cost(B_i) + \cost(\{b_d\}) < \cost(B) + 1$, because we assumed that $\sum_{i=1}^l \cost(B_i) < \cost(B)$ and $\cost(\{b_d\})=1$; hence $\sum_{i=1}^l \cost(B_i) + \cost(\{b_d\}) \leq \cost(B)$. Note that every set $B_i$, with $i \in [l]$, must be a proper subset of $B$; otherwise $\cost(B)=\cost(B_i)$ which contradicts the assumption that $\sum_{i=1}^l \cost(B_i) < \cost(B)$.
      Summarized, the sets $B_1,B_2,\ldots, B_l, \{b_d\} \subsetneq B$ cover $B$ and it holds that $\sum_{i=1}^l \cost(B_i) + \cost(\{b_d\}) \leq \cost(B)$. This implies that $B$ is not strongly beneficial (see definition), which is a contradiction and proves the claim.
    \end{claimproof}
    So far, we know that $F$ contains $d \cdot 2 \cdot \binom{k}{2} \cdot d \cdot \frac{s-1}{2}$ edges that cover the $d$ different complete $s$-partite graphs that are subgraphs of the $d$ selection gadgets, and that (at least) $\sum_{h \in [s]^d} |E(G_h)| \cdot \MEDS(H) + \binom{k}{2} \cost(B)$ edges are incident with copies of $H$ (Claim \ref{claim::extra_budget}).
    Thus, the remaining $s \cdot k \cdot n \cdot (k-1)$ edges must cover the $s$ clique gadgets. Since this is the number of edges we need (at least) to dominate the edges between the vertex sets $Z_j$ and $Z'_j$, with $j \in [s]$, in a clique gadget, every remaining edge must either be incident with a vertex of $Z_j$ or with a vertex of $Z'_j$.
    
    Consider the clique gadget that contains the vertex set $Y_{h^*_d}$. Since no vertex in $T'_{h^*_d}$ is covered by $F$, it holds that every vertex in $T_{h^*_d}$ must be covered by $F$ (because every vertex in $T'_{h^*_d}$ is adjacent to every vertex in $T_{h^*_d}$). Each vertex in $T_{h^*_d}$ is only adjacent to vertices in $T'_{h^*_d}$ and $Z'_{h^*_d}$; thus, every edge in $F$ that is incident with a vertex in $T_{h^*_d}$ has its other endpoint in $Z'_{h^*_d}$.
    Furthermore, for each vertex $v \in V$ the entire set $Z_{h^*_d,v}$ or the entire set $Z'_{h^*_d,v}$ is contained in $V(F)$ (both is also okay) because $G'[Z_{h^*_d,v} \cup Z'_{h^*_d,v}]$ is a complete bipartite graph whose partition has the parts $Z_{h^*_d,v}$ and $Z'_{h^*_d,v}$. Additionally, this implies that the vertex set $Z_{h^*_d,v} \cup Z'_{h^*_d,v}$ is incident with exactly $|Z_{h^*_d,v}| = |Z'_{h^*_d,v}|=k-1$ edges of $F$ because $|V|=k \cdot n$, and there are $k \cdot n \cdot (k-1)$ edges in $F$ that dominate all edges between $Z_{h^*_d}$ and  $Z'_{h^*_d}$.
    
    \begin{claim}\label{claim::clique_gadget}
     There exists an edge dominating set $F'$ of size $k'$ in $G'$ such that for all vertices $v \in V$ either no vertex in $Z'_{h^*_d,v}$ or no vertex in $Z_{h^*_d,v}$ is covered by $F'$. Furthermore, for each color class $V_i$, with $i \in [k]$, there exists exactly one vertex $v_i \in V_i$ such that $F'$ covers no vertex in $Z_{h^*_d,v}$.
    \end{claim}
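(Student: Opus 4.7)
The plan is to construct $F'$ from $F$ by a sequence of local edge-swaps confined to the clique gadget containing $Y_{h^*_d}$, each preserving the size $k'$ and the edge-dominating-set property (and not disturbing the conclusions of Claim~\ref{claim::selection_gadget}). I would begin by recalling the two facts established immediately before the claim: exactly $k-1$ edges of $F$ are incident with each pair $Z_{h^*_d,v}\cup Z'_{h^*_d,v}$, and every vertex of $T_{h^*_d}$ lies in $V(F)$ while $T'_{h^*_d}\cap V(F)=\emptyset$. The $k-1$-edge budget together with the observation that the $K_{k-1,k-1}$ between $Z_{h^*_d,v}$ and $Z'_{h^*_d,v}$ must be dominated forces at least one of the two sides to be entirely covered, so the only remaining task is to rule out simultaneous partial coverage of both sides. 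I would also assume, by a standard replacement argument, that $F$ restricted to the clique gadget is a matching.

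Next, I would identify, for each $i\in[k]$, the vertex $v_i\in V_i$ singled out in the claim: the $k-1$ vertices of $T_{h^*_d,i}$ must all be covered by $F$, and the only candidate edges are matches into $\bigcup_{v\in V_i}Z'_{h^*_d,v}$; since each pair $v\in V_i$ admits at most $k-1$ incident edges, all $k-1$ edges covering $T_{h^*_d,i}$ must enter a single $Z'_{h^*_d,v_i}$, making $v_i$ uniquely determined and $Z'_{h^*_d,v_i}$ fully covered. With the $v_i$'s fixed, I would rewrite the edges of $F$ on the clique gadget as follows: for each chosen $v_i$, replace the $k-1$ edges incident with the pair by a perfect matching from $T_{h^*_d,i}$ to $Z'_{h^*_d,v_i}$ (which exists by construction, as the two sets are completely joined and both have size $k-1$), leaving $Z_{h^*_d,v_i}$ entirely uncovered; for each $v\in V\setminus\{v_1,\ldots,v_k\}$, replace the $k-1$ incident edges by a matching of size $k-1$ from $Z_{h^*_d,v}$ to distinct neighbours in $Y_{h^*_d}$, choosing the $Y$-partners carefully (each $z\in Z_{h^*_d,v}$ has $(n-1)(k-1)$ neighbours in $Y_{h^*_d}$, which gives ample freedom).

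The main obstacle will be verifying that this global rerouting is feasible and yields a genuine edge dominating set of size exactly $k'$. The delicate point is edges incident with the now-uncovered sets $Z_{h^*_d,v_i}$: their $Y_{h^*_d}$-neighbours $y_{h^*_d}^{\{v_i,u\}}$ for $u\in V\setminus V_i$ must all lie in $V(F')$. Whenever $u\notin\{v_1,\ldots,v_k\}$, I would use the matching from $Z_{h^*_d,u}$ into $Y_{h^*_d}$ to hit precisely these $y$-vertices, which is feasible because $y_{h^*_d}^{\{v_i,u\}}$ is a neighbour of $Z_{h^*_d,u}$ and we have $k-1$ free matching slots at $Z_{h^*_d,u}$ against many more than $k-1$ $Y$-candidates. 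The residual critical vertices are the $\binom{k}{2}$ vertices $y_{h^*_d}^{\{v_i,v_j\}}$ with $i\neq j$, whose only $Z$-neighbours are both uncovered; these can be covered either by an existing $F$-edge of the form $\{b_{h,d}^e,y_{h^*_d}^e\}$ with $e=\{v_i,v_j\}$ (inherited from the copies-of-$H$ portion of $F$) or, if necessary, by using the extra budget $\binom{k}{2}\cdot\cost(B)$ that $F$ already spends on copies of $H$.

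Finally, I would check the size: since each pair $v\in V$ contributes exactly $k-1$ edges both before and after rewriting, and the parts of $F$ outside the clique gadget (the selection gadgets and the copies of $H$) are left intact, we obtain $|F'|=|F|\le k'$. Combined with the manifest structure of the rewritten edges—exactly one of $Z_{h^*_d,v}$, $Z'_{h^*_d,v}$ is in $V(F')$ for every $v$, and exactly one $v_i\in V_i$ per colour class has its $Z_{h^*_d,v_i}$ uncovered—this yields the claimed $F'$.
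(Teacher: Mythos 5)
Your overall plan---local, size-preserving edge swaps inside the clique gadget containing $Y_{h^*_d}$---is the same as the paper's, but two steps do not go through as you describe. First, your identification of $v_i$ is not justified: the fact that each pair $Z_{h^*_d,v}\cup Z'_{h^*_d,v}$ is incident with exactly $k-1$ edges of $F$ does \emph{not} force all $k-1$ edges covering $T_{h^*_d,i}$ into a single $Z'_{h^*_d,v}$. If $j$ of them enter $Z'_{h^*_d,v}$ and $k-1-j$ enter $Z'_{h^*_d,v'}$, each pair can still respect its budget of $k-1$ (with $Z'$ fully covered and $Z$ only partially covered on both sides), so $v_i$ is not uniquely determined by $F$. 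The paper only fixes \emph{some} $v_i$ receiving a $T_{h^*_d,i}$-edge and obtains the uniqueness asserted in the claim by construction of $F'$, namely by rerouting so that $Z_{h^*_d,v}$ becomes fully covered for every other $v\in V_i$.

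Second, and more seriously, your handling of the $\binom{k}{2}$ vertices $y_{h^*_d}^{\{v_i,v_j\}}$ is a genuine gap. Covering them ``if necessary, by using the extra budget $\binom{k}{2}\cdot\cost(B)$'' is not available: by the counting done just before the claim (and Claim~\ref{claim::extra_budget}), every edge of $F$ is already accounted for ($k-1$ per pair in each clique gadget, the fixed quota per selection gadget, and $\sum_{h}|E(G_h)|\MEDS(H)+\binom{k}{2}\cost(B)$ incident with copies of $H$), so adding edges would exceed $k'$; and deliberately arranging coverage of these vertices would defeat the purpose of the claim, since the subsequent argument extracts the clique precisely from the fact that $F$ is \emph{forced} to cover them via edges into copies of $H$. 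The missing observation is that these vertices already lie in $V(F)$ before any rewriting: once one of the $k-1$ edges of the pair at $v_i$ runs from $T_{h^*_d,i}$ to $Z'_{h^*_d,v_i}$, all $k-1$ pair-edges must be incident with $Z'_{h^*_d,v_i}$ and at most $k-2$ of them touch $Z_{h^*_d,v_i}$, so some vertex of $Z_{h^*_d,v_i}$ is uncovered by $F$; as all vertices of $Z_{h^*_d,v_i}$ share the same neighbourhood, the entire set $N(Z_{h^*_d,v_i})$, including every $y_{h^*_d}^{\{v_i,u\}}$, is already covered by $F$. In particular $y_{h^*_d}^{\{v_i,v_j\}}$ must be covered by an edge into a copy of $H$ (its only $Z$-neighbours carry no $Y$-incident edges of $F$), and such edges are never touched by the swaps, so coverage persists in $F'$. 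Without this argument your rewritten set is not shown to be an edge dominating set.
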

    \begin{claimproof}
     Let $i \in [k]$ and let $v_i$ be a vertex in $V_i$ such that $F$ contains an edge that has one endpoint in $T_{h^*_d,i}$ and the other endpoint in $Z'_{h^*_d,v_i}$.
     Since the set $Z'_{h^*_d,v_i} \cup Z_{h^*_d,v_i}$ is only incident with $k-1$ edges of $F$, and since either $Z'_{h^*_d,v_i}$ or $Z_{h^*_d,v_i}$ must be entirely contained in $V(F)$ it holds that $Z'_{h^*_d,v_i}$ is entirely covered by $F$, and that every edge of $F$ that is incident with a vertex in $Z_{h^*_d,v_i}$ has its other endpoint in $Z'_{h^*_d,v_i}$. Furthermore, there exists a vertex in $Z_{h^*_d,v_i}$ that is not covered by $F$ because the set $Z'_{h^*_d,v_i} \cup Z_{h^*_d,v_i}$ is only incident with $k-1$ edges of $F$ and at least one of these $k-1$ edge has no endpoint in $Z_{h^*_d,v_i}$.
     Thus, the entire neighborhood of $Z_{h^*_d,v_i}$ must be covered by $F$ because all vertices in $Z_{h^*_d,v_i}$ have the same neighborhood.
     Hence, we can delete every edge in $F$ that has one endpoint in $Z'_{h^*_d,v_i}$ and add a maximum matching of the complete bipartite graph $G'[Z'_{h^*_d,v_i} \cup T_{h^*_d,i}]$ to $F$. This maximum matching has size $|Z'_{h^*_d,v_i}|=|T_{h^*_d,i}|$ and covers all vertices in $Z'_{h^*_d,v_i} \cup T_{h^*_d,i}$. The resulting edge set, which we denote by $\widetilde{F}$ is still an edge dominating set of size $|F|$ in $G'$ because the only vertices that are not covered anymore are contained in $Z_{h^*_d,v_i}$, but all neighbors are still covered: The vertices in $Z'_{h^*_d,v_i}$ are only adjacent to the vertices in $Z_{h^*_d,v_i}$ and $T_{h^*_d,i}$, and the vertices in $Z_{h^*_d,v_i}$ are adjacent to vertices in $Z'_{h^*_d,v_i}$ and $Y_{h^*_d}$. Thus, the only edges that we replace and that are incident with $Z_{h^*_d,v_i}$ or a neighbor of $Z_{h^*_d,v_i}$ are incident with $Z'_{h^*_d,v_i}$ and these vertices are still covered by $\widetilde{F}$. It holds that the edge dominating set $\widetilde{F}$ covers all vertices in $Z'_{h^*_d,v_i}$ and no vertex in $Z_{h^*_d,v_i}$.
     
     Now, consider a vertex $v \in V_i \setminus \{v_i\}$. Every vertex in $Z'_{h^*_d,v}$ is only adjacent to the vertices in $T_{h^*_d,i} \cup Z_{h^*_d,v}$. Since every vertex in $T_{h^*_d,i}$ is covered by $\widetilde{F}$, we can replace the edges in $\widetilde{F}$ that are incident with a vertex in $Z_{h^*_d,v}$.
     Recall that that either the set vertex $Z'_{h^*_d,v}$ or the vertex set $Z_{h^*_d,v}$ is entirely covered by $\widetilde{F}$, and that $Z'_{h^*_d,v} \cup Z_{h^*_d,v}$ is only incident with $|Z'_{h^*_d,v}|=|Z_{h^*_d,v}|=k-1$ edges of $\widetilde{F}$.
     If $\widetilde{F}$ covers all vertices in $Z'_{h^*_d,v}$ then we replace every edge $e=\{z,z'\} \in F$ with $z' \in Z'_{h^*_d,v}$ and $z \in Z_{h^*_d,v}$ by an edge in $E(z,Y_{h^*_d})$. 
     Otherwise, if $\widetilde{F}$ covers all vertices in $Z_{h^*_d,v}$ then we delete the $|Z'_{h^*_d,v}|=|Z_{h^*_d,v}|=k-1$ edges in $F$ that are incident with a vertex in $Z'_{h^*_d,v} \cup Z_{h^*_d,v}$ and add for each vertex $z$ in $Z_{h^*_d,v}$ exactly one edge in $E(z,Y_{h^*_d})$ to $\widetilde{F}$. We denote the resulting set by $F'$. 
     Note that in both cases the set $Z_{h^*_d,v}$ is not covered by $F'$ and the set $Z'_{h^*_d,v}$ is covered by $F'$.
     Clearly, $F'$ has the same size as $\widetilde{F}$ (by construction) and hence as $F$. The fact that $F'$ is still an edge dominating set holds because the only vertices that are possibly covered by $\widetilde{F}$ and not by $F'$ are contained in $Z'_{h^*_d,v}$, but all neighbors of $Z'_{h^*_d,v}$ are still contained in $V(F')$.
     We can do this for all $i \in [k]$ and all $v \in V_i \setminus \{v_i\}$ independently; this proves the claim.
    \end{claimproof}
    Let $F$ be the edge dominating set that we construct during the proof of Claim \ref{claim::clique_gadget}. 
    Now, for each $i \in [k]$ there exists exactly one vertex $v$ in $V_i$ such that no vertex in $Z_{h^*_d,v}$ is incident with an edge in $F$; denote this vertex by $x_i$. 
    Let $X=\{x_1,x_2,\ldots,x_k\}$. We will show that $X$ is a clique in $G_{h^*}$.
    
    Every vertex in the set $Z_{h^*_d,x_i}$, with $i \in [k]$, is adjacent to all vertices in $Z'_{h^*_d,x_i}$ and to all vertices in $\{ y_{h^*_d}^{\{x_i,v\}} \in Y_{h^*_d} \mid v \in V \setminus V_i \}$; thus both sets must be covered by $F$. The second set contains $(k-1) \cdot n$ vertices, one vertex for every vertex in $V \setminus V_i$. All vertices $y_{h_d}^{\{x_i,x_j\}}$, with $ j \in [k]$ and $j \neq i$, can only be covered by edges in $F$ that have one endpoint in a copy of $H$ because these vertices are only adjacent to copies of $H$, and adjacent to the set $Z_{h^*_d,x_i} \cup Z_{h^*_d,x_j}$, which is not incident with an edge in $F$.
    Let $Y=\{y_{h^*_d}^{\{x_i,x_j\}} \in Y_{h^*_d} \mid 1\leq i < j\leq k\}$ be the set of vertices in $Y_{h^*_d}$ that must be covered by $F$ via edges that have one endpoint in a copy of $H$. This set has size $\binom{k}{2}$.
    
    Recall that $F_h^e$ is the set of edges in $F$ that are incident with a vertex in $H_h^e$. The sets $F_h^e$ are pairwise disjoint because no two copies of $H$ are adjacent. Let $B_h^e(F) = \{ b \in B_h^e \mid \exists f \in F \colon b \in f \text{ and } f \nsubseteq V(H_h^e)\}$ be the set of vertices in $B_h^e$ that are an endpoint of an edge $f$ in $F$ whose other endpoint is not a vertex in a copy of $H$; thus, by construction, this other endpoint is contained in a set $X_{i,j}$ or a set $Y_j$, with $i \in [d-1]$ and $j \in [s]$. Note that $B_h^e(F_{h^*}) \subseteq B_h^e(F)$ because $B_h^e(F_{h^*})$ contains all edges that have one endpoint in $H_h^e$ (or more precisely $B_h^e$) and the other endpoint in a set $X_{i,h^*_i}$ ,with $i \in [d-1]$, whereas, $B_h^e(F_h)$ contains all edges that have one endpoint in $H_h^e$ (or more precisely $B_h^e$) and the other endpoint in $X'$.
    As before, the edge set $F_h^e$, with $h \in [s]^d$ and $e \in E(G_h)$, has at least the size of a minimum edge dominating set in $H_h^e - B_h^e(F)$ plus the size of $B_h^e(F)$ because $F_h^e$ contains the $|B_h^e(F)|$ edges between the vertices in $B_h^e(F)$ and a vertex that is not in $V(H_h^e)$, and because $F_h^e$ must also dominate all remaining edges in $H_h^e - B_h^e(F)$; thus,
    \begin{align}
     \sum_{h \in [s]^d} \sum_{ e \in E(G_h)} |F_h^e| &\geq \sum_{h \in [s]^d} \sum_{ e \in E(G_h)} \left( \MEDS(H_h^e-B_h^e(F)) + |B_h^e(F)| \right) \nonumber \\
     &= \sum_{h \in [s]^d} \sum_{ e \in E(G_h)} \left( \MEDS(H) + \cost(B_h^e(F)) \right). \label{align::F_h^e}
    \end{align}
    Since every edge in a set $F_h^e$, for $h \in [s]^d$ and $e \in E(G_h)$, is incident with a vertex in $H_h^e$ (and therefore neither incident with a vertex in $Z_j \cup Z_j'$, for $j \in [s]$, nor incident with a vertex in a selection gadget) we know that
    \[
     \sum_{h \in [s]^d} \sum_{ e \in E(G_h)} |F_h^e| \leq \sum_{h \in [s]^d} |E(G_h)| \cdot \MEDS(H) + \binom{k}{2} \cdot \cost(B).
    \]
    It follows from Claim \ref{claim::extra_budget} together with inequality (\ref{align::bound_F_h^e}) that 
    \[ 
      \sum_{h \in [s]^d} \sum_{ e \in E(G_h)} |F_h^e| \geq \sum_{h \in [s]^d} |E(G_h)| \cdot \MEDS(H) + \binom{k}{2} \cdot \cost(B).
    \]
    Combining the two last inequalities we obtain that
    \begin{align*}
      \sum_{h \in [s]^d} \sum_{ e \in E(G_h)} |F_h^e| = \sum_{h \in [s]^d} |E(G_h)| \cdot \MEDS(H) + \binom{k}{2} \cdot \cost(B). \label{align::equality}
    \end{align*}
   Summarized, this implies:
   \begin{align*}
      &\sum_{h \in [s]^d} |E(G_h)| \cdot \MEDS(H) + \binom{k}{2} \cdot \cost(B) \\
      &\qquad\overset{(\ref{align::equality})}{=}\sum_{h \in [s]^d} \sum_{ e \in E(G_h)} |F_h^e| \\
      &\qquad \overset{(\ref{align::F_h^e})}{\geq} \sum_{h \in [s]^d} \sum_{ e \in E(G_h)} \left( \MEDS(H) + \cost(B_h^e(F)) \right) \\
      &\qquad=\sum_{h \in [s]^d} |E(G_h)| \cdot \MEDS(H) + \sum_{h \in [s]^d} \sum_{ e \in E(G_h)}  \cost(B_h^e(F)) \\
      &\qquad \geq \sum_{h \in [s]^d} |E(G_h)| \cdot \MEDS(H) + \sum_{h \in [s]^d} \sum_{ e \in E(G_h)}  \cost(B_h^e(F_{h^*})) \\
      &\quad \overset{Claim \ref{claim::extra_budget}}{=}\sum_{h \in [s]^d} |E(G_h)| \cdot \MEDS(H) + \binom{k}{2} \cdot \cost(B)
    \end{align*}
    The last inequality holds because $B_h^e(F_{h^*}) \subseteq B_h^e(F)$, which implies that $\cost(B_h^e(F_{h^*})) \leq \cost(B_h^e(F))$ (Proposition \ref{proposition::properties} (\ref{proposition::monoton})). It follows that all terms are equal and, hence, 
    \[
     \sum_{1 \leq p<q \leq k} \sum_{h \in [s]^d} \sum_{e \in E(G_h) \cap E(V_p,V_q)}  \cost(B_h^e(F)) = \sum_{h \in [s]^d} \sum_{ e \in E(G_h)}  \cost(B_h^e(F_{h^*})) = \binom{k}{2} \cdot \cost(B).
    \]
    This implies that either for all $1 \leq p < q \leq k$ it holds that 
    \[
     \sum_{h \in [s]^d} \sum_{e \in E(G_h) \cap E(V_p,V_q)}  \cost(B_h^e(F)) =  \cost(B),
    \]
    or that there exist $1 \leq \bar{p} < \bar{q} \leq k$ such that 
    \[
     \sum_{h \in [s]^d} \sum_{e \in E(G_h) \cap E(V_{\bar{p}},V_{\bar{q}})}  \cost(B_h^e(F)) < \cost(B).
    \]
    We will show that for all $1 \leq p < q \leq k$ it holds that
    \begin{align}\label{align::equality}
     \sum_{h \in [s]^d} \sum_{e \in E(G_h) \cap E(V_p,V_q)}  \cost(B_h^e(F)) = \cost(B),
    \end{align}
    and that $B_h^e(F)=B_h^e$ if and only if $h = h^*$ and $e$ has both endpoints in $X=\{x_1,x_2,\ldots,x_k\}$.
    
    First, as in the proof of Theorem \ref{theorem::LB} we will show that we always have equality, hence that (\ref{align::equality}) holds.
    Let $1 \leq \bar{p} < \bar{q} \leq k$.
    We showed in the proof of Claim \ref{claim::extra_budget} that 
    $B':=\bigcup_{h \in [s]^d} \bigcup_{e \in E(G_h) \cap E(V_{\bar{p}},V_{\bar{q}})} B_h^e(F_{h^*})$ contains at least one copy of every vertex in $B \setminus\{b_d\}$. 
    Thus, $B'':=\bigcup_{h \in [s]^d} \bigcup_{e \in E(G_h) \cap E(V_{\bar{p}},V_{\bar{q}})} B_h^e(F)$ contains at least one copy of every vertex in $B \setminus\{b_d\}$, because $B_h^e(F_{h^*}) \subseteq B_h^e(F)$.
    Furthermore, $B''$ also contains a copy of $b_d$, because vertex $y_{h^*_d}^{\{x_{\bar{p}},x_{\bar{q}}\}}$ must be covered by an edge $f$ in $F$ that has its other endpoint in a copy of $H$. By construction, the vertices in a copy of $H$ that are adjacent to vertex $y_{h^*_d}^{\{x_{\bar{p}},x_{\bar{q}}\}}$ are the vertices $b_{h,d}^{\{x_{\bar{p}},x_{\bar{q}}\}}$ with $h\in [s]^d$ s.t.\ $h_d=h^*_d$, and $\{x_{\bar{p}},x_{\bar{q}}\} \in E(G_{h}) \cap E(V_{\bar{p}},V_{\bar{q}})$; this implies that $B''$ contains a copy of $b_d$.
    
    Now, let $B_1,B_2,\ldots, B_l$ be the subsets of $B$ that correspond to nonempty sets in $\{B_h^e(F) \mid h \in [s]^d, e \in E(G_h) \cap E(V_{\bar{p}},V_{\bar{q}})\}$. Note that every set $B_i$ is either a proper subset of $B$ or $l=1$: If there exists a set, say w.l.o.g. $B_1$, such that $B_1=B$ then $\cost(B_1)=\cost(B)$. Since no vertex in $B$ is extendable (Proposition \ref{proposition::properties} (\ref{proposition::disjointQ})) it holds that $\cost(B_i) \geq 1$ for all $i \in [l]$. Now, if $l > 1$ and $B_1=B$ then $\sum_{i=1}^l \cost(B_i) \geq \cost(B) + (l-1) > \cost(B)$, which contradicts the assumption. Thus, we have either $l=1$ and $B_1=B$ (since $B_1$ covers $B$) or that the sets $B_1,B_2,\ldots,B_l \subsetneq B$ cover $B$. Since $B$ is strongly beneficial, and the sets $B_1,B_2,\ldots,B_l$ cover $B$, it must hold that $\sum_{i=1}^l \cost(B_i) > \cost(B)$ or that $l=1$. Therefore, $\sum_{i=1}^l \cost(B_i) > \cost(B)$ if $l > 1$ and $\sum_{i=1}^l \cost(B_i) = \cost(B)$ if $l=1$, but, this must hold for all $1 \leq p < p \leq k$. Hence, we must always have the latter case where $l=1$, and there exist no $1 \leq \bar{p} < \bar{q} \leq k$ such that $\sum_{h \in [s]^d} \sum_{e \in E(G_h) \cap E(V_{\bar{p}},V_{\bar{q}})}  \cost(B_h^e(F)) < \cost(B)$. 
    Thus, it holds for all $1 \leq p < q \leq k$ that there exists an index $\hat{h} \in [s]^d$ and an edge $\hat{e} \in E(G_{\hat{h}}) \cap E(V_p,V_q)$ such that $B_{\hat{h}}^{\hat{e}}(F)=B_{\hat{h}}^{\hat{e}}$; such a set exists, because $B''$ contains at least one copy of each vertex in $B$ and $l=1$. Furthermore, all other sets $B_h^e(F)$, with $h \in [s]^d$, $e \in E(G_h) \cap E(V_p,V_q)$, and $h \neq \hat{h}$ or $e \neq \hat{e}$, are empty since $l=1$.
    We will show that $\hat{h}=h^*$ and that $\hat{e} = \{x_p,x_q\}$:
    
    Consider vertex $x_{i,h^*_i}^{p,q}$, with $i \in [d-1]$, which is contained in $X_{i,h^*_i}$. This vertex must be covered by an edge $f$ in $F$ and the other endpoint of this edge $f$ must be contained in $\{b_{h,i}^e \mid h \in [s]^d, h_i=h^*_i, e \in E(G_h) \cap E(V_{p},V_{q})\}$ (see proof of Claim \ref{claim::extra_budget}). 
    This vertex must also be contained in $B_{\hat{h}}^{\hat{e}}$ because all other sets $B_h^e(F)$ with $e \in E(V_p,V_q)$ are empty; thus $\hat{h}_i = h^*_i$. This holds for all $i \in [d-1]$ which implies that $(\hat{h}_1,\hat{h}_2,\ldots,\hat{h}_{d-1}) = (h_1,h_2,\ldots,h_{d-1})$.
    Furthermore, vertex $y_{h^*_d}^{\{x_{p},x_{q}\}}$ must be covered by an edge $f$ in $F$ whose other endpoint is contained in $\{b_{h,d}^{\{x_{p},x_{q}\}} \mid h \in [s]^d, h_d=h^*_d, \{x_{p},x_{q}\} \in E(G_h) \cap E(V_{p},V_{q})\}$. For the same reasons, this vertex must be contained in $B_{\hat{h}}^{\hat{e}}$. The only vertex in $\{b_{h,d}^{\{x_{p},x_{q}\}} \mid h \in [s]^d, h_d=h^*_d, \{x_{p},x_{q}\} \in E(G_h) \cap E(V_{p},V_{q})\}$ that is also contained in $B_{\hat{h}}^{\hat{e}}$ is vertex $b_{h^*,d}^{\{x_{p},x_{q}\}}$. Thus, $\hat{h}_d=h^*_d$, $\hat{e}=\{x_{p},x_{q}\}$, and $\{y_{h^*_d}^{\{x_{p},x_{q}\}}, b_{h^*,d}^{\{x_{p},x_{q}\}}\}$ is an edge in $G'$. Summarized, we showed that $B_h^e(F)=B_h^e$ if and only if $h=h^*$ and $e \in E(X,X)$, and that all other sets $B_h^e(F)$ are empty.
    
    We will show that the vertex set $X=\{x_1,x_2,\ldots,x_k\}$ is a clique in $G_{h^*}$. Recall that the vertex $x_i$ is contained in $V_i$, thus every vertex of $X$ is contained in a different color class. Consider two vertices $x_p$, $x_q$ with $1 \leq p < q \leq k$. We have to show that $\{x_p,x_q\}$ is an edge in $E(G_{h^*})$. Since vertex $y_{h^*_d}^{\{x_p,x_q\}}$ is adjacent to vertex $b_{h^*,d}^{\{x_p,x_q\}}$ it holds that $\{x_p,x_q\}$ is an edge in $G_{h^*}$, which proves that $X$ is a clique in $G_{h^*}$.
\end{proof}

Concluding the section, we observe a simple quadratic lower bound for the size of kernels for \EDS parameterized by the size of a modulator to \cH-component graphs that holds for all sets $\cH$.

\begin{lemma}\label{lemma::quadraticlowerbound}
The \EDS problem has no kernelization to size $\Oh(n^{2-\varepsilon})$ where $n$ is the number of vertices, for any $\varepsilon>0$, unless \containment. Therefore, for any set $\cH$ of (connected) graphs, the \EDS problem parameterized by the size of a modulator to $\cH$-component graphs admits no kernelization to size $\Oh(|X|^{2-\varepsilon})$, for any $\varepsilon>0$, unless \containment.
\end{lemma}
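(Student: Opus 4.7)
The plan is to derive the parameterized bound from the unparameterized one via the trivial modulator $X=V(G)$, and to obtain the unparameterized $\Oh(n^{2-\varepsilon})$ bound for \EDS by a polynomial parameter transformation from \VC together with the lower bound of Dell and van Melkebeek~\cite{DellM14}.

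For the parameterized statement, given an arbitrary \EDS instance $(G,k)$ on $n$ vertices, I would view it as the modulator-parameterized instance $(G,k,X)$ with $X:=V(G)$. Then $G-X$ is the empty graph, which is vacuously an $\cH$-component graph for every set $\cH$, and $|X|=n$. A hypothetical kernelization of size $\Oh(|X|^{2-\varepsilon})$ for the parameterized problem therefore immediately yields a kernelization of size $\Oh(n^{2-\varepsilon})$ for plain \EDS, so the second statement follows from the first.

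For the first statement, I would construct a polynomial-time reduction $\phi\colon(G,k)\mapsto(G',k')$ from \VC into \EDS with $|V(G')|=\Oh(|V(G)|)$ such that $(G,k)\in\text{\VC}\iff(G',k')\in\text{\EDS}$. Composing a hypothetical \EDS kernelization of size $\Oh(n^{2-\varepsilon})$ with $\phi$ then produces a polynomial compression of \VC into \EDS of size $\Oh((c|V(G)|)^{2-\varepsilon})=\Oh(|V(G)|^{2-\varepsilon})$, contradicting~\cite{DellM14} unless \containment. The natural candidate for $\phi$ is to attach a constant-size pendant gadget to every vertex $v$ of $G$ (for example, a short pendant path) and to set $k'=k+\Delta$ where $\Delta$ is the baseline cost contributed by the pendant gadgets; correctness follows from a case analysis on the interaction of a minimum edge dominating set with each gadget, the role of the gadget being to force the ``$v$-chosen-into-the-cover'' option to be detectable as a single extra unit in the \EDS budget.

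The main obstacle is finding a constant-size pendant gadget that is rigid enough to make the correspondence between a minimum edge dominating set of $G'$ and a minimum vertex cover of $G$ exact, rather than only approximate up to a constant factor. The vertex blow-up must be strictly linear (anything super-linear would weaken the resulting lower bound below $n^{2-\varepsilon}$), which forces one to work with constant-size gadgets, and the gadget must prevent an edge dominating set from ``cheating'' by sharing edges between $v$'s gadget and the original graph $G$ so as to pay for several vertex-cover choices at once. Once a suitable gadget has been fixed the verification is a routine case distinction; everything else in the argument is straightforward.
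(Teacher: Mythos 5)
Your overall strategy is the paper's: take $X=V(G')$ to transfer the bound to the parameterized problem (that part is exactly the paper's argument and is fine), and obtain the $\Oh(n^{2-\varepsilon})$ bound for plain \EDS via a linear-size polynomial-time reduction from \VC combined with the Dell--van Melkebeek lower bound. However, the one nontrivial ingredient --- the reduction itself --- is exactly what you leave unresolved, and the candidate you sketch (a constant-size pendant gadget attached to \emph{every} vertex, with $k'=k+\Delta$) does not work as described. With a per-vertex pendant (say a path $v\text{--}a_v\text{--}b_v$), every solution must spend one edge incident with each $a_v$, and choosing $\{v,a_v\}$ versus $\{a_v,b_v\}$ decides whether $v$ is ``in the cover''; but both options cost the same, so a budget of $n$ gadget edges certifies the existence of \emph{some} vertex cover with no control over its size, and the parameter $k$ never enters. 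If you instead try to make ``covering $v$'' cost one extra unit on top of a gadget baseline, an original edge $\{u,v\}\in E(G)$ placed in the edge dominating set covers two vertices for the price of one, so the exact correspondence $|F|=\Delta+|S|$ breaks and you only get the equivalence up to a factor of two --- not enough for a decision reduction.

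The fix in the paper is a \emph{global} selector gadget rather than a per-vertex one: add $2k$ new vertices $u_1,\ldots,u_k,u'_1,\ldots,u'_k$, the $k$ pendant edges $\{u_i,u'_i\}$, and make each $u_i$ adjacent to all of $V$; keep $k'=k$. Since the $k$ edges $\{u_i,u'_i\}$ are pairwise disjoint and disjoint from $V$, any edge dominating set of size $k$ has at most $2k$ endpoints of which at least $k$ lie in $\{u_1,\ldots,u_k,u'_1,\ldots,u'_k\}$, so at most $k$ endpoints lie in $V$, and those must cover all edges of $G[V]\cong G$; conversely a vertex cover $\{v_1,\ldots,v_k\}$ yields the edge dominating set $\{\{v_i,u_i\}\}_{i\in[k]}$. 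This gives $|V(G')|=n+2k=\Oh(n)$ and an exact equivalence. So your proposal is a correct plan with the key construction missing; you should either supply a gadget of this global type or prove that your pendant gadget can be made rigid, which in the form stated it cannot.
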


\begin{proof}
It is known that \VC admits no kernelization to size $\Oh(n^{2-\varepsilon})$, for any $\varepsilon>0$, unless \containment~\cite{DellM14}. By a straightforward reduction to an EDS instance $(G',k)$ with $n'=\Oh(n)$ vertices the same is true for EDS. This in turn yields an equivalent instance $(G',k,X')$ with a trivial modulator $X'=V(G')$ such that $G'-X'$ is the empty graph; since this is a feasible instance for the \EDS problem parameterized by the size of a modulator to $\cH$-component graphs for all sets $\cH$ and since $|X'|=n'=\Oh(n)$, the lemma follows.

The lower bound of $\Oh(n^{2-\varepsilon})$ for EDS is not surprising (and may well be known), as the same is known for a number of similar graph problems (like \VC). Thus, we only sketch a simple reduction (which surely has been rediscovered several times already).

Let $(G,k)$ be an instance of \VC with $G=(V,E)$ and, w.l.o.g., $k\leq |V|$. Construct a graph $G'$, starting from a copy of $G$ by adding $2k$ vertices $u_1,\ldots,u_k,u'_1,\ldots,u'_k$ and adding the edges $\{u_1,u'_1\},\ldots,\{u_k,u'_k\}$. Finally, make each vertex $u_i$ adjacent to all vertices in the copy of $V$ in $G'$. Return the instance $(G',k)$. Clearly, $G'$ has $n+2k=\Oh(n)$ vertices and the construction can be done in polynomial time.

It is easy to see that $(G,k)$ is yes for \VC if and only if $(G',k)$ is yes for \EDS: If $(G,k)$ is yes for VC then we can pick a vertex cover $S=\{v_1,\ldots,v_k\}$ of size exactly $k$. Clearly, $F=\{\{v_1,u_1\},\ldots,\{v_k,u_k\}\}$ is an edge dominating set of size $k$ for $G'$; all additional edges in comparison to $G$ have an endpoint in $\{u_1,\ldots,u_k\}$. For the converse, assume that $(G',k)$ is yes and let $F$ be an edge dominating set of size $k$. Let $S$ contain all vertices of $V$ in $G'$ that are endpoints of $F$. Observe that, because $F$ needs to contain at least one vertex per edge $\{u_1,u'_1\},\ldots,\{u_k,u'_k\}$, which are disjoint from $V$, and because it has at most $2k$ endpoints, the set $S$ has size at most $k$. Clearly, the set $S$ alone covers all edges of $G'[V]\cong G$, so $(G,k)$ is yes for \VC. This completes the proof.
\end{proof}

%%%%%%%%%%%%%%%%%%%%%%%%%%%%%%%%%%%%%%%%%%%%%%%%%%%%%%%%%%%%%%%%%%%%%%%%%%%%%%%%%%%%%%%%%%%%%%%%%%%%%%%%%%%%%%%%%%%%%%%%%%%%%%%%%%%%%%%%%%%%%%%%%%%%%%%%%%%%%%%%%%%%%%%%%%%%%%%%%%%%%%%%%%%%%%%%%%%%%%%%%%%%%%%%%%%%%%%%%%%%%%%%%%%%%%%%%%%%%%%%%%%%%%%%%%%%%%%%%%%%%%%%%%%%%%%%%%%%%%%%%%%%%%%%%%%%%%%%%%%%%%%%%%%%%%%%%%%%%%%%%%%%%%%%%%%%%%%%%%%%%%%%%%%%%%%%%%%%%%%%%%%%%%%%%%%%%%%%%%%%%%%%%%%%%%%%%%%%%%%%%%%%%%%%%%%%%%%%
\section{EDS parameterized above half a maximum matching} \label{section::aboveLB}

A natural lower bound for the size of a minimum edge dominating set is $\frac{1}{2} MM$, where $MM$ denotes the size of a maximum matching.
We show that \EDS is \NP-hard even for the special case where the input graph has a perfect matching and we need to determine whether there is an edge dominating set of at most half the size of that matching.
This implies that \EDS parameterized by $l=k-\frac{1}{2} MM$, where $k$ is the solution size, is para-\NP-hard. 

\begin{theorem} \label{theorem::MM}
    \EDS parameterized by $k-\frac{1}{2} MM$ is para-\NP-hard.
\end{theorem}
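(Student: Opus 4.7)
The plan is to prove para-NP-hardness by exhibiting NP-hardness for the fixed parameter value $\ell = k - \tfrac{1}{2}MM = 0$. Concretely, I would show that \EDS is NP-hard already on instances $(G,k)$ such that $G$ has a perfect matching and $k = \tfrac{1}{2}MM(G) = |V(G)|/4$. Once this restricted problem is NP-hard, para-NP-hardness follows directly from the definition.

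The structural reformulation I would rely on is the following. For any graph $G$ on $n$ vertices with a perfect matching, $\MEDS(G) \geq n/4$, with equality if and only if $V(G)$ admits a partition $V(G) = A \dot\cup B$ with $|A|=|B|=n/2$, $G[A]$ containing a perfect matching, and $B$ an independent set. Indeed, any maximal matching $M$ of size $n/4$ covers exactly $n/2$ vertices, and the $n/2$ unmatched vertices form an independent set by maximality; the edges of $M$ are a perfect matching of $G[V(M)]$. Hence the restricted \EDS question is equivalent to deciding the existence of such a ``matchable--independent'' bisection.

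I would reduce from 3-SAT to this bisection variant. For a formula $\varphi$ with $p$ variables and $m$ clauses, I construct a graph $G_\varphi$ (with a perfect matching) from (i)~a constant-size variable gadget per variable, with a perfect matching and exactly two minimum edge dominating sets of size $\tfrac{1}{2}MM$, encoding the two truth values; (ii)~a constant-size clause gadget per clause, with a perfect matching that admits a tight minimum EDS of size $\tfrac{1}{2}MM$ iff at least one of its three literals is set to true in the variable gadgets; (iii)~interconnection edges propagating literal values between variable and clause gadgets; and (iv)~disjoint $P_4$-components as padding, exploiting $\MEDS(P_4) = 1 = \tfrac{1}{2}MM(P_4)$, to calibrate the overall threshold to precisely $|V(G_\varphi)|/4$. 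Correctness in both directions is then verified gadget-by-gadget: a satisfying assignment assembles a tight global EDS by pasting the two-state and activation choices together with the trivial middle edges of the paddings; conversely, tightness of the $\tfrac{1}{2}MM$ bound forces each gadget to contribute exactly its minimum, propagating to a consistent satisfying assignment.

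The main obstacle I expect is the design of the variable and clause gadgets. The constraint $\MEDS = \tfrac{1}{2}MM$ is very rigid, so the gadgets must on the one hand each achieve this tight local bound under the intended choices, and on the other hand rule out ``cheap'' EDS configurations that exploit interconnection edges to bypass the logical encoding. Establishing both properties simultaneously—combined with verifying that $G_\varphi$ has a perfect matching and that the $P_4$-padding cleanly absorbs any size discrepancy without affecting $\ell$—is where the bulk of the technical work lies.
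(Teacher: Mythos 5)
Your overall strategy is exactly the one the paper uses: establish NP-hardness at the fixed parameter value $\ell=k-\frac12MM=0$ by reducing 3-\SAT to the question of whether a graph with a perfect matching has an edge dominating set of size exactly half that matching. Your structural reformulation is also correct: since a minimum edge dominating set can be taken to be a minimum maximal matching, a graph with a perfect matching has $\MEDS(G)=n/4$ precisely when some maximal matching covers only $n/2$ vertices, i.e.\ when $V(G)=A\,\dot\cup\,B$ with $G[A]$ perfectly matchable and $B$ independent. So the framing is sound and no step you actually carry out is wrong.

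The gap is that the argument stops exactly where the proof has to begin: you never exhibit the variable and clause gadgets, and you yourself flag their design as ``where the bulk of the technical work lies.'' A reduction whose gadgets are specified only by the properties they ought to have is a plan, not a proof; nothing in the proposal certifies that gadgets with the required rigidity (local tightness $\MEDS=\frac14|V|$ together with forcing a consistent assignment through the interconnection edges) exist. For the record, they do and are small: the paper uses, per variable, a triangle $x\bar{x}c$ with a pendant $d$ on $c$ (the only tight local solutions are $\{x,c\}$ or $\{\bar{x},c\}$, encoding the truth value), and per clause an $8$-vertex gadget (triangle $a_1a_2a_3$, pendants $b_i$ on $a_i$, a vertex $t$ adjacent to $b_1,b_2,b_3$ and to a pendant $s$) in which tightness forces exactly one edge $\{t,b_i\}$ into the solution, leaves $a_i$ uncovered, and thereby forces the literal vertex $\lambda_i$ to be a solution endpoint in its variable gadget. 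One further remark: the $P_4$-padding cannot do the calibration you assign to it. Adding a disjoint component $C$ with a perfect matching increases $\MEDS$ by $\MEDS(C)$ and $n/4$ by $|V(C)|/4$, so it preserves the difference $\MEDS(G)-n/4$; a construction that is not already tight cannot be repaired this way, and every gadget must individually satisfy $\MEDS=\frac14|V|$, as the paper's gadgets do.
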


\begin{proof}
    To prove the theorem we show that it is \NP-hard to decide whether a given graph that has a perfect matching has an edge dominating set of size equal to half the size of this matching.
    We will show that this problem is \NP-hard by giving a reduction from  3-\SAT (which is known to be \NP-complete \cite{Cook71}). Let $(X,\mathcal{C})$ be an instance of 3-\SAT with $n$ variables and $m$ clauses. We construct a graph $G$ as follows:
    
    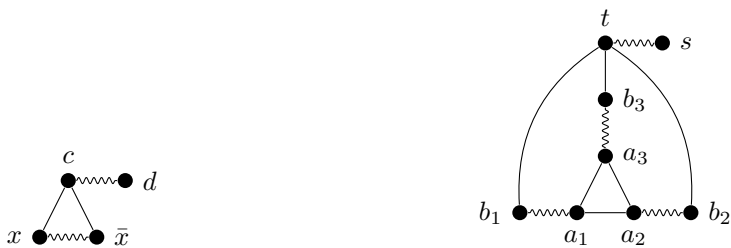
\begin{figure}[t]
    \centering
    \begin{minipage}[t]{0.35\textwidth}
    \centering
    \begin{tikzpicture}[scale=0.75]
        \node[fill, circle, inner sep = 2pt] (x) at (0cm, 0cm) [label=left:$x$] {};
        \node[fill, circle, inner sep = 2pt] (nx) at (1cm,0cm) [label=right:$\bar{x}$] {};
        \node[fill, circle, inner sep = 2pt] (c) at (0.5cm,1cm) [label=above:$c$] {};
        \node[fill, circle, inner sep = 2pt] (d) at (1.5cm,1cm) [label=right:$d$] {};
        \draw[decorate,decoration={snake,amplitude=.4mm,segment length=1mm}] (x) -- (nx);
        \draw (x) -- (c);
        \draw (c) -- (nx);
        \draw[decorate,decoration={snake,amplitude=.4mm,segment length=1mm}] (c) -- (d);
    \end{tikzpicture}
         \subcaption{Gadget for a variable $x \in X$}
         \label{figure::variable_gadget}
    \end{minipage}
    \hspace{3em}
    \begin{minipage}[t]{0.45\textwidth}
    \centering
    \begin{tikzpicture}[scale=0.75]
        \node[fill, circle, inner sep = 2pt] (a1) at (0cm, 0cm) [label=below:$a_1$] {};
        \node[fill, circle, inner sep = 2pt] (a2) at (1cm,0cm) [label=below:$a_2$] {};
        \node[fill, circle, inner sep = 2pt] (a3) at (0.5cm,1cm) [label=right:$a_3$] {};
        \node[fill, circle, inner sep = 2pt] (b1) at (-1cm,0cm) [label=left:$b_1$] {};
        \node[fill, circle, inner sep = 2pt] (b2) at (2cm,0cm) [label=right:$b_2$] {};
        \node[fill, circle, inner sep = 2pt] (b3) at (0.5cm,2cm) [label=right:$b_3$] {};
        \node[fill, circle, inner sep = 2pt] (t) at (0.5cm,3cm) [label=above:$t$] {};
        \node[fill, circle, inner sep = 2pt] (s) at (1.5cm,3cm) [label=right:$s$] {};
        \draw (a1) -- (a2);
        \draw (a1) -- (a3);
        \draw (a2) -- (a3);
        \draw[decorate,decoration={snake,amplitude=.4mm,segment length=1mm}] (a1) -- (b1);
        \draw[decorate,decoration={snake,amplitude=.4mm,segment length=1mm}] (a2) -- (b2);
        \draw[decorate,decoration={snake,amplitude=.4mm,segment length=1mm}] (a3) -- (b3);
        \draw (b1) to [bend left] (t);
        \draw (b2) to [bend right] (t);
        \draw (b3) -- (t);
        \draw[decorate,decoration={snake,amplitude=.4mm,segment length=1mm}] (s) -- (t);
    \end{tikzpicture}
         \subcaption{Gadget for a clause $C=\{ \lambda_1,\lambda_2,\lambda_3\} \in \mathcal{C}$}
         \label{figure::clause_gadget}
    \end{minipage}
    \caption{The wavy edges are the edges that are contained in the perfect matching $M$.}
    \label{figure::gadgets}
    \end{figure}
    
    For each variable $x \in X$ we construct a variable gadget (see Figure \ref{figure::variable_gadget}) consisting of four vertices $x$, $\bar{x}$, $c$, $d$, where the vertices $x$, $\bar{x}$, $c$ form a clique and the vertex $d$ is only adjacent to vertex $c$. (Here $x$ and $\bar{x}$ are the literals of variable $x$.) For every clause $C=\{ \lambda_1,\lambda_2,\lambda_3\} \in \mathcal{C}$ we construct a clause gadget (see Figure \ref{figure::clause_gadget}) consisting of eight vertices $a_1$, $a_2$, $a_3$, $b_1$, $b_2$, $b_3$, $s$, and $t$. The vertices $a_1$, $a_2$, and $a_3$ form a clique, each vertex $a_i$, with $i=1,2,3$, is also adjacent to $b_i$, and vertex $t$ is adjacent to $b_1$, $b_2$, $b_3$, and $s$. For every clause $C=\{ \lambda_1,\lambda_2,\lambda_3\} \in \mathcal{C}$ we make vertex $\lambda_i$ (which is contained in a variable gadget) for $i \in \{1,2,3\}$ adjacent to vertex $a_i$ in the clause gadget $C$.
    
    It is easy to verify that $G$ has a perfect matching, e.g. the edges $\{x, \bar{x}\}$, $\{c,d\}$ in every variable gadget together with the edges $\{a_i, b_i\}$ for $i=1,2,3$, and the edge $\{s,t\}$ in every clause gadget are a perfect matching in $G$. (In Figure \ref{figure::gadgets} the matching edges are the wavy edges.) We denote this maximum matching by $M$. Note that the matching $M$ has size $2n+4m$; two edges in every variable gadget and four edges in every clause gadget.
    
    We will show that $G$ has an edge dominating set of size $n+2m$ if and only if the 3-\SAT instance has a satisfying assignment. There cannot be an edge dominating set of smaller size, because every edge dominating set has at least half the size of any (maximum) matching.

    Suppose first that the 3-\SAT instance $(X,\mathcal{C})$ has a satisfying assignment $s \colon X \rightarrow \{\mathrm{true}, \mathrm{false}\}$. We construct an edge dominating set $F$ of $G$ by selecting edge $\{x,c\}$ in the variable gadget for $x \in X$ to $F$ if $x$ is set to true and by selecting edge $\{\bar{x},c\}$ in the variable gadget $x \in X$ to $F$ if $x$ is set to false. For every clause $C=\{\lambda_1, \lambda_2, \lambda_3\}$ we choose one true literal, w.l.o.g.\ say $\lambda_1$ is true and we add the edges $\{t,b_1\}$ and $\{a_2,a_3\}$ to $F$. The set $F$ has exactly $n+2m$. We have to show that $F$ is an edge dominating set.
    
    Assume for contradiction that there exists an edge $e$ in $G$ that is not dominated by $F$. By construction of $F$, this edge cannot be in a clause or a variable gadget. Hence, this edge must have one endpoint in a clause gadget and one endpoint in a variable gadget. Let $C=\{\lambda_1, \lambda_2,\lambda_2\}$ be the clause that corresponds to the clause gadget that contains one endpoint of $e$. By construction of $G$ and $F$, the endpoint of $e$ is exactly the vertex $a_i$ that is not contained in $V(F)$. This implies that literal $\lambda_i$ is true. Since $\lambda_i$ is the only neighbor of $a_i$ outside the clause gadget, $\lambda_i$ is the other endpoint of $e$. But $\lambda_i$ is contained in $V(F)$ (by construction), hence $e$ is dominated.
    
    Now, suppose $G$ has an edge dominating set $F$ of size $n+2m$. Since the matching $M$ has twice the size of the edge dominating set $F$, it must hold that every edge in $F$ dominates two matching edges and different edges in $F$ dominate different matching edges; otherwise there would be an edge that is not dominated by $F$. The matching edge $\{c,d\}$ in a variable gadget for variable $x \in X$ has as neighbors only the vertices $x$ and $\bar{x}$, therefore either edge $\{c,x\}$ or edge  $\{c,\bar{x}\}$ is contained in the edge dominating set $F$.
    
    To satisfy the instance $(X,\mathcal{C})$ of 3-\SAT let $x$ be set to $\mathrm{true}$ if $\{x,c\} \in F$ and $x$ be set to $\mathrm{false}$ if $\{\bar{x},c\} \in F$. Exactly one of these edges is contained in $F$ (see above). To show that this is a satisfying assignment consider an arbitrary clause $C=\{\lambda_1, \lambda_2, \lambda_3\} \in \mathcal{C}$ and the clause gadget for clause $C$. 
    
    The matching edge $\{s,t\}$ in the clause gadget for clause $C$ has as neighbors only the vertices $b_1$, $b_2$, and $b_3$, hence exactly one of the edges $\{t,b_1\}$, $\{t,b_2\}$, $\{t,b_3\}$ is contained in $F$. Assume w.l.o.g.\ that $\{t,b_1\} \in F$, which dominates the matching edge $\{a_1,b_1\}$. Thus, no other edge of $F$ can also dominate this edge $\{a_1,b_1\}$ and, hence, $a_1\notin V(F)$. Now, however, the edge $\{a_1,\lambda_1\}$ is only dominated by $F$ if $\lambda_1\in V(F)$, which holds only if $\{c,\lambda_1\}$ is contained in $F$. By construction of our assignment in the previous paragraph, this implies that $\lambda_1$ is $\mathrm{true}$ and that clause $C$ is satisfied.
    This completes the proof.
\end{proof}

The graph we construct in the proof of Theorem \ref{theorem::MM} is also a K\H{o}nig graph, i.e., it has minimum vertex cover size equal to maximum matching size. This implies that \EDS for K\H{o}nig graphs is also \NP-hard (even if $k=\frac12MM$).

%%%%%%%%%%%%%%%%%%%%%%%%%%%%%%%%%%%%%%%%%%%%%%%%%%%%%%%%%%%%%%%%%%%%%%%%%%%%%%%%%%%%%%%%%%%%%%%%%%%%%%%%%%%%%%%%%%%%%%%%%%%%%%%%%%%%%%%%%%%%%%%%%%%%%%%%%%%%%%%%%%%%%%%%%%%%%%%%%%%%%%%%%%%%%%%%%%%%%%%%%%%%%%%%%%%%%%%%%%%%%%%%%%%%%%%%%%%%%%%%%%%%%%%%%%%%%%%%%%%%%%%%%%%%%%%%%%%%%%%%%%%%%%%%%%%%%%%%%%%%%%%%%%%%%%%%%%%%%%%%%%%%%%%%%%%%%%%%%%%%%%%%%%%%%%%%%%%%%%%%%%%%%%%%%%%%%%%%%%%%%%%%%%%%%%%%%%%%%%%%%%%%%%%%%%%%%%%%
\section{Proof of Proposition~\ref{proposition::properties}} \label{section::proposition}

Let $H=(V,E)$ be a connected graph, let $W=W(H)$ be the set of free vertices, let $Q=Q(H)$ be the set of extendable vertices, and let $U=U(H)$ be the set of uncovered vertices. 

\begin{lemma}[Proposition \ref{proposition::properties} (\ref{proposition::free})]
 The set $W$ is well defined 
\end{lemma}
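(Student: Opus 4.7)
The plan is to show that the collection of free sets of $H$ is closed under taking unions. Once this is established, $W(H)$ can be defined as the union of all free sets of $H$; by closure under unions this union is itself free, and it obviously contains every free set, so it is the unique maximum free set.

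The key observation is that the defining condition of freeness becomes \emph{weaker} as $Y$ grows, because the requirement ``$V(F)\setminus Y\subseteq V(F')$'' on the inherited min edge dominating set $F'$ of $H-v$ demands coverage of fewer vertices when $Y$ is larger. So let $Y_1,Y_2\subseteq Q(H)$ be free; I will verify that $Y:=Y_1\cup Y_2\subseteq Q(H)$ is free. Pick any $v\in Y$ and any minimum edge dominating set $F$ of $H$. Without loss of generality $v\in Y_1$ (otherwise swap the roles of $Y_1$ and $Y_2$). By freeness of $Y_1$ there is a minimum edge dominating set $F'$ of $H-v$ of size $|F|-1$ with $V(F)\setminus Y_1\subseteq V(F')$. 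Since $Y_1\subseteq Y_1\cup Y_2=Y$, we have
\[
V(F)\setminus Y \;=\; V(F)\setminus(Y_1\cup Y_2) \;\subseteq\; V(F)\setminus Y_1 \;\subseteq\; V(F'),
\]
so $F'$ witnesses the freeness condition for $v$ with respect to $Y$. As $v\in Y$ and $F$ were arbitrary, $Y$ is free.

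Finally, I would note that the empty set is trivially free, so the family of free sets is nonempty; since $V(H)$ is finite, the union of all free subsets of $Q(H)$ is well-defined, is itself free (by the union closure just proved, applied iteratively to a finite family), and is by construction the unique $\subseteq$-maximal, and hence unique maximum-cardinality, free set. This is $W(H)$. There is no serious obstacle here; the only thing to get right is the direction of the monotonicity in $Y$, which is exactly what makes the union argument trivial rather than requiring a more delicate exchange argument between $F'$ for $Y_1$ and $F'$ for $Y_2$.
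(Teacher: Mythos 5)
Your proof is correct and follows essentially the same route as the paper: both establish that the union of two free sets is again free (via the observation that the containment requirement $V(F)\setminus Y\subseteq V(F')$ only weakens as $Y$ grows), and then take $W(H)$ to be the union of all free sets. If anything, your write-up states the set-difference inclusion more carefully than the paper's own one-line argument does.
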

\begin{proof}
  To show that the maximum free set is unique, we show that the union of two free sets is free. This implies that $W$ is the union of all free sets in $H$. Let $Y_1, Y_2 \subseteq Q$ be free sets in $H$, and let $Y= Y_1 \cup Y_2$. Since $Y_1$ and $Y_2$ are free, it hold that for all $y \in Y$ and for all minimum edge dominating set $F$ in $H$ there exists a minimum edge dominating set $F'$ in $H-y$ (of size $|F|-1$) with either $V(F) \subseteq V(F') \setminus Y_1 \subseteq V(F') \setminus Y$ (if $y \in Y_1$) or $V(F) \subseteq V(F') \setminus Y_2 \subseteq V(F') \setminus Y$ (if $y \in Y_2$); thus $Y$ is free.
\end{proof}

\begin{lemma}[Proposition \ref{proposition::properties} (\ref{proposition::neighborhood})]
 The set $U$ is an independent set and no vertex in $Q$ is adjacent to a vertex in $U$; hence $N_H(U) \cap (Q \cap U) = \emptyset$.
\end{lemma}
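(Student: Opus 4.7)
The plan is to prove the two assertions separately, both by short arguments by contradiction, and then observe that together they give $N_H(U)\cap(Q\cup U)=\emptyset$.

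First I would show that $U$ is independent. Suppose for contradiction there were two uncovered vertices $u_1,u_2\in U$ with $\{u_1,u_2\}\in E$. Any minimum edge dominating set $F$ of $H$ must dominate this edge, which forces $F$ to contain an edge incident with $u_1$ or $u_2$; this contradicts the fact that both $u_1$ and $u_2$ lie in $U$, as by definition no minimum edge dominating set contains an edge incident with a vertex of $U$.

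Next I would show $N_H(U)\cap Q=\emptyset$. Suppose for contradiction there were $v\in Q$ and $u\in U$ with $\{v,u\}\in E$. Since $v$ is extendable, $\MEDS(H-v)=\MEDS(H)-1$; fix any minimum edge dominating set $F'$ of $H-v$. Then $F^\ast:=F'\cup\{\{v,u\}\}$ has size $\MEDS(H)$, it dominates every edge of $H-v$ (via $F'$), and it dominates every edge incident with $v$ because the added edge $\{v,u\}$ covers $v$. Hence $F^\ast$ is a minimum edge dominating set of $H$ containing an edge incident with $u$, contradicting $u\in U$.

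Combining the two facts, since $U$ is independent we have $N_H(U)\cap U=\emptyset$, and since no extendable vertex is adjacent to an uncovered vertex we have $N_H(U)\cap Q=\emptyset$; together this yields $N_H(U)\cap(Q\cup U)=\emptyset$. I do not expect a real obstacle here: the argument is just a direct unpacking of the definitions of $U$ and $Q$, and the only mild subtlety is checking in the second step that adding the single edge $\{v,u\}$ to $F'$ indeed turns it into an edge dominating set of all of $H$ (edges inside $H-v$ by $F'$, edges incident with $v$ because $v\in V(F^\ast)$).
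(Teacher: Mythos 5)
Your proof is correct and follows essentially the same two-step argument as the paper: independence of $U$ via domination of an edge between two uncovered vertices, and $N_H(U)\cap Q=\emptyset$ by extending a minimum edge dominating set of $H-v$ with the edge $\{v,u\}$. (Note the intended conclusion is $N_H(U)\cap(Q\cup U)=\emptyset$, as you correctly read it; the "$\cap$" in the statement is a typo.)
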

\begin{proof}
 If there was an edge $\{u,u'\}$ with $u,u'\in U$ then no feasible edge dominating set could avoid being incident with either vertex. If $u\in U$ had a neighbor $q\in Q$ then $\MEDS(H-q)+1=\MEDS(H)$ but then combining a minimum solution for $H-q$ and adding edge $\{u,q\}$ would be a minimum solution for $H$ and be incident with $u$; a contradiction.
\end{proof}

\begin{lemma}[Proposition \ref{proposition::properties} (\ref{proposition::N(U)})]
 If $v \in N_H(U)$ is a vertex that is adjacent to a vertex in $U$, then $v$ is contained in every minimum edge dominating set of $H$
\end{lemma}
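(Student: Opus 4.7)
The plan is to argue directly from the definition of uncovered vertices together with the requirement that every edge must be dominated. Concretely, I would fix an arbitrary minimum edge dominating set $F$ of $H$ and show that $v \in V(F)$; since $F$ was arbitrary, this yields the claim.

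First, since $v \in N_H(U)$, I can pick a vertex $u \in U$ with $\{u,v\} \in E(H)$. The set $F$ must dominate this edge, so at least one of its endpoints lies in $V(F)$. However, $u \in U$ means by definition that no minimum edge dominating set contains an edge incident with $u$, so $u \notin V(F)$. Consequently $v \in V(F)$, which is exactly what we wanted. There is no real obstacle here; the only subtlety is to read the statement correctly as ``$v$ is an endpoint of some edge in $F$'' (i.e.\ $v \in V(F)$), which is how being ``contained in'' an edge set is used throughout the paper.
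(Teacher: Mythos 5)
Your proof is correct and uses exactly the same argument as the paper: since $u\in U$ cannot be an endpoint of any edge of a minimum edge dominating set $F$, the edge $\{u,v\}$ can only be dominated if $v\in V(F)$. Nothing further is needed.
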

\begin{proof}
 Vertices in $U$ are never endpoints of edges in minimum solutions. Thus, to dominate the incident edges all their neighbors must be endpoints of solution edges.
\end{proof}

\begin{lemma}[Proposition \ref{proposition::properties} (\ref{proposition::vertex})]
 It holds for all vertices $v \in V$ that $\MEDS(H)-1 \leq \MEDS(H-v) \leq \MEDS(H)$.
\end{lemma}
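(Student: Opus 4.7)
The plan is to prove the two inequalities separately by small-scale edge replacement arguments; both directions are routine but require handling a couple of degenerate cases.

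For the upper bound $\MEDS(H-v)\leq \MEDS(H)$, I would start from a minimum edge dominating set $F$ of $H$ and produce an edge dominating set $F'$ of $H-v$ of size at most $|F|$. Concretely, for each edge $f=\{v,u\}\in F\cap \delta(v)$, I would do the following: if $u$ has some neighbor $w\neq v$ in $H$, replace $f$ by the edge $\{u,w\}$ (which lies in $H-v$); otherwise, simply drop $f$. Then I would verify correctness: any edge of $H-v$ originally dominated by some $f'\in F$ with $v\notin f'$ is still dominated by $f'\in F'$; any edge $e$ of $H-v$ dominated by some $f=\{v,u\}\in F$ must, since $v\notin e$, share the endpoint $u$ with $f$, so either $u$ is still covered by the replacement edge (and thus dominates $e$), or $u$ has no incident edges in $H-v$ at all (the drop case). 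Hence $F'$ dominates every edge of $H-v$ and has size at most $|F|=\MEDS(H)$.

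For the lower bound $\MEDS(H)-1\leq \MEDS(H-v)$, I would start from a minimum edge dominating set $F'$ of $H-v$ and extend it to an edge dominating set of $H$ with at most one extra edge. If $v$ is isolated in $H$, then $F'$ already dominates all edges of $H$. If $v$ has some neighbor in $H$ but every neighbor of $v$ in $H$ already lies in $V(F')$, then every edge $\{v,w\}\in\delta_H(v)$ is dominated by $F'$ through $w$, so again $F'$ suffices. Otherwise, pick any neighbor $u$ of $v$ in $H$ and set $F=F'\cup\{\{v,u\}\}$; then $v\in V(F)$ dominates every edge incident with $v$, and $F'\subseteq F$ continues to dominate all edges of $H-v$. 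In every case $|F|\leq |F'|+1$, giving $\MEDS(H)\leq \MEDS(H-v)+1$.

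Neither direction presents a real obstacle; the only subtlety is the degenerate case in the upper bound where a solution edge $\{v,u\}$ has $u$ of degree one (so there is no replacement neighbor $w$), which must be handled by dropping the edge outright rather than attempting a replacement. The whole argument is purely combinatorial and does not rely on any other item of Proposition~\ref{proposition::properties}.
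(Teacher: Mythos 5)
Your proof is correct and follows essentially the same replacement/extension argument as the paper: for the upper bound you reroute or drop the solution edges incident with $v$, and for the lower bound you add at most one edge incident with $v$ to a minimum solution of $H-v$. If anything, your treatment of the degenerate cases (isolated $v$, degree-one $u$) is slightly more careful than the paper's, whose replacement step refers to an edge of $\delta_H(v)\setminus\{f\}$ where it clearly intends an edge of $H-v$ incident with the other endpoint $u$.
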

\begin{proof}
 Let $F_v$ be an edge dominating set of $H-v$ and let $u \in V$ such that $\{u,v\} \in E$. Clearly, $F_v \cup \{\{u,v\}\}$ is an edge dominating set of $H$; hence $\MEDS(H) - 1 \leq |F_v| = \MEDS(H-v)$. Now, let $F$ be a minimum edge dominating set of $H$. If $v$ is not incident with an edge in $F$ then $F$ is also an edge dominating set in $H$; hence $\MEDS(H-v) \leq |F| = \MEDS(H)$. If $v$ is incident with an edge $f=\{v,u\}$ in $F$ then replace $f$ either by an edge in $\delta_H(v) \setminus \{f\}$ or delete $f$ when $\delta_H(v) \setminus \{f\}$ is empty; hence $\MEDS(H-v) \leq |F| = \MEDS(H)$.
\end{proof}

\begin{lemma}[Proposition \ref{proposition::properties} (\ref{proposition::monoton})]
  Let $Y \subseteq V$. It holds for all subsets $X \subseteq Y$ that $\MEDS(H-X) - |Y \setminus X| \leq\MEDS(H-Y) \leq \MEDS(H-X)$, and that $\cost(X) \leq \cost(Y)$.
\end{lemma}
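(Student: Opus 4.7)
The plan is to derive both inequalities by a short induction on $|Y \setminus X|$, using only the per-vertex bound already established in Proposition~\ref{proposition::properties}~(\ref{proposition::vertex}), namely $\MEDS(G)-1 \leq \MEDS(G-v) \leq \MEDS(G)$ for every vertex $v$ of any graph $G$. The cost inequality is then an immediate algebraic consequence of the left inequality, so no separate argument is needed there.

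First I would observe that $H-Y$ can be obtained from $H-X$ by successively deleting the $|Y\setminus X|$ vertices of $Y\setminus X$, one at a time. Apply the upper bound of (\ref{proposition::vertex}) to each such deletion: after each vertex removal $\MEDS$ can only stay the same or drop. Chaining these $|Y\setminus X|$ inequalities yields $\MEDS(H-Y) \leq \MEDS(H-X)$. Dually, applying the lower bound of (\ref{proposition::vertex}) at each step shows that $\MEDS$ decreases by at most one per deletion, so $\MEDS(H-X) - |Y\setminus X| \leq \MEDS(H-Y)$. This handles both inequalities simultaneously and is the only substantive step; I expect no obstacle, since Proposition~\ref{proposition::properties}~(\ref{proposition::vertex}) has already been proved and the induction is trivial (the base case $Y=X$ is immediate).

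For the cost inequality, I would simply unfold the definition:
\[
\cost(Y) - \cost(X) = \bigl(|Y| + \MEDS(H-Y)\bigr) - \bigl(|X| + \MEDS(H-X)\bigr) = |Y\setminus X| - \bigl(\MEDS(H-X) - \MEDS(H-Y)\bigr).
\]
The inequality $\MEDS(H-X) - \MEDS(H-Y) \leq |Y\setminus X|$ just established shows that the right-hand side is nonnegative, hence $\cost(X) \leq \cost(Y)$. Since every step is routine and leans only on a previously proved item of the same proposition, there is no real obstacle; the only point to be careful about is using the right direction of the per-vertex bound for each of the two inequalities.
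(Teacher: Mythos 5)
Your proposal is correct and matches the paper's own proof essentially verbatim: the paper also chains the per-vertex bound of Proposition~\ref{proposition::properties}~(\ref{proposition::vertex}) over the vertices of $Y\setminus X$ one at a time to get both inequalities, and then derives $\cost(X)\leq\cost(Y)$ by the same algebraic unfolding of the definition. No issues.
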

\begin{proof}
  It follows from (\ref{proposition::vertex}) that $\MEDS(H-X) -|Y \setminus X| \leq \MEDS(H-Y) \leq \MEDS(H-X)$: Let $Y=\{y_1,y_2,\ldots, y_p\}$ and let $X=\{y_1,y_2,\ldots,y_q\}$ with $q \leq p$. It holds that $\MEDS(H-Y) \leq \MEDS(H-(Y \setminus \{y_p\})) \leq \MEDS(H-(Y \setminus \{y_{p-1},y_p\})) \leq \ldots \leq \MEDS(H-X)$. Furthermore, $\MEDS(H-Y) \geq \MEDS(H-(Y \setminus \{y_p\})) - 1 \geq \MEDS(H-(Y \setminus \{y_{p-1},y_p\})) - 2 \geq \ldots \geq \MEDS(H-X) - |Y \setminus X|$. Now, $\cost(Y) = \MEDS(H-Y) + |Y| - \MEDS(H) \geq \MEDS(H-X) - |Y \setminus X| + |Y| - \MEDS(H) = \MEDS(H-X) + |X| - \MEDS(H)= \cost(X)$.
\end{proof}

\begin{lemma}[Proposition \ref{proposition::properties} (\ref{proposition::N(W)IN})]
 Let $F$ be a minimum edge dominating set in $H$. There exists a minimum edge dominating set $F'$ in $H$ with $(V(F) \cup N_H(W) ) \setminus W \subseteq V(F')$.
\end{lemma}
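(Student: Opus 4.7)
The plan is to build $F'$ from $F$ by iteratively ``swapping in'' neighbors of free vertices that $F$ may miss, while exploiting the freeness property of $W$ to preserve the rest of the cover. Concretely, I would set $F_0 = F$ and, as long as there is some vertex $v \in (N_H(W) \setminus W) \setminus V(F_i)$, produce a new minimum edge dominating set $F_{i+1}$ of $H$ with $V(F_i) \setminus W \subseteq V(F_{i+1})$ and $v \in V(F_{i+1})$. Iterating until no such $v$ exists yields the desired $F'$.

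For a single swap, suppose $v \in (N_H(W)\setminus W) \setminus V(F_i)$. Pick any $w \in W$ with $\{v,w\} \in E$; such a $w$ exists by $v \in N_H(W)$. Since $w$ is free and $F_i$ is a minimum edge dominating set of $H$, the definition of $W$ (applied with $Y=W$) yields a minimum edge dominating set $F^\star_i$ in $H-w$ of size $|F_i|-1$ with $V(F_i) \setminus W \subseteq V(F^\star_i)$. Define $F_{i+1} = F^\star_i \cup \{\{v,w\}\}$. This is an edge set of $H$: it dominates all edges of $H-w$ via $F^\star_i$, and every edge incident with $w$ in $H$ shares the endpoint $w$ with $\{v,w\} \in F_{i+1}$. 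Its size is $|F_i|$, hence $F_{i+1}$ is a minimum edge dominating set of $H$. Moreover $V(F_{i+1}) = V(F^\star_i) \cup \{v,w\} \supseteq (V(F_i) \setminus W) \cup \{v\}$, so the new set still covers $V(F) \setminus W$ and in addition covers $v$.

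Termination is the easy part: the set $(V(F_i) \setminus W) \cap (N_H(W) \setminus W)$ strictly grows with each iteration, so after finitely many steps we reach an index $\ell$ for which no uncovered $v \in N_H(W)\setminus W$ remains. Setting $F' = F_\ell$, we have $V(F') \supseteq N_H(W)\setminus W$ by the stopping condition and $V(F') \supseteq V(F) \setminus W$ since this inclusion is preserved at every step, giving exactly $(V(F) \cup N_H(W)) \setminus W \subseteq V(F')$ with $F'$ a minimum edge dominating set of $H$.

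The main subtlety, and the only place where we really use that $W$ is not just a single free vertex but the maximum free \emph{set}, is ensuring that the inclusion $V(F_i) \setminus W \subseteq V(F^\star_i)$ is given to us by the freeness definition in its ``for all minimum edge dominating sets $F$'' form; this is what lets us chain swaps without losing previously acquired coverage. Everything else is bookkeeping, so I do not expect real obstacles beyond keeping track of which vertices land inside $W$ versus outside.
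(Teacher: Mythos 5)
Your proof is correct and is essentially the paper's argument: the single swap step (use freeness of $w\in W$ to obtain a minimum edge dominating set of $H-w$ preserving $V(F_i)\setminus W$, then add $\{v,w\}$ to reclaim a missing $v\in N_H(W)$) is exactly the one used there. The only difference is cosmetic — you iterate with an explicit progress measure, while the paper picks $F'$ maximizing $|V(F')\cap N_H(W)|$ and derives a contradiction from one such swap.
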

\begin{proof}
 Let $F$ be an arbitrary minimum edge dominating set in $H$, and let $F'$ be a minimum edge dominating set in $H$ with $V(F) \setminus W \subseteq V(F')$ and $|V(F') \cap N_H(W)|$ maximal (under the minimum edge dominating sets that fulfill $V(F) \setminus W \subseteq V(F')$). Assume for contradiction that $V(F') \cap N_H(W) \neq N_H(W)$. Let $v \in N_H(W) \setminus V(F')$ be a vertex in the neighborhood of $W$ that is not incident with an edge in $F'$, and let $w \in N_H(v) \cap W$ be a free vertex that is adjacent to $v$. Since vertex $w$ is free, there exists a minimum edge dominating set $\widetilde{F}$ in $H-w$ (of size $|F'|-1$) with $V(F') \setminus W \subseteq V(\widetilde{F})$. Now, we can add the edge $\{v,w\}$ to the minimum edge dominating set $\widetilde{F}$ to obtain a minimum edge dominating set $\hat{F} = \widetilde{F} \cup \{\{v,w\}\}$ of $H$. It holds that $V(F) \setminus W \subseteq V(F') \setminus W \subseteq V(\widetilde{F}) \setminus W \subseteq V(\hat{F}) \setminus W$. Furthermore, the set $V(F') \cap N_H(W)$ is a proper subset of $V(\hat{F}) \cap N_H(W)$, because $V(\hat{F})$ contains all vertices in $V(F') \setminus W$ and the vertex $v \notin W$ that is not contained in $V(F')$. This contradicts the choice of $F'$ and proves the statement.
\end{proof}

\begin{lemma}[Proposition \ref{proposition::properties} (\ref{proposition::Q-W})]
 Every set that consists of one vertex $v \in Q \setminus W$ is strongly beneficial. Furthermore, these are the only beneficial sets of size one.
\end{lemma}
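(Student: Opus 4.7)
The plan is to unwind the definitions carefully; both claims reduce to one-line observations once the singleton case of the various conditions is spelled out.

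First I would handle the characterization of beneficial singletons. By definition, a beneficial set must be contained in $V\setminus W$, so at least we need $v\notin W$. The beneficial condition, instantiated for $B=\{v\}$ with the only proper subset $\widetilde{B}=\emptyset$, becomes $\MEDS(H-v)<\MEDS(H-\emptyset)=\MEDS(H)$. Combining this with Proposition~\ref{proposition::properties}~(\ref{proposition::vertex}), which gives $\MEDS(H-v)\geq\MEDS(H)-1$, forces $\MEDS(H-v)=\MEDS(H)-1$, i.e.\ $v\in Q$. Thus $\{v\}$ is beneficial iff $v\in Q\setminus W$, which proves the second part of the statement.

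Next, for the first part, I would argue that whenever $\{v\}$ is beneficial it is automatically strongly beneficial. A strongly beneficial set must satisfy $\cost(B)<\sum_{i=1}^{h}\cost(B_i)$ for every cover $B_1,\ldots,B_h\subsetneq B$ of $B$ by proper subsets. But the only proper subset of a singleton $\{v\}$ is the empty set, which does not cover $\{v\}$. Hence there are no covers of $\{v\}$ by proper subsets at all, so the strong beneficial condition is vacuously satisfied.

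There is no real obstacle here; the entire content is a clean reading of the definitions together with the easy fact from Proposition~\ref{proposition::properties}~(\ref{proposition::vertex}). The only thing to make sure of is that the convention for ``cover by proper subsets'' indeed excludes the degenerate case of the empty collection, but this is implicit in the definition of strongly beneficial since one requires $\bigcup_{i=1}^{h}B_i=B$.
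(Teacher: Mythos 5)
Your proposal is correct and follows essentially the same route as the paper: the equivalence of "beneficial singleton" with $v\in Q\setminus W$ via Proposition~\ref{proposition::properties}~(\ref{proposition::vertex}), and the vacuous satisfaction of the strong-beneficiality condition because the empty set is the only proper subset of a singleton and cannot cover it. No gaps.
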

\begin{proof}
 Let $v \in Q \setminus W$ be a vertex that is extendable and not free, and let $B=\{v\}$. We show that $B$ is strongly beneficial. Since $\MEDS(H-v)+1 = \MEDS(H)$, and since for every set $\widetilde{B} \subsetneq B$ it holds that $\MEDS(H-\widetilde{B})=\MEDS(H)$ it holds that $B$ is beneficial. ($\widetilde{B}=\emptyset$ is the only proper subset of $B$.) Assume for contradiction that $B$ is not strongly beneficial. This would imply that there exists a cover $B_1, B_2, \ldots, B_h \subsetneq B$, but the only proper subset of $B=\{v\}$ is the empty set. Thus, $B$ is strongly beneficial.
 
 Assume there exists a beneficial set $B=\{v\}$ with $v \notin Q \setminus W$. Note that beneficial sets are disjoint from $W$ (definition), thus $v \in V \setminus Q$. Since $B$ is beneficial it holds that $\MEDS(H-B) < \MEDS(H)$. Together with Proposition~\ref{proposition::properties}~(\ref{proposition::vertex}) it follows that $\MEDS(H-B)=\MEDS(H)-1$; thus $v \in Q$ which is a contradiction.
\end{proof}
 
\begin{lemma}[Proposition \ref{proposition::properties} (\ref{proposition::disjointQ})]
 If $B$ is a strongly beneficial set of size at least two then $B$ contains no extendable vertex; hence $B \cap Q=\emptyset$. 
\end{lemma}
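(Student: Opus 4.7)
The plan is to derive a contradiction with the ``strongly beneficial'' property by exhibiting an explicit cover of $B$ by two proper subsets whose costs sum to something at most $\cost(B)$. Concretely, I assume for contradiction that $B$ is strongly beneficial with $|B| \geq 2$ and that some $v \in B$ is extendable, i.e.\ $v \in Q$. The natural candidate cover is $\{v\}$ together with $B \setminus \{v\}$.

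First I verify that this is indeed a cover by proper subsets in the sense required by Definition~\ref{definition:basicterms}. Both sets are nonempty (using $|B|\geq 2$), both are proper subsets of $B$ (again using $|B|\geq 2$ for $\{v\}$, and $v\in B$ for $B\setminus\{v\}$), and their union is $B$. Then I compute the two costs: since $v$ is extendable, $\MEDS(H-v)=\MEDS(H)-1$ and hence $\cost(\{v\})=1+\MEDS(H-v)-\MEDS(H)=0$. Using the monotonicity of $\cost$ established in Proposition~\ref{proposition::properties}~(\ref{proposition::monoton}) with $X=B\setminus\{v\}\subseteq Y=B$, I get $\cost(B\setminus\{v\})\leq \cost(B)$. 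Adding the two yields
\[
\cost(\{v\}) + \cost(B\setminus\{v\}) \;=\; \cost(B\setminus\{v\}) \;\leq\; \cost(B),
\]
which directly contradicts the definition of strongly beneficial (which demands strict inequality $\cost(B)<\sum_{i=1}^{h}\cost(B_i)$ for every cover of $B$ by proper subsets). This contradiction forces $B \cap Q = \emptyset$.

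I expect no serious obstacle here: the only point that needs a moment's care is confirming that $\{v\}$ and $B\setminus\{v\}$ are both legitimate proper subsets for the cover (which is exactly why the hypothesis $|B|\geq 2$ is needed; without it, $B\setminus\{v\}$ would be empty and not yield a valid cover). Everything else is a direct unfolding of the definitions of extendable vertex, $\cost$, strongly beneficial, together with the already-proved monotonicity lemma.
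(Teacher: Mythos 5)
Your proof is correct and follows essentially the same route as the paper's: the cover $\{v\},\,B\setminus\{v\}$, the observation that $\cost(\{v\})=0$ for an extendable $v$, and monotonicity of $\cost$ to get $\cost(B\setminus\{v\})\leq\cost(B)$, contradicting strong beneficiality. Your extra care in checking that both cover sets are proper (and nonempty) subsets of $B$ is exactly where the hypothesis $|B|\geq 2$ enters, just as in the paper.
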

\begin{proof}
 If there would exists a vertex $v \in B$ that is extendable, but not free, then $\cost(\{v\})=\MEDS(H-v) + |\{v\}| - \MEDS(H)=0$. Furthermore, it follows from Proposition~\ref{proposition::properties}~(\ref{proposition::monoton}) that $\cost(B\setminus \{v\}) \leq \cost(B)$. Now, $\{v\}, B \setminus \{v\} \subsetneq B$ is a cover of $B$ and it holds that $\cost(\{v\}) + \cost(B \setminus \{v\}) \leq 0 + \cost(B) = \cost(B)$. This implies that $B$ is not strongly beneficial which is a contradiction.
\end{proof}

\begin{lemma}[Proposition \ref{proposition::properties} (\ref{proposition::BFS})]
  If there exists a set $Y \subseteq V \setminus W$ with $\MEDS(H-Y) < \MEDS(H)$, then there exists a beneficial set $B \subseteq Y$ with $\MEDS(H-B) = \MEDS(H-Y)$.
\end{lemma}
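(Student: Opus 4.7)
The plan is to obtain $B$ by taking an inclusion-minimal subset of $Y$ that still achieves the same drop in edge dominating set size, and then to verify that this minimality is exactly the property required by the definition of beneficial.

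First I would set
\[
B := \text{any inclusion-minimal subset of } Y \text{ with } \MEDS(H-B) = \MEDS(H-Y).
\]
Such a set exists because $Y$ itself is a candidate, and we can greedily remove vertices as long as the value $\MEDS(H-\cdot)$ stays equal to $\MEDS(H-Y)$. By construction $B \subseteq Y \subseteq V \setminus W$, which matches the domain restriction in the definition of beneficial sets, and we have $\MEDS(H-B) = \MEDS(H-Y)$ as required.

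Next I would verify that $B$ is beneficial. Let $\widetilde{B} \subsetneq B$ be any proper subset. By minimality of $B$, we cannot have $\MEDS(H-\widetilde{B}) = \MEDS(H-Y) = \MEDS(H-B)$. On the other hand, Proposition~\ref{proposition::properties}~(\ref{proposition::monoton}) applied to $\widetilde{B} \subseteq B$ gives $\MEDS(H-\widetilde{B}) \geq \MEDS(H-B)$. Combining these two facts yields the strict inequality $\MEDS(H-\widetilde{B}) > \MEDS(H-B)$, which is precisely the equivalent formulation of beneficial given in Definition~\ref{definition:basicterms}. In particular, applied to $\widetilde{B}=\emptyset$ this also confirms $\MEDS(H-B) < \MEDS(H)$, which is consistent with the hypothesis $\MEDS(H-Y) < \MEDS(H)$.

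There is essentially no obstacle here: the statement is really just saying that a minimal witness of a strict decrease in $\MEDS(H-\cdot)$ is automatically beneficial, and the monotonicity proposition does all the work. The only mild care is to invoke Proposition~\ref{proposition::properties}~(\ref{proposition::monoton}) (which has already been proved in the excerpt) rather than reproving $\MEDS(H-\widetilde{B}) \geq \MEDS(H-B)$ from scratch, and to note that $B \cap W = \emptyset$ is inherited from $Y \cap W = \emptyset$ so that the domain condition $B \subseteq V \setminus W$ in the definition of beneficial is met.
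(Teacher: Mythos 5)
Your proof is correct and follows essentially the same route as the paper: both take an inclusion-minimal subset $B\subseteq Y$ with $\MEDS(H-B)=\MEDS(H-Y)$ and derive that it is beneficial from minimality together with the monotonicity in Proposition~\ref{proposition::properties}~(\ref{proposition::monoton}). The only cosmetic difference is that the paper first splits into the cases ``$Y$ beneficial'' and ``$Y$ not beneficial,'' which your argument avoids.
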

\begin{proof}
 If $Y$ is beneficial then $B=Y$ is a beneficial set with $\MEDS(H-Y)=\MEDS(H-B)$. Thus, assume that $Y$ is not beneficial. Hence, there exists a set $Y' \subsetneq Y$ with $\MEDS(H-Y)\geq \MEDS(H-Y')$ (definition of beneficial). Pick $B \subsetneq Y$ minimal with $\MEDS(H-B)\leq \MEDS(H-Y)$. This implies that $B$ is beneficial: Otherwise there would exists a set $Y' \subsetneq B$ with $\MEDS(H-Y')\leq\MEDS(H-B)$ which contradicts the choice of $B$. 
\end{proof}
 
\begin{lemma}[Proposition \ref{proposition::properties} (\ref{proposition::BFScost1})]
 If there exists a set $Y \subseteq V \setminus W$ with $\MEDS(H-Y) < \MEDS(H)$, then there exists a beneficial set $B \subseteq Y$ with $\MEDS(H-B) +1 = \MEDS(H)$. Furthermore, $B$ is strongly beneficial. 
\end{lemma}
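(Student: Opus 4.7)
My plan is to produce the set $B$ by taking a set-inclusion-minimal subset of $Y$ whose removal already drops the edge dominating number below $\MEDS(H)$, and then read off both properties from that minimality.

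Concretely, since $\MEDS(H-Y) < \MEDS(H)$ the family $\mathcal{B} = \{B \subseteq Y \mid \MEDS(H-B) < \MEDS(H)\}$ is nonempty. I pick $B \in \mathcal{B}$ of minimum cardinality. Note that $B \ne \emptyset$ since $\MEDS(H-\emptyset)=\MEDS(H)$, and $B \subseteq Y \subseteq V \setminus W$ as required for beneficial sets. By Proposition~\ref{proposition::properties}~(\ref{proposition::vertex}), applied to removing any single vertex $v \in B$ from the graph $H-(B\setminus\{v\})$, we have
\[
\MEDS(H-B) \ge \MEDS(H-(B\setminus\{v\})) - 1 = \MEDS(H) - 1,
\]
where the equality uses $B\setminus\{v\} \notin \mathcal{B}$. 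Combined with $\MEDS(H-B) < \MEDS(H)$ this gives $\MEDS(H-B) = \MEDS(H)-1$, hence $\cost(B) = |B|-1$. Minimality also gives $\MEDS(H-\widetilde{B}) = \MEDS(H) > \MEDS(H-B)$ for every $\widetilde{B} \subsetneq B$, so $B$ is beneficial.

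It remains to verify that $B$ is strongly beneficial. Let $B_1,\ldots,B_h \subsetneq B$ be any cover of $B$ by proper subsets. For each $i$, since $B_i \subsetneq B$ and $B$ is minimum in $\mathcal{B}$, we have $\MEDS(H-B_i) = \MEDS(H)$, so
\[
\cost(B_i) = |B_i| + \MEDS(H-B_i) - \MEDS(H) = |B_i|.
\]
Because $B_1 \cup \cdots \cup B_h = B$ and $B \neq \emptyset$ cannot be covered by a single proper subset, we have $\sum_{i=1}^h |B_i| \ge |B|$, and hence
\[
\sum_{i=1}^h \cost(B_i) = \sum_{i=1}^h |B_i| \ge |B| > |B|-1 = \cost(B).
\]
Thus $B$ meets the strongly-beneficial condition.

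There is no real obstacle here; the only subtlety is choosing $B$ minimal (not just beneficial as in part~(\ref{proposition::BFS})), which simultaneously forces the drop in $\MEDS$ to be exactly one and makes every proper subcover consist of ``cost $=$ size'' sets, from which the strong-beneficial inequality falls out by one line of arithmetic.
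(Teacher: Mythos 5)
Your proof is correct and follows essentially the same route as the paper's: choose a minimal (for you, minimum-cardinality; for the paper, inclusion-minimal) subset $B\subseteq Y$ with $\MEDS(H-B)<\MEDS(H)$, deduce $\MEDS(H-B)=\MEDS(H)-1$ from Proposition~\ref{proposition::properties}~(\ref{proposition::vertex}), and conclude beneficiality and then strong beneficiality from $\cost(B)=|B|-1$ versus $\cost(B_i)=|B_i|$ for the proper subsets in any cover. The only cosmetic difference is that you verify the cover inequality directly from minimality, whereas the paper argues by contradiction via the beneficial property (and treats $|B|=1$ separately); both are sound.
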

\begin{proof}
 Let $B \subseteq Y$ be minimal such that $\MEDS(H- B) < \MEDS(H)$. Thus, for every $Z \subsetneq B$ it holds $\MEDS(H-Z)=\MEDS(H)$. First, we proof that $B$ is beneficial. If $B$ is not beneficial, then there exists a set $\widetilde{B} \subsetneq B$ such that $\MEDS(H-B) \geq \MEDS(H-\widetilde{B})$. This contradicts the choice of $B$, because $\widetilde{B} \subsetneq B$ and $\MEDS(H-\widetilde{B}) < \MEDS(H)$; hence $B$ is beneficial. 
 
 Next, we show that $\MEDS(H-B)+1=\MEDS(H)$. Let $b \in B$ and $B'=B \setminus \{b\}$. It holds that $\MEDS(H-B')=\MEDS(H)$ (choice of $B$). It follows from Proposition~\ref{proposition::properties}~(\ref{proposition::vertex}) that $\MEDS(H) -1 = \MEDS(H-B') -1 \leq \MEDS(H-B' -b) = \MEDS(H-B) \leq \MEDS(H-B') = \MEDS(H)$. Since $\MEDS(H-B) < \MEDS(H)$ it follows that $\MEDS(H-B)+1 = \MEDS(H)$. 
 
 Finally, we show that $B$ is strongly beneficial. If $B$ has size one, then $B=\{v\}$ with $v \in Q \setminus W$, and it follows that $B$ is strongly beneficial Proposition~\ref{proposition::properties}~(\ref{proposition::Q-W}). 
 Now, assume for contradiction that $B$ is not strongly beneficial. Hence, there exists a cover $B_1,B_2,\ldots,B_h \subsetneq B$ of $B$ with $\cost(B) \geq \sum_{i=1}^h \cost(B_i)$. 
 It holds that $\cost(B)=\MEDS(H-B)+|B|-\MEDS(H)=|B|-1$ (because $\MEDS(H-B)+1=\MEDS(H)$), and it always holds that $\cost(B_i) \leq |B_i|$ (definition of $\cost$). This implies that there exists at least one $i^* \in [h]$ with $\cost(B_{i^*})<|B_{i^*}|$: otherwise
 \[
  |B|-1 = \cost(B) \geq \sum_{i=1}^h \cost(B_i) = \sum_{i=1}^h |B_i| \geq |B|.
 \]
 Now, $|B|-\cost(B) = 1 \leq |B_{i^*}| - \cost(B_{i^*})$. This is a contradiction to $B$ beneficial, because $B_{i^*} \subsetneq B$ is a proper subset of $B$ and it holds that $|B|-\cost(B) \leq |B_{i^*}| - \cost(B_{i^*})$. Thus, $B$ is strongly beneficial.
\end{proof}

\begin{lemma}[Proposition \ref{proposition::properties} (\ref{proposition::strongly})]
 If $H$ has a beneficial set $B$, then $H$ has also a strongly beneficial set $B' \subseteq B$.
\end{lemma}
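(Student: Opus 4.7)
The plan is to induct on $|B|$, using the earlier items of Proposition~\ref{proposition::properties} as black boxes to peel off beneficial subsets of strictly smaller size until we land on a strongly beneficial one.

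For the base case $|B|=1$, the set $B=\{v\}$ is beneficial, so by item~(\ref{proposition::Q-W}) we have $v\in Q\setminus W$ and $B$ itself is strongly beneficial. For the inductive step, assume the claim for all beneficial sets of size strictly less than $|B|\geq 2$. If $B$ is strongly beneficial, take $B'=B$. Otherwise, by definition there exists a cover $B_1,\ldots,B_h\subsetneq B$ of $B$ with $\cost(B)\geq\sum_{i=1}^h\cost(B_i)$, and each $B_i$ is a strict subset of $B$, so $|B_i|<|B|$.

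The key step is to extract from this cover a proper subset $B_i$ of $B$ on which $\MEDS(H-B_i)<\MEDS(H)$; then item~(\ref{proposition::BFS}) (applied to $Y:=B_i\subseteq B\subseteq V\setminus W$) yields a beneficial set $B''\subseteq B_i$, which has $|B''|\leq |B_i|<|B|$ and therefore contains a strongly beneficial subset by the induction hypothesis, finishing the proof. To produce such a $B_i$, I argue by contradiction: suppose $\cost(B_i)=|B_i|$ for every $i$ (recall $\cost(B_i)\leq |B_i|$ always, since removing vertices cannot increase $\MEDS$, using item~(\ref{proposition::vertex}) iteratively). Because $B_1,\ldots,B_h$ covers $B$, we get $\sum_i\cost(B_i)=\sum_i|B_i|\geq |B|$. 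Combined with the cover inequality $\cost(B)\geq\sum_i\cost(B_i)$, this yields $\cost(B)\geq |B|$, contradicting the fact that $B$ is beneficial (which forces $|B|-\cost(B)>0$ by taking $\widetilde{B}=\emptyset$ in the definition). Hence some $B_i$ satisfies $\cost(B_i)<|B_i|$, i.e.\ $\MEDS(H-B_i)<\MEDS(H)$, as needed.

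The main obstacle is precisely this counting argument: one must be certain that at least one of the cover pieces still ``saves'' something relative to its size, since only such pieces are candidates for housing a smaller beneficial (hence, inductively, strongly beneficial) subset. Everything else reduces cleanly to prior items of the proposition, so no further case analysis is required.
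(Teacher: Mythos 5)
Your proof is correct, but it takes a genuinely different route from the paper's. The paper disposes of this statement in one line by invoking Proposition~\ref{proposition::properties}~(\ref{proposition::BFScost1}) with $Y:=B$ (admissible since $B$ beneficial gives $B\subseteq V\setminus W$ and $\MEDS(H-B)<\MEDS(H)$); that item already delivers a strongly beneficial $B'\subseteq B$, and its own proof proceeds by choosing $B'$ \emph{minimal} with $\MEDS(H-B')<\MEDS(H)$, noting that minimality forces $\cost(B')=|B'|-1$, and then running essentially the same counting step you use (some cover piece must satisfy $\cost(B_{i^*})<|B_{i^*}|$) to contradict beneficiality of $B'$ directly. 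You instead induct on $|B|$: the counting argument only serves to locate a proper piece $B_i$ with positive savings, after which item~(\ref{proposition::BFS}) and the induction hypothesis finish the job; the base case via item~(\ref{proposition::Q-W}) is handled correctly, and the inequalities $0\le\cost(B_i)\le|B_i|$ that you need follow from item~(\ref{proposition::monoton}) as you say. Both arguments are sound and non-circular. The paper's route buys the extra conclusion that the strongly beneficial subset can be taken with $\MEDS(H-B')+1=\MEDS(H)$, which is reused elsewhere (e.g.\ in the proof of Lemma~\ref{lemma::CP_R}); your induction certifies only existence, which is all this particular statement requires, and in exchange avoids relying on the stronger item~(\ref{proposition::BFScost1}) altogether.
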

\begin{proof}
 This follows from Proposition~\ref{proposition::properties}~(\ref{proposition::BFScost1}).
\end{proof}

\begin{lemma}[Proposition \ref{proposition::properties} (\ref{proposition::BFSedge})]
 Let $F$ be a minimum edge dominating set in $H$. If $e=\{x,y\}$ is an edge in $F$ with $x, y \notin Q$, then $\{x,y\}$ is a strongly beneficial set.
\end{lemma}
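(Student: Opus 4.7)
The plan is to verify each clause of the definition of strongly beneficial directly, using only the fact that $x,y\notin Q$ and that $\{x,y\}$ is an edge in the minimum edge dominating set $F$. Since $W(H)\subseteq Q(H)$, the assumption $x,y\notin Q$ immediately yields $B:=\{x,y\}\subseteq V\setminus W(H)$, which is the containment condition implicit in being (strongly) beneficial.

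The first real step is to bound $\cost(B)$. I would remove the edge $\{x,y\}$ from $F$ and observe that $F\setminus\{\{x,y\}\}$ is still an edge dominating set of $H-B$: any edge of $H-B$ has no endpoint in $B$, so whichever edge of $F$ dominated it cannot have been $\{x,y\}$. Hence $\MEDS(H-B)\le |F|-1=\MEDS(H)-1$, which gives $\cost(B)\le 1$. The lower bound $\cost(B)\ge 0$ is automatic from Proposition~\ref{proposition::properties}~(\ref{proposition::monoton}) applied to $\emptyset\subseteq B$.

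Next I would check that $B$ is beneficial. The proper subsets of $B$ are $\emptyset$, $\{x\}$, $\{y\}$, and for each of them $\MEDS(H-\widetilde B)=\MEDS(H)$: for $\widetilde B=\emptyset$ this is trivial, and for $\widetilde B\in\{\{x\},\{y\}\}$ it follows because $x,y\notin Q(H)$. Combined with $\MEDS(H-B)\le\MEDS(H)-1$ from the previous step, this gives $\MEDS(H-B)<\MEDS(H-\widetilde B)$ for every $\widetilde B\subsetneq B$, so $B$ is beneficial.

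Finally, for the strongly beneficial property I need $\cost(B)<\sum_{i=1}^h \cost(B_i)$ for every cover of $B$ by proper subsets $B_1,\dots,B_h\subsetneq B$. Since every such $B_i$ lies in $\{\emptyset,\{x\},\{y\}\}$ and the union must equal $B$, at least one $B_i$ equals $\{x\}$ and at least one equals $\{y\}$. Because $x,y\notin Q(H)$, we have $\cost(\{x\})=\cost(\{y\})=1$, so $\sum_i\cost(B_i)\ge 2$, while $\cost(B)\le 1$ by the first step. No obstacle is anticipated here — the argument is essentially a two-line case analysis once the cost bound is in place, and the cost bound itself is the one nontrivial step, coming straight from the fact that deleting the solution edge $\{x,y\}$ from $F$ produces an edge dominating set of $H-B$.
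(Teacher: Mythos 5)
Your proof is correct and follows essentially the same route as the paper's: both hinge on the observation that deleting $\{x,y\}$ from $F$ witnesses $\MEDS(H-B)\le\MEDS(H)-1$, and on $x,y\notin Q$ forcing $\MEDS(H-x)=\MEDS(H-y)=\MEDS(H)$ so that no nonempty proper subset of $B$ can be beneficial. The only cosmetic difference is that you verify the cover condition for strong beneficiality directly (any cover by proper subsets must contain both $\{x\}$ and $\{y\}$, each of cost $1$, so the cover costs at least $2>\cost(B)$), whereas the paper invokes Proposition~\ref{proposition::properties}~(\ref{proposition::BFScost1}) to obtain a strongly beneficial subset and then rules out the proper subsets; note also that both arguments share the same slight informality in treating $F\setminus\{\{x,y\}\}$ as an edge dominating set of $H-B$ without first rerouting any other edges of $F$ incident with $x$ or $y$.
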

\begin{proof}
 Let $F$ be a minimum edge dominating set in $H$ and let $\{x,y\}$ be an edge in $F$ with $x,y \notin Q$. First, we show that $B=\{x,y\}$ is beneficial. 
 It holds that $\MEDS(H-B) < \MEDS(H)$, because $F'=F \setminus \{\{x,y\}\}$ is an edge dominating set in $H-B$: $F'$ covers all edges that are not incident with $x$ and $y$, and these vertices are not contained in $H-B$. 
It follows from Proposition~\ref{proposition::properties}~(\ref{proposition::BFScost1}) that there exists a strongly beneficial set $B' \subseteq B$ with $\MEDS(H-B')+1=\MEDS(H)$. The sets $\emptyset$, $\{x\}$ and $\{y\}$ are not beneficial, because the empty set is not beneficial, and neither $x$ nor $y$ is extendable; hence $\MEDS(H-x)=\MEDS(H-y)=\MEDS(H)$. Note that the only beneficial sets of size consists of one vertex in $Q \setminus W$ (Proposition~\ref{proposition::properties}~(\ref{proposition::Q-W})). Thus, $B$ must be strongly beneficial.
\end{proof}

\begin{lemma}[Proposition \ref{proposition::properties} (\ref{proposition::definition})]
 Let $B$ be a beneficial set. $B$ is strongly beneficial if and only if for every non-trivial partition $B_1,B_2,\ldots,B_h$ of $B$ it holds that $\cost(B) < \sum_{i=1}^h \cost(B_i)$.
\end{lemma}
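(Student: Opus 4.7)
[Proof plan for Proposition~\ref{proposition::properties}~(\ref{proposition::definition}).]
The forward direction ($\Rightarrow$) is essentially immediate from the definition. If $B$ is strongly beneficial, then the inequality $\cost(B)<\sum_{i=1}^h\cost(B_i)$ holds for \emph{every} cover $B_1,\ldots,B_h\subsetneq B$ of $B$. Every non-trivial partition $B_1,\ldots,B_h$ of $B$ (i.e., with $h\geq 2$ and all $B_i$ non-empty) is in particular a cover of $B$; moreover, each $B_i$ must be a proper subset of $B$ because the parts are disjoint and each is non-empty. Hence the inequality applies and we are done.

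The backward direction ($\Leftarrow$) is the main part. Assume $\cost(B)<\sum_{i=1}^h\cost(B_i)$ holds for every non-trivial partition of $B$, and let $B_1,\ldots,B_h\subsetneq B$ be an arbitrary cover of $B$ by proper subsets. The plan is to refine this cover to a non-trivial partition of $B$ and to use monotonicity of $\cost$ (Proposition~\ref{proposition::properties}~(\ref{proposition::monoton})) to transfer the inequality.

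Concretely, define $B_i':=B_i\setminus\bigcup_{j<i} B_j$ for all $i\in[h]$. Then $B_i'\subseteq B_i$ for all $i$, the sets $B_1',\ldots,B_h'$ are pairwise disjoint, and their union equals $B$ because the $B_i$ cover $B$. After discarding the empty $B_i'$ we obtain a partition of $B$. Each non-empty $B_i'$ is a subset of $B_i$, and since $B_i\subsetneq B$, also $B_i'\subsetneq B$; in particular the resulting partition cannot consist of a single part equal to $B$, so it is non-trivial. By Proposition~\ref{proposition::properties}~(\ref{proposition::monoton}) we have $\cost(B_i')\leq\cost(B_i)$ for every $i$ (and $\cost(\emptyset)=0\leq\cost(B_i)$ for the discarded ones, which is irrelevant). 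Applying the assumed inequality to this non-trivial partition yields
\[
\cost(B) \;<\; \sum_{i\colon B_i'\neq\emptyset} \cost(B_i') \;\leq\; \sum_{i=1}^{h}\cost(B_i),
\]
which shows that $B$ is strongly beneficial.

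The only subtle point is ensuring that the refinement is a \emph{non-trivial} partition, which is where the hypothesis $B_i\subsetneq B$ in the definition of ``strongly beneficial'' is essential: it guarantees $B_i'\subseteq B_i\subsetneq B$, so no single part of the refinement can coincide with $B$, forcing at least two non-empty parts. No further obstacle is expected.
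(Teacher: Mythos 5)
Your proof is correct and follows essentially the same route as the paper's: the forward direction is the observation that a non-trivial partition is a cover by proper subsets, and the backward direction refines an arbitrary cover $B_1,\ldots,B_h\subsetneq B$ into a non-trivial partition via $B_i'=B_i\setminus\bigcup_{j<i}B_j$ and applies the monotonicity $\cost(B_i')\leq\cost(B_i)$ from Proposition~\ref{proposition::properties}~(\ref{proposition::monoton}). The paper phrases the backward direction as a contrapositive, but the construction and the key inequality are identical to yours.
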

\begin{proof}
 $(\Rightarrow:)$ This follows directly from the definition of strongly beneficial set, because every non-trivial partition $B_1,B_2, \ldots, B_h$ of $B$ is also a cover of $B$ with $B_1,B_2,\ldots, B_h \subsetneq B$.
 
 $(\Leftarrow:)$ Assume that $B$ is not strongly beneficial. Thus, there exists a cover $B_1, B_2, \ldots, B_h \subsetneq B$ of $B$ with $\cost(B) \geq \sum_{i=1}^h \cost(B_i)$. We construct a non-trivial partition of $B$ as follows: Let $B_1'=B_1$ and let $B_i' = B_i \setminus \left(\bigcup_{j=1}^{i-1} B_j \right)$ for all $2 \leq i \leq h$. It holds that no set $B_i'$ is the set $B$ because $B_i' \subseteq B_i \subsetneq B$. Furthermore, the union of all sets $B_i'$, with $i \in [h]$, is still $B$, and the intersection of two sets $B_i$ and $B_j$ with $i \neq j$ is empty (by construction). Thus, all nonempty set $B_i'$ are a non-trivial partition of $B$. Additionally, it follows from Proposition~\ref{proposition::properties}~(\ref{proposition::monoton}) that $\cost(B_i') \leq \cost(B_i)$ which implies that $\cost(B) \geq \sum_{i=1}^h \cost(B_i)  \geq \sum_{i=1}^h \cost(B_i')$. Hence, there exists a non-trivial partition $B_1'', B_2'', \ldots, B_q''$ of $B$ with $\cost(B) \geq \sum_{i=1}^q \cost(B_i'')$. This concludes the proof. 
\end{proof}

\begin{lemma}[Proposition \ref{proposition::properties} (\ref{proposition::decomposeB})]
 Let $Y \subseteq V \setminus W$. There exists a partition $B_1,B_2,\ldots,B_h$ of $Y$ where  $B_i$ is either strongly beneficial or where $B_i$ has $\cost(B_i)=|B_i|$, for all $i \in [h]$, such that $\cost(Y) \geq \sum_{i=1}^h \cost(B_i)$. (Note that we also allow trivial partitions.) 
\end{lemma}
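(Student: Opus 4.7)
The plan is to proceed by strong induction on $|Y|$, with the base case $|Y|=0$ handled by the empty partition (both sides of the desired inequality equal $0$). For the inductive step, I would split into three cases according to where $Y$ sits in the beneficial/strongly-beneficial hierarchy.

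\emph{Case 1: $\cost(Y)=|Y|$.} Then the trivial partition $\{Y\}$ works, since its sole part satisfies the ``$\cost=|B_i|$'' alternative. Otherwise $\cost(Y)<|Y|$, which (since $\cost(\emptyset)=0<|Y|-\cost(\emptyset)$ is violated iff $\cost(Y)<|Y|$) is exactly the condition required to even have a chance at finding a beneficial subset of $Y$.

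\emph{Case 2: $Y$ is not beneficial (and $\cost(Y)<|Y|$).} Unfolding the definition of beneficial, the failure cannot come from $\widetilde{B}=\emptyset$ (that strict inequality is precisely $\cost(Y)<|Y|$), so there is a nonempty proper subset $Y'\subsetneq Y$ with $|Y|-\cost(Y)\leq|Y'|-\cost(Y')$, i.e., $\cost(Y)\geq\cost(Y')+|Y\setminus Y'|$. Apply the induction hypothesis to $Y'$ (strictly smaller, still in $V\setminus W$) to obtain a good partition $B_1,\ldots,B_h$ of $Y'$ with $\sum_i\cost(B_i)\leq\cost(Y')$. Complete this to a partition of $Y$ by adjoining each $v\in Y\setminus Y'$ as a singleton $\{v\}$; by Proposition~\ref{proposition::properties}~(\ref{proposition::Q-W}) each such $\{v\}$ is strongly beneficial if $v\in Q\setminus W$ and otherwise $\cost(\{v\})=1=|\{v\}|$. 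Either way, each singleton contributes $\cost(\{v\})\leq 1$, so the total partition cost is at most $\cost(Y')+|Y\setminus Y'|\leq\cost(Y)$, as required.

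\emph{Case 3: $Y$ is beneficial.} If $Y$ is strongly beneficial, the trivial partition $\{Y\}$ does the job. Otherwise, by Proposition~\ref{proposition::properties}~(\ref{proposition::definition}) there is a non-trivial partition $C_1,\ldots,C_\ell$ of $Y$ with $\cost(Y)\geq\sum_{i=1}^\ell\cost(C_i)$, and each $C_i\subsetneq Y$ satisfies $|C_i|<|Y|$ and $C_i\subseteq V\setminus W$. Apply the induction hypothesis to each $C_i$ to obtain a partition $\{B_{i,j}\}_j$ of $C_i$ with $\sum_j\cost(B_{i,j})\leq\cost(C_i)$; concatenating these gives a partition $\{B_{i,j}\}_{i,j}$ of $Y$ whose total cost is bounded by $\sum_i\cost(C_i)\leq\cost(Y)$, and every part is strongly beneficial or has $\cost=|B_{i,j}|$ by the induction hypothesis.

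The only real subtlety is checking that the case split is exhaustive and that the three cases yield strictly smaller recursive instances (singletons in Case~2, proper subsets in Case~3), so the induction bottoms out; the cost accounting in each case is then a one-line arithmetic using the definition of $\cost$ and the inequality supplied by the relevant earlier proposition item.
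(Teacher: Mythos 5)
Your proof is correct and follows essentially the same strategy as the paper's: induction on $|Y|$ in place of a minimal counterexample, with the same three-way case split and the same use of Proposition~\ref{proposition::properties}~(\ref{proposition::definition}) in the beneficial-but-not-strongly-beneficial case. The only divergence is in the non-beneficial case, where the paper invokes Proposition~\ref{proposition::properties}~(\ref{proposition::BFS}) to obtain a beneficial $Y'\subsetneq Y$ with $\MEDS(H-Y')=\MEDS(H-Y)$ and recurses on both $Y'$ and $Y\setminus Y'$, whereas you take an arbitrary witness $Y'$ of non-beneficiality and dispose of $Y\setminus Y'$ by singletons of cost at most one (justified via Proposition~\ref{proposition::properties}~(\ref{proposition::vertex}) and (\ref{proposition::Q-W})) --- an equally valid and slightly more elementary accounting.
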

\begin{proof}
 Assume that the statement does not hold and let $Y \subseteq V \setminus W$ be a minimal set that does not fulfill the properties of the lemma. Hence, $Y$ is neither strongly beneficial nor has $\cost(Y)=|Y|$, because in both cases the trivial partition $Y$ would fulfill the properties of the lemma. 
 
 First, assume that $Y$ is not beneficial. Thus, there exists a set $Y' \subsetneq Y$ beneficial with $\MEDS(H-Y)= \MEDS(H-Y')$ Proposition~\ref{proposition::properties}~(\ref{proposition::BFS}).  
 Since $Y' \subsetneq Y$ is a proper subset of $Y$ and $Y$ is a minimal set that does not fulfill the properties of the lemma, there exists a partition $B_1', B_2'\ldots, B_p'$ of $Y'$ where $B_i'$ is either strongly beneficial or has $\cost(B_i')=|B_i|$ for all $i \in [p]$ such that $\cost(Y') \geq \sum_{i=1}^p \cost(B_i')$. Furthermore, the set $Y'' = Y \setminus Y'$ is a proper subset of $Y$, because $Y'$ is not the empty set (if $Y'$ is the empty set then $Y'$ is not beneficial). Thus, there exists a partition $B_1'', B_2'', \ldots, B_q''$ of $Y''$ where $B_i''$ is either strongly beneficial or has $\cost(B_i'')=|B_i''|$ for all $i \in [q]$ such that $\cost(Y'') \geq \sum_{i=1}^q \cost(B_i'')$.
 
 Now, $B_1=B_1',B_2=B_2',\ldots, B_p=B_p', B_{p+1}=B_1'',B_{p+2}=B_2'',\ldots,B_{p+q}=B_q''$ is a partition of $Y$, because $B_1', B_2'\ldots, B_p'$ is a partition of $Y'$, $B_1'', B_2'', \ldots, B_q''$ is a partition of $Y''$, and $Y', Y''$ is a partition of $Y$. Additionally, every set $B_i$, with $i \in [p+q]$, is either strongly beneficial or has $\cost(B_i)=|B_i|$ (by choice of $B_i$). To show that $Y$ also fulfills the properties of the lemma it remains to show that $\cost(Y) \geq \sum_{i=1}^{p+q} \cost(B_i)$. It holds that
 \begin{align*}
  \cost(Y) &=\MEDS(H-Y) +|Y|-\MEDS(H) \\
	   &= \MEDS(H-Y') + |Y'| + |Y''| - \MEDS(H) &\text{// bc. choice of } Y' \text{ and } Y=Y' \dot\cup Y''\\
	   &\geq \cost(Y') + \cost(Y'') &\text{// bc. definition of cost} \\
	   &\geq \sum_{i=1}^p \cost(B_i') + \sum_{i=1}^q \cost(B_i'') = \sum_{i=1}^{p+q} \cost(B_i)
  \end{align*}
 This implies that $Y$ fulfills the properties of the lemma, which is a contradiction.
 
 Thus, assume that $Y$ is beneficial (but not strongly beneficial). Hence, there exists a non-trivial partition $B_1, B_2,\ldots, B_h$ of $Y$ with $\cost(Y) \geq \sum_{i=1}^h \cost(B_i)$ Proposition~\ref{proposition::properties}~(\ref{proposition::definition}). Every $B_i$, with $i \in [h]$, is a proper subset of $Y$, because $B_1, B_2, \ldots, B_h$ is a non-trivial partition of $Y$. Since $Y$ is a minimal set that does not fulfill the properties of the lemma, there exists, for all $i \in [h]$, a partition $B_{i,1}, B_{i,2},\ldots, B_{i,p_i}$ of $B_i$ where $B_{i,j}$ is either strongly beneficial or has $\cost(B_{i,j})=|B_{i,j}|$, for all $j \in [p_i]$, such that $\cost(B_i) \geq \sum_{j=1}^{p_i} \cost(B_{i,j})$. By construction, the sets $B_{1,1}, B_{1,2},\ldots, B_{1,p_1}, B_{2,1},\ldots, B_{h,p_h}$ are a partition of $Y$ where every $B_{i,j}$ is either strongly beneficial or has $\cost(B_{i,j})=|B_{i,j}|$, for all $i \in [h]$ and all $j \in [p_i]$. Furthermore, $\cost(Y) \geq \sum_{i=1}^h \cost(B_i) \geq \sum_{i=1}^h \sum_{j=1}^{p_i} \cost(B_{i,j})$. Thus, the set $Y$ fulfills the properties of the lemma, which is a contradiction and concludes the proof.
\end{proof}

%%%%%%%%%%%%%%%%%%%%%%%%%%%%%%%%%%%%%%%%%%%%%%%%%%%%%%%%%%%%%%%%%%%%%%%%%%%%%%%%%%%%%%%%%%%%%%%%%%%%%%%%%%%%%%%%%%%%%%%%%%%%%%%%%%%%%%%%%%%%%%%%%%%%%%%%%%%%%%%%%%%%%%%%%%%%%%%%%%%%%%%%%%%%%%%%%%%%%%%%%%%%%%%%%%%%%%%%%%%%%%%%%%%%%%%%%%%%%%%%%%%%%%%%%%%%%%%%%%%%%%%%%%%%%%%%%%%%%%%%%%%%%%%%%%%%%%%%%%%%%%%%%%%%%%%%%%%%%%%%%%%%%%%%%%%%%%%%%%%%%%%%%%%%%%%%%%%%%%%%%%%%%%%%%%%%%%%%%%%%%%%%%%%%%%%%%%%%%%%%%%%%%%%%%%%%%%%%
\section{Conclusion}\label{section::conclusion}

As our main result, we have given a complete classification for \EDS parameterized by the size of a modulator to \cH-component graphs for all finite sets \cH. An obvious follow-up question is to extend this result to infinite sets \cH. Our lower bounds of course continue to work in this setting, and the upper bounds still permit us to reduce the number of connected components (under the same conditions as before, e.g., that relevant beneficial sets have bounded size). However, for infinite \cH, polynomial kernels also require us to shrink connected components of $G-X$, and to derive general rules for this. Moreover, even determining beneficial sets etc.\ for graphs $H\in\cH$ could no longer be dismissed as being constant time. It is conceivable that such a classification is doable whenever graphs in $\cH$ have bounded treewidth, as this simplifies the required additional steps. Since most known tractable graph classes for \EDS have bounded treewidth (and tractability for $G-X$ is required, or else \NP-hardness for $|X|=0$ rules out kernels and fixed-parameter tractability), this seems like a reasonable goal.
Apart from this, it would be nice to close the gap between size $\Oh(|X|^{d+1}\log |X|)$ and the lower bound of $\Oh(|X|^{d-\varepsilon})$, where improvements to the upper bound seem more likely.

\bibliography{lit}

\end{document}